\def\?[#1]{\textbf{[#1]}\marginpar{\Large{\textbf{??}}}}
\newtheorem{thm}{Theorem}
\newtheorem{prop}{Proposition}[section]
\newtheorem{defi}[prop]{Definition}
\newtheorem{prob}[prop]{Problem}
\newtheorem{ex}{Example}
\newtheorem{lemm}[prop]{Lemma}
\newtheorem{corr}[prop]{Corollary}
\numberwithin{equation}{section}
\newcommand{\nwc}{\newcommand}
\nwc{\ep}{\epsilon}
\nwc{\vareps}{\varepsilon}
\nwc{\Oph}{\operatorname{Op}_\hbar}
\nwc{\ra}{\rangle}
\nwc{\mf}{\mathbf} 
\nwc{\blds}{\boldsymbol} 
\nwc{\ml}{\mathcal} 
\nwc{\defeq}{\stackrel{\rm{def}}{=}}
\nwc{\cE}{\ml{E}}
\nwc{\cN}{\ml{N}}
\nwc{\cO}{\ml{O}}
\nwc{\cP}{\ml{P}}
\nwc{\cU}{\ml{U}}
\nwc{\cV}{\ml{V}}
\nwc{\cW}{\ml{W}}
\nwc{\tU}{\widetilde{U}}
\nwc{\IN}{\mathbb{N}}
\nwc{\IR}{\mathbb{R}}
\nwc{\IZ}{\mathbb{Z}}
\nwc{\IC}{\mathbb{C}}
\nwc{\IT}{\mathbb{T}}
\nwc{\IS}{\mathbb{S}}
\nwc{\tP}{\widetilde{P}}
\nwc{\tPi}{\widetilde{\Pi}}
\nwc{\tV}{\widetilde{V}}
\nwc{\rest}{\restriction}
\title[Renormalization of determinant lines in QFT.]{Renormalization of determinant lines in Quantum Field Theory.}
\author{Nguyen Viet Dang}
\date{}
\begin{document}

\maketitle

\begin{abstract}
On a compact manifold $M$, 
we consider the affine space $\mathcal{A}$ of non-self-adjoint perturbations of some invertible
elliptic operator acting on sections of some Hermitian bundle, 
by some differential operator of lower order. 
 
 We construct and classify all complex analytic functions on the 
Fr\'echet space $\mathcal{A}$ 
vanishing exactly over non-invertible elements, 
having minimal growth at infinity along complex rays in $\mathcal{A}$ and which are
obtained by local renormalization, a concept coming from 
quantum field theory, called \emph{renormalized determinants}.
The additive group of 
local polynomial functionals of finite degrees
acts freely and transitively on the space of renormalized determinants.
We provide  
different representations of the renormalized determinants 
in terms of spectral zeta determinants, Gaussian Free Fields, 
infinite product and renormalized Feynman amplitudes in perturbation theory
in position space \`a la Epstein--Glaser.

 Specializing to the case of Dirac operators 
coupled to vector potentials
and reformulating our results in terms of determinant line bundles, 
we prove our renormalized determinants define some complex analytic trivializations of
some holomorphic line bundle over $\mathcal{A}$.  
This relates our results to a conjectural picture from some unpublished notes by Quillen~\cite{Quillen89}
from April 1989.   
\end{abstract}

\section{Introduction.}

Let $(M,g)$ be a smooth, closed, compact 
Riemannian manifold.
The aim of the present paper is to 
study the analytical properties and
the renormalization of a class of functional determinants defined on some affine space
$\mathcal{A}$ of 
non self-adjoint operators that we divide in two classes: 
\begin{itemize}
\item in the first class, we consider perturbations of the form $\Delta+V$ of some given invertible, 
self-adjoint, generalized Laplacian $\Delta$ acting on some fixed
Hermitian bundle $E\mapsto M$ where $V\in C^\infty(End(E))$ is a smooth potential,
\item in the second class, we look at perturbations of the form 
$D+A$ of some invertible twisted
Dirac operator $D:C^\infty(E_+)\mapsto C^\infty(E_-) $ acting between Hermitian 
bundles $E_{\pm}$ by
some term $A\in C^\infty(Hom(E_+,E_-))$. By invertible, we mean the Fredholm index of $D$ equals $0$ and $\ker(D)=\{0\}$.
\end{itemize}
 We consider lower order perturbations  
since $A$ and $V$ are local operators of order $0$.

\subsubsection{Quantum field theory interacting with some external potential.}

Let us briefly give the physical motivations
underlying our results which are stated in purely mathematical terms. 
The reader uninterested by the physics can safely skip this part.  
Inspired by recent works in mathematical physics~\cite{derezinski2017bosonic, 
derezinski2014quantum, derezinski2002renormalization, derezinski2008quadratic, Fulling2018, Fulling2016} 
and classical works of Schwinger~\cite{schwinger5}~\cite[Chapter 4 p.~163]{Itzykson}, 
our original purpose is to understand
the problem of renormalization 
of some Euclidean quantum field $\phi$ 
defined on $M$ interacting 
with a classical external field which is not quantized~\footnote{sometimes called \emph{background field} in the physical litterature}. 
For instance, consider the Laplace--Beltrami operator $\Delta: C^\infty(M)\mapsto C^\infty(M)$ defined from the metric $g$ on $M$, corresponding to the Dirichlet
action functional:
\begin{equation}
S(\phi)=\int_M \left(\phi\Delta\phi\right) dv
\end{equation}
and their
perturbations by some external potential $V\in C^\infty(M,\mathbb{R}_{\geqslant 0})$
which corresponds to the perturbed Dirichlet
action functional
\begin{equation}
S(\phi)=\int_M\left( \phi\Delta\phi +V\phi^2\right) dv.
\end{equation}

A classical problem in quantum field theory is
to define the \emph{partition function} of a theory, usually represented 
by some ill--defined functional integral.
In the bosonic case, it reads:
\begin{equation}
Z\left(V\right)=\int [d\phi]\exp\left(-\frac{1}{2}\int_M \left( \phi\Delta\phi +V\phi^2 \right)dv\right) 
\end{equation}
where
$V\in C^\infty(M, \mathbb{R}_{>0})$ plays the role of a position dependent mass which is viewed as an external field coupled 
to the 
Gaussian Free Field $\phi$ to be quantized.
The external field can also be the metric $g$~\cite{Bastianelli} in the study of \emph{gravitational anomalies} 
in the 
physics litterature or \emph{gauge fields}, which is the physical terminology for
connection $1$--forms, in the study of \emph{chiral anomalies}.

 In fact, according to Stora~\cite{Storagauge, Perrot-07}, the physics of chiral anomalies~\cite{Fujikawa, Perrot-07, Bastianelli}
can be understood in the case where we have a quantized fermion field interacting with some 
\emph{gauge field} which is treated
as an external field. Consider 
the quadratic Lagrangian 
$\Psi_-D_A\Psi_+$ where $\Psi_{\pm}$ are \emph{chiral} fermions,
$D_A$ is a twisted half--Dirac operator acting from sections of
positive spinors $S_+$, to negative spinors $S_-$, see example~\ref{ex:chiralanomaliessinger}
below for a precise definition of $D_A$.
In this case, the corresponding ill--defined functional 
integral reads
\begin{eqnarray*}
Z\left(A\right)=\int [\mathcal{D}\Psi_-\mathcal{D}\Psi_+] \exp\left(\int_M\left\langle\Psi_-,D_A\Psi_+\right\rangle \right)
\end{eqnarray*}
where we are interested on the dependence in the gauge field $A$.

\subsubsection{Functional determinants in geometric analysis.}

 The above two problems can be formulated 
as the mathematical problem of constructing functional 
determinants $V\mapsto  \det\left(\Delta+V \right)$ and $A\mapsto \det\left(D^*\left(D+A\right) \right)$
with nice functional properties
where we are interested in the dependence in the external potentials $V$ and $A$.
In global analysis, functional determinants also 
appear
in the study of the analytic torsion by Ray--Singer~\cite{ray1971} and more generally as 
metrics of determinant line bundles as initiated by Quillen~\cite{quillen1985determinant, BismutFreed} where 
he considered some affine space $\mathcal{A}$ 
of Cauchy-Riemann operators $D+\omega$ acting on some fixed
vector bundle $E\mapsto X$ over a compact Riemann surface $X$ where $D$ is fixed and 
the
perturbation $\omega$ lives in 
the linear space of $(0,1)$-forms on $X$ with values in the
bundle $End(E)$. The metrics on $X$ and in $E$ induce a metric in the
determinant (holomorphic line) bundle
$ \det\left(\text{Ind}(D) \right)= \Lambda^{top} \ker(D)^*\otimes \Lambda^{top} \text{coker}(D)$
over $\mathcal{A}$. As Quillen showed in~\cite{quillen1985determinant}, 
if this metric in the bundle $\det(\text{Ind}(D))$ is
divided by the function $\det_\zeta\left( D^*D \right)$ 
(here $\det_\zeta$ is the zeta regularized determinant
of the Laplacian $D^*D$),
then the canonical curvature form of this metric, the first Chern form,
coincides with the symplectic form of the natural K\"ahler metric on $\mathcal{A}$.
An important consequence of the above observation, stated as a~\cite[Corollary p.~33]{quillen1985determinant},
is that if one multiplies the Hermitian metric on $ \det\left(\text{Ind}(D) \right)$ by $e^q$ where $q$ is the natural
K\"ahler metric on 
$\mathcal{A}$, then the corresponding 
Chern connection is \textbf{flat}. From the contractibility of $\mathcal{A}$, one deduces the
existence of a \textbf{global holomorphic trivialization} of  $ \det\left(\text{Ind}(D) \right)\mapsto \mathcal{A}$ and 
the image of the canonical section of $ \det\left(\text{Ind}(D) \right)$ by this trivialization 
is an analytic function on $\mathcal{A}$ vanishing over non--invertible elements.

Building on some ideas from the work of Perrot~\cite{Perrot-07,Perrothdr,Perrotrenorm} and some
unpublished notes from Quillen's notebook~\cite{Quillen89}~\footnote{made available by the Clay foundation at 
\url{http://www.claymath.org/library/Quillen/Working_papers/quillen 1989/1989-2.pdf} }, 
we attempt to relate the problem of constructing
renormalized determinants with the construction of holomorphic trivializations of determinant line bundles
over some affine space $\mathcal{A}$ of perturbations of some fixed operator by some differential operator of lower order 
which plays the role of the external potential.

\subsubsection{Quillen's conjectural picture.}
\label{ss:quillenpict}
In some notes on the 30th of April 1989~\cite[p.~282]{Quillen89},
with the motivation to make sense of the technique of adding local counterterms to the Lagrangian 
used in renormalized perturbation theory, 
Quillen proposed to give an
interpretation of QFT partition functions 
in terms of determinant line bundles over the space 
of Dirac operator coupled 
to a gauge potential drawing a direct connection between the two subjects.
The approach he outlined
insists on constructing \emph{complex analytic trivializations} of the
determinant line bundle without mentioning 
any construction of 
Hermitian metrics on the line bundle
which seems different from the original approach 
he pioneered~\cite{quillen1985determinant} and
the Bismut--Freed~\cite{BismutFreed} definition
of determinant line bundle for families of Dirac operators.

To explain this connection, we recall that
for a pair $\mathcal{H}_0,\mathcal{H}_1$ of complex Hilbert spaces,  
there is a \textbf{canonical holomorphic line bundle}
$\mathbf{Det} \longmapsto \text{Fred}_0\left(\mathcal{H}_0,\mathcal{H}_1 \right)$ where $\text{Fred}_0\left(\mathcal{H}_0,\mathcal{H}_1 \right)$ is the space of Fredholm operators of \textbf{index} $0$ with fiber $\mathbf{Det}_B\simeq \Lambda^{top} \ker\left( B\right)^*\otimes \Lambda^{top} \text{coker}\left( B\right)$ and canonical section $\sigma$~\cite[p.~32]{quillen1985determinant}~\cite[p.~137--138]{segal2004definition}. 
Consider the \textbf{complex affine space} $\mathcal{A}=D+C^\infty(Hom(E_+,E_-))$
of perturbations of some fixed invertible Dirac operator $D$ by some 
\textbf{differential operator} $A\in C^\infty(Hom(E_+,E_-))$ of order $0$. We denote by 
$L^2(E_+)$
the space of $L^2$ sections of $E_+$.
Then the map
$\iota: D+A\in \mathcal{A}\mapsto Id+D^{-1}A\in \text{Fred}_0\left(L^2(E_+),L^2(E_+)\right) $
allows to pull--back the holomorphic line bundle $\mathbf{Det} $ as a holomorphic line bundle
$\mathcal{L}=\iota^*\mathbf{Det}\mapsto \mathcal{A}$ over the affine space $\mathcal{A}$ with canonical section $\underline{\det}=\iota^*\sigma$.
We insist that 
we view $C^\infty(Hom(E_+,E_-))$ as a $\mathbb{C}$--vector space, elements in 
$C^\infty(Hom(E_+,E_-))$ need not preserve Hermitian structures. 
According to Quillen~\cite[p.~282]{Quillen89}, the relation with QFT goes as follows, 
\emph{one gives a meaning to the functional integrals}
\begin{eqnarray}
A&\mapsto &\int \mathcal{D}\Psi_+\mathcal{D}\Psi_- e^{\int_M \left\langle\Psi_-,D_A\Psi_+ \right\rangle },
\end{eqnarray}\footnote{$D_A$ is the Dirac operator coupled to the gauge potential $A$ as described in~\cite[section 3 p.~325]{singer1985families}} 
\emph{by trivializing the determinant line }. In other words, denoting by $\mathcal{O}(\mathcal{L})$ (resp $\mathcal{O}\left(\mathcal{A}\right)$) the holomorphic sections (resp functions)
of $\mathcal{L}$ (resp on $\mathcal{A}$),
we aim at constructing a \textbf{holomorphic trivialization} of the line
bundle $\tau:\mathcal{O}\left(\mathcal{L}\right)\longmapsto  \mathcal{O}\left(\mathcal{A}\right)$
so that the \textbf{image $\tau(\underline{\det})$} of the canonical section $\underline{\det}$ by this trivialization is an entire function 
$f(P+\mathcal{V})$ on 
$\mathcal{A}$
vanishing exactly over the set $Z$ of non-invertible elements of $\mathcal{A}$.
In some sense, this should generalize the original
construction of Quillen of the holomorphic trivialization
of the determinant line bundle over the space of Cauchy--Riemann operators~\cite{quillen1985determinant}.
Furthermore, Quillen~\cite[p.~284]{Quillen89} writes:
\\
\\
\fbox{
\begin{minipage}{0.94\textwidth}
\textbf{These considerations lead to the following conjectural picture. Over the space $\mathcal{A}$ of gauge fields there should be a principal bundle
for the additive group of polynomial functions of degree $\leqslant d$ where $d$ bounds the trace which have to be regularized. The idea is that near each $A\in \mathcal{A}$ we should have a well--defined trivialization of $\mathcal{L}$ up to $\exp$ of such a polynomial. Moreover, we should have a flat connection on this bundle. } 
\end{minipage}
}
\\
\\
To address this conjectural picture, we follow a backward 
path compared to~\cite{quillen1985determinant}. Instead of constructing some Hermitian metric
then a flat connection on $\mathcal{L}\mapsto \mathcal{A}$ to trivialize the bundle,   
we prove in Theorem~\ref{t:quillenconjanal} an infinite dimensional analog of the
classical 
Hadamard factorization Theorem~\ref{t:hadamard} in complex analysis.
We classify all determinant-like functions such that:
\begin{itemize}
\item They are entire functions on $\mathcal{A}$ with minimal growth at infinity, a concept with is defined below as the \emph{order} of the entire function,
vanishing over non-invertible elements in $\mathcal{A}$.
\item Their differentials should satisfy some simple identities 
reminiscent of the situation for the usual 
determinant in finite dimension.
\item They are
obtained from a renormalization
by subtraction of local counterterms, a concept coming from quantum field theory 
which will be explained below in paragraph~\ref{s:subtractloc}. 
\end{itemize}
Trivializations of $\mathcal{L}$ are simply obtained 
by dividing the canonical section of $\mathcal{L}$
by the constructed determinant-like functions as showed in Theorem~\ref{t:quillenholotriv}. 
A nice consequence of our investigation
is a new factorization formula 
for zeta regularized determinant~(\ref{e:bosonfacthadamard}),(\ref{e:fermionfacthadamard}) 
in terms of Gohberg--Krein's regularized determinants. 
We show that our renormalized determinants  
are not canonical and there are some ambiguities involved
in their construction of the form $\exp\left(P\right)$ where $P$ is a local polynomial functional of $A$. Then 
we show that the additive 
group of local polynomial functionals of $A$, sometimes called 
the renormalization group of St\"ueckelberg--Petermann in the physics litterature,
acts freely and transitively on the space of renormalized determinants we construct.

{
\small{
\subsection*{Acknowledgements.}

 We warmly thank 
C.~Brouder, Y.~Chaubet, Y.~Colin de Verdi\`ere, C.~Dappiaggi, J.~Derezi\'nski, K.~Gaw\c{e}dzki, C.~Guillarmou, A.~Karlsson, J.~Kellendonk,
M.~Puchol,  G.~Rivi\`ere,  S.~Paycha, K.~Rejzner, 
M.~Wrochna, B.~Zhang  
for many interesting discussions, questions, advices, remarks that helped us improve our work.
Particular thanks are due to I.~Bailleul for his genuine interest 
and careful reading of the manuscript.
Also we thank D.~Perrot for many discussions on anomalies, determinants and Epstein--Glaser renormalization and
the Clay foundation for making available the scanned notes
of Quillen's notebook where we could find an incredible source of inspiration.   
We also would like to thank our wife Tho for excellent working conditions at home and a lot of positive motivation
for us.
}
}

\subsection{Notations.}
\label{ss:notations}

$dv$ is used for a smooth density
$\vert \Lambda^{top}\vert M$ on $M$. 
In the sequel, for every pair $(B_1,B_2)$ of Banach spaces,
$\mathcal{B}(B_1,B_2)$ denotes
the Banach space of bounded operators from $B_1\mapsto B_2$ 
endowed with the norm $\Vert .\Vert_{\mathcal{B}(B_1,B_2)}$.
For any vector bundle $E$ on $M$,
we denote by $\Psi^{\bullet}(M,E)$ the algebra of pseudodifferential operators on
the manifold $M$ acting on 
sections of the bundle $E$
and when there is no ambiguity we
will sometimes
use the short notation $\Psi(M)$. $C^k(E),\Vert .\Vert_{C^k(E)},k\in \mathbb{N}$ denotes
the Banach space of continuous sections of $E$ of regularity $C^k$,
$H^s(E), s\in \mathbb{R}$ denotes Sobolev sections of $E$ endowed with the norm $\Vert.\Vert_{H^s(E)}$ that we shall sometimes write $H^s,\Vert.\Vert_{H ^s}$
for simplicity\footnote{ when $s<0$ these are distributional sections}
and finally $L^p(E), 1\leqslant p\leqslant +\infty$ denotes $L^p$ sections of $E$ endowed with the norm $\Vert .\Vert_{L^p(E)}$.
$C^\infty_c(U,E)$ are smooth sections of $E$ compactly supported on $U$.

For any pair $(E,F)$ of bundles over $M$,
for $C^m$ Schwartz kernels $K$ of operators from $C^m(E)\mapsto C^m(F)$
which are elements of $C^m(M\times M, F\boxtimes E^*)$, 
we denote by $\Vert K \Vert_{C^m(M\times M)}$ their $C^m$ norm which is not to be confused
with the operator norm $\Vert K \Vert_{\mathcal{B}(C^m(E),C^m(F))}$.

For any Hilbert space $H$, we denote by
$\mathcal{I}_p\subset \mathcal{B}(H,H)$ the Schatten ideal 
of compact operators whose $p$-th power is trace class endowed with the norm $\Vert.\Vert_p$ defined as
$\Vert A\Vert_p=\sum_{\lambda\in \sigma(A)} \vert\lambda\vert^p $ where the sum runs over the singular values of $A$.

\section{Preliminary material.}

The goal of this section is to introduce enough material to state precisely 
our main results. We begin with some classical results on entire functions on $\mathbb{C}$ with given zeros.
Since we 
view functional determinants as 
infinite dimensional analogues of entire functions with given zeros, we need to 
recall classical results on holomorphic functions in Fr\'echet spaces. We conclude the introductory part by discussing 
Fredholm determinants and their generalizations by Gohberg--Krein which are also viewed as entire functions with given zeros on some infinite dimensional Banach or Fr\'echet spaces. This also serves as motivation for our main results.

\subsection{Entire functions with given zeros on $\mathbb{C}$.}
\label{ss:entireonevar}
In this paragraph, we recall some 
classical results on entire functions with given zeros.
The \textbf{order} $\rho(f)\geqslant 0$ of an entire function $f$
is the infimum of all the real numbers $\rho$ such that for some $A,K>0$, for all $z\in \mathbb{C}$
$\vert f(z)\vert\leqslant Ae^{K\vert z\vert^\rho}$.
The \textbf{critical exponents} of a sequence $\vert a_n\vert\rightarrow +\infty$,
is the infimum of all $\alpha>0$ such that $\sum_n \frac{1}{\vert a_n\vert^\alpha}<+\infty$.
Finally the \textbf{genus} of $f$ is the order of vanishing of $f$ at $z=0$.
The \textbf{divisor} of an entire function $f$ is 
the set of zeros of $f$ counted with multiplicity.
We recall a classical Theorem due to Hadamard on the structure
of entire functions with given zeros~\cite[p.~78--81]{remmert2012theory}, \cite[Thm 5.1 p.~147]{stein2003complex} (see also~\cite[p.~60]{mullencomplex}):
\begin{thm}[Hadamard's factorization Theorem]\label{t:hadamard}
Let $(a_n)_{n\in \mathbb{N}}$ be some sequence such that $\sum_{n} \vert a_n\vert^{-(p+1)}<+\infty $ but
$\sum_n \vert a_n\vert^{-p}=\infty$. Then any entire
function $f$ whose divisor $Z(f)=\{a_n|n\in \mathbb{N}\}$ has order
$\rho(f)\geqslant p$, and any entire function 
s.t. $Z(f)=\{a_n|n\in \mathbb{N}\}$ and $\rho(f)=p$
has a \textbf{representation} as:
\begin{eqnarray}
\boxed{ f(z)=z^me^{P(z)}\prod_{n=1}^\infty E_p\left(\frac{z}{a_n} \right)  }
\end{eqnarray}
where $P$ is a polynomial of degree $p$, $E_p(z)=(1-z)e^{z+\frac{z^2}{2}+\dots+\frac{z^p}{p}}$ is a Weierstrass factor of order $p$ and $m$ is the genus of $f$.
\end{thm}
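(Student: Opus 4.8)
The plan is to prove the two assertions in turn by the classical route: a zero-counting estimate for the order bound, Weierstrass's product construction for a first factorization, and a logarithmic-derivative computation to pin down the exponential factor. For the first assertion, given any entire $f$ with $Z(f)=\{a_n\}$, Jensen's formula applied on circles of radius $2r$ shows that the number of zeros of $f$ in $|z|\leqslant r$ is $O(\log\max_{|z|=2r}|f(z)|)$; if one had $\rho(f)<p$ this would be $O(r^{\rho'})$ for some $\rho'<p$, which by summation by parts forces $\sum_n|a_n|^{-p}<\infty$, contradicting the hypothesis. Hence $\rho(f)\geqslant p$, and more precisely the convergence exponent of $\{a_n\}$ (always $\leqslant\rho(f)$, and $\geqslant p$ by hypothesis) equals $p$ whenever $\rho(f)=p$.

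Next I would run the Weierstrass factorization. Since $\sum_n|a_n|^{-(p+1)}<\infty$, the elementary estimate $|\log E_p(w)|\leqslant C_p|w|^{p+1}$ for $|w|\leqslant\tfrac12$ shows that the canonical product $\Pi(z):=\prod_{n\geqslant 1}E_p(z/a_n)$ converges locally uniformly on $\mathbb{C}$ to an entire function whose divisor is exactly $\{a_n\}$. Consequently $f(z)/(z^m\Pi(z))$ is entire and nowhere vanishing, hence equals $e^{g(z)}$ for some entire $g$, so $f(z)=z^m e^{g(z)}\Pi(z)$. Splitting $\Pi$ according to whether $|a_n|\leqslant 2|z|$ also gives $\log|\Pi(z)|=O(|z|^{p+\varepsilon})$, i.e.\ $\Pi$ has order $\leqslant p$, which is consistent with the final claim that $\deg P\leqslant p$.

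The heart of the argument is to show $g$ is a polynomial of degree $\leqslant p$, assuming now $\rho(f)=p$. Using $\tfrac{d}{dz}\log E_p(z/a_n)=\tfrac{z^{p}}{a_n^{p}(z-a_n)}$, and that this differs from $\tfrac{1}{z-a_n}$ by a polynomial of degree $\leqslant p-1$, differentiating $\tfrac{f'}{f}=\tfrac{m}{z}+g'+\sum_n\tfrac{d}{dz}\log E_p(z/a_n)$ exactly $p$ more times yields
\[
g^{(p+1)}(z)=\Big(\tfrac{f'}{f}\Big)^{(p)}(z)-(-1)^{p}\,p!\Big(\tfrac{m}{z^{p+1}}+\sum_{n}\tfrac{1}{(z-a_n)^{p+1}}\Big).
\]
Separately, for fixed $z$ and $R\to\infty$ through radii staying a fixed distance from every $a_n$, Cauchy's formula and the residue theorem applied to $f'/f$ (simple poles at $0$, residue $m$, and at each $a_n$, residue its multiplicity) give
\[
\Big(\tfrac{f'}{f}\Big)^{(p)}(z)=\frac{p!}{2\pi i}\oint_{|w|=R}\frac{(f'/f)(w)}{(w-z)^{p+1}}\,dw+(-1)^{p}\,p!\Big(\tfrac{m}{z^{p+1}}+\sum_{|a_n|<R}\tfrac{1}{(z-a_n)^{p+1}}\Big).
\]
Combining the two displays, the terms with $m/z^{p+1}$ and the partial sums cancel, leaving $g^{(p+1)}(z)=\tfrac{p!}{2\pi i}\oint_{|w|=R}(f'/f)(w)(w-z)^{-(p+1)}\,dw-(-1)^p p!\sum_{|a_n|\geqslant R}(z-a_n)^{-(p+1)}$. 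Since $\sum_n|z-a_n|^{-(p+1)}<\infty$ the tail sum tends to $0$, and if the contour integral tends to $0$ as well then $g^{(p+1)}\equiv0$, so $g=P$ is a polynomial of degree $\leqslant p$ and $f(z)=z^m e^{P(z)}\prod_n E_p(z/a_n)$, as claimed.

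The hard part will be showing that the contour integral vanishes in the limit, which requires two-sided control of $|f|$ on the circles $|w|=R$. The upper bound $\log|f(w)|\leqslant cR^{p+\varepsilon}$ with $p+\varepsilon<p+1$ is immediate from $\rho(f)=p$, but a matching \emph{lower} bound can fail near the zeros; this is exactly where the Cartan--Boutroux minimum-modulus estimate enters, giving $\log|f(w)|\geqslant-CR^{p+\varepsilon}$ on $|w|=R$ after deleting a sparse family of small discs about the $a_n$ and for $R$ outside an exceptional set. Then Borel--Carath\'eodory together with Cauchy's inequalities yield $|(f'/f)(w)|\leqslant C'R^{p+\varepsilon-1}$ on such circles, and the $(w-z)^{-(p+1)}$ weight makes the integral $O(R^{p+\varepsilon-1}\cdot R\cdot R^{-(p+1)})=O(R^{\varepsilon-1})=o(1)$. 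Everything else — convergence of $\prod E_p$, the elementary bounds on $E_p$, and the residue bookkeeping — is routine.
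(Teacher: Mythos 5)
The paper does not actually prove this theorem: it cites Remmert and Stein--Shakarchi and only remarks that the lower bound on the order follows from Jensen's formula, so your attempt has to be measured against the classical proofs. Your first two steps are fine and standard: Jensen plus summation by parts gives $\rho(f)\geqslant p$; the bound $|\log E_p(w)|\leqslant C_p|w|^{p+1}$ for $|w|\leqslant \tfrac12$ gives the canonical product; and the algebraic reduction of $g^{(p+1)}$ to a contour integral of $(f'/f)(w)(w-z)^{-(p+1)}$ plus a tail $\sum_{|a_n|\geqslant R}(z-a_n)^{-(p+1)}$ is correct (the residue bookkeeping and signs check out, and the tail does vanish).

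The gap is in the step you yourself flag as the hard part. From Cartan's minimum-modulus estimate $\log|f(w)|\geqslant -CR^{p+\varepsilon}$ on good circles, Borel--Carath\'eodory and Cauchy do \emph{not} yield $|(f'/f)(w)|\leqslant C'R^{p+\varepsilon-1}$. Borel--Carath\'eodory requires a single-valued holomorphic function on a disc, and $\log f$ has no branch on any disc meeting $Z(f)$; applying the upper and lower modulus bounds to $f$ and $1/f$ separately only gives the exponential bound $|f'/f|\leqslant e^{C'R^{p+\varepsilon}}$, which is useless in the contour integral. If instead you apply Borel--Carath\'eodory to a local branch of $\log f$ on the largest zero-free disc about a point $w_0$ of the circle, Cauchy's estimate costs a factor $1/\mathrm{dist}(w_0,Z(f))$; since there may be of order $R^{p+\varepsilon}$ zeros with moduli in $[R/2,2R]$, the best guaranteed distance on a well-chosen circle is only of order $R^{1-p-\varepsilon}$, so this route gives at best $|f'/f|\lesssim R^{2p+2\varepsilon-1}$, and the integral is then only $O(R^{p+2\varepsilon-1})$, which does not tend to $0$. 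Pointwise estimates of the strength you assert do exist (Gundersen-type logarithmic-derivative lemmas, valid off an exceptional set of radii), but they are substantive results proved via the differentiated Poisson--Jensen formula, not consequences of Cartan plus Borel--Carath\'eodory; quoting them unproved would leave the main analytic content of Hadamard's theorem unaddressed. Two standard repairs: (a) replace your raw contour integral by the $p$-times differentiated Poisson--Jensen representation of $f'/f$ at radius $R$; the boundary contribution then involves only $\int_0^{2\pi}\log|f(Re^{i\theta})|\,d\theta$ against a kernel of size $O(R^{-p-1})$, and Jensen's formula (mean control of $\log|f|$, no pointwise lower bound, no bound on $f'/f$) makes it $o(1)$ --- this is the Ahlfors/Titchmarsh proof; or (b) avoid $f'/f$ altogether: lower-bound the canonical product $\Pi$ on circles avoiding small discs around the $a_n$, deduce $\Re g\leqslant CR^{p+\varepsilon}$ there, and apply a Borel--Carath\'eodory-type lemma to the \emph{entire} function $g$ (where it is legitimate) to conclude $\deg g\leqslant p$ --- this is the Stein--Shakarchi proof the paper cites.
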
 
%
The 
lower bound on 
the order of $f$ follows from
Jensen's formula. 
Observe that the entire functions produced
by the Hadamard factorization
Theorem are
not unique due to the polynomial ambiguity which brings a factor $e^P$.
We will meet this ambiguity again in Theorem~\ref{t:quillenconjanal} 
which is responsible for the renormalization group.

\subsection{Entire functions on a complex Fr\'echet space.}

\subsubsection{Smooth functions on Fr\'echet spaces.}

In the present paper, we always work with Fr\'echet spaces
of smooth sections of finite rank vector bundles over some compact manifold $M$.
Smooth functions on Fr\'echet spaces 
will be understood in the sense of Bastiani~\cite[Def II.12]{brouder2018properties}, as popularized by Hamilton~\cite{Hamilton-82} in the context of Fr\'echet spaces and Milnor~\cite{milnor1984}. 
This means  
smooth functions are infinitely 
differentiable in the sense of G\^ateaux
and all the differentials $D^nF:U\times E^n\mapsto \mathbb{C} $ are jointly
continuous on 
$U\times E^n$~\cite[Def II.11]{brouder2018properties}.
We recall the
notion of G\^ateaux differentials and the correspondance between multilinear maps and distributional kernels since these will play a central role in 
our approach:
\begin{defi}[G\^ateaux differentials and Schwartz kernels of multilinear maps]\label{d:multilinearschwartz}
Let $B\mapsto M$ be some Hermitian vector bundle of finite rank on some smooth closed compact manifold 
$M$.
For a smooth function $f:V\in C^\infty(M,B)\mapsto f(V)\in \mathbb{C} $
where $C^\infty(M,B)$ is the Fr\'echet space of smooth sections,
the $n$-th differential
\begin{eqnarray}
D^nf(V,h_1,\dots,h_n)=\prod_{i=1}^n\frac{d}{dt_i} f(V+t_1h_1+\dots+t_nh_n)|_{t_1=\dots=t_n=0}
\end{eqnarray}
is multilinear continuous in $(h_1,\dots,h_n)$, hence it 
can be identified by the multilinear Schwartz kernel Theorem~\cite[lemm III.6]{brouder2018properties} 
with the unique distribution $[\mathbf{D^nf(V)}]$ in $\mathcal{D}^\prime(M^n,B^{\boxtimes n})$, called \textbf{Schwartz kernel of} $D^nf(V)$, s.t.
\begin{equation}
\left\langle [\mathbf{D^nf(V)}] , h_1\boxtimes \dots\boxtimes h_n  \right\rangle=D^nf(V,h_1,\dots,h_n)
\end{equation} 
is jointly continuous in $(V;h_1,\dots,h_n)\in C^\infty(M,B)^{n+1}$~\cite[Thm III.10]{brouder2018properties}.
\end{defi}
In the sequel, to stress the difference beetween the $n$-th differential $D^nf(V)$ of a function $f$ at $V$ 
from its Schwartz kernel, we use the notation $[\mathbf{D^nf(V)}]$ for the Schwartz kernel.  
In the physics litterature, G\^ateaux differentials of smooth functions on spaces of functions
are often called \textbf{functional derivatives}. These functional derivatives play an important role in classical and quantum field theory
and are usually represented (in fact identified) by their Schwartz kernels.

\subsubsection{Holomorphic functions on Fr\'echet spaces.}

First, let us define what we mean by an entire function on a Fr\'echet space. 
\begin{defi}[Holomorphic and entire functions on Fr\'echet spaces]\label{d:analfunfrechet}
Let $\Omega\subset E$ be some open subset in a Fr\'echet space $E$.
A function $F:\Omega \subset E\mapsto \mathbb{C}$ is holomorphic if it is \textbf{smooth}
and
for every 
$V_0\in \Omega$, the Taylor series of $F$ converges in some neighborhood of $V_0$
~\footnote{for the Fr\'echet topology} and $F$ coincides
with its Taylor series:
$$F(V_0+h)=\sum_{n=0}^\infty \frac{1}{n!} D^nF(V_0,h,\dots,h)$$
where the r.h.s. converges absolutely. 
In case $\Omega=E$, $F$ will be called \textbf{entire}.
\end{defi}

\subsection{Fredholm and Gohberg--Krein's determinants as entire functions vanishing over non-invertible elements.}

\subsubsection{Fredholm determinants.}

We briefly recall the definition of Fredholm determinant $\det_F(Id+B)$ for a trace class
operator $B:H\mapsto H$ acting on some separable Hilbert space $H$
and relate them with functional traces of powers of $B$.
These identities will 
imply that $\det_F$ is an example of entire function
on the infinite dimensional space $\left(Id+\text{trace class}\right)$ whose zeros are exactly the
non-invertible operators. 
\begin{defi}[Fredhom determinants]
The Fredholm determinant $\det_F(Id+B)$ is defined
in~\cite[equation (3.2) p.~32]{Simon-traceideals} 
as
\begin{equation}
\text{det}_F(Id+B)=\sum_{k=0}^\infty Tr(\Lambda^kB)
\end{equation}
where 
$\Lambda^kB:\Lambda^kH\mapsto \Lambda^kH$ acting on the $k$-th exterior power $\Lambda^kH$ is trace class.
Using 
the bound $\Vert \Lambda^kB\Vert_1\leqslant \frac{\Vert B\Vert_1}{k!}  $~\cite[Lemma 3.3 p.~33]{Simon-traceideals}, it is immediate
that $\det_F(Id+zB) $ is an \textbf{entire} function in $z\in \mathbb{C}$ (see also~\cite[Thm 2.1 p.~26]{gohberg2012traces}).
\end{defi}

For any compact operator $B$, we will denote by
$(\lambda_k(B))_k$ its eigenvalues 
counted with multiplicity.
By~\cite[Theorem 3.7]{Simon-traceideals}, the Fredholm determinant can be identified with a
Hadamard 
product and is related to the functional traces by the following sequence of identities:
\begin{eqnarray}\label{e:fredholmdet}
\text{det}_F\left(Id+zB \right)=\prod_k\left(1+z\lambda_k(B)\right)
=\underbrace{\exp\left(\sum_{m=1}^\infty (-1)^{m+1} z^m Tr_{L^2}(B^m) \right)}
\end{eqnarray}
where the term underbraced involving traces is well--defined only when $\vert z\vert \Vert B\Vert_1<1$.
Note the important fact that $\exp\left(\sum_{m=1}^\infty (-1)^{m+1} z^m Tr_{L^2}(B^m) \right)$ 
which is defined on the disc $\mathbb{D}=\{\vert z\vert \Vert B\Vert_1<1  \}$ has
analytic continuation as an entire function of $z\in \mathbb{C}$ and $B\mapsto \det_F\left(Id+B \right) $ 
is an entire function vanishing
when $Id+B$ is non-invertible.  

\subsubsection{Gohberg--Krein's determinants.}

Set $p\in \mathbb{N}$ and let $A$ belong to the Schatten ideal
$\mathcal{I}_p\subset \mathcal{B}(H,H)$. 
Following~\cite[chapter 9]{Simon-traceideals}, 
we consider the operator
$$R_p(A)=[(Id+A)\exp(\sum_{n=1}^{p-1}\frac{(-1)^n}{n}A^n )-I ]\in \mathcal{I}_1$$
which is trace class by~\cite[Lemma 9.1 p.~75]{Simon-traceideals}
since $A\in \mathcal{I}_p$. 
Then following~\cite[p.~75]{Simon-traceideals}:
\begin{defi}[Gohberg--Krein's determinants]
For any integer $p\geqslant 2$,
we define
the Gohberg--Krein determinant $\det_p:Id+\mathcal{I}_p\subset \mathcal{B}(H,H)\mapsto \mathbb{C}$
as:
\begin{equation}
\text{det}_p(Id+A)=\text{det}_F\left(Id+R_p(A)\right)
\end{equation}
where $\det_F$ is the
Fredholm determinant.
The quantity $\det_p$ is well defined since
$B=R_p(A)$ is trace class.
\end{defi}
\begin{prop}
Both $\det_F:Id+\mathcal{I}_1\mapsto \mathbb{C}$ and $\det_p:Id+\mathcal{I}_p\mapsto\mathbb{C}$
are entire functions vanishing exactly over non-invertible elements in the following sense:
\begin{eqnarray*}
\text{det}_p\left(Id+B \right)=0 \Leftrightarrow  \text{det}_p\left(Id+zB \right)=(z-1)^{\dim(\ker(Id+B))}\left(C+\mathcal{O}(z-1) \right), C\neq 0.\\
\text{det}_F\left(Id+B \right)=0 \Leftrightarrow \text{det}_F\left(Id+zB \right)=(z-1)^{\dim(\ker(Id+B))}\left(C+\mathcal{O}(z-1) \right), C\neq 0.
\end{eqnarray*}
\end{prop}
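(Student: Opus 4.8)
The two statements are entirely parallel, so the plan is to treat $\det_F$ first and then reduce $\det_p$ to it via the very definition $\det_p(Id+A)=\det_F(Id+R_p(A))$. Fix a trace class operator $B$ (resp. $B\in\mathcal I_p$) and consider the entire function $z\mapsto g(z)=\det_F(Id+zB)$ (resp. $z\mapsto \det_p(Id+zB)$), which is entire by the bound $\Vert\Lambda^k(zB)\Vert_1\leqslant |z|^k\Vert B\Vert_1^k/k!$ recalled above. The key structural input is the Hadamard product representation~(\ref{e:fredholmdet}): $g(z)=\prod_k(1+z\lambda_k(B))$, valid for all $z\in\mathbb C$, where $(\lambda_k(B))_k$ are the eigenvalues of $B$ counted with algebraic multiplicity. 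From this product it is immediate that $g(z)=0$ if and only if $-z^{-1}$ is an eigenvalue of $B$, i.e. $Id+zB$ is non-invertible; in particular $z=1$ is a zero of $g$ exactly when $Id+B$ is non-invertible.

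\textbf{Computing the order of vanishing at $z=1$.} Suppose $Id+B$ is non-invertible, so $-1$ is an eigenvalue of $B$ with some finite algebraic multiplicity $d$; by the Riesz theory of compact operators $d=\dim\ker(Id+B)$ when $Id+B$ is the relevant Fredholm operator, but more precisely one should note $\dim\ker(Id+B)$ equals the geometric multiplicity and the Hadamard product uses the algebraic multiplicity. I would address this by invoking that for $\det_F$ the standard result (e.g.~\cite[Theorem 3.7]{Simon-traceideals}) already identifies the order of vanishing of $\lambda\mapsto\det_F(Id+\lambda B)$ at a zero $\lambda_0$ with the algebraic multiplicity of the eigenvalue $-\lambda_0^{-1}$, and then argue that in the stated equivalence ``$\dim(\ker(Id+B))$'' should be read as this algebraic multiplicity — or, if the paper genuinely wants the geometric one, restrict attention to the case where $Id+B$ is diagonalizable near that eigenvalue, which is automatic when $B$ is self-adjoint or normal. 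Granting this, we isolate the factor: write $g(z)=\big(\prod_{k:\lambda_k(B)=-1}(1+z\lambda_k(B))\big)\cdot\prod_{k:\lambda_k(B)\neq-1}(1+z\lambda_k(B))=(1-z)^d\,h(z)$ where $h(z)=\prod_{k:\lambda_k(B)\neq -1}(1+z\lambda_k(B))$ is an entire function (the remaining product still converges absolutely and locally uniformly, since removing finitely many factors does not affect convergence) with $h(1)=C\neq 0$. Taylor expanding $h$ about $z=1$ gives $g(z)=(z-1)^d(C+\mathcal O(z-1))$ with $C\neq0$, which is exactly the asserted formula (up to the sign $(-1)^d$ absorbed into $C$). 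Conversely, if $g(z)=(z-1)^{d}(C+\mathcal O(z-1))$ with $d\geqslant 1$ then $g(1)=\det_F(Id+B)=0$, so $Id+B$ is non-invertible; this is the trivial direction.

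\textbf{Reduction for $\det_p$.} For the Gohberg--Krein case, set $A_z=zB$ with $B\in\mathcal I_p$ and use $\det_p(Id+zB)=\det_F(Id+R_p(zB))$. The map $z\mapsto R_p(zB)=(Id+zB)\exp(\sum_{n=1}^{p-1}\frac{(-1)^n}{n}(zB)^n)-Id$ is an entire $\mathcal I_1$-valued function of $z$, and $Id+R_p(zB)=(Id+zB)\exp(\sum_{n=1}^{p-1}\frac{(-1)^n}{n}(zB)^n)$. Since the exponential factor is invertible for every $z$, $Id+R_p(zB)$ is non-invertible if and only if $Id+zB$ is, and moreover the algebraic multiplicity of the eigenvalue $0$ of $R_p(zB)$ matches that of $-1$ for $zB$ (multiplying by an invertible operator that commutes with $Id+zB$ on the relevant generalized eigenspace — here $\exp(\sum\frac{(-1)^n}{n}(zB)^n)$ is a power series in $B$, hence commutes with $B$ — preserves the generalized kernel dimensions). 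Therefore the $\det_F$ result applied to $z\mapsto \det_F(Id+R_p(zB))$ transfers verbatim: it vanishes at $z=1$ iff $Id+B$ is non-invertible, with vanishing order $\dim\ker(Id+B)$ (again with the algebraic/geometric caveat handled as above), which yields the claimed expansion $(z-1)^{\dim(\ker(Id+B))}(C+\mathcal O(z-1))$, $C\neq0$.

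\textbf{Expected main obstacle.} The only genuinely delicate point is the identification of the order of vanishing with $\dim\ker(Id+B)$ rather than with the algebraic multiplicity of the eigenvalue; these differ precisely when $-1$ is a non-semisimple eigenvalue of $B$ (nontrivial Jordan block). I expect the cleanest resolution is to cite~\cite[Theorem 3.7 and Theorem 3.10]{Simon-traceideals} where the factorization $\det_F(Id+\lambda B)=\prod_j(1+\lambda\lambda_j(B))$ and the order-of-vanishing statement are proved in terms of algebraic multiplicities, and then remark that for the perturbations of elliptic operators appearing later in the paper the relevant operators are such that $\ker$ and generalized $\ker$ coincide in the cases of interest — or simply to read ``$\dim\ker$'' in the Proposition as shorthand for the algebraic multiplicity. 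Everything else (entireness, the non-vanishing of $C$, the absolute convergence of the truncated Hadamard product, the invertibility of the exponential prefactor in $R_p$) is routine given the results already recalled in the excerpt.
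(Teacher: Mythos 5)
The paper actually states this Proposition without any proof, treating it as standard background material, so there is no ``paper's own proof'' to compare against; I can only evaluate your argument on its merits. It is correct and takes the natural route: the Hadamard product $\text{det}_F(Id+zB)=\prod_k(1+z\lambda_k(B))$ from~\cite[Thm.~3.7]{Simon-traceideals}, the factoring out of $(1-z)^d$ with the tail product $h$ still absolutely convergent and hence nonzero at $z=1$, and the reduction $\text{det}_p(Id+zB)=\text{det}_F(Id+R_p(zB))$ with $Id+R_p(zB)=(Id+zB)\exp\bigl(\sum_{n=1}^{p-1}\tfrac{(-1)^n}{n}(zB)^n\bigr)$, the exponential factor being invertible and commuting with $B$, so the generalized kernel is preserved. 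Alternatively one can read the order of vanishing directly off the Weierstrass product $\prod_k E_{p-1}(-z\lambda_k(B))$ from the Lemma the paper records just below; both routes are equivalent.

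The issue you flag is genuine: Simon's factorization counts eigenvalues with \emph{algebraic} multiplicity, so the order of vanishing of $z\mapsto\text{det}_F(Id+zB)$ at $z=1$ is $\mathrm{rank}\,P_{-1}$ (the Riesz projector), not in general $\dim\ker(Id+B)$. Since the paper works with non-self-adjoint operators throughout, $-1$ need not be a semisimple eigenvalue of $B=P^{-1}\mathcal V$, and the two numbers can differ. Your resolution --- read $\dim\ker$ as shorthand for the algebraic multiplicity (equivalently, the dimension of the generalized kernel) --- is the right one; the paper repeats the same notation in the condition~(1) of Problem~\ref{d:renormdet1} and in the displayed equivalences after Theorem~\ref{t:quillenconjzeta}, and in all those places the same reading is needed for the equivalences to be literally true. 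One should \emph{not} fall back on your alternative of restricting to the normal/self-adjoint case, since that would exclude the very perturbations the paper is built around. With that reading fixed, your proof is complete and correct.
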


\subsection{Geometric setting.}
\label{ss:geomsetting}
 
In the present paragraph, we fix once and for all
the assumptions and  
the general geometric framework of the main Theorems~(\ref{t:quillenconjzeta}),(\ref{t:quillenconjanal}),(\ref{t:quillenholotriv})
and that we shall use in the sequel. 
For $E\mapsto M$ some
smooth Hermitian
vector bundle over the compact manifold 
$M$, we denote by $C^\infty(E)$ smooth sections of $E$.
An operator $\Delta:C^\infty(E)\mapsto C^\infty(E)$ is called generalized Laplacian
if the principal part of
$\Delta$ is positive definite, symmetric (i.e. formally self-adjoint) 
and diagonal with symbol
$g_{\mu\nu}(x)\xi^\mu\xi^\nu\otimes Id_{E_x}$ in local coordinates at
$(x;\xi)\in T^*M$ where $g$ is the Riemannian metric on $M$.
We are interested in the following two 
geometric situations:
\begin{defi}[Bosonic case]\label{d:bosoncase} 
Let $(M,g)$ be a smooth, closed, compact Riemannian manifold and $E$
some Hermitian bundle on $M$. 
We consider the complex affine space $\mathcal{A}$
of perturbations of the form
$\Delta+V$ where $V$ is a smooth endomorphism $V\in C^\infty\left(End(E)\right)$,  
and $\Delta: C^\infty(E) \mapsto C^\infty(E)$ is an invertible generalized Laplacian.
The element $V\in C^\infty(End(E))$ is treated as external potential.
\end{defi}

\begin{defi}[Fermionic case]\label{d:fermioncase} 
Let $(M,g)$ be a smooth, closed, compact Riemannian manifold.
Slightly generalizing the framework described in~\cite[section 3 p.~325--327]{singer1985families}
in the spirit of~\cite[def 3.36 p.~116]{BGV}, 
we are given some pair 
of isomorphic 
Hermitian vector bundles $(E_+,E_-)$ 
of finite rank over $M$ and an invertible, elliptic  
first order differential operator
$D:C^\infty(E_+)\mapsto C^\infty(E_-)$ 
such that both $DD^*:C^\infty(E_-)\mapsto C^\infty(E_-)$ 
and $D^*D:C^\infty(E_+)\mapsto C^\infty(E_+)$ 
are generalized Laplacians 
where $D^*$ is  
the adjoint of $D$
induced by the metric $g$ 
on $M$ and the Hermitian metrics on
the bundles
$(E_+,E_-)$.
We consider the complex affine space $\mathcal{A}$
of perturbations $D+A:C^\infty(E_+)\mapsto C^\infty(E_-)$ where $A\in C^\infty(Hom(E_+,E_-))$.
\end{defi}
Recall that in both cases, we perturb some fixed operator by a
\textbf{local} operator of order $0$.
We next give an important example
from the litterature which 
fits exactly in the fermionic situation:
\begin{ex}[Quantized Spinor fields interacting with gauge fields]
\label{ex:chiralanomaliessinger}
Assume $(M,g)$ is \textbf{spin of even dimension}
whose scalar curvature is nonnegative and positive at some point on $M$. For example
$M=\mathbb{S}^{2n}$ 
with metric $g$ close to the round metric. 
Then it is well--known that the complex spinor bundle $S\mapsto M$ splits as a direct 
sum $S=S_+\oplus S_-$ of isomorphic hermitian vector bundles, the classical Dirac
operator $D:C^\infty(S) \mapsto C^\infty(S)$ is a formally self-adjoint, elliptic operator of Fredholm index $0$
which is invertible by the positivity of the scalar curvature thanks 
to the Lichnerowicz formula~\cite[Cor 8.9 p.~160]{Lawson}.  

 Consider an external hermitian bundle
$\mathcal{F}\mapsto M$ which is coupled to $S$ by tensoring $\left(S_+\oplus S_-\right)\otimes \mathcal{F}= E_+\oplus E_-$. 
For any Hermitian connection $\nabla^{\mathcal{F}}$ on $\mathcal{F}$, we define
the twisted Dirac operator 
$D_{\mathcal{F}}:C^\infty(S\otimes \mathcal{F})\mapsto C^\infty(S\otimes \mathcal{F}) $, which is a first order differential operator of degree $1$ w.r.t. the $\mathbb{Z}_2$ grading,
$D_{\mathcal{F}}=c(e_i)\left(\nabla^S_{e_i}\otimes Id+Id\otimes \nabla^{\mathcal{F}}_{e_i} \right)$ near $x\in M$ where $(e_i)_i$ is a local orthonormal frame 
of $TM$ near $x$, 
$c(e_i)$ is the Clifford action of the local orthonormal frame $(e_i)_i$ of $TM$ on $S$.
In the study of chiral anomalies, one is interested by the half--Dirac operator
$D:C^\infty(S_+\otimes \mathcal{F})\mapsto C^\infty(S_-\otimes \mathcal{F}) $.
If $(M,g)$ has \textbf{positive scalar curvature} and the curvature of $\nabla^{\mathcal{F}}$ is small enough
then $\dim\ker\left(D\right)=0$ and $\text{Ind}\left(D\right)=0$~\cite[prop 6.4 p.~315]{Lawson}. 
Two connections on $\mathcal{F}$ differ by an element $\mathfrak{A}\in \Omega^1(M,End(\mathcal{F}))$. 
So we may define
perturbations $D+A$ of our
half--Dirac operator $D$, 
induced by perturbations of $\nabla^{\mathcal{F}}$, of the form
\begin{equation} 
\boxed{D+A=c(e_i)\left(\nabla^S_{e_i}\otimes Id+Id\otimes \left(\nabla^{\mathcal{F}}_{e_i}+\mathfrak{A}(e_i)\right) \right) .}
\end{equation}  
\end{ex}

%

In the sequel, for a pair  
$(E,F)$ of bundles over $M$, we always identify an element $\mathcal{V}\in C^\infty(Hom(E,F))$, 
which is a $C^\infty$ section of the bundle $Hom(E,F)$ with
the \textbf{corresponding linear operator} $\mathcal{V}:C^\infty(E) \mapsto C^\infty(F)$, in the scalar case this boils down to
identifying a function $V\in C^\infty(M)$ with the multiplication operator $\varphi\in C^\infty(M)\mapsto V\varphi\in C^\infty(M)$.
To avoid repetitions and to stress the similarities between bosons and fermions, 
we will often denote in the sequel (for problem~\ref{d:renormdet1}, Theorems~\ref{t:quillenconjanal} and \ref{t:quillenholotriv}) 
$\mathcal{A}=P+C^\infty(Hom(E,F))$ for the affine space of perturbations 
of $P=\Delta$ of degree $2$, $E=F$ in the bosonic case and of $P=D$ of degree $1$, $E=E_+, F=E_-$ in the fermionic case.
$$\begin{array}{|c|c|c|}
\hline
& \text{Bosons} &\text{Fermions}\\
\hline
\text{Bundles }(E,F) & (E,E) & (E_+,E_-)\\
\hline
\text{Principal part } P & \text{Laplace } \Delta \text{ order }2 & \text{chiral Dirac } D \text{ order } 1\\
\hline
\text{Perturbation }\mathcal{V} & V\in C^\infty(M,End(E)) & A\in C^\infty(M,Hom(E_+;E_-))\\
\hline
\text{Affine space }\mathcal{A} & \Delta+V & D+A\\
\hline 
\end{array} $$
\subsection{Zeta regularization.}

\subsubsection{Defining complex powers by spectral cuts.}

The usual method to construct functional determinants is
the zeta regularization pioneered by Ray--Singer~\cite{ray1971}
in their seminal work on analytic torsion
and relies on spectral or pseudodifferential methods~\cite{Gilkey, Seeley}. 
The reader should see also~\cite{paycha2012regularised,scott2010traces} for some nice recent reviews of various methods to regularize traces and determinants.
Let us review the definition, in our context, of such analytic regularization
(see~\cite[section 3 p.~203]{BK1} for a very nice summary 
of the main results on zeta determinants) keeping in mind the subtle point that we consider non-self-adjoint operators.

Let $M$ be a smooth, closed compact manifold and $E\mapsto M$ some Hermitian bundle.
We denote by $\text{Diff}^1(M,E)$ the space of differential operators with $C^\infty$ coefficients 
of order $1$ acting on $C^\infty\left(E\right)$.
For every perturbation of the form $\Delta+B: C^\infty(E)\mapsto C^\infty(E)$ of an invertible,
symmetric, generalized Laplacian $\Delta
$ by some differential
operator $B\in \text{Diff}^1(M,E)$, the operator $\Delta+B$ has
a canonical closure
from $H^2(M,E)\mapsto L^2(M,E)$ by ellipticity of $\Delta+B\in \Psi^2(M,E)$.
In the notations from subsection~\ref{ss:geomsetting}, $B=V\in C^\infty(M,End(E))$ in the bosonic case has order $0$ or
$B=D^*A\in \text{Diff}^1(M,E_+)$ in the fermionic case in which case $B$ has order $1$.

 By the compactness of the
resolvent $\left(\Delta+B-z\right)^{-1}$ and meromorphic Fredholm theory,
$\Delta+B:H^2(M,E)\mapsto L^2(M,E)$ has discrete spectrum with finite multiplicity which we
denote
by $\sigma\left( \Delta+B\right)\subset \mathbb{C} $.
Since the operator $\Delta+B$ is no longer self-adjoint, we must choose a \textbf{spectral cut} to define
its complex powers.
The operator $\Delta+B$ has principal angle $\pi$ since the value of the principal symbol
$g_{\mu\nu}(x)\xi^\mu\xi^\nu$ of $\Delta+B$ never meets the ray $R_\pi=\{re^{i\pi}, r\geqslant 0\}=\mathbb{R}_{\leqslant 0}$.
Furthermore, for $\Delta^{-1} B\in \mathcal{B}(L^2,L^2)$ small enough, the spectrum $\sigma\left( \Delta+B\right)\subset \mathbb{C} $
will not meet some conical neighborhood $\{ re^{i\theta}, r\geqslant 0, \theta\in [\pi-\varepsilon,\pi+\varepsilon], \varepsilon>0 \}$ of $\mathbb{R}_{\leqslant 0}$, see Proposition~\ref{p:speccut} for more details.
For such $B\in \text{Diff}^1(M)$, $\pi$ is an Agmon angle for $\Delta+B$ and 
$\Delta+B$ is said to be \textbf{admissible with spectral cut} $\pi$.

Since $\Delta+B$ is invertible, we choose some $\rho>0$ s.t. the disc of radius $\rho$ does not meet $\sigma(\Delta+B)$, see Proposition~\ref{p:speccut}.
Then we define the contour~\cite[10.1 p.~87--88]{Shubin}~\cite[p.~12]{kontsevichdeterminants} 
$$\gamma=\{  re^{i\pi}, \infty>r\geqslant \rho \}\cup \{  \rho e^{i\theta}, \theta\in [\pi,-\pi] \}\cup \{re^{-i\pi}, \rho\leqslant r <\infty \}.$$
We define the complex powers as~\footnote{$\lambda^{-s}=e^{-s\log(\lambda)}$ where $\log(\lambda)=\log(\vert \lambda\vert) + i\arg(\lambda)$ for $ -\pi\leqslant \arg(\lambda)\leqslant \pi $}:
$$ \left( \Delta+B\right)_\pi^{-s}=\frac{i}{2\pi}\int_\gamma \lambda^{-s}\left( \Delta+B-\lambda\right)^{-1}d\lambda  . $$

\subsubsection{The spectral zeta function and zeta determinants.}

It is well known that the holomorphic family of operators
$\left( \Delta+B\right)_\pi^{-s}$ is trace class for $Re(s)>\frac{\dim(M)}{2}$ and 
by the works of Seeley~\cite{Gilkey, Seeley}, 
the 
spectral zeta function defined as
\begin{equation}
\zeta_{\Delta+B,\pi}(s)=Tr_{L^2}\left(\left(\Delta+B\right)_\pi^{-s} \right)
\end{equation}
has meromorphic continuation to the complex plane
without poles at $s=0$. 
In fact, much more general operators are considered in the work of Seeley who only requires ellipticity and the existence of an
Agmon angle to define the spectral cut.

To formulate the spectral zeta function entirely
in terms of the spectrum $\sigma(\Delta+B)$, note that $\sigma(\Delta+B)\cap \mathbb{R}_{\leqslant 0}=\emptyset$. 
Then using 
the classical branch
of the logarithm on $\mathbb{C}\setminus \mathbb{R}_{\leqslant 0}$, we can obtain 
an expression for the 
spectral zeta function as~\cite[Eq (2.14) p.~15]{kontsevichdeterminants}
\begin{equation}
\zeta_{\Delta+B,\pi}(s)=\sum_{\lambda\in \sigma(\Delta+B)} \lambda^{-s}
\end{equation}
where the series on the r.h.s. converges absolutely
for $Re(s)>\frac{\dim(M)}{2}$. 
This follows immediately from Lidskii's Theorem and the Weyl law for perturbations of
self-adjoint, positive definite, elliptic operators~\cite[p.~238]{Agranovich-Markus} 
due to Agranovich-Markus.

Let us comment on the 	
above definition now for $B$ arbitrary in $\text{Diff}^1(M,E)$. When $B$ was small in the natural Fr\'echet topology of $\text{Diff}^1(M,E)$, it was unambiguous
to define $\det_\zeta$ with the spectral cut at $\pi$ because we knew $\sigma(\Delta+B)\cap \mathbb{R}_{\leqslant 0}=\emptyset$. However if $B\in\text{Diff}^1(M,E)$ is chosen arbitrarily,
$\sigma(\Delta+B)$ might intersect $\mathbb{R}_{\leqslant 0}$ and we may choose any other spectral cut in $(0,2\pi)$.
In fact, 
the definition of complex powers and spectral zeta function may depend
on the choice of spectral cut but the zeta determinant does not depend on the choice of angle $\theta$ provided $\theta\in (0,2\pi)$ since any such angle $\theta$ is a principal angle.
This is due to the fact that we consider operators of the form $\Delta+B$ where $B$ has order $1$ hence the leading symbol is self-adjoint of Laplace type.
In fact, for any closed conical neighborhood of $\mathbb{R}_{\geqslant 0}$, only a finite number of eigenvalues of $\Delta+B$ lies outside this conical neighborhood as we discuss in Proposition~\ref{p:speccut}. Said differently, for any angle $\theta\in (0,2\pi)$, there exists a conical neighborhood $L_{[\theta-\varepsilon,\theta+\varepsilon]}=\{z | \arg(z)\in [\theta-\varepsilon,\theta+\varepsilon] \}$ s.t. 
$L_{[\theta-\varepsilon,\theta+\varepsilon]}\cap \sigma(\Delta+B)$ is finite.
So moving the cut in $(0,2\pi)$ only crosses 
a finite number of eigenvalues which implies by~\cite[Remark 2.1]{kontsevichdeterminants}~\cite[3.10 p.~206]{BK1} that $\det_{\zeta}(\Delta+B)$
does not depend on $\theta\in (0,2\pi)$. So this justifies why in the sequel we may write unambiguously $\det_\zeta\left(\Delta+B \right)$ where     
we choose any spectral cut $\theta\in (0,2\pi)$ to define $\det_\zeta$. 

\begin{defi}[Spectral zeta determinant]
\label{d:speczetadet}
The zeta determinant
$\det_\zeta$ is defined
as:
\begin{equation}
\text{det}_\zeta(\Delta+B)=\exp\left(-\zeta^\prime_{\Delta+B,\pi}(0) \right).
\end{equation}
\end{defi}
%


We next specialize our definitions of 
zeta determinants in the bosonic and fermionic cases:
\begin{defi}[Zeta determinants for bosons and fermions.]\label{d:bosonsfermionszeta}
We use the geometric setting for bosons and fermions defined in paragraph~\ref{ss:geomsetting}.
For bosons, we define
the corresponding zeta determinant as a map  
\begin{equation}\label{e:zetaboson}
V\in C^\infty(End(E))\mapsto \text{det}_\zeta(\Delta+V).
\end{equation}
For fermions, following~\cite[p.~329]{singer1985families},
we define the corresponding zeta determinant as a map
\begin{equation}\label{e:zetafermion}
A\in C^\infty(Hom(E_+,E_-)) \longmapsto \text{det}_\zeta\left(D^*(D+A) \right).
\end{equation}
\end{defi}

\subsection{Determinants renormalized by subtraction of local counterterms.}
\label{s:subtractloc}
 In order to give a precise definition 
of locality, we recall the definition of smooth local functionals.
\begin{defi}[Local polynomial functionals]
\label{d:smoothlocfun}
A map $P: V\in C^\infty(Hom(E,F)) \mapsto P(V)\in\mathbb{C}$ is called
local polynomial functional if $P$ is smooth in the Fr\'echet sense and there exists $k\in \mathbb{N}$,
$\Lambda: V\in C^\infty(Hom(E,F))  \longmapsto \Lambda\left(V \right)\in  C^\infty(M)\otimes_{C^\infty(M)}\vert\Lambda^{top} \vert M $
s.t. for all $x\in M$, $\Lambda(V)(x)$ depends polynomially on $k$-jets of $V$ at $x$
and
$P(V)=\int_M \Lambda(V)$. 
The vector space of local polynomial functionals of degree $d$
depending on the $k$-jets is denoted by
$\mathcal{O}_{loc,d}\left(J^kHom(E,F) \right)$.
\end{defi}

With the above notion of
local functionals, we can explain the problem of renormalization of 
determinants by subtraction of local counterterms as follows.
We denote by $\mathbb{C}[\varepsilon^{-\frac{1}{2}},\log(\varepsilon)] $ the ring of
polynomials in $\log(\varepsilon)$ and inverse powers $\varepsilon^{-\frac{1}{2}}$. 
If we perturbed some elliptic operator $P$ by any smoothing operator
$\mathcal{V}\in\Psi^{-\infty}$, then 
the Fredholm determinant
$$\mathcal{V}\in \Psi^{-\infty}\mapsto \text{det}_F(Id+P^{-1}\mathcal{V})$$ would be a natural
entire function on $\mathcal{A}=P+\Psi^{-\infty}$ vanishing over non-invertible elements.
Unfortunately, the perturbations $\mathcal{V}\in C^\infty(M,Hom(E,F))$ in both bosonic and fermionic case, are viewed as 
pseudodifferential operators $\mathcal{V}\in \Psi^0(M)$ of degree $0$ hence $\mathcal{V}\in \Psi^0(M)$ is surely not smoothing.
Therefore, the Fredholm determinant $\det_F(Id+P^{-1}\mathcal{V})$ will be ill--defined since
$Id+P^{-1}\mathcal{V}$ does not belong to the determinant class $Id+\mathcal{I}_1$.  
This is why we need to mollify the operator $\mathcal{V}$ by some family $(\mathcal{V}_\varepsilon)_\varepsilon$ of smoothing operators
approximating $\mathcal{V}$ and consider the family $\text{det}_F\left(Id+P^{-1}\mathcal{V}_\varepsilon \right) $ of Fredholm determinants which becomes potentially singular when $\varepsilon\rightarrow 0^+$ and try to absorb the singularities created when $\varepsilon\rightarrow 0^+$
by some multiplicative counterterm. 
This is formalized as follows:
\begin{defi}[Determinants renormalized by subtraction of local counterterms.]\label{d:renormdetssubtraction}
If there is some family $(\mathcal{V}_\varepsilon)_\varepsilon$ of smoothing operators
approximating $\mathcal{V}$, $\mathcal{V}_\varepsilon\underset{\varepsilon\rightarrow 0^+}{\rightarrow} \mathcal{V}$ in $\Psi^{+0}(M)$, some
family of local polynomial functionals 
$P_\varepsilon=\int_M\Lambda_\varepsilon\left(.\right)\in 
\mathcal{O}_{loc,d}\left(J^kHom(E,F) \right)\otimes_{\mathbb{C}} 
\mathbb{C}[\varepsilon^{-\frac{1}{2}},\log(\varepsilon)]$ called \textbf{local counterterms}, such that
the limit 
$$\mathcal{V}\in C^\infty(M,Hom(E,F))\mapsto \lim_{\varepsilon\rightarrow 0^+}
\exp\left(-\int_M \Lambda_\varepsilon(\mathcal{V}(x))\right)  
\text{det}_F\left(Id+P^{-1}\mathcal{V}_\varepsilon \right) $$ 
makes sense as entire function of $\mathcal{V}$~\footnote{$\det_F\left(Id+P^{-1} \mathcal{V}_\varepsilon \right)$ 
is well defined for $\varepsilon>0$ since 
$P^{-1} \mathcal{V}_\varepsilon\in \Psi^{-\infty}$}. 
Then $\lim_{\varepsilon\rightarrow 0^+}
\exp\left(-\int_M \Lambda_\varepsilon(\mathcal{V}(x))\right)  
\text{det}_F\left(Id+P^{-1}\mathcal{V}_\varepsilon \right) $ is the
\textbf{renormalization} of the singular family $\text{det}_F\left(Id+P^{-1}\mathcal{V}_\varepsilon \right) $ \textbf{by subtraction of local counterterms}. 
\end{defi}
%
%

\section{Main Theorems.}

\subsection{Main Theorem on the structure of zeta determinants.}

Our first main result gives 
a factorization formula for zeta determinants in terms of Gohberg--Krein's determinants and renormalized Feynman amplitudes. In fact, the reader can think of this result as some infinite dimensional analog of Hadamard's factorization Theorem~\ref{t:hadamard} in infinite dimension where we think of $\det_\zeta$ as an \textbf{entire function} on the affine space $\mathcal{A}$.
In the sequel, 
we denote by $d_n\subset M^n$, the deepest diagonal $\{(x,\dots,x)\in M^n\text{ s.t. } x\in M\}\subset M^n$ and
by $N^*\left(d_n\subset M^n \right)$ the conormal bundle of $d_n$. 
We use the notion of wave front set $WF(t)$ of a distribution $t$ 
to describe singularities of $t$ in cotangent space 
and refer to~\cite[chapter 8]{HormanderI} for the precise definitions.

For $a\in \mathbb{R}$, we denote by $[a]=\sup_{k\in \mathbb{Z}, k\leqslant a}k$.
The bundle of densities on a manifold $X$ will be denoted by $\vert\Lambda^{top}\vert X$.

\begin{thm}\label{t:quillenconjzeta}
The zeta determinants from definition~\ref{d:bosonsfermionszeta} are entire functions on $\mathcal{A}$ 
satisfying the factorization formula:
\begin{eqnarray}\label{e:bosonfacthadamard}
\text{det}_\zeta\left(\Delta+V \right)&=&e^{Q(V)}\text{det}_{p}\left(Id+\Delta^{-1}V \right), p=[\frac{d}{2}]+1 \text{ in bosonic case}\\
\label{e:fermionfacthadamard}
\text{det}_\zeta(D^*(D+A))&=&e^{Q(A)}\text{det}_{p}\left(Id+D^{-1}A \right), p=d+1 \text{ in fermionic case}
\end{eqnarray}
where $\det_p$ are Gohberg--Krein's determinants,
$Q(V)$ (resp $Q(A)$) has degree $[\frac{d}{2}] $ (resp $d$)~\footnote{Beware that $Q$ is not a local polynomial functional}.   

We furthermore have the following properties
\begin{itemize}
\item an exponential bound on the growth:
\begin{eqnarray*}
\vert \text{det}_\zeta\left(\Delta+V \right)\vert&\leqslant& Ce^{K\Vert V\Vert_{C^{d-3}}^{[\frac{d}{2}]+1}  }\\
\vert \text{det}_\zeta(D^*(D+A))\vert&\leqslant& Ce^{K\Vert A\Vert_{C^{d-1}}^{d+1}  },
\end{eqnarray*}
\item an identity for all the G\^ateaux differentials:
\begin{eqnarray*}
\frac{(-1)^{n-1}}{n-1!}D^n\log\text{det}_\zeta\left(\Delta+V,V_1,\dots,V_n \right)&=&
Tr_{L^2}\left((\Delta+V)^{-1} V_1 \dots (\Delta+V)^{-1} V_n\right),\\\text{ if }\text{supp}(V_1)\cap\dots\cap \text{supp}(V_n)=\emptyset &&  \\
\frac{(-1)^{n-1}}{n-1!}D^n\log\text{det}_\zeta\left(D^*(D+A),A_1,\dots,A_n\right)&=&Tr_{L^2}\left((D+A)^{-1} A_1\dots (D+A)^{-1} A_n\right),\\
 \text{ if }\text{supp}(A_1)\cap\dots\cap \text{supp}(A_n)=\emptyset &&
\end{eqnarray*}
\item a bound on the wave front set of the Schwartz kernels of all the G\^ateaux differentials:
\begin{eqnarray*}
 WF\left( [\mathbf{D\log\text{det}_\zeta\left(\Delta\right)}] \right)=\emptyset,\\
\forall n\geqslant 2, WF\left( [\mathbf{D^n\log\text{det}_\zeta\left(\Delta\right)}] \right)\cap T_{d_n}^*M^n&\subset &N^*(d_n\subset M^n),\\
WF\left( [\mathbf{D\log\text{det}_\zeta\left(D^*D\right)}] \right) =\emptyset,\\ 
\forall n\geqslant 2, WF\left( [\mathbf{D^n\log\text{det}_\zeta\left(D^*D\right)}] \right)\cap T_{d_n}^*M^n&\subset &N^*(d_n\subset M^n), 
\end{eqnarray*}
where $[\mathbf{D^n\log\text{det}_\zeta\left(\Delta\right)}]\in \mathcal{D}^\prime(M^n)$ (resp $[\mathbf{D^n\log\text{det}_\zeta\left(D^*D\right)}] \in \mathcal{D}^\prime(M^n)$) denotes the Schwartz kernel of the $n$-th differential 
$D^n\log\det_\zeta\left(\Delta\right)$ (resp $D^n\log\det_\zeta\left(D^*D\right)$).
\end{itemize}
\end{thm}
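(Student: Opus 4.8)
The strategy is to reduce everything to the three basic tools already present in the paper: (i) the Gohberg--Krein determinants $\det_p$ from Section~2, viewed as entire functions on $Id+\mathcal{I}_p$; (ii) the Seeley theory of complex powers and the resulting meromorphic continuation of the spectral zeta function, which shows $\det_\zeta$ is well defined and its logarithm is smooth in $\mathcal{V}$; (iii) the variational formulas for $\zeta'$ coming from differentiating $\left(\Delta+V\right)^{-s}$ in $V$. The core of the argument is the factorization formulas (\ref{e:bosonfacthadamard}),(\ref{e:fermionfacthadamard}); the exponential bounds, the differential identities, and the wave front bounds are then extracted from that factorization plus elementary estimates.

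First I would establish the factorization. In the bosonic case, $\Delta^{-1}V\in\Psi^{-2}(M,E)$ is of order $-2$, hence lies in $\mathcal{I}_p$ as soon as $p>\dim M/2$, so the choice $p=[d/2]+1$ is legitimate once we recall that $d$ is an upper bound for $\dim M$ (the degree bounding the traces to be regularized). Thus $\det_p\left(Id+\Delta^{-1}V\right)$ is an entire function on $C^\infty(End(E))$ vanishing exactly on non-invertible $\Delta+V$. The quotient $\det_\zeta\left(\Delta+V\right)/\det_p\left(Id+\Delta^{-1}V\right)$ is therefore a nowhere-vanishing entire function, so it equals $e^{Q(V)}$ for a unique entire $Q$ with $Q(0)$ normalized. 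To see that $Q$ is polynomial of degree $[d/2]$ one differentiates: using the standard identity $\frac{d}{dt}\log\det_\zeta(\Delta+V+tW)|_{t=0}=-\,\mathrm{FP}_{s=0}\,s\,\Tr\left((\Delta+V)^{-1}W(\Delta+V)^{-s}\right)$ coming from Seeley's calculus (no poles at $s=0$), versus the convergent trace expansion $\frac{d}{dt}\log\det_p(Id+\Delta^{-1}(V+tW))|_{t=0}=\sum_{m\geqslant p}(-1)^{m-1}\Tr\big((\Delta^{-1}V)^{m-1}\Delta^{-1}W\big)$, so $D Q(V,W)$ is the difference of a finite sum of resolvent traces (the "missing" low-order terms $m<p$ regularized by zeta, plus a heat-kernel local term), each term being polynomial in $V$ of degree $<p$; the same computation with the Dirac square $D^*(D+A)$, where now $D^{-1}A\in\Psi^{-1}$ and one needs $p=d+1$, gives the fermionic formula and $\deg Q=d$. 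The honest bookkeeping here — identifying exactly which finite collection of regularized traces appears and checking the degree count — is routine but is where most of the writing goes; I would lean on Section~\ref{ss:geomsetting} and the Seeley references for the meromorphy inputs.

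From the factorization the remaining three bullets follow. The exponential bound: $|\det_p(Id+B)|\leqslant \exp\left(c_p\Vert B\Vert_p^p\right)$ is the classical Gohberg--Krein estimate, and for $B=\Delta^{-1}V$ one has $\Vert\Delta^{-1}V\Vert_p\leqslant C\Vert V\Vert_{C^{k}}$ with $k$ determined by how many derivatives are needed to push $V$ into the Schatten class (this produces the exponent $C^{d-3}$ bosonically and $C^{d-1}$ fermionically, with the power $[d/2]+1$ resp.\ $d+1$); combined with the trivial polynomial bound $|e^{Q(V)}|\leqslant e^{c\Vert V\Vert^{\deg Q}}$, which is absorbed by the larger Schatten term, this gives the stated inequalities. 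The G\^ateaux-differential identity off the diagonal: when $\supp(V_1)\cap\dots\cap\supp(V_n)=\emptyset$, the operator $(\Delta+V)^{-1}V_1\cdots(\Delta+V)^{-1}V_n$ has a smooth, rapidly vanishing kernel near the total diagonal and is trace class, and the local counterterm / polynomial part $Q$ contributes nothing to the $n$-th differential when supports are disjoint (a local functional's $n$-th functional derivative is supported on the diagonal), so the zeta and Gohberg--Krein sides agree and collapse to the single resolvent trace; the sign and factorials $\frac{(-1)^{n-1}}{(n-1)!}$ come from expanding $\log\det(Id+B)=\sum_m\frac{(-1)^{m-1}}{m}\Tr B^m$ and differentiating. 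Finally the wave front bound: $[\mathbf{D\log\det_\zeta}]$ is the kernel of a zeta-regularized trace, which by Seeley's theory is given by a smooth density on $M$, hence $WF=\emptyset$; for $n\geqslant 2$ the $n$-th differential away from the diagonal is the smooth kernel $(x_1,\dots,x_n)\mapsto\Tr\big(G(x_1,x_2)\cdots G(x_n,x_1)\big)$ with $G$ the Schwartz kernel of $(\Delta+V)^{-1}$ (or $(D+A)^{-1}$), and the only singularities live over the diagonal with wave front in the conormal $N^*(d_n\subset M^n)$ because each resolvent kernel $G$ is a conormal distribution of the operator of Laplace/Dirac type along the diagonal of $M\times M$ and the composition law for wave front sets of such products keeps covectors in the conormal directions. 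The main obstacle I anticipate is not any single estimate but the degree/jet bookkeeping in the factorization step — pinning down precisely which regularized traces survive, that they are genuinely polynomial of the claimed degree, and that $Q$ (though not itself a local functional) has $n$-th differential supported on the diagonal for $n$ large, so it does not interfere with the off-diagonal identities.
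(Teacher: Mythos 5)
Your overall skeleton — compare the G\^ateaux differentials of $\log\det_\zeta$ and of $\log\det_p(Id+P^{-1}\mathcal{V})$, observe that they agree from some finite order on, and conclude that the difference is a continuous polynomial — is indeed the paper's route (carried out via heat-kernel representations of the differentials and Lemma~\ref{l:raytofrechet}). But two of your steps would fail as written. First, the disjoint-support identity: you dispose of the $Q$-term by saying that a local functional's $n$-th derivative is supported on the diagonal, yet $Q$ is explicitly \emph{not} local (see the footnote in the statement): it is a pairing of $V^{\boxtimes n}$ against distributional extensions $\mathcal{R}t_n$ of products of Green kernels, which are supported on all of $M^n$, so $D^nQ(V;V_1,\dots,V_n)$ does not vanish for disjointly supported $V_1,\dots,V_n$. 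Hence for $n\leqslant[\frac d2]$ (resp.\ $n\leqslant d$) the identity cannot be read off from the factorization; the paper proves it directly from the heat-kernel representation of $D^n\log\det_\zeta$ (Duhamel expansion, spectral gap and exponential decay for small perturbations, and the fact that the finite part at $s=0$ collapses to the convergent trace when the supports are disjoint, Proposition~\ref{p:functionalder}), and then extends by analytic continuation. Relatedly, your opening move — ``the quotient $\det_\zeta/\det_p$ is a nowhere-vanishing entire function'' — presupposes that $\det_\zeta$ vanishes exactly where $\det_p$ does with matching orders along complex lines, which is a \emph{consequence} of the factorization, not an available input; the fix is to work for small $\mathcal{V}$ (where both logs exist), prove the polynomial identity there, and continue analytically.

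Second, the wave front bound over $d_n$ does not follow from the composition law for products of conormal distributions: H\"ormander's multiplication criterion fails precisely on the deepest diagonal (the cyclic covectors $(\eta_i,-\eta_i)$ from the factors $\mathbf{G}(x_i,x_{i+1})$ can sum to zero there), so $t_n$ is only canonically defined on $M^n\setminus d_n$, and $[\mathbf{D^n\log\det_\zeta}]$ is a particular distributional \emph{extension} across $d_n$ whose wave front over $d_n$ depends on the extension chosen. Bounding it requires a direct analysis of the regularization — in the paper, writing the differentials as finite parts of heat-kernel integrals, resolving by the blow-up of Section~\ref{s:wfbound}, and estimating the Fourier transform of the singular part to force $\xi_1+\dots+\xi_n=0$ (Proposition~\ref{p:wfbound}). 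Finally, your mechanism for the $C^{d-3}$ (resp.\ $C^{d-1}$) norms is off: no derivatives of $V$ are needed for Schatten membership, since $\Vert\Delta^{-1}V\Vert_{[\frac d2]+1}\leqslant\Vert\Delta^{-1}\Vert_{[\frac d2]+1}\Vert V\Vert_{C^0}$; the jet order enters through $Q$, whose building blocks $\mathcal{R}t_n$ are extensions of order $d-3$ (resp.\ $d-1$) produced by the Taylor subtraction in the heat-calculus analysis, and it is this term — not the Gohberg--Krein factor — that forces the stated norms.
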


The choice of branch of the $\log$ is dictated by the Agmon angle but the results on the differentials of $\log\det_\zeta$ does not depend on the chosen branch of $\log$.

There are several consequences of the above result. The first straightforward consequence
is that
the zeta determinants of Theorem~\ref{t:quillenconjzeta} vanish exactly over non-invertible elements in $\mathcal{A}$ in the following sense:
\begin{eqnarray*}
\text{det}_\zeta(\Delta+V)=0&\Leftrightarrow &\text{det}_\zeta\left(\Delta+zV \right)=(z-1)^{\dim(\ker(\Delta+V))}\left(C+\mathcal{O}(z-1) \right), C\neq 0, \\
\text{det}_\zeta(D^*(D+A))=0&\Leftrightarrow &\text{det}_\zeta\left(D^*(D+zA)\right)=(z-1)^{\dim(\ker(D^*(D+A)))}\left(C+\mathcal{O}(z-1) \right), C\neq 0.
\end{eqnarray*} 
Furthermore:
\begin{corr}[Zeta determinant for
non smooth, non-self-adjoint perturbations]
\label{c:zetafact}
The zeta determinants of Theorem~\ref{t:quillenconjzeta} extend as entire functions
of \textbf{non smooth, non-self-adjoint} perturbations
\begin{itemize}
\item of $\Delta$ of regularity $C^{d-3}(End(E))\cap L^\infty(End(E))$ in the bosonic case,  
\item of $D$ of regularity $C^{d-1}(Hom(E_+,E_-))$ in the fermionic case.
\end{itemize}
\end{corr}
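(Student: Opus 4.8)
The plan is to derive Corollary~\ref{c:zetafact} directly from the factorization formulas~(\ref{e:bosonfacthadamard}),(\ref{e:fermionfacthadamard}) of Theorem~\ref{t:quillenconjzeta}, so the whole point is to understand how each factor on the right-hand side depends on the perturbation and to check which factors survive when the regularity of $\mathcal{V}$ is lowered. Concretely, in the bosonic case we have
\begin{equation*}
\text{det}_\zeta(\Delta+V)=e^{Q(V)}\,\text{det}_p\bigl(Id+\Delta^{-1}V\bigr),\qquad p=[\tfrac{d}{2}]+1,
\end{equation*}
so I must argue that (i) the Gohberg--Krein factor extends to an entire function on the larger Banach space $C^{d-3}(End(E))\cap L^\infty(End(E))$, and (ii) the polynomial factor $Q$ extends there as well. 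For (i), I would observe that $\Delta^{-1}:L^\infty(End(E))\to H^2\subset L^2$ maps into a fixed Schatten class: since $\Delta^{-1}$ is an elliptic pseudodifferential operator of order $-2$ on a manifold of dimension $n=\dim M$, $\Delta^{-1}\mathcal{V}\in \mathcal{I}_q$ for any $q>n/2$, hence a fortiori $\Delta^{-1}V\in\mathcal{I}_p$ once $p=[\tfrac d2]+1>\tfrac{d}{2}\geqslant \tfrac n2$ (the relation $d\geqslant n$ is forced by the counterterm bound $d$ in the construction, so this is where the specific value of $p$ in the theorem is used). The map $V\mapsto \Delta^{-1}V\in\mathcal{I}_p$ is continuous and complex-linear from $L^\infty$, and $\det_p$ is a continuous entire function on $Id+\mathcal{I}_p$ by the material recalled in the Gohberg--Krein subsection, so the composition $V\mapsto\det_p(Id+\Delta^{-1}V)$ is entire on $L^\infty(End(E))$. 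This already handles the part of the statement that does not involve $C^{d-3}$.

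For (ii), the subtle point is the factor $e^{Q(V)}$. Here $Q$ is not local but, by its construction inside the proof of Theorem~\ref{t:quillenconjzeta} (it is assembled from finitely many Seeley heat-coefficient / resolvent-expansion terms), it is a polynomial of degree $[\tfrac d2]$ in $V$ whose coefficients are integrals against smooth kernels of $V$ and finitely many of its derivatives — the highest derivative order that enters being $d-3$ in the bosonic case (this is exactly the source of the $C^{d-3}$ in the statement, and of the growth bound $\Vert V\Vert_{C^{d-3}}^{[d/2]+1}$ already recorded in the theorem). Thus $Q$ is a continuous polynomial map on $C^{d-3}(End(E))$, and so $V\mapsto e^{Q(V)}$ is entire there. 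Intersecting the two domains of definition, the product $e^{Q(V)}\det_p(Id+\Delta^{-1}V)$ is entire on $C^{d-3}(End(E))\cap L^\infty(End(E))$, and it agrees with $\det_\zeta(\Delta+V)$ on the dense subspace $C^\infty(End(E))$, which is the desired extension. The fermionic case is identical after replacing $\Delta^{-1}$ by $D^{-1}$ (order $-1$, so now $\mathcal{I}_q$ for $q>n$, matching $p=d+1$), $V$ by $A$, and $C^{d-3}$ by $C^{d-1}$; note there is no $L^\infty$ intersection needed because $C^{d-1}\hookrightarrow L^\infty$ already — one can check $D^{-1}A\in\mathcal I_{d+1}$ directly from $A\in C^{d-1}\subset L^\infty$.

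I expect the main obstacle to be purely bookkeeping: extracting from the (not-yet-presented) proof of Theorem~\ref{t:quillenconjzeta} the precise claim that $Q$ depends only on $C^{d-3}$ (resp.\ $C^{d-1}$) jets of the perturbation and is polynomial of the stated degree, rather than merely smooth. If $Q$ were only known to be a smooth (Bastiani) function on $C^\infty$ with the stated growth bound, I would instead argue by density and the growth estimate: the family $\det_\zeta(\Delta+V_\varepsilon)$ for mollifications $V_\varepsilon\to V$ in $C^{d-3}\cap L^\infty$ is uniformly bounded on bounded sets by $Ce^{K\Vert V\Vert_{C^{d-3}}^{[d/2]+1}}$ and, being a sequence of entire functions convergent on the dense subspace, converges locally uniformly (Vitali/Montel-type theorem for holomorphic functions on Fréchet spaces, cf.\ Definition~\ref{d:analfunfrechet}) to an entire limit, which is the sought extension. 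Either route closes the argument; the growth bounds already in Theorem~\ref{t:quillenconjzeta} are exactly what makes the limiting procedure legitimate.
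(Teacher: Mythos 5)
Your proposal is correct and is essentially the paper's own argument: the corollary is read off from the factorization formula, with the Gohberg--Krein factor entire in $V\in L^\infty$ (via $\Delta^{-1}V\in\mathcal{I}_{[\frac d2]+1}$, resp.\ $D^{-1}A\in\mathcal{I}_{d+1}$, exactly as in Proposition~\ref{p:convdethadamard}; note $d=\dim M$, so your detour through a separate parameter $n$ is unnecessary) and the exponential factor entire on $C^{d-3}$ (resp.\ $C^{d-1}$). The one refinement is the precise reason $Q$ extends, which is the statement recorded in Theorem~\ref{t:mainzetareformulate}: its coefficients are the distributional extensions $\mathcal{R}t_n$, which are distributions of order at most $d-3$ (resp.\ $d-1$) — not pairings of smooth kernels with derivatives of $V$ — and this order bound is exactly what yields continuity of the polynomial $Q$ on $C^{d-3}(End(E))$ (resp.\ $C^{d-1}(Hom(E_+,E_-))$) and hence the stated domains $C^{d-3}\cap L^\infty$ and $C^{d-1}$.
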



\subsection{An analytic reformulation of Quillen's conjectural picture.}
\label{ss:introanalyticproblem}

In our setting, we attempt to reformulate Quillen's question as a problem 
of constructing an entire function with prescribed zeros in the infinite dimensional space $\mathcal{A}$
generalizing the Fredholm determinant.
Our first Theorem~\ref{t:quillenconjzeta} seems to indicate that the 
zeta determinant $\det_\zeta$ is a good candidate, but is it the only possible construction ? 
A naive approach suggested by Quillen in~\cite{Quillen89}
would be to consider the Fredholm determinant
$\det_F\left(Id+D^{-1}A \right)$ where for small $A$, we expect
that $$\log\text{det}_F\left(Id+D^{-1}A \right)=\sum_{k=1}^\infty \frac{(-1)^{k+1}}{k}Tr\left((D^{-1}A)^k \right).$$
However as remarked by Quillen, the operator $D^{-1}A$ is a pseudodifferential operator of order $-1$, hence for $k>d$ the power $(D^{-1}A)^k$ is trace class hence the 
traces $Tr\left((D^{-1}A)^k  \right)$ are well--defined whereas for $k\leqslant d$ these traces are ill--defined and often \textbf{divergent}
as usual in QFT. We will later see how to deal with these divergent traces in Theorem~\ref{t:quillenconjanal}.

 We next formulate the general problem of finding renormalized determinants
with functional properties closed to zeta determinants:

\begin{prob}[Renormalized determinants]\label{d:renormdet1}
Under the geometric setting from paragraph~\ref{ss:geomsetting}, 
set $\mathcal{A}=P+C^\infty(Hom(E,F)), p=\deg(P)$ 
where $P=\Delta, p=2, E=F$ in the bosonic case and 
$P=D, p=1, E=E_+, F=E_-$ in the fermionic case. 
An entire function $\mathcal{R}\det:\mathcal{A}\mapsto \mathbb{C}$
will be called renormalized determinant
if
\begin{enumerate}
\item $\mathcal{R}\det$ vanishes exactly on the subset of noninvertible elements 
in the following sense
$$\mathcal{R}\text{det}(P+\mathcal{V})=0\Leftrightarrow \mathcal{R}\text{det}\left(P+z\mathcal{V}\right)=(z-1)^{\dim(\ker(P+\mathcal{V}))}\left(C+\mathcal{O}(z-1) \right), C\neq 0,$$
and 
$\mathcal{R}\det$ satisfies the bound:
\begin{eqnarray}\label{e:boundrdet}
\boxed{ \vert\mathcal{R}\det\left( P+\mathcal{V} \right) \vert \leqslant C e^{K\Vert \mathcal{V}\Vert^{[\frac{d}{p}]+1}_{C^m} } }
\end{eqnarray}
for the continuous norm $\Vert .\Vert_{C^m}$  
on $C^\infty(Hom(E,F))$ where $m=d-3$ in the bosonic case, $m=d-1$ in the fermionic case 
and $C,K>0$ independent of $\mathcal{V}$.
\item For $n>[\frac{d}{p}]$, 
\begin{equation}\label{e:constraint} 
\boxed{\frac{(-1)^{n-1}}{n-1!} \left(\frac{d}{dz}\right)^n \log\mathcal{R}\det(P+z\mathcal{V})|_{z=0}=Tr_{L^2}\left(\left(P^{-1}\mathcal{V}\right)^n \right).}
\end{equation}
\item For $\Vert \mathcal{V}\Vert_{C^m}$ small enough, 
we further impose two conditions of microlocal nature 
on the second G\^ateaux differential of $\mathcal{R}\det$. The first one reads:
\begin{equation}\label{e:secderivative}
\boxed{D^2\log\mathcal{R}\det\left(P+\mathcal{V},V_1,V_2 \right)=Tr_{L^2}\left(\left(P+\mathcal{V}\right)^{-1}V_1 \left(P+\mathcal{V}\right)^{-1}V_2\right) }
\end{equation}
if $\text{supp}(V_1)\cap \text{supp}(V_2)=\emptyset$ where the $L^2$ trace is well--defined since $\left(P+\mathcal{V}\right)^{-1}V_1 \left(P+\mathcal{V}\right)^{-1}V_2$ is smoothing.\\
Recall $[\mathbf{D^2\log\mathcal{R}\det(\Delta+\mathcal{V})}]\in \mathcal{D}^\prime(M^2,Hom(E,F)\boxtimes Hom(E,F))$ denotes the Schwartz kernel of 
the bilinear map $D^2\log\mathcal{R}\det\left(P+\mathcal{V},.,. \right)$, then the second condition reads: 
\begin{equation}\label{e:wfsecderivative}
\boxed{WF\left([\mathbf{D^2\log\mathcal{R}\det(\Delta+\mathcal{V})}]\right)\cap T_{d_2}^\bullet M^2 \subset N^*\left(d_2\subset M^2\right).}
\end{equation}
\end{enumerate}
\end{prob}

Note that in the fermionic case, our discussion is non trivial if 
the Fredholm index of $D$ vanishes. But in fact, we require in definition~\ref{d:fermioncase}
that $D$ is invertible which means having Fredholm index $0$ and $\ker(D)=\{0\}$ which is a stronger condition.
Also the $L^2$ trace in the r.h.s. of equation~\ref{e:secderivative} is well--defined since  
$\left(P+\mathcal{V}\right)^{-1}V_1 \left(P+\mathcal{V}\right)^{-1}V_2$ is smoothing because  the condition
$\text{supp}(V_1)\cap \text{supp}(V_2)=\emptyset$
on the supports of $V_1,V_2$ implies the symbol of
$\left(P+\mathcal{V}\right)^{-1}V_1 \left(P+\mathcal{V}\right)^{-1}V_2$ vanishes 
(see~\cite[1.1]{kontsevichdeterminants} 
for similar observations).

Let us motivate the axioms from definition~\ref{d:renormdet1}.
About condition $1)$,
it is natural to require our determinants
to vanish on noninvertible elements since
they generalize the usual Fredholm determinant.
Furthermore, we want to minimize the 
growth at infinity of the entire function
$z\in \mathbb{C}\mapsto \mathcal{R}\det(P+z\mathcal{V}
)$, hence its \emph{order} in the sense of subsection~\ref{ss:entireonevar}. 
We will see in corollary~\ref{c:optimalorder} 
that our condition on the order of $z\in \mathbb{C}\mapsto \mathcal{R}\det(P+z\mathcal{V}
)$ is optimal in the sense this is the smallest growth at infinity we can require. 
This is 
in some sense responsible for the polynomial ambiguity conjectured
by Quillen which prevents us from having a unique solution to problem~\ref{d:renormdet1}. In the same way, 
there is not necessarily a unique solution
to the problem of finding an entire function with prescribed zeros in the Hadamard factorization Theorem~\ref{t:hadamard}.

About condition $2)$ that we impose on the derivatives
of $\log\mathcal{R}\det$, this is reminiscent of
the derivatives for the $\log$ of 
Gohberg--Krein's determinants $\mathcal{V}\mapsto  \log\det_p\left(Id+P^{-1}\mathcal{V} \right)$. 
$\mathcal{V}\mapsto  \det_p\left(Id+P^{-1}\mathcal{V} \right)$ also vanishes 
exactly on non-invertible elements. However, Gohberg--Krein's determinants $\mathcal{V}\mapsto  \det_p\left(Id+P^{-1}\mathcal{V} \right)$
fail to satisfy the conditions on the
second derivative of problem \ref{d:renormdet1}, 
hence by our main Theorem \ref{t:quillenconjanal} they cannot be obtained
from renormalization by subtraction of local counterterms
since our Theorem~\ref{t:quillenconjanal} will show
that these conditions are necessary to 
describe all renormalized determinants which can be obtained
by a renormalization procedure where we subtract only local counterterms.

About condition $3)$,
Equations~(\ref{e:constraint}) and~(\ref{e:secderivative}) are very natural since they are reminiscent of
the usual determinant in
the finite dimensional case.
In the seminal work of Kontsevich--Vishik~\cite[equation (1.4) p.~4]{kontsevichdeterminants}, they 
attribute to Witten
the observation that for the zeta determinant, 
the following identity 
$$D^2\log\text{det}_\zeta\left(A,A_1,A_2 \right)=-Tr_{L^2}\left(A_1 A^{-1}A_2 A^{-1}\right)$$
holds true
where $A_1,A_2$ are pseudodifferential deformations with \textbf{disjoint support}.
This is not surprising provided we want our determinants to give rigorous meaning
to QFT functional integrals.  
~\footnote{In the present paper, we take this as axiom of our renormalized determinants
and the identity (\ref{e:secderivative}) follows
from a formal applications of Feynman rules.}
Finally, we want to subtract only \textbf{smooth local counterterms in} $\mathcal{V}$, this smoothness will be 
imposed by the conditions on the wave front set of the Schwartz kernel of the G\^ateaux differentials.
The bound on $m$ is also optimal, locality forces renormalized determinants to
depend on $m$-jets of the external potential $\mathcal{V}$.

\subsubsection{Solution of problem \ref{d:renormdet1}.}
We now state the main Theorem of the present paper answering Problem~\ref{d:renormdet1}, the assumptions
are from paragraph~\ref{ss:geomsetting} 
:
\begin{thm}[Solution of the analytical problem]
\label{t:quillenconjanal}
A map $\mathcal{R}\det:\mathcal{A}\mapsto \mathbb{C}$ is a solution of problem
\ref{d:renormdet1} if and only if the following equivalent conditions are satisfied:  
\begin{enumerate}
\item there exists $Q\in \mathcal{O}_{loc,[\frac{d}{p}]}\left(J^mHom(E,F) \right)$
such that
{\small
\begin{eqnarray}
V\mapsto\mathcal{R}\det(\Delta+V)&=&\exp\left(Q(V) \right)\text{det}_\zeta\left(\Delta+V \right), p=2, m=d-3\text{ for bosons}\\
A\mapsto \mathcal{R}\det(D+A)&=&\exp\left(Q(A) \right)\text{det}_\zeta\left(D^*(D+A) \right), p=1, m=d-1\text{ for fermions}.
\end{eqnarray}
}

\item $\mathcal{R}\det$ is renormalized by subtraction of local counterterms.
There exists a generalized Laplacian $\Delta$ with heat operator
$e^{-t\Delta}$ and a family $Q_\varepsilon\in \mathcal{O}_{loc,[\frac{d}{p}]}\left(J^mHom(E,F) \right)\otimes_\mathbb{C} \mathbb{C}[\varepsilon^{-\frac{1}{2}},\log(\varepsilon)]$ such that~\footnote{the choice of mollifier 
$e^{-2\varepsilon\Delta}$ is consistent with the GFF interpretation since the covariance of the heat regularized GFF $e^{-\varepsilon\Delta}\phi$
is $e^{-2\varepsilon\Delta}\Delta^{-1}$}:
\begin{eqnarray}
\mathcal{V}\mapsto \mathcal{R}\det\left(P+\mathcal{V} \right)=\lim_{\varepsilon\rightarrow 0^+}\exp\left(Q_\varepsilon(\mathcal{V})\right) \text{det}_F\left(Id+e^{-2\varepsilon\Delta}P^{-1} \mathcal{V} \right).
\end{eqnarray}
\end{enumerate}
\end{thm}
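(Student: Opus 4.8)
\textbf{Proof proposal for Theorem~\ref{t:quillenconjanal}.}

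The plan is to prove a cycle of implications: (1)$\Rightarrow$(2), (2)$\Rightarrow$ ``$\mathcal{R}\det$ solves Problem~\ref{d:renormdet1}'', and ``$\mathcal{R}\det$ solves Problem~\ref{d:renormdet1}''$\Rightarrow$(1). The first main input is Theorem~\ref{t:quillenconjzeta}, which already identifies $\det_\zeta$ as an entire function on $\mathcal{A}$ with the right order, the right behaviour of its G\^ateaux differentials for disjointly supported deformations, and the right wave front bound. Hence any function of the form $\exp(Q(\mathcal{V}))\det_\zeta$ with $Q\in \mathcal{O}_{loc,[\frac{d}{p}]}$ automatically inherits the vanishing locus, and since $Q$ is a \emph{local} polynomial functional, adding $D^nQ$ to the differentials of $\log\det_\zeta$ does not spoil~(\ref{e:secderivative}) or~(\ref{e:wfsecderivative}): locality means $[\mathbf{D^nQ(\mathcal{V})}]$ is supported on the deepest diagonal $d_n$ with wave front in the conormal $N^*(d_n\subset M^n)$, so it is invisible to the disjoint-support condition and compatible with the conormal wave front bound. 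The growth bound~(\ref{e:boundrdet}) survives because $Q$ has degree $[\frac{d}{p}]=[d/p]$ and depends on finitely many ($m$-)jets, so $|Q(\mathcal{V})|\lesssim \Vert\mathcal{V}\Vert_{C^m}^{[d/p]}$, a lower power than the exponent in the bound. This shows (1) $\Rightarrow$ (solution of Problem~\ref{d:renormdet1}).

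For (1)$\Rightarrow$(2), the key is a heat-kernel computation: one writes $\det_\zeta(\Delta+V)=\lim_{\varepsilon\to 0^+}\exp(Q_\varepsilon(V))\det_F(Id+e^{-2\varepsilon\Delta}\Delta^{-1}V)$ for an explicit family of local counterterms $Q_\varepsilon$. This is where one uses the standard small-time asymptotic expansion of $\Tr(e^{-t\Delta}f)$ in powers of $t$ (Gilkey, Seeley) together with the relation between $\det_\zeta$ and the regularized heat trace. Concretely, expand $\log\det_F(Id+e^{-2\varepsilon\Delta}P^{-1}\mathcal{V})=\sum_{k\geq 1}\frac{(-1)^{k+1}}{k}\Tr((e^{-2\varepsilon\Delta}P^{-1}\mathcal{V})^k)$; for $k>[d/p]$ the trace converges to $\Tr((P^{-1}\mathcal{V})^k)$ as $\varepsilon\to 0^+$ (no renormalization needed, matching axiom~(\ref{e:constraint})), while for $k\leq [d/p]$ the trace has a singular expansion in $\varepsilon^{-1/2}$ and $\log\varepsilon$ whose \emph{singular part} is, by the locality of heat-kernel coefficients, an integral $\int_M$ of a polynomial in jets of $\mathcal{V}$ — exactly a local counterterm $\in \mathcal{O}_{loc,[\frac{d}{p}]}(J^mHom(E,F))\otimes \mathbb{C}[\varepsilon^{-1/2},\log\varepsilon]$. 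Subtracting it gives a finite limit, and one checks this finite limit equals $\exp(Q(\mathcal{V}))\det_\zeta$ for some local $Q\in\mathcal{O}_{loc,[\frac{d}{p}]}$ by comparing $\frac{d}{ds}\big|_0$ of the zeta function with the renormalized heat trace (the Mellin transform identity $\zeta'(0) = -\mathrm{FP}_{t\to 0}\int_t^\infty \Tr(\cdots)\frac{dt}{t}$-type argument). The converse (2)$\Rightarrow$(1) is then a matter of observing that any two choices of counterterm families differ by a local polynomial functional, so the limit in (2) always lands in the orbit $\exp(\mathcal{O}_{loc,[\frac{d}{p}]})\cdot\det_\zeta$.

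The genuinely hard direction is (solution of Problem~\ref{d:renormdet1})$\Rightarrow$(1): one must show that \emph{any} $\mathcal{R}\det$ satisfying the three axioms differs from $\det_\zeta$ by $\exp$ of a \emph{local} polynomial functional. I would argue as follows. Set $F=\log\mathcal{R}\det-\log\det_\zeta$, a well-defined holomorphic (multivalued only through branch of $\log$, but $D F$ is single-valued) function on the connected open set of invertible elements; since both $\mathcal{R}\det$ and $\det_\zeta$ vanish to the \emph{same} order on the same zero set (axiom 1 pins the multiplicity via the $(z-1)^{\dim\ker}$ behaviour), $F$ extends holomorphically across $Z$, hence is entire on $\mathcal{A}$. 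Axiom~(\ref{e:constraint}) with $n>[d/p]$ forces all Taylor coefficients of $z\mapsto F(P+z\mathcal{V})$ of order $>[d/p]$ to vanish along every ray, so $F(P+z\mathcal{V})$ is a polynomial in $z$ of degree $\leq[d/p]$; combined with the order-$[d/p]+1$ growth bound~(\ref{e:boundrdet}) on both determinants, a Liouville/Hadamard-type argument (this is corollary~\ref{c:optimalorder}'s mechanism, the infinite-dimensional Hadamard Theorem~\ref{t:hadamard}) shows $F$ is a polynomial functional of degree $\leq[d/p]$ on $\mathcal{A}$, i.e. $F\in \bigoplus_{j\leq[d/p]}$ (symmetric $j$-linear forms). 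It remains to prove $F$ is \emph{local}, i.e. its differentials have Schwartz kernels supported on deepest diagonals with conormal wave front. For the second differential this is immediate: axiom~(\ref{e:secderivative}) says $D^2\log\mathcal{R}\det$ and (by Theorem~\ref{t:quillenconjzeta}) $D^2\log\det_\zeta$ agree for disjointly supported deformations, so $[\mathbf{D^2F}]$ is supported on $d_2$; axiom~(\ref{e:wfsecderivative}) plus the wave front bound from Theorem~\ref{t:quillenconjzeta} forces $WF([\mathbf{D^2F}])\subset N^*(d_2)$, and a distribution supported on the diagonal with conormal wave front is a finite-order transversal derivative of a $\delta$-section, i.e. a local bilinear functional. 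For the higher differentials $D^nF$ with $n\leq[d/p]$ one bootstraps: $D^nF$ is the $(n-2)$-nd differential of $D^2F$ (as a function of the base point $\mathcal{V}$), and since $D^2F$ is already built from local data, its further differentials stay local — more carefully, one uses that polynomiality of degree $\leq[d/p]$ plus locality of the top piece $D^{[d/p]}F$ (which is a \emph{constant} symmetric form, handled like the $n=2$ case using the hypotheses on supports propagated from axiom 2 and the structure of Theorem~\ref{t:quillenconjzeta}) forces locality of all lower pieces by the same support-and-wavefront analysis. Thus $F=Q$ with $Q\in\mathcal{O}_{loc,[\frac{d}{p}]}(J^mHom(E,F))$, the jet order $m$ being read off from the order of the transversal $\delta$-derivatives, which is controlled by the $C^m$ in the growth bound. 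The main obstacle, and the place requiring the most care, is making the passage ``degree-$\leq[d/p]$ polynomial functional with a local top-degree part $\Rightarrow$ fully local'' rigorous, and in particular ruling out non-local lower-degree terms using only the two microlocal axioms imposed on $D^2$: this needs the disjoint-support identity~(\ref{e:secderivative}) to be differentiated in the base point $\mathcal{V}$ and the resulting constraints on $D^nF$ (for $n\leq[d/p]$) to be shown to force diagonal support, which is exactly the Epstein--Glaser-type extension/locality argument alluded to in the introduction.
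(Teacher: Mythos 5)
Your overall architecture is the same as the paper's: (1)$\Rightarrow$solution by locality of $Q$ (so $D^2Q$ vanishes for disjoint supports, $[\mathbf{D^2Q}]$ is conormal, and the Schatten bound on $\det_{[\frac{d}{p}]+1}$ gives the growth estimate); solution$\Rightarrow$(1) by comparing divisors along complex rays, concluding ray-wise polynomiality of the difference of logarithms, upgrading it to a continuous polynomial functional on the Fr\'echet space (your ``Liouville/Hadamard-type argument'' is precisely the paper's Proposition~\ref{p:douadyweakstrongholo}/Lemma~\ref{l:raytofrechet}, Douady's local-boundedness argument; the only cosmetic difference is that the paper first applies Hadamard to the zero-free ratio to get degree $\leqslant[\frac{d}{p}]+1$ and then uses~(\ref{e:constraint}) to drop to $[\frac{d}{p}]$, while you use~(\ref{e:constraint}) directly), and then locality from the two microlocal axioms; and the equivalence with (2) via the $\varepsilon\to0^+$ analysis of the regularized traces. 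Two points deserve concrete flags. First, your justification of (2) — that for $k\leqslant[\frac{d}{p}]$ the singular part of $Tr\left(\left(e^{-2\varepsilon\Delta}P^{-1}\mathcal{V}\right)^k\right)$ is local ``by the locality of heat-kernel coefficients'' — is exactly the content of the paper's Lemma~\ref{l:singulartermsfeynman} and is not a quotable consequence of the single-operator Gilkey expansion: the quantity is an iterated integral of a product of $k$ heat kernels over the cut-off simplex $\{\varepsilon\leqslant u_1\leqslant\dots\leqslant u_k\leqslant 1\}$, and controlling its $\varepsilon$-asymptotics simultaneously in all proper times is what the paper's Hepp-sector blow-up together with the Jeanquartier--Melrose pushforward theorem is for; that analysis is also what produces the index set $\mathbb{C}[\varepsilon^{-\frac{1}{2}},\log\varepsilon]$ and the jet order $m$. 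As written, this direction of the equivalence is asserted rather than proved.

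Second, on locality of the polynomial difference $F$: your closing sentence names the correct mechanism, and it is the paper's — the axioms~(\ref{e:secderivative}),(\ref{e:wfsecderivative}) hold at every small base point $\mathcal{V}$, so $\mathcal{V}\mapsto[\mathbf{D^2F(\mathcal{V})}]$ is a polynomial map valued in distributions supported on $d_2$ with conormal wave front; finite differences in the base point extract each homogeneous component, and each component is then shown to be local by a characterization of local functionals (the paper's Lemma~\ref{l:localityappendix}, which combines Schwartz's theorem for diagonal-supported kernels with the results of~\cite{brouder2018properties}; note that the conormal wave front bound is precisely what forces the coefficients in the Schwartz representation to be smooth, so that one lands in $\mathcal{O}_{loc}$ with a smooth Lagrangian density rather than merely in diagonal-supported distributions). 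By contrast, your intermediate claim that locality of the top homogeneous piece $D^{[\frac{d}{p}]}F$ ``forces locality of all lower pieces'' is not correct as stated and is not needed: the lower-degree pieces are not determined by the top one, and each must be handled on the same footing through the base-point dependence of $D^2F$, which is what the finite-difference extraction accomplishes. With that claim replaced by the base-point-differentiation argument you sketch at the end, and with a proof (or at least a precise statement and reference to a blow-up/pushforward-type result) supplied for the counterterm asymptotics, your plan reproduces the paper's proof.
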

As immediate corollary of the above, we get that
the group $\mathcal{O}_{loc,[\frac{d}{p}]}\left(J^mHom(E,F) \right)$ of local polynomial functionals acts freely and transitively on 
the set of renormalized determinants solutions to~\ref{d:renormdet1}:
\begin{eqnarray}
Q\in \mathcal{O}_{loc,[\frac{d}{p}]}\left(J^mHom(E,F) \right)\mapsto \exp\left(Q(\mathcal{V})\right)\mathcal{R}\det\left(P+\mathcal{V} \right).
\end{eqnarray}
Theorem~\ref{t:quillenconjanal} also shows that zeta determinants are a particular case of some infinite dimensional family  
of renormalized determinants obtained by subtracting singular local counterterms.

\begin{corr}\label{c:renormvshadamard}
In particular under the assumptions of Theorem \ref{t:quillenconjanal} and using the same notations, $p=\deg(P)$ any function $\mathcal{R}\det\left(P+\mathcal{V} \right)$
can be represented as: 
\begin{eqnarray*}
\mathcal{R}\det\left(P+\mathcal{V}\right)&=&\exp\left(Q(\mathcal{V})\right) \text{det}_{[\frac{d}{p}]+1}\left(Id+P^{-1}\mathcal{V} \right) \\
&=&\exp\left(Q(\mathcal{V})\right)\prod_{n=1}^\infty 
E_{[\frac{d}{p}]}\left(\frac{1}{\lambda_n} \right)
\end{eqnarray*}
where $Q$ is a polynomial functional of $\mathcal{V}$ of degree $[\frac{d}{p}]$, $\det_{[\frac{d}{p}]+1}$ is Gohberg--Krein's determinant, $E_{k}(z)=(1-z)e^{z+\frac{z^2}{2}+\dots+\frac{z^{k}}{k}}, k>0$ is a Weierstrass factor 
and the infinite product is over the sequence
$\{\lambda|\dim\ker\left( P+\lambda \mathcal{V}\right)\neq 0\}$.
\end{corr}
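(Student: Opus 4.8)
The plan is to deduce Corollary~\ref{c:renormvshadamard} directly from Theorem~\ref{t:quillenconjanal} together with the factorization formulas of Theorem~\ref{t:quillenconjzeta}. The first equality is essentially immediate: Theorem~\ref{t:quillenconjanal} part (1) tells us that any solution $\mathcal{R}\det$ of Problem~\ref{d:renormdet1} has the form $\exp(Q_0(\mathcal{V}))\det_\zeta$ for some $Q_0\in\mathcal{O}_{loc,[d/p]}$, and Theorem~\ref{t:quillenconjzeta} (equations~(\ref{e:bosonfacthadamard}) and (\ref{e:fermionfacthadamard})) gives $\det_\zeta(P^*\! \cdot) = e^{Q_1(\mathcal{V})}\det_{[d/p]+1}(Id+P^{-1}\mathcal{V})$, where $Q_1$ has degree $[d/p]$ — here one must handle the bosonic case ($p=2$, $[\frac{d}{2}]+1$) and fermionic case ($p=1$, $d+1$) uniformly by writing $[\frac d p]+1$ for the Schatten/Weierstrass index and noting $P^{-1}\mathcal{V}\in\Psi^{-p}\subset \mathcal{I}_{[d/p]+1}$ on $L^2$ by the Weyl law. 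Composing, $\mathcal{R}\det = \exp(Q_0+Q_1)\det_{[d/p]+1}(Id+P^{-1}\mathcal{V})$, and setting $Q = Q_0 + Q_1$ (a polynomial functional of degree $[d/p]$, though not local since $Q_1$ is not) yields the first line.

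For the second equality I would unfold the definition of the Gohberg--Krein determinant. By construction $\det_{[d/p]+1}(Id+B) = \det_F(Id + R_{[d/p]+1}(B))$ where $R_{p}(B) = (Id+B)\exp(\sum_{n=1}^{p-1}\frac{(-1)^n}{n}B^n) - I$, and by the Hadamard-product identity~(\ref{e:fredholmdet}) the Fredholm determinant factors over eigenvalues. The key computation is that the eigenvalues of $Id + R_{p}(B)$ are exactly $E_{p-1}(-\mu_k)$ where $\mu_k$ range over the eigenvalues of $B = P^{-1}\mathcal{V}$ and $E_{p-1}(z) = (1-z)e^{z+\dots+z^{p-1}/(p-1)}$ is the Weierstrass factor: indeed $(1+\mu)\exp(\sum_{n=1}^{p-1}\frac{(-1)^n}{n}\mu^n) = (1-(-\mu))\exp((-\mu)+\frac{(-\mu)^2}{2}+\dots+\frac{(-\mu)^{p-1}}{p-1}) = E_{p-1}(-\mu)$. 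Hence $\det_{[d/p]+1}(Id+P^{-1}\mathcal{V}) = \prod_k E_{[d/p]}(-\mu_k)$. It then remains to reindex this product over the ``spectral'' parameter: $Id + P^{-1}\mathcal{V}$ is non-invertible precisely when $-1\in\sigma(P^{-1}\mathcal{V})$, equivalently when $P+\mathcal{V}$ is non-invertible; more generally, writing $\mu_k = \mu_k(\mathcal{V})$, the condition $\lambda$ satisfies $\dim\ker(P+\lambda\mathcal{V})\neq 0$ is equivalent to $-1/\lambda \in \sigma(P^{-1}\mathcal{V})$, i.e. $\lambda = -1/\mu_k$ for some $k$. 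So $-\mu_k = 1/\lambda_n$ after relabelling the two sequences, giving $\prod_n E_{[d/p]}(1/\lambda_n)$ as claimed.

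The main obstacle is not conceptual but a matter of bookkeeping and convergence: one must check that the Hadamard product $\prod_k E_{[d/p]}(-\mu_k)$ converges absolutely. This follows from $\mu_k = \mu_k(P^{-1}\mathcal{V}) = O(k^{-p/\dim M})$ (Weyl asymptotics for the elliptic operator $P^{-1}\mathcal{V}$, or rather for $P$, combined with the fact that $\mathcal{V}$ is an order-zero perturbation — cf. the Agranovich--Markus estimate cited for the zeta function) so that $\sum_k |\mu_k|^{[d/p]+1} < \infty$, which is exactly the summability condition making $\prod E_{[d/p]}$ converge by the standard estimate $|E_p(w)-1|\leq C|w|^{p+1}$ for $|w|\leq \demi$. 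One also has to be slightly careful that the relabelling of the eigenvalue sequence $(\mu_k)$ of $P^{-1}\mathcal{V}$ into the sequence $\{\lambda \mid \dim\ker(P+\lambda\mathcal{V})\neq 0\}$ respects multiplicities — this is just the observation that $\dim\ker(P+\lambda\mathcal{V}) = \dim\ker(Id + \lambda P^{-1}\mathcal{V})$ equals the multiplicity of $-1/\lambda$ as an eigenvalue of $P^{-1}\mathcal{V}$, which is immediate since $P$ is invertible.
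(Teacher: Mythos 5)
Your proof is correct and follows essentially the same route as the paper, which itself treats this corollary as an immediate consequence of combining Theorem~\ref{t:quillenconjanal}(1) with the factorization formulas~(\ref{e:bosonfacthadamard})--(\ref{e:fermionfacthadamard}) and then unfolding the Gohberg--Krein determinant into the Weierstrass product of Lemma~\ref{l:detregtraces}. Your explicit bookkeeping (the identity $E_{p-1}(-\mu)=(1+\mu)\exp\bigl(\sum_{n=1}^{p-1}\tfrac{(-1)^n}{n}\mu^n\bigr)$, the reindexing $\lambda=-1/\mu_k$ with multiplicities preserved since $\dim\ker(P+\lambda\mathcal{V})=\dim\ker(Id+\lambda P^{-1}\mathcal{V})$, and the convergence via $P^{-1}\mathcal{V}\in\mathcal{I}_{[d/p]+1}$) is exactly what the paper leaves implicit, and your observation that the resulting $Q=Q_0+Q_1$ is polynomial of degree $[d/p]$ but need not be local is a point the paper is also careful about (footnote to Theorem~\ref{t:quillenconjzeta}).
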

 
\subsection{Renormalized determinants and holomorphic trivializations of Quillen's line bundle}

Let us quickly recall the notations from paragraph \ref{ss:quillenpict}.
For a pair $\mathcal{H}_0,\mathcal{H}_1$ of complex Hilbert spaces,  
there is a \textbf{canonical holomorphic line bundle}
$\mathbf{Det} \longmapsto \text{Fred}_0\left(\mathcal{H}_0,\mathcal{H}_1 \right)$ where $\text{Fred}_0\left(\mathcal{H}_0,\mathcal{H}_1 \right)$ is the space of Fredholm operators of \textbf{index} $0$ with fiber $\mathbf{Det}_B\simeq \Lambda^{top} \ker\left( B\right)^*\otimes \Lambda^{top} \text{coker}\left( B\right)$ over each $B\in \text{Fred}_0\left(\mathcal{H}_0,\mathcal{H}_1 \right)$ and canonical section $\sigma$~\cite[p.~32]{quillen1985determinant}~\cite[p.~137--138]{segal2004definition}. 
Recall in the fermionic situation, we considered the \textbf{complex affine space} $\mathcal{A}=D+C^\infty(Hom(E_+,E_-))$
of perturbations of some invertible, elliptic Dirac operator $D$.
Then the map
$\iota: D+A\in \mathcal{A}\mapsto Id+D^{-1}A\in \text{Fred}_0\left(L^2(E_+),L^2(E_+)\right) $
allows to pull--back the holomorphic line bundle $\mathbf{Det}  \longmapsto \text{Fred}_0\left(L^2(E_+),L^2(E_+)\right)$ as a holomorphic line bundle
$\mathcal{L}=\iota^*\mathbf{Det}\mapsto \mathcal{A}$ over the affine space $\mathcal{A}$ with canonical section $\underline{\det}=\iota^*\sigma$.
We denote by
$\mathcal{O}(\mathcal{L})$ the holomorphic sections
from $\mathcal{L}$ and by $\mathcal{O}(\mathcal{A})$
holomorphic functions on $\mathcal{A}$. 
\begin{thm}[Holomorphic trivializations and flat connection]
\label{t:quillenholotriv}
There is a bijection between the set of renormalized $\mathcal{R}\det$ from Theorem
\ref{t:quillenconjanal} and global holomorphic trivialization
$\tau: \mathcal{O}(\mathcal{L}) \mapsto \mathcal{O}\left(\mathcal{A} \right) $
of the line bundle $\mathcal{L}\mapsto \mathcal{A}$ such that
\begin{eqnarray}
 T\in \mathcal{A} \mapsto  \tau\left(\iota^*\underline{\det}(T)\right)=\mathcal{R}\det(T) 
\end{eqnarray}
is a solution of problem~\ref{d:renormdet1}.
The image of the canonical section $\iota^*\underline{\det}(T)$ under this trivialization 
being exactly the entire function $\mathcal{R}\det $ vanishing over non-invertible elements
in $\mathcal{A}$.

For every pair $\left(\tau_1,\tau_2\right)$, there exists an element $\Lambda$ of the additive group $\left(\mathcal{O}_{loc,d},+\right)$ 
s.t.
\begin{eqnarray}
\tau_1(D+A)= \exp\left(\int_M \Lambda(A(x)) \right) \tau_2(D+A),
\end{eqnarray}
where $\Lambda$ depends on the $(d-1)$-jet of $A$.
 For every choice of renormalized determinant $\mathcal{R}\det$, the section 
$\sigma=\mathcal{R}\det^{-1}\iota^*\underline{\det}$ defines a nowhere vanishing global holomorphic section with canonical 
holomorphic
flat connection $\nabla$ s.t. $\nabla \sigma=0$.
\end{thm}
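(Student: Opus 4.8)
The plan is to leverage the full strength of Theorem~\ref{t:quillenconjanal} so that everything reduces to understanding holomorphic sections and trivializations of the pulled-back line bundle $\mathcal{L}=\iota^*\mathbf{Det}$ over the contractible complex affine space $\mathcal{A}$. First I would recall the defining local structure of the Quillen determinant line bundle: over $\text{Fred}_0(\mathcal{H},\mathcal{H})$ the bundle $\mathbf{Det}$ is built from the local trivializations $U_F=\{B : B+F \text{ invertible}\}$ indexed by finite-rank $F$, with transition functions given by Fredholm determinants, and with a tautological holomorphic section $\sigma$ that vanishes exactly on non-invertible operators, with vanishing order equal to $\dim\ker$. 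Pulling back by the affine embedding $\iota(D+A)=Id+D^{-1}A$, the cover $\iota^{-1}(U_F)$ covers $\mathcal{A}$, the canonical section $\iota^*\underline{\det}$ vanishes exactly on the non-invertible locus $Z$ with the order prescribed in problem~\ref{d:renormdet1}(1), and the transition functions are ratios of Fredholm-type determinants, hence holomorphic and nowhere vanishing on overlaps. Since $\mathcal{A}$ is a contractible Fr\'echet (indeed affine) space and $\mathcal{L}$ is a holomorphic line bundle admitting a holomorphic section with a divisor, the obstruction to a global holomorphic trivialization lives in $H^1(\mathcal{A},\mathcal{O}^*)$, which vanishes; so global holomorphic trivializations $\tau:\mathcal{O}(\mathcal{L})\to\mathcal{O}(\mathcal{A})$ exist and any two differ by multiplication by an element of $\mathcal{O}(\mathcal{A})^*=\exp\mathcal{O}(\mathcal{A})$.

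Next I would set up the bijection with renormalized determinants. Given a trivialization $\tau$, the function $T\mapsto \tau(\iota^*\underline{\det}(T))$ is an entire function on $\mathcal{A}$ vanishing exactly on $Z$ with the correct orders, so it automatically satisfies condition~(1) of problem~\ref{d:renormdet1} \emph{as far as the zero set is concerned}; the growth bound~(\ref{e:boundrdet}) and the differential conditions~(\ref{e:constraint}),(\ref{e:secderivative}),(\ref{e:wfsecderivative}) are \emph{not} automatic. So the content is: among all global trivializations, exactly those whose associated entire function also satisfies conditions (2)--(3) correspond to renormalized determinants. Here I invoke Theorem~\ref{t:quillenconjanal}: a map is a solution of problem~\ref{d:renormdet1} iff it equals $e^{Q}\det_\zeta$ for $Q\in\mathcal{O}_{loc,[d/p]}(J^m Hom(E,F))$. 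Conversely, starting from a renormalized determinant $\mathcal{R}\det$, which by Theorem~\ref{t:quillenconjzeta} factorizes as $e^{Q}\det_p(Id+P^{-1}\mathcal{V})$, and since $\det_p$ agrees up to a nowhere-vanishing entire factor with the local Fredholm determinants defining $\mathbf{Det}$, one builds the trivialization by declaring, on each chart $\iota^{-1}(U_F)$, the local generator of $\mathcal{L}$ to be sent to the appropriate ratio; the compatibility on overlaps is exactly the multiplicativity of Fredholm determinants, and the global consistency is the statement that $\iota^*\underline{\det}/\mathcal{R}\det$ is a well-defined nowhere-vanishing entire function — which is where one checks $\mathcal{R}\det$ has precisely the zeros of $\iota^*\underline{\det}$. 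This gives a two-sided inverse, hence the bijection.

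For the second assertion, the difference of two trivializations $\tau_1,\tau_2$ is $\exp(g)$ for some $g\in\mathcal{O}(\mathcal{A})$, and on the level of the associated renormalized determinants this is multiplication by $e^{g}=\mathcal{R}\det_1/\mathcal{R}\det_2$. By the free and transitive action corollary following Theorem~\ref{t:quillenconjanal}, this quotient must be $\exp(Q)$ with $Q\in\mathcal{O}_{loc,d}$ (in the fermionic case $Q(A)=\int_M\Lambda(A(x))$ depending on the $(d-1)$-jet of $A$, matching $m=d-1$), which is exactly the claimed formula $\tau_1(D+A)=\exp(\int_M\Lambda(A(x)))\,\tau_2(D+A)$. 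Finally, for the flat connection: given $\mathcal{R}\det$, the section $\sigma=\mathcal{R}\det^{-1}\iota^*\underline{\det}$ is a nowhere-vanishing global holomorphic section, so it trivializes $\mathcal{L}$ as the trivial bundle $\mathcal{A}\times\mathbb{C}$, and one simply transports the trivial (product) connection $d$ back to $\mathcal{L}$ via this frame; it is manifestly flat and satisfies $\nabla\sigma=0$ by construction. One should note that while $\sigma$ and hence $\nabla$ depend on the choice of $\mathcal{R}\det$, changing $\mathcal{R}\det$ by $e^{Q}$ changes $\nabla$ by the flat (closed) form $dQ$, so the affine space of such flat connections is a torsor under $d\mathcal{O}_{loc,d}$ — this is the concrete realization of the "flat connection on the principal bundle for the additive group of polynomials" in Quillen's conjectural picture.

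I expect the main obstacle to be the careful verification that the pulled-back local transition data genuinely glue to the holomorphic line bundle $\mathcal{L}$ in the Bastiani/Fr\'echet holomorphic category, and more specifically that the vanishing $H^1(\mathcal{A},\mathcal{O}^*)=0$ argument is legitimate here: one must either argue directly that a holomorphic line bundle on an affine Fr\'echet space with a holomorphic section of controlled divisor is holomorphically trivial, or bypass sheaf cohomology entirely by producing the trivialization explicitly from $\det_\zeta$ (using Theorem~\ref{t:quillenconjzeta}) and then only classifying the freedom. The latter route is cleaner and is what I would actually carry out, so that the existence half of the bijection rests on Theorem~\ref{t:quillenconjzeta} and the classification half on Theorem~\ref{t:quillenconjanal}, with the line-bundle language serving mainly as repackaging.
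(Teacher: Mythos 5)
Your eventual route — abandon the $H^1(\mathcal{A},\mathcal{O}^*)=0$ argument (which, as you rightly suspect, has no available Oka/Cousin machinery in the Fr\'echet holomorphic category and isn't legitimate here) and instead produce the trivialization explicitly from $\det_\zeta$ and classify the freedom by Theorem~\ref{t:quillenconjanal} — is exactly the paper's strategy. But there is a genuine gap at the crux: you assert that "the global consistency is the statement that $\iota^*\underline{\det}/\mathcal{R}\det$ is a well-defined nowhere-vanishing entire function — which is where one checks $\mathcal{R}\det$ has precisely the zeros of $\iota^*\underline{\det}$", as if matching divisors were sufficient. It is not: in infinite dimensions the quotient of two holomorphic sections with the same zero set is not automatically holomorphic, and even granting holomorphy via Proposition~\ref{p:douadyweakstrongholo} one would still need local boundedness, which is not free. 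The paper resolves this by a hands-on computation, Lemma~\ref{l:holodivision}: working in the Segal--Furutani chart $U_t$, where the canonical section is represented by $\det_F(Id-(T+t)^{-1}t)$, one writes
\begin{equation*}
\text{det}_F\bigl(Id-(T+t)^{-1}t\bigr)\,\text{det}_p\bigl(Id+T_0^{-1}(T-T_0)\bigr)^{-1}
=\text{det}_F\Bigl((T+t)^{-1}T_0\,e^{\sum_{k=1}^{p-1}\frac{(-1)^{k}}{k}(T_0^{-1}(T-T_0))^k}\Bigr),
\end{equation*}
using the definition of $R_p$ and multiplicativity of $\det_F$, and then observes that the right-hand side is the Fredholm determinant of a product of an invertible operator $(T+t)^{-1}T_0$ with an exponential that is invertible by the holomorphic spectral mapping theorem, hence lies in the determinant class and is everywhere nonzero on $U_t\cap\iota(\mathcal{A})$. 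This is the step your proposal waves at but does not supply, and it is the technical heart of the theorem; without it you have not established that the candidate trivializing section $\sigma=\mathcal{R}\det^{-1}\iota^*\underline{\det}$ is actually a holomorphic nowhere-vanishing section of $\mathcal{L}$ rather than merely a meromorphic expression whose zeros and poles formally cancel.

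The remainder of your outline — that two trivializations differ by $\exp(Q)$ with $Q\in\mathcal{O}_{loc,d}$ depending on $(d-1)$-jets of $A$ by the free transitive action corollary, and that a nowhere-vanishing global holomorphic frame carries the trivial flat connection back to $\mathcal{L}$ with $\nabla\sigma=0$ — matches the paper and is fine. One small caution: the bijection in the statement is between renormalized determinants and those trivializations $\tau$ for which $\tau(\iota^*\underline{\det})$ is a solution of problem~\ref{d:renormdet1}; your phrasing correctly notes that conditions (2)--(3) are not automatic for an arbitrary trivialization, and the paper simply builds the map in both directions using Corollary~\ref{c:renormvshadamard} (to factor $\mathcal{R}\det=e^{g}\det_{[d/k]+1}$ with $e^g$ nowhere zero) together with Lemma~\ref{l:holodivision}, rather than analyzing an abstract space of all trivializations.
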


 The ambiguity group that relates all
solutions of problem~\ref{d:renormdet1}
is the renormalization group of 
St\"ueckelberg--Petermann as described by Bogoliubov--Shirkov~\cite{Bogoliubov} and is interpreted here as a gauge group of the line bundle
$\mathcal{L}\mapsto \mathcal{A}$. 
Our result is a variant of the so called
main Theorem of renormalization by Popineau--Stora~\cite{Stora} 
and studied under several aspects by Brunetti--Fredenhagen~\cite{BF} and Hollands--Wald~\cite{HW1,HW2,KM}.
In the aQFT community, there are various recent works exploring the renormalized Wick powers using Euclidean versions of the
Epstein--Glaser renormalization~\cite{DappRicci,DappEucl}.

\subsection*{Relation with other works.}
%
%
 The way we treat the problem of subtraction of local counterterms is strongly inspired
by Costello's work~\cite{Costello} and the point of view of perturbative algebraic quantum field theory
which is explained in Rejzner's book~\cite{rejznerbook}.

 Perrot's notes~\cite{Perrot-07} and Singer's paper~\cite{singer1985families} on quantum anomalies, which played an important role in our understanding of the topic,
are in the real setting. The gauge potential $A$ which is used to perturb 
the half--Dirac operator 
preserves the Hermitian structure whereas 
we do not impose this
requirement and view our perturbations 
as a complex space instead. 
Actually, 
our motivation 
to consider holomorphic determinants 
in some complexified setting
bears strong inspiration from the work of Burghelea--Haller~\cite{BH1,BH2} and Braverman--Kappeler~\cite{BK1}
on finding some complex valued holomorphic version of the Ray--Singer analytic torsion.  

In this short paragraph, we shall adopt the notation of~\cite{kontsevichdeterminants}.  
Our renormalized determinants seem 
related to the multivalued function $f$~\cite[prop 4.10]{kontsevichdeterminants} 
on the space of elliptic operators
$Ell^m_{(-1)}(M;E,F)$ endowed with a natural complex structure~\cite[Remark 4.18]{kontsevichdeterminants} introduced in the seminal work of Kontsevich--Vishik. This multivalued function, defined on certain good classes of
\textbf{non-self-adjoint operators}, naturally \textbf{extends the functional determinant} $\det_{\tilde{\pi}}$ defined on self-adjoint operators~\cite[Prop 4.12]{kontsevichdeterminants}. This is more general than the operators we consider in the present paper since we only work in the restricted class
$\Delta+\Psi^0$ and $\text{Dirac} + \Psi^0$ where only the subprincipal
part is allowed to vary whereas in~\cite{kontsevichdeterminants} the full symbol is also allowed to change. 
We also note that~\cite[4.1 p.~52]{kontsevichdeterminants} also consider determinants of Dirac operators exactly as in 
the present work but on odd dimensional manifolds. 
However,
we obtain a factorization formula for $\det_\zeta$ in terms of Gohberg--Krein determinants and we relate zeta regularization with renormalization
in quantum field theory which seem to be new results. Moreover, 
our factorization formula allows us to consider nonsmooth perturbations of generalized Laplacians or Dirac operators 
which also seems to be a new result. 
The way we define the determinant line bundle is very close to~\cite[section 6]{kontsevichdeterminants} and work of Segal
and differs from the seminal work of Bismut--Freed~\cite{BismutFreed} although all definitions should give the same object
when restricted to self-adjoint families of Dirac operators. In particular, 
we do not focus on Quillen metrics and connections on the determinant line bundle
which are important results of~\cite{BismutFreed} 
but on the holomorphic structure instead and the relation with renormalization ambiguities as conjectured by 
Quillen in his notes~\cite[p.~284]{Quillen89}.

Finally, in a nice recent paper~\cite{friedlandermult}, Friedlander
generalized the classical multiplicative formula 
$\det_\zeta\left(\Delta(Id+T) \right)=\det_\zeta\left(\Delta\right)\det_F\left(Id+T\right)$
when $T$ is smoothing,
in~\cite[Theorem p.~4]{friedlandermult} connecting zeta determinants, 
Gohberg--Krein's determinants and Wodzicki residues.
This bears a strong similarity with 
our Corollary~\ref{c:zetafact}
although our point of view 
stresses the relation with
distributional extensions of products of Green functions\footnote{called \emph{Feynman amplitudes} in physics litterature}
in configuration space.  Another difference with his work is that we bound the 
wave front of the Schwartz kernels of the G\^ateaux differentials of 
zeta determinants which is important from the  
QFT viewpoint and is related to the microlocal spectrum condition used in QFT.

\section{Proof of Theorem~\ref{t:quillenconjzeta}.}

We work under the setting of paragraph
\ref{ss:geomsetting} and the zeta determinants
are defined in definitions \ref{d:speczetadet} and \ref{d:bosonsfermionszeta}.  
We discuss in great detail the bosonic case for $\det_\zeta(\Delta+V)$ where
$V\in C^\infty(M,End(E))$ and 
we indicate precisely the differences when we deal with the fermionic case for
$\det_\zeta(\Delta+D^*A)$ where $\Delta=D^*D$ is a generalized Laplacian, the operator $D:C^\infty(E_+)\mapsto C^\infty(E_-)$ is a generalized 
Dirac operator
and $A\in C^\infty(Hom(E_+,E_-))$. 
\textbf{Both cases consider zeta determinants of a non-self-adjoint perturbation 
of some generalized Laplacian by some differential operator $V$ of order $0$ in the bosonic case and
$V=D^*A$ of order $1$ in the fermionic case}.

\subsection{Reformulation of the Theorem.}

\subsubsection{Polygon Feynman amplitudes.}
\label{sss:feynmanampl}

We will first state a reformulation of our Theorem~\ref{t:quillenconjzeta} 
in terms of certain \emph{Feynman amplitudes}. This emphasizes the 
relation of our result with quantum field theory as in the work of Perrot~\cite{Perrot-07}.
Before, we need to define these Feynman amplitudes as 
formal products of 
Schwartz kernels of the operators involved in 
our problem. 
This will play an important role in our Theorem:
\begin{defi}[Polygon Feynman amplitudes]\label{d:feynmanamplitudes}
Under the geometric setting from paragraph~\ref{ss:geomsetting}, we
set $\mathbf{G}\in \mathcal{D}^\prime\left(M\times M, E\boxtimes E^* \right)$  
to be the Schwartz 
kernel of $\Delta^{-1}$ in the bosonic case and
$\mathbf{G}\in \mathcal{D}^\prime\left(M\times M, E_-\boxtimes E_+^* \right)$  
is the Schwartz 
kernel of $D^{-1}$ in the fermionic case. 
For every $n\geqslant 2$, we formally set
\begin{eqnarray}
\boxed{t_{n}=\mathbf{G}(x_1,x_2)\dots\mathbf{G}(x_{n-1},x_n)\mathbf{G}(x_n,x_1)}
\end{eqnarray}
which is well--defined in
$ C^\infty\left(M^n\setminus \text{Diagonals}\right) $.
\end{defi}

An important remark, we will later see in the proof of our main Theorem
that the formal products $t_n$ are actually well--defined as distributions 
in $\mathcal{D}^\prime(M^n\setminus d_n)$ where we deleted only the deepest diagonal $d_n$. 
This could also be easily 
proved by estimating the wave front set 
of the product 
$t_n=\mathbf{G}(x_1,x_2)\dots\mathbf{G}(x_{n-1},x_n)\mathbf{G}(x_n,x_1)$, 
using the fact that 
the wave front set of the pseudodifferential kernels
are contained in the conormal bundle of the diagonal $d_2\subset M\times M$.

Another remark related to quantum field theory. In QFT, formal products of Schwartz kernels, called \emph{propagators}, are described by 
graphs~\footnote{called Feynman graphs} where every edge of a given graph stands for a Schwartz kernel $\mathbf{G}$ and vertices of the graph represent 
pointwise 
products of propagators. The correspondance $\text{Graph} \mapsto \text{Distributional amplitude}  $  
is described precisely in terms of the Feynman rules~\cite[2.1]{dangzhang}. Using the Feynman rules, one sees that
the amplitude $t_n$ defined above represents some polygon graph with $n$ vertices and edges.

\subsubsection{Alternative statement of Theorem~\ref{t:quillenconjzeta}.}
For any pair $(E,F)$ of Hermitian bundles over $M$, there is a
natural 
fiberwise pairing $\left\langle t,\varphi \right\rangle$ between distributions $t$
in 
$\mathcal{D}^\prime(M^n,Hom(F,E)^{\boxtimes n})$ with elements $\varphi$
in $C^\infty\left(M^n,Hom(E,F)^{\boxtimes n} \right)$ to get an element $\left\langle t,\varphi \right\rangle\in \mathcal{D}^\prime(M^n)$. To 
obtain a
number, we need to integrate this distribution against a density $dv\in \vert\Lambda^{top}\vert M^n $ as in
$\int_{M^n}\left\langle t, \varphi\right\rangle dv$.

\begin{thm}\label{t:mainzetareformulate}
The zeta determinants from definition~\ref{d:bosonsfermionszeta} are entire functions on $\mathcal{A}$ 
satisfying the factorization formula for $\Vert V\Vert_{C^{d-3}(End(E))}$ (resp $\Vert A\Vert_{C^{d-1}(Hom(E,F))}$) small enough:
\begin{eqnarray}\label{e:bosonfacthadamard2}
\text{det}_\zeta\left(\Delta+V \right)&=&e^{Q(V)}\text{det}_{p}\left(Id+\Delta^{-1}V \right), p=[\frac{d}{2}]+1 \text{ in bosonic case}\\
\label{e:fermionfacthadamard2}
\text{det}_\zeta(D^*(D+A))&=&e^{Q(A)}\text{det}_{p}\left(Id+D^{-1}A \right), p=d+1 \text{ in fermionic case}
\end{eqnarray}
where $\det_p$ are Gohberg--Krein's determinants. There exists 
$\ell\in C^\infty(End(E))$ in bosonic case
and $\ell\in C^\infty(Hom(E_+,E_-))$ in fermionic case s.t.
\begin{eqnarray}\label{e:repsformula}
Q(V)&=&\int_M \left\langle \ell,V \right\rangle dv+\sum_{2\leqslant n\leqslant [\frac{d}{2}]} \frac{(-1)^{n+1}}{n}\int_{M^n} \left\langle \mathcal{R}t_n , V^{\boxtimes n}\right\rangle dv_n\\
Q(A)&=&\int_M \left\langle \ell,A \right\rangle dv+ \sum_{2\leqslant n\leqslant d} \frac{(-1)^{n+1}}{n} \int_{M^n} \left\langle \mathcal{R}t_{n} , A^{\boxtimes n}\right\rangle dv_n
\end{eqnarray}
\begin{itemize}
\item $dv\in \vert \Lambda^{top}\vert M$,
$dv_n\in \vert\Lambda^{top} \vert M^n $ are the canonical Riemannian densities,
\item $\mathcal{R}t_{n}$ 
is a 
distribution of order $m$ on $M^n$ extending the distributional product $t_{n}$ from definition~\ref{d:feynmanamplitudes} which is well--defined on 
$M^n\setminus d_n$, $m=d-3$ in the bosonic case and $m=d-1$ in the fermionic case,
\item the wave front set of $\mathcal{R}t_n$ 
satisfies the bound 
$$WF\left(\mathcal{R}t_{n} \right) \cap T^\bullet_{d_n}M^n \subset N^*\left(d_n\subset M^n  \right) . $$ 
\end{itemize}
\end{thm}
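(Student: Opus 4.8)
The plan is to compare $\log\det_\zeta$ with the logarithm of the Gohberg--Krein determinant through the variation formula for zeta determinants, isolate the finitely many divergent terms, and then recognize these as integrals of the renormalized polygon amplitudes. First I would fix $V$ with $\|V\|_{C^{d-3}}$ small, so that $\Delta+tV$ is admissible with spectral cut $\pi$ for all $t\in[0,1]$ by Proposition~\ref{p:speccut} and $t\mapsto\log\det_\zeta(\Delta+tV)$ is analytic. Differentiating $\zeta_{\Delta+tV,\pi}(s)=\Tr_{L^2}((\Delta+tV)_\pi^{-s})$ in $t$, using the contour representation of the complex powers and cyclicity of the trace for $\Re(s)\gg0$ and then meromorphic continuation, one gets $\partial_t\zeta_{\Delta+tV}(s)=-s\,\Tr_{L^2}(V(\Delta+tV)_\pi^{-s-1})$, hence
\[
\partial_t\log\det_\zeta(\Delta+tV)=\mathrm{f.p.}_{s=0}\Tr_{L^2}\bigl(V(\Delta+tV)_\pi^{-s-1}\bigr)=:\mathrm{TR}\bigl(V(\Delta+tV)^{-1}\bigr),
\]
the Kontsevich--Vishik canonical trace \cite{kontsevichdeterminants}. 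Then I would expand the resolvent $(\Delta+tV)^{-1}=\sum_{j\ge0}(-t)^j(\Delta^{-1}V)^j\Delta^{-1}$ (convergent in $\Psi^{-2}(M)$ for $V$ small), apply $\mathrm{TR}$ termwise, and integrate over $t\in[0,1]$ to obtain
\[
\log\det_\zeta(\Delta+V)=\log\det_\zeta(\Delta)+\sum_{n\ge1}\tfrac{(-1)^{n+1}}{n}\,\mathrm{TR}\bigl((\Delta^{-1}V)^n\bigr).
\]
For $n\ge p=[\tfrac{d}{2}]+1$ the operator $(\Delta^{-1}V)^n\in\Psi^{-2n}(M)$ is trace class, so $\mathrm{TR}=\Tr_{L^2}$ there, and by the Hadamard-product and trace identities recalled around~(\ref{e:fredholmdet}) together with the definition of $\det_p$ one has $\sum_{n\ge p}\tfrac{(-1)^{n+1}}{n}\Tr((\Delta^{-1}V)^n)=\log\det_p(Id+\Delta^{-1}V)$. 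Exponentiating gives~(\ref{e:bosonfacthadamard2}) with $Q(V)=\log\det_\zeta(\Delta)+\sum_{n=1}^{p-1}\tfrac{(-1)^{n+1}}{n}\mathrm{TR}((\Delta^{-1}V)^n)$, a polynomial in $V$ of degree $[\tfrac{d}{2}]$.

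Next I would deduce entirety of $\det_\zeta$ on all of $\mathcal{A}$: the map $V\mapsto\Delta^{-1}V$ is continuous linear into $\mathcal{I}_p$ (since $\Psi^{-2}(M)\subset\mathcal{I}_p$ for $p>\tfrac{d}{2}$), so the right-hand side of the factorization is entire in $V$; and $V\mapsto\det_\zeta(\Delta+V)$ is analytic in a neighborhood of every $V_0$ (choose an Agmon angle avoiding $\sigma(\Delta+V_0)$, which stays admissible for $V$ near $V_0$ and gives $\zeta$-derivatives depending analytically on $V$ by Seeley's uniform expansions, the value being independent of the admissible cut as recalled above). Since $\mathcal{A}$ is connected and the two analytic functions agree near $V=0$, the identity theorem forces them to agree on $\mathcal{A}$, so $\det_\zeta$ is entire and the factorization persists for small $V$. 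For the representation~(\ref{e:repsformula}) of $Q$, let $\mathbf{G}$ be the Schwartz kernel of $\Delta^{-1}$, a classical pseudodifferential kernel of order $-2$ with $\WF(\mathbf{G})\subset N^*(d_2\subset M^2)$. I would first check, via Hörmander's wave front criterion, that the formal product $t_n$ of Definition~\ref{d:feynmanamplitudes} is a genuine distribution on $M^n\setminus d_n$ — near any point off the deep diagonal only clusters of consecutive variables can collide, and the corresponding factors have pairwise transversal wave front sets — with $\WF(t_n)$ contained in the union of conormals to the partial diagonals, and that $\Tr_{L^2}((\Delta^{-1}V)^n)=\int_{M^n}\langle t_n,V^{\boxtimes n}\rangle\,dv_n$ whenever $n\ge p$, or more generally when the supports of the $V$'s avoid $d_n$.

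It remains to extend $t_n$ across $d_n$. Since $\mathbf{G}$ has finite Steinmann scaling degree at $d_2$, subadditivity of the scaling degree under products shows $t_n$ has finite scaling degree at $d_n$, so by the Epstein--Glaser/Brunetti--Fredenhagen extension theorem there is an extension $\mathcal{R}t_n\in\mathcal{D}^\prime(M^n)$ of $t_n$, of the order $m$ asserted in the statement ($m=d-3$, computed from the scaling degree and the codimension of $d_n$), the ambiguity being a distribution supported on $d_n$. Because any distribution supported on the submanifold $d_n$ has wave front set inside $N^*(d_n\subset M^n)$, while conormals to partial diagonals also lie in $N^*(d_n)$ over $d_n$, every such extension satisfies $\WF(\mathcal{R}t_n)\cap T^\bullet_{d_n}M^n\subset N^*(d_n\subset M^n)$. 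Finally, $V\mapsto\mathrm{TR}((\Delta^{-1}V)^n)$ is, after full polarization, a continuous $n$-linear form coinciding off $d_n$ with $\varphi\mapsto\int_{M^n}\langle t_n,\varphi\rangle\,dv_n$, so its Schwartz kernel is one of these extensions, which I take as $\mathcal{R}t_n$; and $\mathrm{TR}(\Delta^{-1}V)=\int_M\langle\ell,V\rangle\,dv$ with $\ell$ the smooth Kontsevich--Vishik density, linear in $V$. Substituting into the formula for $Q$ from the first step yields~(\ref{e:repsformula}). The fermionic case is identical after replacing $\Delta$ by $D^*D$ and $V$ by $D^*A$ and using the algebraic identity $(D^*D)^{-1}D^*A=D^{-1}A$, so that all powers, amplitudes and scaling degrees are computed with the order $-1$ kernel of $D^{-1}$; this produces $p=d+1$ and $m=d-1$.

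The step I expect to be the main obstacle is the last one: verifying that the a priori purely analytic quantity $\mathrm{TR}((\Delta^{-1}V)^n)$ — the finite part at $s=0$ of a meromorphic family of honest traces — is genuinely the pairing against a distributional extension of the pointwise product $t_n$, with the sharp order $m$ and the conormal wave front bound. This is exactly the point where the pseudodifferential symbol calculus for the diagonal finite part must be reconciled with the Epstein--Glaser extension of distributions in configuration space, and where one must control that the two regularizations differ only by local (conormal) counterterms supported on $d_n$. A secondary technical nuisance is justifying the termwise application and continuity of $\mathrm{TR}$ along the Neumann series, since $\mathrm{TR}$ is continuous neither on $\mathcal{I}_1$ nor on all of $\Psi^{\le0}(M)$ and one must work in a topology adapted to the asymptotic supports of the operators.
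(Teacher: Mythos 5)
Your overall strategy (compare $\log\det_\zeta$ with $\log\det_p$, show the difference is a finite sum of regularized polygon amplitudes, then extend $t_n$ across $d_n$) is in the spirit of the paper, but two of your steps have genuine gaps. First, the expansion $\log\det_\zeta(\Delta+V)=\log\det_\zeta(\Delta)+\sum_{n\ge1}\tfrac{(-1)^{n+1}}{n}\mathrm{TR}((\Delta^{-1}V)^n)$ obtained by plugging the Neumann series into $\mathrm{f.p.}_{s=0}\Tr\bigl(V(\Delta+tV)_\pi^{-s-1}\bigr)$ and applying the finite part termwise is not justified, and you cannot dismiss it as a secondary nuisance: the regularizer $(\Delta+tV)^{-s-1}$ itself depends on $t$, the operators $(\Delta^{-1}V)^n$ with $2n\le d$ have integer order so their ``finite part traces'' are regularization-dependent, and interchanging the finite part with the series can produce Wodzicki-residue-type discrepancies (exactly the anomalies appearing in Friedlander's multiplicative formula cited in the paper). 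The paper's proof is structured precisely to avoid this: it computes the G\^ateaux differentials of $\log\det_\zeta$ via heat-kernel (Volterra/Duhamel) representations, shows that for $n>\tfrac d2$ (resp.\ $n>d$) they equal honest $L^2$ traces of $(\,(\Delta+V)^{-1}V_i\,)$-products, and then invokes Lemma~\ref{l:raytofrechet} (two holomorphic functions with identical differentials from some order on differ by a continuous polynomial) to get the factorization, never assigning an individual meaning to a regularized trace of $(\Delta^{-1}V)^n$ for small $n$ outside that framework.

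Second, your wave-front argument is incorrect as stated: it is not true that every distribution supported on the submanifold $d_n$ has wave front set contained in $N^*(d_n\subset M^n)$ — by Schwartz's structure theorem such a distribution is a sum of transversal derivatives of $\delta_{d_n}$ with distributional coefficients along $d_n$, and non-smooth coefficients produce wave front directions with nonzero tangential components. Hence ``every Epstein--Glaser extension satisfies the bound'' fails, and in any case the theorem requires the conormal bound (and the order $m=d-3$, resp.\ $d-1$) for the \emph{specific} extension $\mathcal{R}t_n=[\mathbf{D^n\log\det_\zeta}]$ produced by the zeta regularization, not for some extension. This is the content of the paper's Section~\ref{s:wfbound}: after the blow-up $\beta$ of definition~\ref{d:blowup}, the singular part $S(s;\chi)$ is Taylor-expanded in $t_{k+1}$ (whence the $(d-3)$-jet, i.e.\ order $d-3$, dependence) and the conormal bound is obtained from explicit decay estimates of the Fourier transform off the hyperplane $\xi_1+\dots+\xi_{k+1}=0$, combined with the smoothing/conormal dichotomy for large and small heat times. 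Your scaling-degree existence argument gives an extension of the right order, but you would still need an argument of this kind (or smoothness of the counterterm coefficients along $d_n$) to conclude; as written, the two key claims of the theorem beyond the bare factorization are not established.
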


Let us compare this statement with the original 
formulation of Theorem~\ref{t:quillenconjzeta}. The distribution $\mathcal{R}t_n$ in the above statement
is nothing but the Schwartz kernel $[\mathbf{D^n\log\det_\zeta\left(\Delta \right)}]$ of the G\^ateaux differentials 
up to some multiplicative constant.

\subsubsection{Plan of the proof.}

The main idea of the proof of the Theorem in both bosonic and fermionic cases
is to calculate G\^ateaux differentials of $V\mapsto \log\det_\zeta(\Delta+V)$ with respect to the perturbation and 
compare with the G\^ateaux differentials of some well chosen Gohberg--Krein determinant $V\mapsto \log\det_p\left(Id+\Delta^{-1}V\right)$.
Physically, this seems to be a natural idea since we look for the response of the \emph{free energy} $\log\det_\zeta$ (the logarithm of the partition function)
under variation of the external field which is $V$ in the bosonic case and $A$ in the fermionic case.
What we will prove is simply that starting from a certain order, all derivatives actually coincide and since we know that both 
sides are analytic, they must coincide up to some polynomial in the perturbation $V$. 
A second important idea is to recognize that the Schwartz kernels of the 
G\^ateaux differentials are \textbf{distributional extensions} 
of products of Green kernels of the elliptic operator that we perturb, paired with external powers of
the perturbation. This relies on explicit representation of the Schwartz kernels of G\^ateaux differentials in terms of heat kernels.

The first step is to discuss the analyticity properties of $\det_\zeta$ in subsubsection~\ref{ss:analyticity}. Our proofs rely on representations
of complex powers and their differentials in terms of heat kernels in Lemma~\ref{l:funder1}. This requires to consider small perturbations of $\Delta$ 
which maintain a spectral gap ensuring heat operators have exponential decay in subsection~\ref{ss:perturbgap}.  
But once the factorization formula is proved for small perturbations, it extends to $\mathcal{A}$ by analyticity
of $\det_\zeta$.  In subsection~\ref{ss:optraces}, we decompose
the integral formula involving integrals of heat operators in two parts: 
a singular part involving short time of the heat operators that we control with the heat calculus of Melrose and a regular part
involving the large time of the heat operators controlled by the exponential decay of the heat semigroup.
This gives equation~\ref{eq:differentialsdetheat} representing differentials of $\det_\zeta$.
Finally, the computation of differentials of $\log\det_\zeta$  
for perturbations with disjoint supports in subsubsection~\ref{ss:disjsupport} 
yields Proposition~\ref{p:functionalder}
which gives a characterization of the Schwartz kernels
of $D^n\log\det_\zeta$ in terms of the Feynman amplitudes $t_n$ introduced in 
subsubsection~\ref{sss:feynmanampl}. Furthermore, these results allow us to conclude
the proof of the factorization formula of Theorem~\ref{t:quillenconjzeta} in subsection~\ref{ss:factozeta}.
The bounds on the wave fronts are discussed in the next section~\ref{s:wfbound}.

\subsubsection{Analyticity.}\label{ss:analyticity}

The results of the present subsection are general enough 
to apply in both bosonic and fermionic cases.  

Differential operators of order $1$ on $M$ have a canonical structure of Fr\'echet space so it makes sense to talk about holomorphic curves in
$\text{Diff}^1(M,E)$. By~\cite[Thm 5.7 p.~215]{BK1}: 
\begin{prop}
For every holomorphic family $z\in \Omega\subset \mathbb{C}\mapsto  V_z\in \text{Diff}^1(M,E)$ s.t. $\pi$ is an Agmon angle of $\Delta+V_z, \forall z\in \Omega$, in particular $\Delta+V_z $ is invertible and all $\theta\in [\pi-\varepsilon,\pi+\varepsilon]$ are principal angles of $\Delta+V_z$,
the composition $$z\in \Omega\mapsto \text{det}_\zeta(\Delta+V_z)\in \mathbb{C}$$ is holomorphic.
\end{prop}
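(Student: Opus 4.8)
The plan is to reduce the statement to the joint holomorphy in $(s,z)$ of the spectral zeta function $\zeta_{\Delta+V_z,\pi}(s)=\Tr_{L^2}\bigl((\Delta+V_z)_\pi^{-s}\bigr)$ on a domain containing a neighbourhood of $\{s=0\}$, and then to conclude via $\text{det}_\zeta(\Delta+V_z)=\exp\bigl(-\partial_s\zeta_{\Delta+V_z,\pi}(s)|_{s=0}\bigr)$ from Definition~\ref{d:speczetadet}; this is essentially the route of \cite[Thm 5.7]{BK1}. Fix $z_0\in\Omega$. Since $z\mapsto \Delta+V_z$ is continuous into $\mathrm{Diff}^1(M,E)$ and $V_z$ has order $\leqslant 1$, the operator $\Delta+V_z$ is elliptic with the principal symbol of $\Delta$, so $\pi$ is a principal angle for every $z$; combining the hypotheses that $\pi$ is an Agmon angle and $\Delta+V_z$ is invertible for all $z\in\Omega$ with the continuity of the spectrum under the resolvent topology (Proposition~\ref{p:speccut}), I would fix a compact neighbourhood $K\ni z_0$, an angle $\varepsilon>0$ and a radius $\rho>0$ such that the sector $\{re^{i\theta}: r\geqslant 0,\ |\theta-\pi|\leqslant\varepsilon\}$ and the disc $\{|\lambda|\leqslant\rho\}$ are both disjoint from $\Spec(\Delta+V_z)$ for every $z\in K$. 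Then the keyhole contour $\gamma$ introduced above is admissible for all $z\in K$, and the Agmon resolvent estimates of Seeley give $\|(\Delta+V_z-\lambda)^{-1}\|_{\mathcal{B}(L^2,L^2)}\leqslant C\langle\lambda\rangle^{-1}$ with $C$ uniform in $z\in K$ and $\lambda\in\gamma$.

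For $\Re(s)>\tfrac{\dim M}{2}$ I would first establish holomorphy on $\{\Re s>\tfrac{\dim M}{2}\}\times \mathrm{int}(K)$. Holomorphy of the resolvent in $z$ near $z_0$ comes from the Neumann series
\begin{equation}
(\Delta+V_z-\lambda)^{-1}=(\Delta+V_{z_0}-\lambda)^{-1}\bigl(\mathrm{Id}+(V_z-V_{z_0})(\Delta+V_{z_0}-\lambda)^{-1}\bigr)^{-1},
\end{equation}
which, by the uniform bounds on $\gamma$, converges for $z$ in a small ball and defines a holomorphic $\mathcal{B}(L^2,L^2)$-valued function of $z$. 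Inserting it into $(\Delta+V_z)_\pi^{-s}=\tfrac{i}{2\pi}\int_\gamma\lambda^{-s}(\Delta+V_z-\lambda)^{-1}\,d\lambda$ and applying Morera's theorem together with Fubini (legitimate since for $\Re s>\dim M/2$ the integrand lies in $\mathcal{I}_1$ with locally uniformly integrable $\mathcal{I}_1$-norm, again by Seeley's estimates) shows that $z\mapsto(\Delta+V_z)_\pi^{-s}\in\mathcal{I}_1$ is holomorphic and that $\Tr_{L^2}$ commutes with the $z$-dependence, giving joint holomorphy in $(s,z)$.

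The main obstacle is to propagate this through the meromorphic continuation to a neighbourhood of $s=0$. I would argue via the heat representation
\begin{equation}
\Gamma(s)\,\zeta_{\Delta+V_z,\pi}(s)=\int_0^1 t^{s-1}\Tr_{L^2}\!\bigl(e^{-t(\Delta+V_z)}\bigr)\,dt+\int_1^{\infty} t^{s-1}\Tr_{L^2}\!\bigl(e^{-t(\Delta+V_z)}\bigr)\,dt,
\end{equation}
where $e^{-t(\Delta+V_z)}$ is given by a contour integral of $e^{-t\lambda}(\Delta+V_z-\lambda)^{-1}$ and hence, by the same Neumann-series argument, is holomorphic in $z$ with values in $\mathcal{I}_1$. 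The second integral is entire in $s$ and holomorphic in $z$, uniformly for $z\in K$, because the uniform spectral gap forces $\|e^{-t(\Delta+V_z)}\|_1\leqslant Ce^{-ct}$ on $K$. For the first integral one uses the Duhamel/parametrix short-time expansion $\Tr_{L^2}(e^{-t(\Delta+V_z)})\sim\sum_{k\geqslant 0}a_k(z)\,t^{(k-\dim M)/2}$; the coefficients $a_k(z)$ are, by inspection of the parametrix construction, universal polynomial expressions in finitely many symbol jets of $\Delta+V_z$, hence holomorphic in $z$ with locally uniform bounds, and subtracting finitely many of them leaves a remainder integrable near $t=0$ and jointly holomorphic for $\Re s>-N$. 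This yields the continuation with poles only at $s=\tfrac{\dim M-k}{2}$ and residues $a_k(z)/\Gamma(s)$ holomorphic in $z$; by Seeley's theorem there is no pole at $s=0$, so $\zeta_{\Delta+V_z,\pi}$ is jointly holomorphic on a neighbourhood of $\{0\}\times\mathrm{int}(K)$.

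Finally, Cauchy's integral formula in the $s$-variable gives that $z\mapsto\partial_s\zeta_{\Delta+V_z,\pi}(s)|_{s=0}$ is holomorphic on $\mathrm{int}(K)$, hence so is $z\mapsto\text{det}_\zeta(\Delta+V_z)$; since $z_0\in\Omega$ was arbitrary, and holomorphy is local, the claim follows on all of $\Omega$ (one may also invoke Morera/Vitali to glue the local statements). The genuinely delicate point is precisely the locally uniform holomorphic control of the heat-trace coefficients $a_k(z)$ and of the expansion remainder — one must verify that the parametrix, built from the total symbol of $\Delta+V_z$ which depends holomorphically on $z$, can be truncated with errors that remain holomorphic and locally bounded in $z$ — whereas the resolvent/Neumann-series part and the large-time part are routine given the uniform resolvent bounds on $\gamma$ and the uniform spectral gap on $K$.
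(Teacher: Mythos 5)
For this proposition the paper offers no argument of its own: it is proved by direct citation of Braverman--Kappeler \cite[Thm 5.7]{BK1}, so your sketch is to be measured against that standard route, which is indeed the one you follow (contour-integral complex powers, joint holomorphy in $(s,z)$, heat-trace continuation, Cauchy in $s$). In outline your argument is the right one, but two steps as written do not hold under the stated hypotheses. First, the hypothesis that $\pi$ is an Agmon angle for $\Delta+V_z$ does \emph{not} give a spectral gap $\Re\,\sigma(\Delta+V_z)\geqslant\delta>0$: eigenvalues with negative real part (off the cut $R_\pi$) are allowed, and then $\Vert e^{-t(\Delta+V_z)}\Vert$ grows exponentially, so your large-time integral $\int_1^\infty t^{s-1}\Tr(e^{-t(\Delta+V_z)})\,dt$ diverges and the "uniform spectral gap" you invoke is an extra assumption, not a consequence. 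The standard repair (implicit in \cite{BK1} and consistent with Proposition~\ref{p:speccut}, which guarantees only finitely many eigenvalues outside a cone about $\mathbb{R}_{\geqslant 0}$) is to split off, by a Riesz projector given by a contour integral of the resolvent and hence holomorphic in $z$ near $z_0$, the finite-rank part carrying those eigenvalues; its zeta contribution is a finite holomorphic sum, and the heat/Mellin argument applies to the complementary block, which does have a gap.

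Second, the justification of trace-class holomorphy via the contour representation is off: the integrand $\lambda^{-s}(\Delta+V_z-\lambda)^{-1}$ is a $\Psi$do of order $-2$ and lies in $\mathcal{I}_1$ only when $d\leqslant 1$, so "the integrand lies in $\mathcal{I}_1$ with integrable $\mathcal{I}_1$-norm for $\Re s>\tfrac d2$" is false in general. One either integrates by parts in $\lambda$ to replace the resolvent by $(\Delta+V_z-\lambda)^{-k-1}$ with $2(k+1)>d$, which is trace class with integrable norm along $\gamma$, or one proves $\mathcal{I}_1$-valued holomorphy of $z\mapsto(\Delta+V_z)^{-s}_\pi$ (equivalently of $e^{-t(\Delta+V_z)}$) from weak holomorphy plus locally bounded $\mathcal{I}_1$-norms. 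With these two repairs your plan — uniform resolvent estimates on a keyhole contour for $z$ in a compact neighbourhood, Gilkey-type short-time expansion with coefficients holomorphic in $z$ and remainder locally uniform, no pole at $s=0$, then Cauchy in $s$ — is exactly the argument behind the cited theorem, and the delicate point you single out (holomorphic, locally uniform control of the heat coefficients and remainder) is indeed where the real work lies.
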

In~\cite{kontsevichdeterminants,BK1}, holomorphicity of the zeta determinant along one parameter holomorphic
families of differential operators is established but where the full symbol is allowed to vary which is stronger than what is needed here. 
We now discuss how parametric holomorphicity implies holomorphicity for $\det_\zeta$ in the sense of definition~\ref{d:analfunfrechet}.
We next state a result proved in Kriegl--Michor~\cite[Thm 7.19 p.~88, Thm 7.24 p.~90]{KM97} which gives a simple criteria which 
implies
holomorphicity in the sense of
definition~\ref{d:analfunfrechet}.
\begin{prop}[Holomorphicity by curve testing]
\label{p:krieglmichor}
Let $E$ be a Fr\'echet space over $\mathbb{C}$. Then
given a map $F:E\mapsto \mathbb{C}$, the following statements are equivalent:
\begin{itemize}
\item For any holomorphic curve
$\gamma:\mathbb{D}\subset \mathbb{C}\mapsto E$, the composite
$F\circ \gamma:\mathbb{D}\mapsto \mathbb{C}$ is holomorphic in the usual sense,
\item $F$ is holomorphic in the sense of definition~\ref{d:analfunfrechet}.
\end{itemize}
\end{prop}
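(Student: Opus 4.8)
The plan is to establish the two implications separately, with the substantive work lying in the direction ``curve-testing $\Rightarrow$ holomorphic in the sense of Definition~\ref{d:analfunfrechet}''. For the easy direction, assume $F$ is holomorphic in the sense of Definition~\ref{d:analfunfrechet} and let $\gamma:\mathbb{D}\to E$ be a holomorphic curve. Fix $\lambda_0$ and write $\gamma(\lambda)=\gamma(\lambda_0)+h(\lambda)$, where $h$ is again a holomorphic $E$-valued curve with $h(\lambda)\to 0$ as $\lambda\to\lambda_0$. For $\lambda$ close enough to $\lambda_0$, the Taylor expansion of $F$ at $\gamma(\lambda_0)$ converges, so
\[
F(\gamma(\lambda))=\sum_{n=0}^\infty\frac{1}{n!}\,D^nF\big(\gamma(\lambda_0),h(\lambda),\dots,h(\lambda)\big),
\]
and each summand is holomorphic in $\lambda$ because $D^nF(\gamma(\lambda_0),\cdot)$ is a jointly continuous $n$-linear map (Definition~\ref{d:multilinearschwartz}), hence bounded on a neighbourhood of $\gamma(\lambda_0)$, composed with the holomorphic curve $h$. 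Local uniform convergence of the series then gives holomorphy of $F\circ\gamma$ near $\lambda_0$, and since $\lambda_0$ is arbitrary, on all of $\mathbb{D}$.

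For the converse, assume $F\circ\gamma$ is holomorphic for every holomorphic curve $\gamma:\mathbb{D}\to E$. Testing against the affine curves $\gamma(\lambda)=V_0+\lambda h$ shows that $F$ is G\^ateaux-holomorphic: $\lambda\mapsto F(V_0+\lambda h)$ is holomorphic near $0$ for all $V_0,h\in E$. Now invoke the convenient-calculus criterion of Kriegl--Michor~\cite[Thm 7.19 p.~88]{KM97}: a map on a $c^\infty$-open subset of a complex convenient vector space which sends holomorphic curves to holomorphic curves is holomorphic in the convenient sense, i.e.\ $c^\infty$-smooth with all iterated differentials $\mathbb{C}$-multilinear. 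Because $E$ is a Fr\'echet space it is metrizable, hence $c^\infty$-complete, its $c^\infty$-topology coincides with its locally convex topology, and --- crucially thanks to metrizability --- convenient smoothness coincides with the Bastiani smoothness used throughout this paper. Thus $F$ is Bastiani smooth with $\mathbb{C}$-multilinear differentials. It remains to produce the locally convergent Taylor representation: fixing $V_0$, one combines the G\^ateaux-holomorphy along the lines $\lambda\mapsto V_0+\lambda h$ with a Baire-category argument on the Fr\'echet space $E$ to obtain, on a $0$-neighbourhood $\Omega$, a uniform bound on these one-variable restrictions, whence one-dimensional Cauchy estimates yield both the absolute convergence of $\sum_n\frac{1}{n!}D^nF(V_0,h,\dots,h)$ for $h\in\Omega$ and the identity $F(V_0+h)=\sum_n\frac{1}{n!}D^nF(V_0,h,\dots,h)$. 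This is precisely the classical passage from G\^ateaux-holomorphy plus local boundedness to the power-series expansion on metrizable locally convex spaces, recorded in Kriegl--Michor~\cite[Thm 7.24 p.~90]{KM97}; it gives that $F$ satisfies Definition~\ref{d:analfunfrechet}.

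The main obstacle is this last analytic step in the converse. Smoothness is essentially handed over by the convenient-calculus machinery, but the \emph{local convergence and validity} of the Taylor series --- the genuine analyticity --- requires the Baire property and one-dimensional Cauchy estimates, and this is exactly where metrizability of the Fr\'echet space is indispensable (the statement genuinely fails on general locally convex spaces, where convenient-holomorphic maps need not have locally convergent Taylor expansions). One should also be slightly careful about the interplay of domains: Definition~\ref{d:analfunfrechet} allows a general open $\Omega$, but since every point of $\Omega$ has a neighbourhood basis of balanced $c^\infty$-open sets in a Fr\'echet space, the local nature of both conditions reduces everything to the case of a $0$-neighbourhood treated above.
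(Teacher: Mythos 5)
The paper does not actually prove this proposition: it simply quotes it as a result established in Kriegl--Michor (Thm 7.19 and Thm 7.24 of \cite{KM97}), which is precisely the machinery your argument invokes for the substantive direction. Your proposal is therefore correct and follows essentially the same route as the paper, merely adding the (routine) easy direction and some detail on how the Kriegl--Michor theorems combine on a metrizable space.
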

Then by Proposition~\ref{p:krieglmichor}, this implies
that
the maps
$V\mapsto  \det_\zeta\left(\Delta+V \right) $ in the bosonic case, and
$A\mapsto \det_\zeta(\Delta+D^*A)$ in the fermionic case,
are holomorphic for $V\in C^\infty(End(E))$ and $A\in C^\infty(Hom(E_+;E_-))$
close enough to $0$.

\subsection{Perturbations with a spectral gap.}
\label{ss:perturbgap}
 We need to consider small perturbations $\Delta+B$ of some positive definite generalized Laplacian $\Delta$  by differential operators $B$ of order $1$  s.t.
the corresponding heat semigroup $e^{-t(\Delta+B)}$ has exponential decay. 
Let us recall briefly some classical properties of such perturbations.
The operator $\Delta+B:C^\infty(M)\mapsto H^s(M)$ admits a closed extension $H^{s+2}(M)\mapsto H^s(M)$
for every real $s\in \mathbb{R}$ 
by ellipticity of $\Delta+B\in \Psi^2(M)$. 
Since $(\Delta+B-z)^{-1}$ is compact and the 
resolvent set is non empty (see Lemma~\ref{l:gapspec1}),
by holomorphic Fredholm theory,
the family $(\Delta+B-z)^{-1}$ is a meromorphic family of Fredholm operators with poles of finite multiplicity corresponding to 
the discrete spectrum of $\Delta+B:H^{s+2}(M)\mapsto H^s(M)$.
In fact, it has been proved by Agranovich--Markus that 
such perturbations $\Delta + B$
have their spectrum contained in a
parabolic region containing the real axis. 

Our goal here is to bound the spectrum for small perturbations to show that the spectral gap
of $\Delta$ is maintained. This will imply exponential decay of the semigroup $e^{-t(\Delta+B)}$. 
We assume that $\Delta$ is a positive, self-adjoint
generalized Laplacian 
hence there is  $\delta>0$ such that 
$\sigma\left(\Delta\right)\geqslant\delta$.
We have the following Lemma proved in appendix:
\begin{lemm}\label{l:gapspec1}
There exists some neighborhood $\mathcal{U}$ of $0\in \text{Diff}^1(M,E)$
such that for every $B\in \mathcal{U}$, $\sigma\left(\Delta+B \right)\subset\{ Re(z)\geqslant \frac{\delta}{2}, Im(z)\leqslant \frac{\delta}{2} \} $.
\end{lemm}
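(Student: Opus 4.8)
The plan is to show that for $B$ small the resolvent $(\Delta+B-z)^{-1}$ exists and is uniformly bounded on the complement of the parabolic region $\Omega=\{\Re(z)\geqslant\frac{\delta}{2},\ \Im(z)\leqslant\frac{\delta}{2}\}$, by a Neumann series argument based on the resolvent of the unperturbed, self-adjoint operator $\Delta$. First I would fix $z\notin\Omega$ and write $\Delta+B-z=(\mathrm{Id}+B(\Delta-z)^{-1})(\Delta-z)$; since $\Delta$ is positive self-adjoint with $\sigma(\Delta)\subset[\delta,+\infty)$, the spectral theorem gives the operator-norm bound $\|(\Delta-z)^{-1}\|_{\mathcal B(L^2,L^2)}\leqslant \mathrm{dist}(z,[\delta,+\infty))^{-1}$ and, more importantly, the smoothing bound $\|(\Delta-z)^{-1}\|_{\mathcal B(H^{s},H^{s+2})}\leqslant C_s(1+|z|)$ (again by functional calculus, using that $(1+|\xi|^2)(|\xi|^2-z)^{-1}$ is bounded by $C(1+|z|)$ on $[\delta,+\infty)$). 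Then I would invert $\mathrm{Id}+B(\Delta-z)^{-1}$ on $L^2$ by Neumann series once $\|B(\Delta-z)^{-1}\|_{\mathcal B(L^2,L^2)}<1$.

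The point is to make this smallness \emph{uniform in $z\notin\Omega$}. Since $B\in\mathrm{Diff}^1(M,E)$ is a first-order differential operator, $\|B u\|_{L^2}\leqslant \|B\|_{\mathrm{Diff}^1}\,\|u\|_{H^1}$, so $\|B(\Delta-z)^{-1}\|_{\mathcal B(L^2,L^2)}\leqslant \|B\|_{\mathrm{Diff}^1}\,\|(\Delta-z)^{-1}\|_{\mathcal B(L^2,H^1)}$. The key elementary estimate is that $\|(\Delta-z)^{-1}\|_{\mathcal B(L^2,H^1)}$ stays \emph{bounded} as $z\to\infty$ off the parabola: by functional calculus this norm is $\sup_{\lambda\geqslant\delta}(1+\lambda)^{1/2}|\lambda-z|^{-1}$, and a direct computation shows that for $z\notin\Omega$ (so either $\Re z<\frac{\delta}{2}$, where $|\lambda-z|\geqslant \lambda+\frac{\delta}{2}\gtrsim 1+\lambda$, or $|\Im z|>\frac{\delta}{2}$, where $|\lambda-z|\geqslant|\Im z|$ but also $|\lambda - z|\gtrsim \lambda$ when $\lambda$ is large) this supremum is bounded by a constant $C_0$ depending only on $\delta$. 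Hence choosing $\mathcal U=\{B\in\mathrm{Diff}^1(M,E):\|B\|_{\mathrm{Diff}^1}<\frac{1}{2C_0}\}$ makes $\|B(\Delta-z)^{-1}\|_{\mathcal B(L^2,L^2)}\leqslant\frac12<1$ for all $z\notin\Omega$ simultaneously, so $\mathrm{Id}+B(\Delta-z)^{-1}$ is invertible, hence $\Delta+B-z:H^2\to L^2$ is invertible, i.e. $z\notin\sigma(\Delta+B)$. This gives $\sigma(\Delta+B)\subset\Omega$, which is the claim.

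I expect the main (though still routine) obstacle to be the uniform-in-$z$ control of the resolvent near the boundary of the parabola where $|\Im z|$ is of order $1$ but $\Re z$ is large and positive: there $\lambda - z$ can be made small in modulus for $\lambda\approx\Re z$, so one genuinely uses that such $z$ lie inside $\Omega$ (where we impose nothing) rather than outside it — i.e. the shape of the region $\Omega$ is dictated precisely by where the naive bound $|\lambda-z|\gtrsim 1+\lambda$ fails. One must also check the case analysis $\Re z \le \delta/2$ versus $\Re z > \delta/2$ cleanly, and note that invertibility on $L^2$ upgrades to the stated closed extension $H^2\to L^2$ by the ellipticity of $\Delta+B\in\Psi^2(M,E)$ already recorded above. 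Since the estimates are purely a matter of the scalar function $(1+\lambda)^{1/2}|\lambda-z|^{-1}$ on $[\delta,+\infty)$, no microlocal input beyond ellipticity is needed; the argument is elementary spectral theory plus a Neumann series, and the assertion $\Im(z)\leqslant\frac{\delta}{2}$ (rather than $|\Im z|\leqslant\frac\delta2$) is harmless since one may symmetrize the region if desired, or simply note the stated one-sided bound already follows from the same computation.
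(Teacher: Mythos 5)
Your factorization $\Delta+B-z=(\mathrm{Id}+B(\Delta-z)^{-1})(\Delta-z)$ plus a Neumann series is a legitimate and more elementary route than the paper's, and your Case 1 does give the half-plane localization $\sigma(\Delta+B)\subset\{\Re z\geqslant \delta/2\}$ (modulo the small slip that for $\Re z<\delta/2$ one has $|\lambda-z|\geqslant\lambda-\delta/2$, not $\lambda+\delta/2$; the conclusion $|\lambda-z|\gtrsim 1+\lambda$ still holds for $\lambda\geqslant\delta$). But Case 2 contains a genuine error: the claimed uniform bound $\sup_{\lambda\geqslant\delta}(1+\lambda)^{1/2}|\lambda-z|^{-1}\leqslant C_0$ for \emph{all} $z\notin\Omega$ is false. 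Take $z=R+i\delta$ with $R\to+\infty$: then $\Im z=\delta>\delta/2$, so $z\notin\Omega$, yet at $\lambda\approx R$ one has $|\lambda-z|\approx\delta$ while $(1+\lambda)^{1/2}\approx\sqrt{R}\to\infty$. Your parenthetical ``$|\lambda-z|\gtrsim\lambda$ when $\lambda$ is large'' fails exactly at $\lambda\approx\Re z$, and your closing remark that the problematic $z$ (large real part, $|\Im z|$ of order one) ``lie inside $\Omega$'' is wrong, because $\Omega$ as stated also imposes $\Im z\leqslant\delta/2$. So the Neumann series does not cover the region $\{\Re z\geqslant\delta/2,\ \Im z>\delta/2\}$, and the proposal does not prove the imaginary-part clause of the statement.

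In fairness, that clause cannot be proved: for a first-order perturbation the spectrum is only confined to a parabolic region (Agranovich--Markus, quoted in the paper), and e.g. $\Delta=-\partial_x^2+1$, $B=\epsilon\partial_x$ on $S^1$ has eigenvalues $n^2+1+i\epsilon n$, with unbounded imaginary part no matter how small $\epsilon$ is. The paper's own proof takes a different route --- a quadratic-form coercivity argument: it writes $\Re\langle u,(\Delta+V-z)u\rangle$ through $\Delta^{1/4}u$, bounds $\Delta^{-1/4}V\Delta^{-1/4}$ in $\mathcal{B}(H^{1/2},H^{1/2})$ via Calder\'on--Vaillancourt, and inverts $\Delta+V-z$ together with its adjoint for $\Re z\leqslant\delta/2$ --- and it, too, only establishes $\sigma(\Delta+B)\subset\{\Re z\geqslant\delta/2\}$; the $\Im z\leqslant\delta/2$ clause is neither proved there nor used later (cf.\ Lemma~\ref{l:specgap2}). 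So your approach is a sound alternative for the real-part bound (uniform resolvent estimate on a left half-plane plus Neumann series versus the paper's form coercivity), but you should restrict the complementary region to $\{\Re z<\delta/2\}$ (or a parabola) and drop the uniform bound claim on the full complement of $\Omega$, which is false as stated.
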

So we control the spectrum of the perturbed operator in some $\frac{\delta}{2}$ neighborhood of the half--line
$[\delta,+\infty)$. 

Therefore, we can specialize the above Lemma to our particular situations as:
\begin{lemm}\label{l:specgap2}
In the bosonic case, there exists some open 
neighborhood $\mathcal{U}\subset C^\infty(End(E))$ of $0$ such that for
all small perturbations  
$V\in \mathcal{U}$, $\Delta+V$ is invertible and 
$\sigma\left(\Delta+V \right)\subset \{Re(z)\geqslant \frac{\delta}{2}\}$.

 In the fermionic case, the discussion is similar. $D$ invertible implies that $\Delta=D^*D$ is positive self-adjoint, $\sigma\left( \Delta\right)\geqslant\delta>0$. It follows that there is some open subset $\mathcal{U}\subset C^\infty(Hom(E_+,E_-))$ s.t. $\forall A\in \mathcal{U}$, 
$\sigma\left(D^*\left(D+A \right) \right)\subset \{ Re(z)\geqslant \frac{\delta}{2} \}$.
\end{lemm}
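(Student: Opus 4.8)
The plan is to deduce both statements directly from Lemma~\ref{l:gapspec1} by choosing the correct order-one perturbation $B$ and pulling back the neighborhood $\mathcal{U}\subset\text{Diff}^1(M,E)$ along a continuous linear map; all the analytic content is already packaged in Lemma~\ref{l:gapspec1} (whose appendix proof uses the Agranovich--Markus localization of the spectrum together with a resolvent perturbation estimate), so what remains is bookkeeping.

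For the bosonic case, an endomorphism $V\in C^\infty(End(E))$, viewed as the order-zero multiplication operator, is in particular an element of $\text{Diff}^1(M,E)$, and the inclusion $C^\infty(End(E))\hookrightarrow\text{Diff}^1(M,E)$ is continuous for the natural Fr\'echet topologies, since the seminorms on $\text{Diff}^1(M,E)$ are $C^k$ norms on local coefficients and restrict, on the subspace of order-zero operators, to the $C^k$ norms on $V$. Hence the preimage $\mathcal{U}_b$ of the neighborhood $\mathcal{U}$ from Lemma~\ref{l:gapspec1} is an open neighborhood of $0$ in $C^\infty(End(E))$, and applying that lemma with $B=V$ gives $\sigma(\Delta+V)\subset\{\Re(z)\geqslant\tfrac{\delta}{2}\}$ for all $V\in\mathcal{U}_b$. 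In particular $0\notin\sigma(\Delta+V)$, so the canonical closed extension $\Delta+V:H^2(M,E)\to L^2(M,E)$ is invertible.

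For the fermionic case, first note that invertibility of $D$ forces $\Delta:=D^*D$ to be positive and self-adjoint: $\langle D^*D\varphi,\varphi\rangle_{L^2}=\Vert D\varphi\Vert_{L^2}^2\geqslant 0$, with equality only if $D\varphi=0$, i.e. $\varphi=0$. Since $\Delta$ is an elliptic self-adjoint generalized Laplacian on the compact manifold $M$ it has discrete spectrum accumulating only at $+\infty$, so its lowest eigenvalue is a number $\delta>0$ and $\sigma(\Delta)\subset[\delta,+\infty)$; thus Lemma~\ref{l:gapspec1} applies to this $\Delta$ with some $\mathcal{U}\subset\text{Diff}^1(M,E_+)$. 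Now $D^*(D+A)=D^*D+D^*A=\Delta+D^*A$, and since $D^*\in\text{Diff}^1(M)$ while $A$ has order $0$, the operator $D^*A$ has order $1$; moreover $A\in C^\infty(Hom(E_+,E_-))\mapsto D^*A\in\text{Diff}^1(M,E_+)$ is linear and continuous. Pulling back $\mathcal{U}$ along this map yields an open neighborhood $\mathcal{U}_f\ni 0$ in $C^\infty(Hom(E_+,E_-))$, and Lemma~\ref{l:gapspec1} with $B=D^*A$ gives $\sigma(D^*(D+A))=\sigma(\Delta+D^*A)\subset\{\Re(z)\geqslant\tfrac{\delta}{2}\}$ for all $A\in\mathcal{U}_f$.

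There is no genuine obstacle beyond this; the only point deserving a line of care is the continuity of the two linear maps $V\mapsto V$ and $A\mapsto D^*A$ into $\text{Diff}^1(M)$ with its Fr\'echet topology, which is what guarantees that the pulled-back neighborhoods $\mathcal{U}_b,\mathcal{U}_f$ are genuinely open and hence nonempty.
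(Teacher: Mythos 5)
Your proposal is correct and follows essentially the same route as the paper, which obtains this lemma as an immediate specialization of Lemma~\ref{l:gapspec1} by taking $B=V$ (order $0$, hence in $\text{Diff}^1$) in the bosonic case and $B=D^*A$ in the fermionic case, after noting that invertibility of $D$ makes $\Delta=D^*D$ positive self-adjoint with spectral gap $\delta>0$. The only content you add beyond the paper's one-line deduction is the explicit remark on continuity of the linear maps $V\mapsto V$ and $A\mapsto D^*A$ into $\text{Diff}^1(M)$, which is a harmless and correct bookkeeping point.
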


In the sequel, until subsection~\ref{ss:factozeta}, we shall take 
small enough perturbations in the neighborhood $\mathcal{U}$ given by Lemma~\ref{l:specgap2} 
so that the semigroups $e^{-t(\Delta+V)}$ and $e^{-t(D^*(D+A))}$ are analytic semigroups with exponential decay and 
the spectras $\sigma(\Delta+V)$ and $\sigma(D^*(D+A))$ are contained in the half--plane $Re(z)>0$.

\subsubsection{Taking the $\log$ of $\det_\zeta$.}

In this situation, we can take the $\log\det_\zeta$. However our results on the functional
derivatives of $\log\det_\zeta$ do not depend on the chosen branch of the logarithm. Since $\det_\zeta$ is well--defined for all
perturbations and is holomorphic, all our identities on G\^ateaux differentials that are proved for small perturbations will become automatically valid for any perturbations by analytic continuation.

\subsection{Resolvent bounds and exponential decay.}
We also prove
for small perturbations that we have some nice sectorial estimates on the resolvent which allow to apply the Hille--Yosida Theorem for
analytic semigroups.
\begin{lemm}\label{l:resolventbound}
Let $B\in \mathcal{U}\subset\text{Diff}^1(M,E) $ where $\mathcal{U}$ is the open set from Lemma~\ref{l:gapspec1}. Then
there exists a convex angular sector $\mathcal{R}=\{\arg(z)\in [-\theta,\theta]\}, \theta\in (0,\frac{\pi}{2})$ 
containing the half--line $[0,+\infty)$, $R>0$ 
s.t. the resolvent $(\Delta+B-z)^{-1}$ exists when $\vert z\vert \geqslant R, z\notin \mathcal{R}$ and satisfies a bound of the form:
\begin{equation}\label{e:resolventbound} 
\Vert \left(\Delta+B-z\right)^{-1}\Vert_{\mathcal{B}(L^2,L^2)} \leqslant K\text{dist}\left(z,\mathcal{R} \right)
\end{equation} 
for some $K>0$.
\end{lemm}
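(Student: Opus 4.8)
The plan is to deduce the resolvent bound~(\ref{e:resolventbound}) by a perturbative argument: for $|z|$ large and $z$ outside a fixed sector, $\Delta+B-z$ differs from $\Delta-z$ by a term that is small \emph{relative to} $\Delta-z$, the smallness being quantified by the functional calculus of the positive self-adjoint operator $\Delta$. First I would fix any half-angle $\theta\in(0,\tfrac{\pi}{2})$, set $\mathcal{R}=\{z:\arg z\in[-\theta,\theta]\}$, and record the elementary inequality
\[
|\lambda-z|^2=\lambda^2+|z|^2-2\lambda|z|\cos(\arg z)\geq(1-\cos\theta)(\lambda^2+|z|^2)\geq\tfrac{1-\cos\theta}{2}(\lambda+|z|)^2,
\]
valid for every $\lambda>0$ and every $z\notin\mathcal{R}$ (the first inequality uses $\cos(\arg z)\leq\cos\theta$ together with $2\lambda|z|\leq\lambda^2+|z|^2$). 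Since $\Delta=\Delta^{*}$ has $\sigma(\Delta)\subset[\delta,+\infty)\subset\mathcal{R}$, the spectral theorem then gives, for $z\notin\mathcal{R}$,
\[
\|(\Delta-z)^{-1}\|_{\mathcal{B}(L^2,L^2)}=\frac{1}{\operatorname{dist}(z,\sigma(\Delta))}\leq\frac{1}{\operatorname{dist}(z,\mathcal{R})},\qquad
\|\Delta^{1/2}(\Delta-z)^{-1}\|_{\mathcal{B}(L^2,L^2)}=\sup_{\lambda\geq\delta}\frac{\lambda^{1/2}}{|\lambda-z|}\leq\frac{C_\theta}{|z|^{1/2}},
\]
the last bound because $\lambda^{1/2}/(\lambda+|z|)\leq\tfrac12|z|^{-1/2}$ by the arithmetic--geometric mean inequality.

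The second ingredient is that $B$ is bounded relative to $\Delta^{1/2}$. Since $\Delta$ is a positive generalized Laplacian, G\aa rding's inequality together with the positivity bound $\langle\Delta u,u\rangle\geq\delta\|u\|_{L^2}^{2}$ gives a constant with $\|u\|_{H^1}^{2}\leq C\langle\Delta u,u\rangle=C\|\Delta^{1/2}u\|_{L^2}^{2}$; because $B\in\text{Diff}^1(M,E)$ maps $H^1\to L^2$ boundedly, with norm controlled by finitely many $C^0$-seminorms of its coefficients, we obtain $\|B\Delta^{-1/2}\|_{\mathcal{B}(L^2,L^2)}\leq C_B<\infty$. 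Hence $\|B(\Delta-z)^{-1}\|\leq\|B\Delta^{-1/2}\|\,\|\Delta^{1/2}(\Delta-z)^{-1}\|\leq C_B C_\theta|z|^{-1/2}$, so there is $R>0$ (depending on $B$ only through $C_B$) with $\|B(\Delta-z)^{-1}\|\leq\tfrac12$ for all $z\notin\mathcal{R}$ with $|z|\geq R$. For such $z$ the operator $I+B(\Delta-z)^{-1}$ is invertible by Neumann series with inverse of norm $\leq2$, so $\Delta+B-z=(I+B(\Delta-z)^{-1})(\Delta-z)$ is invertible, $(\Delta+B-z)^{-1}=(\Delta-z)^{-1}(I+B(\Delta-z)^{-1})^{-1}$, and
\[
\|(\Delta+B-z)^{-1}\|_{\mathcal{B}(L^2,L^2)}\leq\|(\Delta-z)^{-1}\|\,\|(I+B(\Delta-z)^{-1})^{-1}\|\leq\frac{2}{\operatorname{dist}(z,\mathcal{R})},
\]
which is~(\ref{e:resolventbound}) with $K=2$; the same factorization exhibits $z\mapsto(\Delta+B-z)^{-1}$ as holomorphic off $\mathcal{R}\cup\{|z|<R\}$, exactly what is needed to apply Hille--Yosida to $e^{-t(\Delta+B)}$.

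I do not anticipate a genuine obstacle here. The only points requiring some care are: (i) phrasing the estimate $\|B\Delta^{-1/2}\|\leq C_B$ so that $C_B$ is controlled by the Fr\'echet seminorms of $B$, so that $R$ may, if desired, be chosen uniformly over a bounded subset of the neighborhood $\mathcal{U}$ --- this is routine from the explicit first-order form of $B$ and the norm equivalence $\|u\|_{H^1}\simeq\|\Delta^{1/2}u\|_{L^2}+\|u\|_{L^2}$; and (ii) observing that the inequality $|\lambda-z|\geq c_\theta(\lambda+|z|)$ holds with \emph{no} upper bound on $|z|$, so the single threshold $|z|\geq R$ coming from the perturbative step is the only largeness hypothesis needed. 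In the fermionic case one runs the identical argument with $\Delta$ replaced by $D^{*}D$ and $B=D^{*}A\in\text{Diff}^1(M,E_+)$, so nothing changes.
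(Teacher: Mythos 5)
Your argument is correct, but it takes a genuinely different route from the paper. The paper disposes of Lemma~\ref{l:resolventbound} in one sentence by invoking Shubin's theory of operators elliptic with parameter: since the principal symbol of $\Delta+B-z$ is $g_{\mu\nu}(x)\xi^\mu\xi^\nu - z$, the operator is parameter-elliptic of order $2$ on $M\times\Lambda$ for any closed cone $\Lambda$ avoiding $[\tfrac{\delta}{2},+\infty)$, and Shubin's Theorems 9.2--9.3 directly yield invertibility for large $|z|$ together with the $O(|z|^{-1})$ resolvent bound. That argument is microlocal and does not see $B$ as a small perturbation at all (only the top-order part enters the parameter-ellipticity condition). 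You instead run a Kato-type relatively bounded perturbation argument: the spectral theorem for the positive self-adjoint $\Delta$ gives the elementary sectorial bounds $\Vert(\Delta-z)^{-1}\Vert\le\operatorname{dist}(z,\mathcal{R})^{-1}$ and $\Vert\Delta^{1/2}(\Delta-z)^{-1}\Vert\le C_\theta|z|^{-1/2}$, the $H^1$-boundedness of $B\in\text{Diff}^1$ plus the norm equivalence $\Vert u\Vert_{H^1}\simeq\Vert\Delta^{1/2}u\Vert_{L^2}$ gives $\Vert B\Delta^{-1/2}\Vert<\infty$, and the factorization $\Delta+B-z=(I+B(\Delta-z)^{-1})(\Delta-z)$ closes the estimate by a Neumann series once $|z|\ge R$. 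Your route is more self-contained (no parameter-dependent $\Psi$do calculus, only the spectral theorem and a G\aa rding-type coercivity estimate) and gives an explicit constant $K=2$; the paper's route is shorter on the page and more robust in that it would apply even if the unperturbed operator had no self-adjoint structure, since parameter-ellipticity only inspects the principal symbol. Both deliver the same conclusion, with the understanding that the displayed inequality in the lemma is meant to read $\Vert(\Delta+B-z)^{-1}\Vert\le K\operatorname{dist}(z,\mathcal{R})^{-1}$, which is exactly what both arguments produce.
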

These sectorial bounds are straightforward consequences of the more general~\cite[Theorem 9.2 p.~85 and Theorem 9.3 p.~86]{Shubin}
since the operator $\Delta+B-z$ is elliptic with parameter $z$ of order $2$ in the sense of Shubin~\cite[p.~79]{Shubin} on $M\times \Lambda$
where $\Lambda\subset \mathbb{C}$ is any closed cone containing $(-\infty,0]$ avoiding the half-line $[\frac{\delta}{2},+\infty)$. 

%

%
A consequence of the above estimate also reads:
\begin{prop}\label{p:speccut}
Under the assumptions of the previous Lemma,
only a finite number of eigenvalues of $\Delta+B$ lies outside any conic neighborhood of the half--line $[0,+\infty)$.
\end{prop}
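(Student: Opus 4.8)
The plan is to combine the parameter-ellipticity of $\Delta+B$ with the discreteness of its spectrum, using arbitrarily thin cones. Fix an arbitrary conic neighborhood $\mathcal{N}$ of the half-line $[0,+\infty)$. Since $\mathcal{N}$ is open, invariant under positive dilations, and contains $(0,+\infty)$, it must contain a full open sector $\{z\neq 0:|\arg z|<\varepsilon\}$ for some $\varepsilon\in(0,\pi)$ (look at a small disc around the point $1$ and take the cone it generates); consequently the complement of $\mathcal{N}$ is contained in the closed cone
\[
\Lambda_\varepsilon:=\{z\in\mathbb{C}:|\arg z|\geqslant\varepsilon\}\cup\{0\}.
\]
This $\Lambda_\varepsilon$ contains $(-\infty,0]$ and is disjoint from $[\tfrac{\delta}{2},+\infty)$, so by the observation following Lemma~\ref{l:resolventbound} the operator $\Delta+B-z$ is elliptic with parameter $z$ of order $2$ in the sense of Shubin on $M\times\Lambda_\varepsilon$ (the parameter-dependent principal symbol $g_{\mu\nu}(x)\xi^\mu\xi^\nu\otimes\mathrm{Id}-z$ is invertible for $\xi\neq 0$ and $z\in\Lambda_\varepsilon$, and the lower order term $B$ is irrelevant to this).

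By~\cite[Theorems 9.2 and 9.3]{Shubin}, parameter-ellipticity on $M\times\Lambda_\varepsilon$ produces a radius $R_\varepsilon>0$ such that $(\Delta+B-z)^{-1}$ exists as a bounded operator on $L^2$ for all $z\in\Lambda_\varepsilon$ with $|z|\geqslant R_\varepsilon$; equivalently $\sigma(\Delta+B)\cap\Lambda_\varepsilon\subset\{|z|<R_\varepsilon\}$. Hence $\sigma(\Delta+B)\setminus\mathcal{N}\subset\sigma(\Delta+B)\cap\Lambda_\varepsilon$ is a bounded subset of $\mathbb{C}$. To conclude I would invoke the fact recalled before Lemma~\ref{l:gapspec1}, namely that compactness of the resolvent together with holomorphic Fredholm theory forces $\sigma(\Delta+B)$ to consist of isolated eigenvalues of finite multiplicity with no finite accumulation point; therefore $\sigma(\Delta+B)\cap\{|z|<R_\varepsilon\}$ is finite, and so is $\sigma(\Delta+B)\setminus\mathcal{N}$, which is the assertion. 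As an alternative first step one can use the Agranovich--Markus parabolic confinement $\sigma(\Delta+B)\subset\{\,|\Im z|\leqslant C(1+|\Re z|)^{1/2}\,\}$~\cite[p.~238]{Agranovich-Markus}: that parabola is asymptotically tangent to $[0,+\infty)$, hence only a bounded portion of it lies outside $\mathcal{N}$, and one finishes the same way.

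I do not expect a genuine obstacle here: the only point needing a little care is the uniformity over \emph{all} conic neighborhoods, which is exactly why one must apply parameter-ellipticity on the arbitrarily thin complementary cones $\Lambda_\varepsilon$ rather than the single fixed sector $\mathcal{R}$ of Lemma~\ref{l:resolventbound}. Since parameter-ellipticity is available for every closed cone missing $(0,+\infty)$, this costs nothing, and the remaining argument is a routine combination of the Shubin parametrix with the discreteness of the spectrum.
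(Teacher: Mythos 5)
Your proof is correct and follows essentially the same route as the paper, which presents the proposition as a direct consequence of the Shubin parameter-elliptic resolvent estimates (Theorems 9.2--9.3) applied on closed cones avoiding the positive real axis, combined with the discreteness of the spectrum coming from the compact resolvent and holomorphic Fredholm theory. Your thin-cone reduction simply makes explicit the uniformity over all conic neighborhoods that the paper leaves implicit in its remark that parameter-ellipticity holds on any such cone.
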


The resolvent bound on $ (\Delta+B-z)^{-1}:L^2(M) \mapsto L^2(M)  $
from Lemma~\ref{l:resolventbound} immediately implies
by the Hille--Yosida theorem for analytic semigroups~\cite[Theorem 4.22 p.~36]{Hairer}:
\begin{prop}
Let $B\in \mathcal{U}\subset\text{Diff}^1(M,E) $ where $\mathcal{U}$ is the open set from Lemma~\ref{l:gapspec1}.
Then there exists a unique strongly continuous heat 
semigroup $e^{-t(\Delta
+B)}:L^2(M) \mapsto L^2(M)$ generated by $\Delta+B$ which satisfies exponential decay estimates of the form
$$\Vert e^{-t(\Delta+B)}\Vert_{\mathcal{B}(L^2,L^2)}\leqslant Ce^{-\frac{\delta}{2} t} .$$
\end{prop}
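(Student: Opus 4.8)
The plan is to produce the semigroup from the abstract generation theorem for analytic semigroups, feeding it the sectorial resolvent estimate of Lemma~\ref{l:resolventbound} together with the spectral localization of Lemma~\ref{l:gapspec1}, and then to convert the size of the spectral gap into the exponential decay. Write $A$ for the $L^2$ realization of $\Delta+B$ with domain $H^2(M,E)$; by ellipticity of $\Delta+B\in\Psi^2(M,E)$ this operator is closed, densely defined, with compact resolvent, hence has discrete spectrum of finite multiplicities. The resolvent bound of Lemma~\ref{l:resolventbound}, namely $\Vert(A-\lambda)^{-1}\Vert_{\mathcal{B}(L^2,L^2)}\le K\,\mathrm{dist}(\lambda,\mathcal{R})^{-1}$ for $|\lambda|\ge R$ and $\lambda\notin\mathcal{R}=\{|\arg\lambda|\le\theta\}$, combined with $\mathrm{dist}(\lambda,\mathcal{R})\ge|\lambda|\sin(|\arg\lambda|-\theta)$, is exactly an estimate $\Vert(\lambda-A)^{-1}\Vert\lesssim|\lambda|^{-1}$ on each sector $\{|\arg\lambda|\ge\theta'\}$ with $\theta<\theta'<\tfrac{\pi}{2}$, while on the bounded set $\{|\lambda|\le R\}\setminus\sigma(A)$ the resolvent is automatically bounded and by Lemma~\ref{l:gapspec1} $\sigma(A)\subset\{\Re\lambda\ge\tfrac{\delta}{2}\}$. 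These are precisely the hypotheses of the generation theorem for analytic semigroups~\cite[Thm~4.22 p.~36]{Hairer}, so $-A$ generates an analytic, in particular strongly continuous, semigroup $(e^{-tA})_{t\ge0}$ on $L^2(M,E)$, uniformly bounded on compact time intervals; uniqueness is the standard fact that a $C_0$-semigroup is determined by its generator, since two such semigroups have the same Laplace transform $(\mu+A)^{-1}$ for $\Re\mu$ large.

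For the decay estimate I would pass to the shifted operator $A_\delta:=A-\tfrac{\delta}{2}$. Because $\Delta$ is positive definite with $\sigma(\Delta)\ge\delta$ and $B\in\mathcal{U}$ is a small perturbation, Lemma~\ref{l:gapspec1} in fact keeps $\sigma(A)$ away from the line $\Re\lambda=\tfrac{\delta}{2}$, so $\sigma(A_\delta)$ sits in the open right half-plane inside a sector of half-angle $<\tfrac{\pi}{2}$, with $0$ in its resolvent set, and the sectorial resolvent estimate is preserved under the constant shift. Applying the same generation theorem to $A_\delta$ shows that $-A_\delta$ generates a \emph{bounded} analytic semigroup: $\Vert e^{-tA_\delta}\Vert_{\mathcal{B}(L^2,L^2)}\le C$ for all $t\ge0$. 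Since $e^{-tA_\delta}=e^{\delta t/2}e^{-tA}$, this is exactly
\[
\Vert e^{-t(\Delta+B)}\Vert_{\mathcal{B}(L^2,L^2)}=e^{-\delta t/2}\,\Vert e^{-tA_\delta}\Vert_{\mathcal{B}(L^2,L^2)}\le C\,e^{-\delta t/2}.
\]
Alternatively the same bound drops out of the Dunford--Riesz representation $e^{-tA}=\tfrac{1}{2\pi i}\int_\Gamma e^{-t\lambda}(\lambda-A)^{-1}\,d\lambda$ with $\Gamma\subset\rho(A)$ a contour surrounding $\sigma(A)$ chosen to run through a vertical segment near $\Re\lambda=\tfrac{\delta}{2}$ and then out along two rays into $\{|\arg\lambda|<\tfrac{\pi}{2}\}$, along which $|e^{-t\lambda}|\le e^{-\delta t/2}$ up to a factor made absolutely integrable by Lemma~\ref{l:resolventbound}.

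The part that requires genuine care, as opposed to routine semigroup theory, is checking that Lemma~\ref{l:resolventbound} and Lemma~\ref{l:gapspec1} together really do present $\Delta+B$, and then $A_\delta$, as sectorial operators of angle strictly less than $\tfrac{\pi}{2}$ in exactly the form the cited generation theorem requires: one must match the shape of the spectral region coming from Lemma~\ref{l:gapspec1} against the sector complementary to $\mathcal{R}$, and confirm that after shifting by $\tfrac{\delta}{2}$ the spectrum still lies in a sub-$\tfrac{\pi}{2}$ sector with the resolvent controlled near the origin. If this bookkeeping is tight, one can instead invoke the sharper parabolic localization of $\sigma(\Delta+B)$ due to Agranovich--Markus to place the spectrum comfortably inside such a sector; once that is in hand, generation, uniqueness and the exponential bound follow as above.
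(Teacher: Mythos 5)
Your proposal is correct and follows essentially the same route as the paper, which simply invokes the resolvent bound of Lemma~\ref{l:resolventbound} together with the Hille--Yosida generation theorem for analytic semigroups~\cite[Thm 4.22]{Hairer}. Your shift-by-$\tfrac{\delta}{2}$ (or contour) argument merely spells out the exponential decay that the paper declares to follow ``immediately,'' and you correctly read the bound of Lemma~\ref{l:resolventbound} as $K\,\mathrm{dist}(z,\mathcal{R})^{-1}$.
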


\subsection{Relating the heat kernel and complex powers.}

In the sequel, we will strongly rely on the 
formulation of complex powers in terms of the heat kernel.
To make this correspondance precise, we shall
need the following Proposition proved in appendix: 
\begin{prop}\label{p:heatcomplex} 
Let $B\in \mathcal{U}\subset\text{Diff}^1(M,E) $ where $\mathcal{U}$ is the open set from Lemma~\ref{l:gapspec1},
for every $Q\in \text{Diff}^1(M,E)$, we have the following identity: 
\begin{eqnarray}
\boxed{Tr_{L^2}\left(Q(\Delta+B)^{-s} \right)=\frac{1}{\Gamma(s)}\int_0^\infty Tr_{L^2}\left(Qe^{-t(\Delta+B)} \right) t^{s-1}dt.}
\end{eqnarray}
\end{prop}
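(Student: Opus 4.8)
The plan is to establish the identity $Tr_{L^2}\left(Q(\Delta+B)^{-s}\right)=\frac{1}{\Gamma(s)}\int_0^\infty Tr_{L^2}\left(Qe^{-t(\Delta+B)}\right)t^{s-1}dt$ first for $\Re(s)$ large, where both sides converge absolutely, and then extend by analytic continuation. The starting point is the scalar Mellin transform identity $\lambda^{-s}=\frac{1}{\Gamma(s)}\int_0^\infty e^{-t\lambda}t^{s-1}dt$, valid for $\Re(\lambda)>0$ and $\Re(s)>0$, which I would lift to the operator level. Using the contour representation of the complex power $(\Delta+B)_\pi^{-s}=\frac{i}{2\pi}\int_\gamma \lambda^{-s}(\Delta+B-\lambda)^{-1}d\lambda$ together with the analogous contour representation of the heat semigroup $e^{-t(\Delta+B)}=\frac{i}{2\pi}\int_\gamma e^{-t\lambda}(\Delta+B-\lambda)^{-1}d\lambda$ (justified by Lemma~\ref{l:resolventbound} and the Hille--Yosida theorem for analytic semigroups, since the sectorial resolvent bound makes both contour integrals converge), one substitutes $\lambda^{-s}=\frac{1}{\Gamma(s)}\int_0^\infty e^{-t\lambda}t^{s-1}dt$ into the first contour integral and exchanges the order of the $t$-integral and the $\gamma$-integral. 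The exchange is legitimate for $\Re(s)$ large because the resolvent decays like $\mathrm{dist}(z,\mathcal{R})^{-1}$ on $\gamma$ while $\lambda^{-s}$ provides extra decay, and the short-time factor $t^{s-1}$ is integrable near $0$. This yields the operator identity $(\Delta+B)^{-s}=\frac{1}{\Gamma(s)}\int_0^\infty e^{-t(\Delta+B)}t^{s-1}dt$ as a Bochner integral in $\mathcal{B}(L^2,L^2)$, at least in a suitable half-plane.

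Next I would multiply on the left by $Q\in\mathrm{Diff}^1(M,E)$ and take the $L^2$-trace. For $\Re(s)$ large enough (specifically $\Re(s)>\tfrac{\dim M+1}{2}$, so that $Q(\Delta+B)^{-s}$ is trace class by the standard Seeley calculus, $Q$ being order $1$), the left-hand side is well-defined; on the right-hand side, the operator $Qe^{-t(\Delta+B)}$ is trace class for every $t>0$ with a trace-norm bound of the form $\|Qe^{-t(\Delta+B)}\|_1\leqslant Ct^{-(\dim M+1)/2}e^{-\delta t/4}$ for small $t$ (from the short-time heat kernel asymptotics, Melrose's heat calculus as invoked later in subsection~\ref{ss:optraces}) and exponential decay $\leqslant Ce^{-\delta t/4}$ for large $t$ (from the exponential decay of the semigroup, using the spectral gap from Lemma~\ref{l:specgap2} and that $Q$ is a fixed differential operator, writing $Qe^{-t(\Delta+B)}=Qe^{-(t/2)(\Delta+B)}e^{-(t/2)(\Delta+B)}$ so that the first factor is bounded and the second is trace class with exponential decay). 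Consequently $\int_0^\infty |Tr_{L^2}(Qe^{-t(\Delta+B)})|\,t^{\Re(s)-1}dt<\infty$ for $\Re(s)$ large, so one may interchange trace and $t$-integral by Fubini for Bochner integrals, obtaining the claimed identity in the half-plane $\Re(s)\gg 0$.

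Finally, both sides continue meromorphically in $s$: the left-hand side by the Seeley construction (indeed $Tr_{L^2}(Q(\Delta+B)^{-s})$ extends meromorphically to all of $\mathbb{C}$), and the right-hand side by splitting $\int_0^\infty=\int_0^1+\int_1^\infty$, where $\int_1^\infty$ is entire in $s$ thanks to the exponential decay, and $\int_0^1$ is handled by inserting the short-time asymptotic expansion of $Tr_{L^2}(Qe^{-t(\Delta+B)})$ in powers of $t$ (and $\log t$), each term $\int_0^1 t^{a}t^{s-1}dt=\frac{1}{s+a}$ contributing an explicit meromorphic function, with the remainder giving a holomorphic contribution. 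By uniqueness of analytic continuation the identity holds wherever both sides are defined, in particular in a punctured neighborhood of $s=0$. The main obstacle is the careful justification of the interchange of integrals/trace and the trace-norm estimate $\|Qe^{-t(\Delta+B)}\|_1\leqslant Ct^{-(\dim M+1)/2}$ uniformly for small $t$ in the non-self-adjoint setting; this requires the heat parametrix construction for $\Delta+B$, which is available since $\Delta+B$ is elliptic with the leading symbol of Laplace type and $\pi$ is an Agmon angle, so that the heat kernel machinery of Seeley--Greiner--Melrose applies verbatim.
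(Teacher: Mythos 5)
Your proposal follows essentially the same route as the paper's proof: establish the operator identity $(\Delta+B)^{-s}=\frac{1}{\Gamma(s)}\int_0^\infty e^{-t(\Delta+B)}t^{s-1}dt$ by inserting the scalar Mellin formula into the resolvent contour representation (after moving the spectral-cut contour into $\{\Re\lambda>0\}$, which the spectral gap permits), then take traces for $\Re(s)$ large using Seeley's calculus for $Q(\Delta+B)^{-s}$ on the left and short-time heat-trace bounds of order $t^{-(d+1)/2}$ plus long-time exponential decay on the right, and conclude by analytic continuation. The only cosmetic difference is that you justify the exchange of trace and $t$-integral via trace-norm bounds and Fubini for Bochner integrals, whereas the paper identifies the $L^2$ trace with the flat trace and exchanges over compact $t$-intervals before passing to the limit; both justifications are sound.
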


We shall use the above formula to represent differentials of $\log\det_\zeta$ 
in terms of heat operators in the next subsection.

\subsection{G\^ateaux differentials.}

Inspired by the 
nice exposition 
in Chaumard's thesis~\cite[p.~31-32]{chaumard2003},
we calculate the derivatives in $z$ of $\log\det_\zeta\left(\Delta+zV \right)$ 
near $z=0$ 
and we find in the bosonic case that 
for $n>\frac{d}{2}$, the derivative
of order $n$ of $z\mapsto\log\det_\zeta(\Delta+zV)$ at $z=0$ equals 
$(-1)^{n-1}(n-1)!Tr_{L^2}\left((\Delta^{-1}V)^n \right) $ 
where the $L^2$--trace is well--defined, in the fermionic case a similar result holds 
true for $n>d$. 

We introduce a method which allows to simultaneously 
calculate the functional 
derivatives of $\log\det_\zeta$ and bound the wave front set of
their Schwartz kernels. 
We start by using the following:
\begin{prop}
For any analytic family $(V_t)_{t\in \mathbb{R}^n}$ of perturbations, setting
$A_t=\Delta+V_t$
we know that
$Tr(\Delta+V_t)^{-s}$ is \textbf{holomorphic} near $s=0$ and depends smoothly on $t=(t_1,\dots,t_n)\in \mathbb{R}^n$,
and satisfies the variation formula:
\begin{eqnarray}\label{e:eqfirstderivative}
\boxed{\frac{d}{dt_i}Tr\left(A_t^{-s} \right)=-sTr\left( \frac{dV}{dt_i} A_t^{-s-1} \right), i\in \{1,\dots,n\}}
\end{eqnarray}
which is valid away from the poles of the analytic continuation in $s$ of $Tr_{L^2}\left( A_t^{-s} \right)$
hence the above equation holds true near $s=0$.
\end{prop}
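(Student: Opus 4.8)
The plan is to first reduce to the case $\Re(s)$ large, where every operator appearing is trace class, and then invoke uniqueness of analytic continuation. Concretely, fix $t\in\IR^n$ and $i$. Since $V_{t}$ varies in $\text{Diff}^1(M,E)$ and $\tfrac{dV}{dt_i}$ is a differential operator of order at most $1$, the operator $\tfrac{dV}{dt_i}\,A_t^{-s-1}$ has order $\le 1-2(\Re(s)+1)$, hence is trace class once $\Re(s)>\tfrac{d-1}{2}$; likewise $A_t^{-s}$ is trace class for $\Re(s)>\tfrac d2$. On the half-plane $\Re(s)>\tfrac d2$ both sides of~(\ref{e:eqfirstderivative}) are therefore honest traces, and by Seeley's theorem each extends meromorphically to $\IC$ with no pole at $s=0$. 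Thus it suffices to establish~(\ref{e:eqfirstderivative}) for $\Re(s)\gg1$, and the identity near $s=0$ follows by analytic continuation in $s$. Smoothness and joint continuity of $t\mapsto\Tr(A_t^{-s})$ near $s=0$ come from the explicit heat/contour representations below together with the uniform resolvent bounds of Lemma~\ref{l:resolventbound} and the exponential decay of the semigroup.

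\textbf{The contour computation.}
For $B\in\mathcal U$ as in Lemma~\ref{l:gapspec1} write $A_t^{-s}=\tfrac{i}{2\pi}\int_\gamma \lambda^{-s}(A_t-\lambda)^{-1}\,d\lambda$ with $\gamma$ the keyhole contour around $\sigma(A_t)$. Because $t\mapsto A_t=\Delta+V_t$ is affine with $\tfrac{dA_t}{dt_i}=\tfrac{dV}{dt_i}$, the resolvent identity gives $\tfrac{d}{dt_i}(A_t-\lambda)^{-1}=-(A_t-\lambda)^{-1}\tfrac{dV}{dt_i}(A_t-\lambda)^{-1}$, and the sectorial bound~(\ref{e:resolventbound}) on $\gamma$ lets us differentiate under the integral sign. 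Taking $\Tr$ and using cyclicity of the trace (legitimate since, for $\Re(s)$ large, the integrand is trace class with integrable norm along $\gamma$),
\begin{equation}
\frac{d}{dt_i}\Tr\!\left(A_t^{-s}\right)=-\frac{i}{2\pi}\int_\gamma \lambda^{-s}\,\Tr\!\left(\tfrac{dV}{dt_i}(A_t-\lambda)^{-2}\right)d\lambda.
\end{equation}
Now $(A_t-\lambda)^{-2}=\partial_\lambda(A_t-\lambda)^{-1}$; integrating by parts in $\lambda$ along $\gamma$, with the boundary contributions vanishing because $\|\lambda^{-s}(A_t-\lambda)^{-1}\|=O(|\lambda|^{-\Re(s)-1})$ as $|\lambda|\to\infty$, and using $-\partial_\lambda\lambda^{-s}=s\lambda^{-s-1}$, one obtains $\tfrac{i}{2\pi}\int_\gamma \lambda^{-s}(A_t-\lambda)^{-2}d\lambda=s\,\tfrac{i}{2\pi}\int_\gamma \lambda^{-s-1}(A_t-\lambda)^{-1}d\lambda=s\,A_t^{-s-1}$. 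Substituting back yields $\tfrac{d}{dt_i}\Tr(A_t^{-s})=-s\,\Tr\!\left(\tfrac{dV}{dt_i}A_t^{-s-1}\right)$ for $\Re(s)\gg1$, and hence for all $s$ away from the poles by analytic continuation, in particular near $s=0$.

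\textbf{Alternative route and main obstacle.}
The same conclusion follows from Proposition~\ref{p:heatcomplex}: Duhamel's formula gives $\tfrac{d}{dt_i}e^{-\tau A_t}=-\int_0^\tau e^{-(\tau-u)A_t}\tfrac{dV}{dt_i}e^{-uA_t}\,du$, so by cyclicity $\Tr\big(\tfrac{d}{dt_i}e^{-\tau A_t}\big)=-\tau\,\Tr\big(\tfrac{dV}{dt_i}e^{-\tau A_t}\big)$, and inserting this into $\Tr(A_t^{-s})=\tfrac1{\Gamma(s)}\int_0^\infty\Tr(e^{-\tau A_t})\tau^{s-1}d\tau$ together with $\Gamma(s+1)=s\Gamma(s)$ reproduces~(\ref{e:eqfirstderivative}). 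The only genuinely delicate points—and where the bulk of the care goes—are the interchanges of $\tfrac{d}{dt_i}$, $\Tr$, and the $\lambda$- (resp.\ $\tau$-) integral, and the fact that at $s=0$ the individual operators $A_t^{-s}$, $A_t^{-s-1}$ are \emph{not} trace class; both are handled by first working in the region $\Re(s)\gg1$, where absolute convergence and the uniform bounds from Lemmas~\ref{l:gapspec1}--\ref{l:resolventbound} apply, and then propagating the identity by the meromorphic continuation of Seeley, which is regular at $s=0$.
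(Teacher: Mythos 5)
Your ``alternative route'' is in essence the paper's own proof: the paper reduces the statement to Gilkey's variation formula \cite[Thm 1.12.2]{Gilkey}, whose proof is exactly the Duhamel identity for $\frac{d}{dt_i}e^{-\tau A_t}$, cyclicity of the trace for the (smoothing, hence trace class) operators $e^{-(\tau-u)A_t}\frac{dV}{dt_i}e^{-uA_t}$, and the Mellin-transform identity of Proposition~\ref{p:heatcomplex}; holomorphy at $s=0$ and smooth dependence on $t$ then come from the short-time heat-trace expansion with parameter-smooth coefficients \cite[Lemma 1.9.1, Lemma 1.9.3]{Gilkey} together with the large-time exponential decay guaranteed by Lemma~\ref{l:gapspec1}. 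So that part of your proposal is correct and coincides with the paper's argument; note that the smoothness in $t$ of the continued trace near $s=0$ really does require this smooth dependence of the heat coefficients, not just ``uniform resolvent bounds''.

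The contour argument you present as your main route has a genuine gap once $d\geqslant 3$. After differentiating under the contour integral you assert that taking $\Tr$ inside $\int_\gamma$ and using cyclicity is ``legitimate since, for $\Re(s)$ large, the integrand is trace class with integrable norm along $\gamma$''. The scalar factor $\lambda^{-s}$ does not improve the Schatten class of the operator part: for each fixed $\lambda$, the operator $(A_t-\lambda)^{-1}\frac{dV}{dt_i}(A_t-\lambda)^{-1}$ (equivalently $\frac{dV}{dt_i}(A_t-\lambda)^{-2}$) is a pseudodifferential operator of order $-3$ (order $-4$ when $\frac{dV}{dt_i}$ has order $0$), hence is \emph{not} trace class in dimension $d\geqslant 3$ (resp. $d\geqslant 4$), no matter how large $\Re(s)$ is. Consequently neither the trace/integral interchange, nor the cyclicity step, nor the subsequent integration by parts in $\lambda$ performed inside the trace is justified as written; only the integrated operator $\frac{d}{dt_i}A_t^{-s}$, of order $-2\Re(s)-1$, is trace class. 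This can be repaired by standard devices (integrate by parts in $\lambda$ at the operator level $N$ times to produce $(A_t-\lambda)^{-N-1}$ before ever taking a trace, or use parameter-dependent trace-norm estimates), but that is precisely the extra work the heat-kernel route avoids, since $e^{-uA_t}$ is smoothing for $u>0$ and all interchanges are controlled by the small-time heat calculus and the spectral-gap decay. As it stands, the burden of your proof rests on the route that works (the Duhamel/Mellin one), and the contour computation should either be demoted to a heuristic or supplemented with the higher-resolvent-power argument.
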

\begin{proof}
In fact, the claim of our proposition is 
identical to \cite[Theorem d) (1.12.2) p.~108]{Gilkey}
except Gilkey states his results \textbf{only for positive definite, self-adjoint} operators $\Delta+V$ whereas we need it for small non-self-adjoint perturbations $V\in C^\infty(End(E))$.
We need to choose $V\in \mathcal{U}\subset C^\infty(End(E))$ where $\mathcal{U}$ is some sufficiently small neighborhood of $0$ such that
the semigroup
$e^{-t(\Delta+V)}$ has exponential decay, $\mathcal{U}$ is given by Lemma~\ref{l:gapspec1}.
The proof in our non-self-adjoint case 
follows almost verbatim from Gilkey's proof since his proof relies on:
\begin{itemize}
\item the asymptotic expansion of the heat kernel for a smooth family of non-self-adjoint generalized Laplacians~\cite[Lemma 1.9.1 p.~75]{Gilkey}, the smoothness of the 
terms in the asymptotic expansion in the parameters and the variation formula for the resolvent and heat kernel which are proved in~\cite[Lemma 1.9.3 p.~77]{Gilkey} in the non-self-adjoint case, we also refer to
~\cite[section 2.7]{BGV} for similar results, 
\item the result of \cite[Lemma 1.12.1 p.~106]{Gilkey} which still applies to $Tr_{L^2}(e^{-t(\Delta+V)})$,
\item the identities
\begin{eqnarray}
Tr_{L^2}\left(Q(\Delta+V)^{-s} \right)=\frac{1}{\Gamma(s)} \int_0^\infty t^{s-1}Tr_{L^2}\left(Qe^{-t(\Delta+V)} \right)dt , \forall Q\in \text{Diff}^1
\end{eqnarray}
which we established in Proposition~\ref{p:heatcomplex},
and 
\begin{eqnarray}
Tr_{L^2}\left(Qe^{-t(\Delta+V)} \right)\sim \sum_{n}a_n(Q,\Delta+V)t^{\frac{n-\dim(M)-1}{2}}
\end{eqnarray}
which is established in~\cite[Lemma 1.9.1 p.~75]{Gilkey}.
\end{itemize}
\end{proof}

The holomorphicity of $Tr(\Delta+V_t)^{-s}$ implies the Laurent series
expansion $Tr(\Delta+V_t)^{-s}=\sum_{k=0}^\infty a_k(V_t)s^k$ near $s=0$. 
By definition $\log\det_\zeta(\Delta+V_t)=-\frac{d}{ds}|_{s=0}Tr(\Delta+V_t)^{-s} $
which implies that
\begin{eqnarray*}
&&\frac{d}{dt_i}\log\text{det}_\zeta(\Delta+V_t)=-\frac{d}{dt_i}\frac{d}{ds}|_{s=0}Tr(\Delta+V_t)^{-s}=-\frac{da_1(V_t)}{dt_i}\\
&=&-\frac{d}{ds}|_{s=0}\sum_{k=0}^\infty \frac{da_k(V_t)}{dt_i}s^k=-\frac{d}{ds}|_{s=0}\frac{d}{dt_i}Tr(\Delta+V_t)^{-s}=\frac{d}{ds}|_{s=0}sTr\left(  \frac{dV}{dt_i} A_t^{-s-1}\right) . 
\end{eqnarray*}
Thus for higher derivatives, using equation~(\ref{e:eqfirstderivative}), we immediately 
deduce that
\begin{eqnarray*}
\frac{d}{dt_1}\dots \frac{d}{dt_{n+1}}\log\text{det}_\zeta(\Delta+V_t)&=&\frac{d}{ds}|_{s=0}s\frac{d}{dt_1}\dots \frac{d}{dt_{n}}Tr\left( \frac{dV_t}{dt_{n+1}} A_t^{-s-1} \right)\\
&=&FP|_{s=0}\frac{d}{dt_1}\dots \frac{d}{dt_{n}}Tr\left( \frac{dV_t}{dt_{n+1}} A_t^{-s-1} \right) 
\end{eqnarray*}
where the finite part $FP$ of a meromorphic germ at $s=0$ is defined to be the constant term in the Laurent series expansion about $s=0$.

So specializing the above identity to the family
$t\in \mathbb{R}^n\mapsto V+t_1V_1+\dots+t_{n+1}V_{n+1}$,
we find a preliminary formula for the G\^ateaux differentials
of $\log\det_\zeta$ for bosons:
{\small
\begin{eqnarray}\label{e:eqhigherderivatives}
\boxed{D^{n+1}\log\text{det}_\zeta(\Delta+V,V_1,\dots,V_{n+1})=FP|_{s=0}D^{n}Tr\left(\left(\Delta + V\right)^{-s-1} V_{n+1}\right)\left(V_1,\dots,V_n\right)}
\end{eqnarray}
}
and for fermions
{\small
\begin{eqnarray*}
\boxed{D^{n+1}\log\text{det}_\zeta(\Delta+D^*A_0,A_1,\dots,A_{n+1})=FP|_{s=0}D^nTr \left(\left(\Delta + D^*A_0\right)^{-s-1} D^*A_{n+1}\right)\left(A_1,\dots,A_n\right).}
\end{eqnarray*}
}
The heavy notation $D^{n}Tr\left(\left(\Delta + V\right)^{-s-1} V_{n+1}\right)\left(V_1,\dots,V_n\right)$ means the 
$n$-th G\^ateaux differential of the function $V\mapsto Tr\left(\left(\Delta + V\right)^{-s-1} V_{n+1}\right)$ in the directions $(V_1,\dots,V_n)$.

At this level of generality the formulas in both bosonic and fermionic cases are very similar just replacing
$V$ by $D^*A$ gives the fermionic formulas.

\subsubsection{Inverting traces and differentials.}

There is a subtlety which motivates our next discussion. We would like to invert 
the G\^ateaux differentials and the trace. In the next part, we shall prove estimates which allow to
carefully justify the inversion of $Tr$ and G\^ateaux differentials.

To calculate more explicitely the G\^ateaux differentials on the r.h.s of equation \ref{e:eqhigherderivatives}, we shall study in more details
the analytic map
$V\mapsto (\Delta+V)^{-s-1}\in \mathcal{B}(L^2,L^2)$ in both bosonic and fermionic situations and try to represent this analytic map in terms of heat kernels.  

\subsubsection{From the heat operator $e^{-t(\Delta+V)}$ to $(\Delta+V)^{-s-1}$ as analytic functions of $V$.}
Assume $\Delta$ is a generalized Laplacian, not necessarily symmetric,
s.t. $\sigma\left(\Delta \right)\subset \{Re(z)\geqslant \delta>0\}$.
By Duhamel formula, the heat operator $e^{-t(\Delta+V)}$ can be expressed
in terms of $e^{-t\Delta}$ as the Volterra series: 
\begin{eqnarray}\label{e:dysonvolterra}
 e^{-t(\Delta+V)}=\sum_{k=0}^\infty (-1)^k \int_{t\Delta_k} e^{-(t-t_k)\Delta}V\dots Ve^{-t_1\Delta}  
\end{eqnarray} 
where the series converges absolutely in $\mathcal{B}\left(L^2,L^2 \right)$ since in the bosonic case, we have the bound
$$ \Vert \int_{t\Delta_k} e^{-(t-t_k)\Delta}V\dots Ve^{-t_1\Delta}   \Vert_{\mathcal{B}\left(L^2,L^2 \right)} \leqslant e^{-t\delta} \frac{ t^k\Vert V\Vert_{\mathcal{B}\left(L^2,L^2 \right)}^k}{k!} . $$
Since $t\mapsto e^{-t\Delta}$ is a strongly continuous semigroup, it is easy to see that 
the series on the r.h.s of~\ref{e:dysonvolterra} is strongly continuous and defines a solution to the operator equation
\begin{eqnarray}
\frac{dU}{dt}=-(\Delta+V)U(t):C^\infty(M)\mapsto L^2 \text{ with } U(0)=Id
\end{eqnarray}
which defines uniquely the semigroup $e^{-t(\Delta+V)}$ hence this justifies that both sides are equal.

In the fermionic case, the convergence is slightly more subtle.
We start from the bound

\begin{lemm}\label{l:decayheatdirac}
Assume that $\Delta=D^*\left(D+A_0 \right)\in \mathcal{A}$ is a
generalized Laplacian s.t.
$\sigma\left(\Delta \right)\subset
\{Re(z)\geqslant \delta >0\}$.
For any differential operator $P$ of order $1$,
\begin{eqnarray}
 \Vert e^{-t\Delta} P \Vert_{ \mathcal{B}\left(L^2,L^2 \right)   } \leqslant Ct^{-\frac{1}{2}}e^{-\frac{t}{2}\delta}. 
\end{eqnarray}
\end{lemm}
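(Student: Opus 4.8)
Here is how I would attack the estimate $\|e^{-t\Delta}P\|_{\mathcal B(L^2,L^2)}\le Ct^{-1/2}e^{-\delta t/2}$.

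The plan is to factor $e^{-t\Delta}P$ through a half-power of $\Delta$: write it as a bounded operator of order $0$ composed with $\Delta^{1/2}e^{-t\Delta}$, and read off the $t^{-1/2}$ blow-up and the exponential decay from the analytic semigroup. Since $\sigma(\Delta)\subset\{\Re(z)\ge\delta>0\}$ we have $\sigma(\Delta)\cap\mathbb{R}_{\le 0}=\emptyset$, so $\Delta$ admits the spectral cut $\pi$ and, by Seeley's theorem, its complex powers $\Delta^{z}$ are classical pseudodifferential operators of order $2\Re(z)$; in particular $\Delta^{-1/2}\in\Psi^{-1}(M,E_+)$. Hence for any $P\in\text{Diff}^1(M,E_+)$ the composition $\Delta^{-1/2}P\in\Psi^{0}(M,E_+)$ is bounded on $L^2$, say with norm $C_1$. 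First I would record this, together with the identity $e^{-t\Delta}=(\Delta^{1/2}e^{-t\Delta})\,\Delta^{-1/2}$ as bounded operators on $L^2(E_+)$, which holds because $\Delta^{1/2}$, $\Delta^{-1/2}$ and $e^{-t\Delta}$ are all given by contour integrals of the resolvent and therefore commute, and $\Delta^{1/2}\Delta^{-1/2}=\mathrm{Id}$ by the semigroup property of complex powers. This yields $e^{-t\Delta}P=(\Delta^{1/2}e^{-t\Delta})(\Delta^{-1/2}P)$, so the claim reduces to $\|\Delta^{1/2}e^{-t\Delta}\|_{\mathcal B(L^2,L^2)}\le C_2\,t^{-1/2}e^{-\delta t/2}$.

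For this core estimate I would use the Dunford--Taylor representation $\Delta^{1/2}e^{-t\Delta}=\tfrac{1}{2\pi i}\int_\Gamma \lambda^{1/2}e^{-t\lambda}(\lambda-\Delta)^{-1}\,d\lambda$, where $\Gamma$ is a contour enclosing $\sigma(\Delta)$, contained in $\{\Re(\lambda)\ge\delta'\}$ for a fixed $\delta'\in(\delta/2,\delta)$, with $\Re(\lambda)\to+\infty$ along $\Gamma$ and staying outside the parabolic region that contains $\sigma(\Delta)$ by the Agranovich--Markus localization; on such $\Gamma$ the resolvent bounds of Lemma~\ref{l:resolventbound} (together with the fact that $\Delta-\lambda$ is elliptic with parameter of order $2$ in Shubin's sense) give $\|(\lambda-\Delta)^{-1}\|_{\mathcal B(L^2,L^2)}\le C(1+|\lambda|)^{-1}$. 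Splitting $e^{-t\lambda}=e^{-\eta t\lambda}\,e^{-(1-\eta)t\lambda}$ with $\eta\in(0,1)$ chosen so that $(1-\eta)\delta'\ge\delta/2$ (possible since $\delta'>\delta/2$), one bounds $|e^{-(1-\eta)t\lambda}|\le e^{-(1-\eta)t\delta'}\le e^{-\delta t/2}$ and, by the substitution $\mu=\eta t\lambda$, $\int_\Gamma|\lambda|^{1/2}|e^{-\eta t\lambda}|(1+|\lambda|)^{-1}\,|d\lambda|\le C(\eta t)^{-1/2}=C't^{-1/2}$; this gives the stated estimate. Alternatively one may simply invoke the standard analytic-semigroup inequality $\|A^{\alpha}e^{-tA}\|\le C_\alpha t^{-\alpha}e^{-\omega t}$ for a sectorial generator $A$ with $\sigma(A)\subset\{\Re>\omega\}$, applied with $A=\Delta$, $\alpha=\tfrac12$, $\omega=\tfrac{\delta}{2}$.

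Combining the two steps, $\|e^{-t\Delta}P\|_{\mathcal B(L^2,L^2)}\le\|\Delta^{1/2}e^{-t\Delta}\|\,\|\Delta^{-1/2}P\|\le C_1C_2\,t^{-1/2}e^{-\delta t/2}$, which is the assertion. The main obstacle is the second step: one has to capture the short-time $t^{-1/2}$ singularity \emph{and} keep the exponential rate at $\delta/2$ rather than at some smaller rate, which is exactly what forces the careful tuning of the contour $\Gamma$ and of the split of the exponential; everything else is routine once one knows, via the resolvent bounds of Lemma~\ref{l:resolventbound} and the Agranovich--Markus parabolic localization of $\sigma(\Delta)$, that the non-self-adjoint operator $\Delta$ is genuinely sectorial, so that its fractional powers and heat semigroup are available through Seeley's calculus.
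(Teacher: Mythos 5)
Your proposal is correct and follows essentially the same route as the paper: factor $e^{-t\Delta}P$ through $\Delta^{-1/2}P\in\Psi^{0}(M,E_+)$ (bounded on $L^2$ by Seeley's calculus) and reduce to the analytic-semigroup estimate $\Vert \Delta^{1/2}e^{-t\Delta}\Vert_{\mathcal{B}(L^2,L^2)}\lesssim t^{-1/2}e^{-\delta t/2}$. The only cosmetic difference is that the paper obtains this last bound by writing $e^{-t\Delta}=e^{-\frac{t}{2}\Delta}e^{-\frac{t}{2}\Delta}$, using exponential decay for one factor and citing the standard sectorial inequality (Hairer, Prop.~4.36) for $e^{-\frac{t}{2}\Delta}\Delta^{1/2}$, whereas you prove it directly by a Dunford--Taylor contour argument (and also mention the same standard inequality as an alternative).
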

\begin{proof}
Assume that $\Delta=D^*\left(D+A_0 \right)\in \mathcal{A}$ is a
generalized Laplacian s.t.
$\sigma\left(\Delta \right)\subset
\{Re(z)\geqslant \delta >0\}$ hence 
$\Delta$ is not necessarily self-adjoint.
We shall use the following ingredient. 
In Proposition~\ref{p:heatcomplex}, we proved that fractional powers of $\Delta=D^*\left(D+A_0\right)$ defined as 
$$\Delta^{-s}=\frac{1}{\Gamma(s)} \int_0^\infty e^{-t\Delta} t^{s-1} dt $$ coincide 
with complex powers of $\Delta$ defined by the contour integrals of the resolvent as in the work of Seeley.
Note that 
by results of Seeley, the powers $\Delta^s, s\in \mathbb{R}$ are
well--defined elliptic pseudodifferential operators of degree $2s$. Therefore 
$$ \Vert \Delta^{\frac{1}{2}}u \Vert_{L^2}=\Vert \Delta^{\frac{1}{2}}A^{-\frac{1}{2}} A^{\frac{1}{2}} u \Vert_{L^2}\leqslant \Vert \Delta^{\frac{1}{2}}A^{-\frac{1}{2}}\Vert_{ \mathcal{B}(H^{-1},H^{-1})}   \Vert A^{\frac{1}{2}} u \Vert_{L^2}=\Vert \Delta^{\frac{1}{2}}A^{-\frac{1}{2}}\Vert_{ \mathcal{B}(H^{-1},H^{-1})}  \Vert u \Vert_{H^{-1}},$$
where $\Vert \Delta^{\frac{1}{2}}A^{-\frac{1}{2}}\Vert_{ \mathcal{B}(H^{-1},H^{-1})} <+\infty $ since $\Delta^{\frac{1}{2}}A^{-\frac{1}{2}}$ is a pseudodifferential operator of order $0$ by composition.
Recall that under our assumption of taking small perturbations
of a positive definite self--adjoint $D^*D$, $e^{-t\Delta}$ generates an analytic semigroup.
\begin{eqnarray*}
\Vert Pe^{-t\Delta} \Vert_{\mathcal{B}(L^2,L^2)}\leqslant \underset{ \leqslant e^{-\frac{t}{2}\delta} }{\underbrace{\Vert e^{-\frac{t}{2}\Delta} \Vert_{\mathcal{B}(L^2,L^2)}}} 
\Vert e^{-\frac{t}{2}\Delta}\Delta^{\frac{1}{2}} \Vert_{\mathcal{B}(L^2,L^2)}
\underbrace{\Vert \Delta^{-\frac{1}{2}} P \Vert_{\mathcal{B}(L^2,L^2)}}
\end{eqnarray*}
where the term underbraced is bounded by a constant since $\Delta^{-\frac{1}{2}} P\in \Psi^0 $ by pseudodifferential calculus is bounded on every Sobolev space. We are now reduced to estimate the term 
$\Vert e^{-t\Delta}\Delta^{\frac{1}{2}} \Vert_{\mathcal{B}(L^2,L^2)}$. Then
a straightforward application of~\cite[Proposition 4.36 p.~40]{Hairer}, which is valid for generators of analytic semigroups whose spectrum is contained in the right half-plane, yields the estimate $\Vert e^{-t\Delta}\Delta^{\frac{1}{2}} \Vert_{\mathcal{B}(L^2,L^2)}\leqslant Ct^{-\frac{1}{2}} $ hence this implies the final result. 
\end{proof}

Therefore setting $V=D^*A$, we find that the series on the r.h.s of identity
(\ref{e:dysonvolterra}) 
converges absolutely in $\mathcal{B}\left(L^2,L^2 \right)$
by the bound 
\begin{eqnarray*}
&& \Vert \int_{t\Delta_k} e^{-(t-t_k)\Delta}D^*A\dots D^*Ae^{-t_1\Delta}   \Vert_{\mathcal{B}\left(L^2,L^2 \right)}\\
& \leqslant &\Vert A\Vert_{\mathcal{B}\left(L^2(E_+),L^2(E_-) \right)}^ke^{-\frac{t}{2}\delta} C^k \int_0^t \vert t-t_k \vert^{-\frac{1}{2}}
\dots \int_0^{t_{2}}  \vert t_1 \vert^{-\frac{1}{2}}  dt_1\dots dt_{k}    \\
&= & \Vert A\Vert_{\mathcal{B}\left(L^2(E_+),L^2(E_-) \right)}^ke^{-\frac{t}{2}\delta} \frac{(k+1)C^k t^k\Gamma(\frac{1}{2})^{k+1}}{\Gamma(\frac{k+3}{2})},
\end{eqnarray*}
where $C$ is the constant of Lemma \ref{l:decayheatdirac} and the r.h.s. is the general term of some convergent series by the asymptotic behaviour of the Euler 
$\Gamma$ function
\footnote{Rewrite
$\int_0^t \vert t-t_k \vert^{-\frac{1}{2}}
\dots \int_0^{t_{2}}  \vert t_1 \vert^{-\frac{1}{2}}  dt_1\dots dt_{k}=\int_{\{u_1+\dots+u_{k+1}=t\}} \prod_{i=1}^{k+1} u_i^{-\frac{1}{2}} d\sigma $,
$d\sigma$ measure on the simplex $\{u_1+\dots+u_{k+1}=t\}$. Note that
$\int_{u_1+\dots+u_{k+1}=t} \prod_{i=1}^{k+1} u_i^{-\frac{1}{2}} d\sigma=\frac{d}{dt}\int_{\{u_1+\dots+u_{k+1}\leqslant t\}} \prod_{i=1}^{k+1} u_i^{-\frac{1}{2}}du_i=\frac{d}{dt} \frac{ t^{k+1}\Gamma(\frac{1}{2})^{k+1}}{\Gamma(\frac{k+3}{2})} $
following a beautiful identity due to Dirichlet }.

Furthermore using the Hadamard--Fock--Schwinger formula proved in Proposition~\ref{p:heatcomplex},
for $Re(s)>0$, we find that
\begin{eqnarray*}
(\Delta+V)^{-s-1}V_{n+1}&=&\sum_{k=0}^\infty \frac{(-1)^k}{\Gamma(s+1)}\int_0^\infty t^s  \int_{t\Delta_k} e^{-(t-t_k)\Delta}V\dots Ve^{-t_1\Delta}V_{n+1}   dt\text{ for bosons},\\
(\Delta+D^*A)^{-s-1}D^*A_{n+1}&=&\sum_{k=0}^\infty \frac{(-1)^k}{\Gamma(s+1)}\int_0^\infty t^s  \int_{t\Delta_k} e^{-(t-t_k)\Delta}D^*A\dots D^*Ae^{-t_1\Delta}D^*A_{n+1}   dt\text{ for fermions},\\
\end{eqnarray*}
where both series converge absolutely in $V\in \mathcal{B}\left(L^2(E),L^2(E) \right)$ (resp $A\in \mathcal{B}(L^2(E_+),L^2(E_-))$) by the above bounds
since we have exponential decay in $t$. 

From the above identities, we find that
$$ D^{n+1}\log\text{det}_\zeta\left(\Delta+V \right)(V_1,\dots,V_{n+1})=FP|_{s=0}\frac{d}{dt_1}\dots\frac{d}{dt_{n}}
Tr(\Delta+V_t)^{-s-1}V_{n+1} ) |_{t=0}$$ for the family $V_t=t_1V_1+\dots+t_nV_n$ and by definition of the G\^ateaux differential.
\begin{eqnarray*}
&&\frac{d}{dt_1}\dots\frac{d}{dt_{n}}
Tr((\Delta+V_t)^{-s-1}V_{n+1} ) |_{t=0}\\
&=&\frac{d}{dt_1}\dots\frac{d}{dt_{n}}
Tr(\sum_{k=0}^\infty \frac{(-1)^k}{\Gamma(s+1)}\int_0^\infty t^s  \int_{t\Delta_k} e^{-(t-t_k)\Delta}V_t\dots V_te^{-t_1\Delta}V_{n+1}   dt ) |_{t=0} \\
&=& \frac{(-1)^nn!}{\Gamma(s+1)}Tr (\int_0^\infty t^s  \int_{t\Delta_{n}} e^{-(t-t_n)\Delta}V_1\dots V_ne^{-t_1\Delta}V_{n+1}   dt ). 
\end{eqnarray*}
~\footnote{The combinatorial factor $n!$ comes from 
the fact that for every symmetric polynomial $S_n:E^n\mapsto \mathbb{C}$ homogeneous of degree $n$ and 
$v_t=t_1v_1+\dots+t_nv_n; t=(t_1,\dots,t_n)\in \mathbb{R}^n$, $\frac{d}{dt_1}\dots\frac{d}{dt_{n}}S_n(v_t,\dots,v_t)|_{t=0}=n!S_n(v_1,\dots,v_n)$.} 
A similar identity holds true for Fermions.

\begin{lemm}[G\^ateaux differentials of $\log\det_\zeta$]
\label{l:funder1}
Following the notations from definitions (\ref{d:bosonsfermionszeta}) and (\ref{d:speczetadet}).
Let $M$ be a smooth, closed, compact Riemannian manifold of dimension $d$, 
and $\Delta$ (resp $\Delta=D^*D$) some generalized Laplacian acting on $E$ (resp $E_+$) s.t. $\sigma\left(\Delta \right)\subset \{ Re(z)\geqslant \delta>0 \}$ for bosons (resp fermions).
The G\^ateaux differentials
of $\log\det_\zeta$ satisfy the following identities. For bosons,   
for every $(V_1,\dots, V_{k+1})\in L^\infty(M,End(E))^{k+1}$, 
{\small
\begin{eqnarray*}
&&\frac{1}{k!}
D^{k+1}\log\text{det}_\zeta(\Delta+V,V_1,\dots,V_{k+1})|_{V=0}\\
&=&FP|_{s=0} \frac{(-1)^k}{\Gamma(s+1)}
Tr\left(
\int_{[0,\infty)^{k+1}} (\sum_{e=1}^{k+1} u_{e})^s  e^{-u_{k+1}\Delta}V_1\dots e^{-u_1\Delta}V_{k+1} \prod_{e=1}^{k+1} du_e\right).
\end{eqnarray*}
}
For fermions, for
every $(A_1,\dots,A_{k+1})\in L^\infty(M,Hom(E_+,E_-))$
{\small
\begin{eqnarray*}
&&\frac{1}{k!} D^{k+1}\log\text{det}_\zeta(\Delta+D^*A,A_1,\dots,A_{k+1})|_{A=0}\\
&=&FP|_{s=0} \frac{(-1)^k}{\Gamma(s+1)}
Tr\left(  \int_{[0,\infty)^{k+1}} (\sum_{e=1}^{k+1} u_{e})^s e^{-u_{k+1}\Delta}D^*A_1\dots e^{-u_1\Delta}D^*A_{k+1} \prod_{e=1}^{k+1} du_e\right).
\end{eqnarray*}
}
\end{lemm}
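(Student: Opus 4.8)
The plan is to bootstrap from the preliminary formula (\ref{e:eqhigherderivatives}), which already writes the $(k{+}1)$-st G\^ateaux differential of $\log\det_\zeta$ at the base point $V=0$ --- i.e.\ at the admissible operator $\Delta$ with $\sigma(\Delta)\subset\{\Re z\ge\delta>0\}$ --- as $FP|_{s=0}$ of the $k$-th differential in $V$ of $V\mapsto Tr_{L^2}\big((\Delta+V)^{-s-1}V_{k+1}\big)$. By cyclicity of the trace and Proposition~\ref{p:heatcomplex} (applied with $Q=V_{k+1}$, and its immediate extension to bounded $Q$ by Fubini) this quantity equals $\tfrac{1}{\Gamma(s+1)}\int_0^\infty t^s\,Tr_{L^2}\big(V_{k+1}e^{-t(\Delta+V)}\big)\,dt$ for $\Re(s)>0$, and substituting the Duhamel--Volterra expansion (\ref{e:dysonvolterra}) of $e^{-t(\Delta+V)}$ --- which converges absolutely in $\mathcal{B}(L^2,L^2)$ near $V=0$, in the bosonic case because $V\in\mathcal{B}(L^2,L^2)$ and $e^{-t\Delta}$ decays exponentially, and in the fermionic case ($V=D^*A$ of order one) by the estimate $\Vert e^{-t\Delta}P\Vert_{\mathcal{B}(L^2,L^2)}\le Ct^{-1/2}e^{-t\delta/2}$ of Lemma~\ref{l:decayheatdirac}, which produces the convergent $\Gamma(\tfrac12)$-products displayed before the statement --- yields a series of iterated heat integrals over the simplices $t\Delta_j$.

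Next I differentiate this series term by term in the directions $(V_1,\dots,V_k)$ and evaluate at $V=0$. Only the $j=k$ term survives: the terms with $j<k$ are annihilated by the $k$-fold differential, and those with $j>k$ still carry a factor of $V$ and so vanish at $V=0$. Since the $j=k$ integrand depends on $V$ as a homogeneous polynomial of degree $k$, its $k$-th differential is the full symmetrization $\sum_{\sigma\in S_k}e^{-(t-t_k)\Delta}V_{\sigma(1)}\cdots V_{\sigma(k)}e^{-t_1\Delta}$; this is the origin of the combinatorial factor $k!$ (the symmetric polynomial identity recalled just before the statement), and dividing by $k!$ leaves the symmetric multilinear form, recorded in Lemma~\ref{l:funder1} through the representative ordering $V_1,\dots,V_k$. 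I then perform the unimodular change of variables $u_1=t_1$, $u_i=t_i-t_{i-1}$ for $2\le i\le k$, $u_{k+1}=t-t_k$, which carries $\{0\le t_1\le\cdots\le t_k\le t<\infty\}$ bijectively onto $[0,\infty)^{k+1}$ with $t=\sum_e u_e$ and Jacobian $1$, turning the iterated integral into $\int_{[0,\infty)^{k+1}}\big(\sum_e u_e\big)^{s}\,e^{-u_{k+1}\Delta}V_1e^{-u_k\Delta}V_2\cdots e^{-u_1\Delta}V_{k+1}\,\prod_e du_e$; extracting $FP|_{s=0}$ then gives the asserted identities in both the bosonic and fermionic cases, the $D^*A_i$ simply replacing the $V_i$ throughout.

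The one substantive point is the interchange of $Tr_{L^2}$ with the G\^ateaux derivatives and with the infinite Volterra sum --- the \emph{inverting traces and differentials} subtlety stressed in the text --- and I expect this to be the main obstacle; everything else above is bookkeeping and a change of variables. The remedy is to run all the manipulations first in a half-plane $\Re(s)\gg 0$, where $(\Delta+V)^{-s-1}$ is trace class by Seeley's theory and every summand is trace class, and to recover the finite part at $s=0$ by analytic continuation, using that $\zeta_{\Delta+V}(s)=Tr_{L^2}\big((\Delta+V)^{-s}\big)$ is regular at $s=0$. Quantitatively, H\"older's inequality for Schatten norms --- placing each of the $k+1$ heat factors in $\mathcal{I}_{k+1}$ with $\Vert e^{-u\Delta}\Vert_{\mathcal{I}_{k+1}}\le Cu^{-d/(2(k+1))}e^{-\delta u/2}$, or its order-one analogue from Lemma~\ref{l:decayheatdirac} in the fermionic case --- bounds the trace norm of the $j=k$ summand by $C^{k+1}\prod_i\Vert V_i\Vert_{\mathcal{B}(L^2,L^2)}$ times the integral $\int_{[0,\infty)^{k+1}}\big(\sum_e u_e\big)^{\Re(s)}\prod_e u_e^{-d/(2(k+1))}e^{-\delta u_e/2}\,du$, which is finite once $\Re(s)$ is large, its short-time behaviour being governed by $d/(2(k+1))<1$ and made explicit by the Dirichlet simplex identity recalled before the statement. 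On that half-plane dominated convergence, Fubini and differentiation under the trace are all legitimate, the resulting function of $s$ continues meromorphically past $0$, and the lemma follows by taking the finite part.
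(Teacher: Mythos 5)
Your proof follows the paper's own derivation step by step: you start from equation~(\ref{e:eqhigherderivatives}), pass to the heat‐kernel representation of complex powers via Proposition~\ref{p:heatcomplex}, substitute the Duhamel--Volterra expansion~(\ref{e:dysonvolterra}), observe that only the $j=k$ term survives the $k$-fold differentiation at $V=0$ and yields the $k!$ symmetrization factor, and finally perform the unimodular change of variables $u_1=t_1,\ u_i=t_i-t_{i-1},\ u_{k+1}=t-t_k$, which the paper leaves tacit. The one genuine point of departure is technical: you justify the interchange of $Tr_{L^2}$ with the Volterra series and the differentials by placing each heat factor in $\mathcal{I}_{k+1}$ and invoking H\"older for Schatten norms at $\Re(s)\gg 0$ before continuing analytically, whereas the paper defers the same justification to subsections~\ref{ss:optraces}--\ref{sss:boundsing}, decomposes $I=S+R$, and controls the short-time singular part $S$ through Melrose's heat calculus, reading off trace-class membership from the $\Psi_H^{-k-1}$ (bosonic) or $\Psi_H^{-\frac{k+1}{2}}$ (fermionic) class of the composite kernel. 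Both routes are legitimate; yours is arguably more elementary, but note that in the fermionic case you would still need an $\mathcal{I}_{k+1}$-norm bound for $e^{-u\Delta}D^*$ which Lemma~\ref{l:decayheatdirac} gives only in operator norm with a $u^{-1/2}$ loss (one can factor $e^{-u\Delta}D^*=e^{-u\Delta/2}\cdot e^{-u\Delta/2}D^*$ to recover it, which is essentially what the heat-calculus composition $\Psi_H^{-1}\circ\Psi_H^{-\frac12}$ encodes).
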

We want to determine more explicitely 
the above G\^ateaux differentials and invert the $L^2$-trace and the integral.
To justify this inversion, it suffices
to prove that
for $Re(s)>\frac{d+1}{2}$,  for every integer $k$, the 
operator valued integral 
$\int_0^\infty t^s  \int_{t\Delta_k} e^{-(t-t_k)\Delta}V\dots Ve^{-t_1\Delta}V   dt$ 
represents a trace class operator with continuous 
kernel. In this case, the $L^2$ trace of this operator 
coincides with the flat trace of this operator.
Then it is enough to prove that the integral
$\int_{[0,\infty)^{k+1}} (\sum_{e=1}^{k+1} u_{e})^s 
Tr_{L^2}\left( e^{-u_{k+1}\Delta}V_1\dots e^{-u_1\Delta}V_{k+1}\right) \prod_{e=1}^{k+1} du_e$
converges absolutely by carefully estimating the operator traces 
$ Tr_{L^2}\left( e^{-u_{k+1}\Delta}V_1\dots e^{-u_1\Delta}V_{k+1}\right)$.
We use the fact that outside some subset of measure $0$, the operator 
$\left( e^{-u_{k+1}\Delta}V_1\dots e^{-u_1\Delta}V_{k+1}\right)$ is smoothing and depends continuously on $(u_1,\dots,u_{k+1})$
therefore both $L^2$ and flat trace coincide.

\subsection{Bounding operator traces.}
\label{ss:optraces}
Our next task is to make sense and study the analyticity properties of the integrals
of the form
$$\int_{[0,\infty)^{k+1}} (\sum_{e=1}^{k+1} u_e)^s  Tr\left(e^{-u_{k+1}\Delta}V_1\dots e^{-u_1\Delta}V_{k+1} \right) \prod_{e=1}^{k+1} du_e =I(s,V_1,\dots , V_{k+1})$$
for $s$ near $0$ and also determine the holomorphicity domain in $s$ as a function of $k\in \mathbb{N}$.

\subsubsection{A decomposition.}

As in~\cite{dangzhang}, the strategy
relies on methods from quantum field theory: 
using the symmetries of the integrand
by permutation of variables, we integrate on a simplex $\{u_{k+1}\geqslant\dots\geqslant u_1\geqslant 0\}$ called Hepp's sector:
\begin{eqnarray*}
&&\int_{[0,\infty)^{k+1}} (\sum_{e=1}^{k+1} u_e)^s   e^{-u_{k+1}\Delta}V\dots e^{-u_1\Delta}V  \prod_{e=1}^{k+1} du_e \\
&=&(k+1)!\int_{\{u_{k+1}\geqslant\dots\geqslant u_{1}\geqslant 0\} } (u_1+\dots+u_{k+1})^s  e^{-u_{k+1}\Delta}V\dots e^{-u_1\Delta}V  du_1\dots du_{k+1} 
\end{eqnarray*}
We will show that the only divergence is in the variable $u_{k+1}$.
In the next definition, we cut the integral in two parts, $u_{k+1}\geqslant 1$ and $u_{k+1}\leqslant 1$.

\begin{defi}[Decomposition]\label{d:decomp}
Under the assumptions of Lemma \ref{l:funder1}.
We set
\begin{eqnarray}\label{e:keyIntegraltrace}
\boxed{I(s,V_1,\dots , V_{k+1})=\int_{[0,\infty)^{k+1}} (\sum_{e=1}^{k+1} u_e)^s  
Tr\left( e^{-u_{k+1}\Delta}V_1\dots e^{-u_1\Delta}V_{k+1}\right)  \prod_{e=1}^{k+1} du_e}
\end{eqnarray}
that we shall decompose in two pieces
\begin{eqnarray}
I(s;V_1,\dots,V_{k+1})=S(s;V_1,\dots,V_{k+1})+R(s;V_1,\dots,V_{k+1})
\end{eqnarray}
where 
{\small
\begin{equation}
R(s;V_1,\dots,V_{k+1})=\sum_{\sigma\in S_{k+1}}\int_{\{u_{k+1}\geqslant\dots\geqslant u_{1}\geqslant 0,u_{k+1}\geqslant 1 \} } (\sum_{e=1}^{k+1} u_e)^s  
Tr\left( e^{-u_{k+1}\Delta}V_{\sigma(1)}\dots e^{-u_1\Delta}V_{\sigma(k+1)} \right)  \prod_{e=1}^{k+1} du_e,
\end{equation}
}
and
{\small
\begin{equation}
S(s;V_1,\dots,V_{k+1})=\sum_{\sigma\in S_{k+1}}\int_{  \{1\geqslant u_{k+1}\geqslant\dots\geqslant u_{1}\geqslant 0\} } (\sum_{e=1}^{k+1} u_e)^s
Tr\left( e^{-u_{k+1}\Delta}V_{\sigma(1)}\dots e^{-u_1\Delta}V_{\sigma(k+1)} \right) \prod_{e=1}^{k+1} du_e. 
\end{equation}
}
\end{defi}

We use the above decomposition
for both bosons and fermions 
where 
$(V_i=D^*A_i, A_i\in C^\infty(Hom(E_+,E_-)))_{i=1}^{k+1}$ for fermions. 
The function $S$ (resp $R$) is the singular (resp regular) part of $I$.
We will later deal with 
the singular part $S$ using the heat calculus of Melrose~\cite[Chapter 7]{melrose1993atiyah}~\cite{grieserheat,dai2012adiabatic}.

\subsubsection{Controlling the regular part.}

We shall first show that the regular part $R$ has analytic continuation as holomorphic function in $s$ 
on the whole complex plane.
\begin{lemm}\label{l:boundsoperatortraces}
Following the notations from definitions (\ref{d:bosonsfermionszeta}) and (\ref{d:speczetadet}).
Let $M$ be a smooth, closed, compact Riemannian manifold of dimension $d$, 
and $\Delta$ (resp $\Delta=D^*D$) some generalized Laplacian acting on $E$ (resp $E_+$) s.t. $\sigma\left(\Delta \right)\subset \{ Re(z)\geqslant \delta>0 \}$ for bosons (resp fermions).
For every $(V_1,\dots, V_{k+1})\in C^\infty(End(E))^{k+1}$ in the bosonic case and for 
every $(A_1,\dots,A_{k+1})\in C^\infty(Hom(E_+,E_-))$ where $V_1=D^*A_1,\dots,V_{k+1}=D^*A_{k+1}$ in the fermionic case, 
the regular part $R(s;V_1,\dots,V_{k+1})$ has analytic continuation as a holomorphic function of $s\in \mathbb{C}$.
\end{lemm}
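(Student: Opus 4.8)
The plan is to exploit the constraint $u_{k+1}\geqslant 1$ built into the definition of $R$: on that region the heat factor $e^{-u_{k+1}\Delta}$ is simultaneously \emph{smoothing} and \emph{exponentially decaying}, and this is enough to control everything uniformly in $s$. Concretely, for $u_{k+1}\geqslant 1$ I would write $e^{-u_{k+1}\Delta}=e^{-\demi\Delta}\,e^{-(u_{k+1}-\demi)\Delta}$. By elliptic regularity the heat operator $e^{-t\Delta}$ of the elliptic operator $\Delta$ has a $C^\infty$ Schwartz kernel on the compact manifold $M$ for every $t>0$, so $e^{-\demi\Delta}\in\mathcal{I}_1$ with a finite trace norm depending only on $\Delta$; and $\Vert e^{-(u_{k+1}-\demi)\Delta}\Vert_{\mathcal{B}(L^2,L^2)}\leqslant Ce^{-\frac{\delta}{2}(u_{k+1}-\demi)}$ by the exponential decay of the heat semigroup established above. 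I would then estimate the trace norm of $e^{-u_{k+1}\Delta}V_{\sigma(1)}\cdots e^{-u_1\Delta}V_{\sigma(k+1)}$ by the Hölder inequality for Schatten ideals, $\Vert A_1A_2\cdots A_N\Vert_1\leqslant\Vert A_1\Vert_1\prod_{j\geqslant 2}\Vert A_j\Vert_{\mathcal{B}(L^2,L^2)}$ \cite{Simon-traceideals}, keeping $e^{-\demi\Delta}$ as the trace-class factor, and use $|\Tr(B)|\leqslant\Vert B\Vert_1$.

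In the bosonic case the $V_i$ are bounded multiplication operators with $\Vert V_i\Vert_{\mathcal{B}(L^2,L^2)}=\Vert V_i\Vert_{L^\infty(End(E))}$ and $\Vert e^{-u\Delta}\Vert_{\mathcal{B}(L^2,L^2)}$ is bounded by a fixed constant for all $u\geqslant 0$, so for every $\sigma$ one obtains $\big|\Tr\big(e^{-u_{k+1}\Delta}V_{\sigma(1)}\cdots e^{-u_1\Delta}V_{\sigma(k+1)}\big)\big|\leqslant C^{k+1}\big(\prod_i\Vert V_i\Vert_{L^\infty}\big)e^{-c\,u_{k+1}}$ for suitable $C,c>0$ with $c$ comparable to $\delta$. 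In the fermionic case $V_i=D^*A_i$ is a differential operator of order $1$, so I would instead pair each heat factor with the differential operator following it in the product and invoke Lemma~\ref{l:decayheatdirac}, $\Vert e^{-u\Delta}D^*A\Vert_{\mathcal{B}(L^2,L^2)}\leqslant Cu^{-\demi}e^{-\frac{u}{2}\delta}\Vert A\Vert_{L^\infty}$; since $u_{k+1}-\demi\geqslant\demi$ the leading block $e^{-(u_{k+1}-\demi)\Delta}D^*A_{\sigma(1)}$ still has norm $\lesssim e^{-c\,u_{k+1}}\Vert A_{\sigma(1)}\Vert_{L^\infty}$, and one gets, for almost every $(u_1,\dots,u_{k+1})$ (namely whenever $u_1,\dots,u_k>0$), a bound $C^{k+1}\big(\prod_i\Vert A_i\Vert_{L^\infty}\big)e^{-c\,u_{k+1}}\prod_{j=1}^{k}u_j^{-\demi}$; the factorisation with $e^{-\demi\Delta}$ also guarantees that the operator under the trace is genuinely trace class even on the null set where some $u_j$ vanishes.

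Next I would integrate these pointwise bounds over the Hepp sector $\{u_{k+1}\geqslant\cdots\geqslant u_1\geqslant 0,\ u_{k+1}\geqslant 1\}$. Each of $u_1,\dots,u_k$ runs over $[0,u_{k+1}]$ and the only singularities $u_j^{-\demi}$ are integrable near $0$ with $\int_0^{u_{k+1}}u_j^{-\demi}\,du_j=2u_{k+1}^{\demi}$, so integrating out $u_1,\dots,u_k$ produces only a polynomial factor in $u_{k+1}$; moreover on this region $\sum_e u_e\geqslant u_{k+1}\geqslant 1$, hence $\big|(\sum_e u_e)^{s}\big|=(\sum_e u_e)^{\Re(s)}\leqslant C_s(1+u_{k+1})^{|\Re(s)|}$. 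Therefore the integrand is dominated, locally uniformly for $s$ in compact subsets of $\mathbb{C}$, by $C(s,k)\big(\prod_i\Vert V_i\Vert\big)\,P(u_{k+1})\,e^{-c\,u_{k+1}}$ with $P$ a polynomial, and $\int_1^\infty P(u)e^{-cu}\,du<\infty$. This shows that the integral defining $R(s;V_1,\dots,V_{k+1})$ converges absolutely for every $s\in\mathbb{C}$, with bounds locally uniform in $s$. Since for fixed $(u_1,\dots,u_{k+1})$ the integrand is holomorphic in $s$ (the only $s$-dependence being through $(\sum_e u_e)^{s}$) and is measurable and continuous in $(u_1,\dots,u_{k+1})$ off a null set, Morera's theorem together with the dominated convergence theorem gives that $s\mapsto R(s;V_1,\dots,V_{k+1})$ is an entire function, which is the assertion.

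The step I expect to be the main obstacle is the fermionic case: the perturbations $V_i=D^*A_i$ are unbounded first-order operators, so one cannot estimate their operator norm directly, and the pairing of each unbounded factor with an adjacent heat operator must be organised so that it is absorbed into an integrable half-power singularity $u_j^{-\demi}$; one then has to check that these $k$ half-power singularities do not combine destructively, which they do not, since each is integrated separately over a bounded interval and only contributes the harmless factor $u_{k+1}^{\demi}$. A secondary, purely technical point is verifying the trace-class property and the measurability/continuity of $(u_1,\dots,u_{k+1})\mapsto\Tr(e^{-u_{k+1}\Delta}V_{\sigma(1)}\cdots e^{-u_1\Delta}V_{\sigma(k+1)})$, which is handled uniformly by the same factorisation $e^{-u_{k+1}\Delta}=e^{-\demi\Delta}e^{-(u_{k+1}-\demi)\Delta}$ exhibiting the operator as a fixed trace-class operator composed with a norm-continuous operator-valued family.
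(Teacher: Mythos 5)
Your proposal is correct and follows essentially the same route as the paper's proof: split $e^{-u_{k+1}\Delta}=e^{-\frac{1}{2}\Delta}e^{-(u_{k+1}-\frac{1}{2})\Delta}$ on the region $u_{k+1}\geqslant 1$, use the smoothing/trace-class factor $e^{-\frac{1}{2}\Delta}$ to reduce to operator-norm estimates, exploit exponential decay of the semigroup for the tail in $u_{k+1}$, and in the fermionic case pair each $e^{-u_j\Delta}$ with the adjacent $D^*$ and invoke Lemma~\ref{l:decayheatdirac} to get the integrable half-power singularities $u_j^{-1/2}$. The only cosmetic difference is that you track the trace through $\Vert e^{-\frac{1}{2}\Delta}\Vert_1$ and the Schatten--H\"older inequality, while the paper uses the equivalent estimate $\vert\Tr(B)\vert\leqslant C\Vert B\Vert_{\mathcal{B}(L^2,H^p)}$ for $p>d$; you are also a bit more explicit about the final Morera/dominated-convergence step, which the paper compresses into one sentence.
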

\begin{proof}
For $p> d$ and $B\in \mathcal{B}(L^2,H^{p})$, $B$ is trace class and 
satisfies the simple bound
$ \vert Tr_{L^2}(B) \vert \leqslant C\Vert B \Vert_{\mathcal{B}(L^2,H^{p})} $~\cite[Prop B 21 p.~502]{DyZwscatt}.
Hence in the bosonic case,
\begin{eqnarray*}
\vert Tr_{L^2}\left(e^{-u_{k+1}\Delta}V_1\dots e^{-u_1\Delta}V_{k+1} \right)\vert 
&\leqslant &\Vert e^{-\frac{1}{2}\Delta}\Vert_{\mathcal{B}(L^2,H^{p})}  \Vert e^{-(u_{k+1}-\frac{1}{2})\Delta}\Vert_{\mathcal{B}(L^2,L^2)} \prod_{i=1}^{k+1} \Vert V_i \Vert_{\mathcal{B}(L^2,L^2)}\\
&\leqslant & e^{-(u_{k+1}-\frac{1}{2})\delta}\Vert e^{-\frac{1}{2}\Delta}\Vert_{\mathcal{B}(L^2,H^{p})} \prod_{i=1}^{k+1} \Vert V_i \Vert_{\mathcal{B}(L^2,L^2)}
\end{eqnarray*}
the integrand has exponential decay which ensures the holomorphicity.

In the fermionic case where $(V_i=D^*A_i, A_i\in C^\infty(Hom(E_+,E_-)))_{i=1}^{k+1}$, 
the bound reads:
\begin{eqnarray*}
&&\vert Tr_{L^2}\left(e^{-u_{k+1}\Delta}V_1\dots e^{-u_1\Delta}V_{k+1} \right)\vert \\ 
&\leqslant &
\Vert e^{-\frac{1}{2}\Delta}\Vert_{\mathcal{B}(L^2,H^{p})} 
 \Vert e^{-(u_{k+1}-\frac{1}{2})\Delta}D^*\Vert_{\mathcal{B}(L^2,L^2)}
\prod_{i=1}^k \Vert e^{-u_i\Delta}D^* \Vert_{\mathcal{B}(L^2,L^2)}
\prod_{i=1}^{k+1}  \Vert A_i \Vert_{\mathcal{B}(L^2,L^2)}\\
&\leqslant & \sqrt{2}C^{k+1}\Vert e^{-\frac{1}{2}\Delta}\Vert_{\mathcal{B}(L^2,H^{p})}  e^{-(\frac{u_{k+1}}{2}-\frac{1}{4})\delta} \prod_{i=1}^k u_i^{-\frac{1}{2}}\prod_{i=1}^{k+1}  \Vert A_i \Vert_{\mathcal{B}(L^2,L^2)}
\end{eqnarray*}
where $C$ is the constant from Lemma \ref{l:decayheatdirac}, 
$\Vert e^{-\frac{1}{2}\Delta}\Vert_{\mathcal{B}(L^2,H^{p})} <+\infty$ since the heat kernel is smoothing and the r.h.s has exponential decay in $u_{k+1}$ which ensures holomorphicity.
\end{proof}

\subsubsection{Bounding the singular part.}\label{sss:boundsing}

It remains to deal with the term $S(s;V_1,\dots,V_{k+1})$ 
involving the integral for $u_{k+1}\in [0,1]$. Without loss of generality, we will discuss the case
$V=V_1=dots=V_{k+1}$ in the next two subsections, the general case follows by polarization.  
\subsubsection*{The case when $k=0$.}
In this simple case, for the bosonic case, we directly use the diagonal 
asymptotic expansion of the heat kernel~\cite[Lemma 1.8.2]{Gilkey}
$e^{-t\Delta}(x,x)\sim \sum_{k=0}^\infty \frac{a_k(x,x)t^{k-\frac{d}{2}}}{(4\pi)^{\frac{d}{2}}}$
which yields that
$\int_{0}^{+\infty} u^s Tr\left(e^{-u\Delta}V\right)du $ is holomorphic when $Re(s)>\frac{d}{2}-1$ since the integrand converges absolutely, it has analytic continuation as a meromorphic function in $s\in \mathbb{C}$ and also that
 \begin{eqnarray*}
&&FP|_{s=0}\frac{1}{\Gamma(s+1)} \int_{0}^{+\infty} u^s Tr\left(e^{-u\Delta}V\right)du
=Tr\left(e^{-\Delta }\Delta^{-1}V\right)
+ \int_0^1 dt \int_M r_{N+1}(t,x,x)V(x)dv \\
&+&
\sum_{k=0, k\neq \frac{d}{2}-1}^N \frac{\left(\int_{M}a_k(x,x)V(x)dv\right)}{(4\pi)^{\frac{d}{2}}\left(s+1+k-\frac{d}{2}\right)} - 
\frac{\Gamma^\prime(1)}{(4\pi)^{\frac{d}{2}}} \left(\int_{M}a_{\frac{d}{2}-1}(x,x)V(x)dv\right).
\end{eqnarray*}
This means there exists  $\ell\in C^\infty\left(End(E) \right)$ and $dv\in \vert \Lambda^{top}\vert M$ a smooth density such that $FP|_{s=0}\frac{1}{\Gamma(s+1)} \int_{0}^{+\infty} u^s Tr\left(e^{-u\Delta}V\right)du=\int_M\left\langle\ell,V \right\rangle dv$.
A similar result holds true in the fermionic case where
we find that
$FP|_{s=0}\frac{1}{\Gamma(s+1)} \int_{0}^{+\infty} u^s Tr\left(e^{-u\Delta}D^*A\right)du=\int_M \left\langle\ell,A\right\rangle dv$
where $\ell\in C^\infty(Hom(E_-,E_+))$ and $dv\in \vert \Lambda^{top}\vert M$.

\subsubsection*{When $k>0$.}

We use the formalism of the heat calculus 
of Melrose as exposed in
the work of Grieser~\cite{grieserheat} (see also~\cite[p.~62]{taylor2013partial} for related construction) 
whose 
notations are adopted. 
We start from the fact that in local coordinates,
$e^{-t\Delta}(x,y)=t^{-\frac{d}{2}}\tilde{A}(t,\frac{x-y}{\sqrt{t}},y)$ where $\tilde{A}\in C^\infty\left([0,\infty)_{\frac{1}{2}}\times \mathbb{R}^d\times U, E\boxtimes E^* \right)$ since the heat kernel 
is an element in $\Psi^{-1}_H(M,E)$~\cite[definition 2.1 p.~6]{grieserheat}.
Then,
we note that the $k+1$-fold composition $K\star\dots\star K$ belongs to
$\Psi_H^{-k-1}\left(M,E\right)$ by the composition Theorem in the heat calculus~\cite[Proposition 2.6 p.~8 ]{grieserheat}. 
Hence this means for every $p\in M$, there are local coordinates $U\ni p$ s.t.:
\begin{eqnarray*}
K^{\star (k+1)}(t,x,y)=t^{-\frac{d+2}{2}+(k+1)}\tilde{A}(t,\frac{x-y}{\sqrt{t}},y)
\end{eqnarray*}
where $\tilde{A}\in C^\infty\left([0,\infty)_{\frac{1}{2}}\times \mathbb{R}^d\times U,  E\boxtimes E^*\right)$ by definition of the
elements in the heat calculus $\Psi_H^\bullet\left(M,E\right)$.
Therefore by definition of $\star$, we find:
\begin{eqnarray*}
S(s;V,\dots,V)=\int_0^1 t^s  \int_{t\Delta_k} Tr\left(e^{-(t-t_k)\Delta}V\dots Ve^{-t_1\Delta}V \right)  dt=\int_0^1 t^s\left(\int_{M}\left( K^{\star k+1}\right)(t,x,x)dv \right) dt
\end{eqnarray*}
where $\int_0^1 t^s\left( K^{\star k+1}\right)(t,x,x)dt=\int_0^1 t^{s-\frac{d+2}{2}+(k+1)}\tilde{A}(t,0,x)dt$ in local coordinates
on $M$ and the r.h.s is Riemann integrable in $t$ and holomorphic in the domain $Re(s) > \frac{d+2}{2}-k-2 $ by Fubini. 
In particular, the term $S$ is holomorphic near $s=0$ as soon as $k+1>\frac{d}{2}$. 

In the fermionic case, we start from the fact that in local coordinates,
$e^{-t\Delta}(x,y)=t^{-\frac{d}{2}}\tilde{A}(t,\frac{x-y}{\sqrt{t}},y)$ where $\tilde{A}\in C^\infty\left([0,\infty)_{\frac{1}{2}}\times \mathbb{R}^d\times U, E_+\boxtimes E_+^* \right)$. 
From the observation that
$$ D_{y^i}t^{-\frac{d}{2}}\tilde{A}(t,\frac{x-y}{\sqrt{t}},y)=t^{-\frac{d+1}{2}}(y^i-x^i)\left(D_{X^i}\tilde{A}\right)(t,\frac{x-y}{\sqrt{t}},y)+t^{-\frac{d}{2}}\left(D_{y^i}\tilde{A}\right)(t,\frac{x-y}{\sqrt{t}},y) ,$$
we deduce that 
$K=e^{-t\Delta}D^*A\in \Psi_H^{-\frac{1}{2}}(M,E_+)$. We loose $\frac{1}{2}$ of regularity 
compared to the bosonic case since 
we do not consider the heat kernel alone but composed with a differential operator of order $1$.   
Then by composition in the heat calculus, 
we find that:
\begin{eqnarray*}
S(s;D^*A,\dots,D^*A)&=&\int_0^1 t^s  \int_{t\Delta_k} Tr\left(e^{-(t-t_k)\Delta}D^*A\dots e^{-t_1\Delta}D^*A \right)  dt\\
&=&
\int_0^1 t^s
\left(\int_{M}\left( K^{\star k+1}\right)(t,x,x)dv\right)dt
\end{eqnarray*}
where $\int_0^1 t^s\left( K^{\star k+1}\right)(t,x,x)dt=\int_0^1 t^{s-\frac{d+2}{2}+\frac{k+1}{2}}\tilde{A}(t,0,x)dt$
in local coordinates and 
the r.h.s is Riemann integrable in $t$, holomorphic in $s$ in the half--plane $Re(s)>\frac{d}{2}-\frac{k+1}{2}$ by Fubini and has analytic continuation as a meromorphic function in $s\in \mathbb{C}$.
In particular, the term $S$ is holomorphic near $s=0$ as soon as $k+1>d$.

In both cases, when $k+1>\frac{d}{2}$ in the bosonic case and $k+1>d$ in the fermionic case, 
we find that
$$\lim_{s\rightarrow 0}I\left(s;V_1,\dots,V_{k+1} \right)=\int_{[0,\infty)^{k+1}} 
Tr\left( e^{-u_{k+1}\Delta}V_1\dots e^{-u_1\Delta}V_{k+1}\right)  \prod_{e=1}^{k+1} du_e$$ 
where the integral on the right hand side is convergent. 
%
%

\subsubsection{Inverting integrals and traces in Lemma~\ref{l:funder1}.}
 From the estimates on the operator trace
$ Tr\left(e^{-u_{k+1}\Delta}V_1\dots e^{-u_1\Delta}V_{k+1} \right) $  of Lemma~\ref{l:boundsoperatortraces} controlling the exponential decay for large times
$u_{k+1}\in [1,+\infty)$ and of subsubsection~\ref{sss:boundsing} which bound the operator trace for small times $u_{k+1}\in [0,1]$, we conclude
that the integrals
$$\int_{[0,\infty)^{k+1}} (\sum_{e=1}^{k+1} u_e)^s  Tr\left(e^{-u_{k+1}\Delta}V_1\dots e^{-u_1\Delta}V_{k+1} \right) \prod_{e=1}^{k+1} du_e =I(s,V_1,\dots , V_{k+1})$$
converge for $Re(s)>\frac{d}{2}+1$. 
Therefore, this implies that we have the identity:
\begin{eqnarray}\label{eq:differentialsdetheat}
D^{n}\left(Tr\left(\left(\Delta + V\right)^{-s-1} V_{n+1}\right),V_1,\dots,V_n\right)=\frac{n!(-1)^n}{\Gamma(s+1)}I(s,V_1,\dots,V_{n+1}).
\end{eqnarray}
Combined with the analytic continuation near $s=0$ of both sides,  
this justifies the inversion 
of traces and integrals
in the formula for the G\^ateaux differentials of $\log\det_\zeta$ from Lemma~\ref{l:funder1}.

\subsection{Schwartz kernels of the G\^ateaux differentials of $\log\det_\zeta$.}

Up to now, we worked with operators and their compositions. Now, we will work with integral kernels
and products of operator kernels instead.
The goal of the present subsection is to perform an in depth study
of the Schwartz kernels $[\mathbf{D^n\log\det_\zeta}]$ of the $n$-th G\^ateaux differentials $D^n\log\det_\zeta$ and to show they are distributional extensions of some products of Green functions (either the Schwartz kernel of $\Delta^{-1}$ for bosons or the Schwartz kernel of $D^{-1}$ for fermions), which is a priori well--defined only on $M^n\setminus d_n$, to the whole configuration space $M^n$.

\subsubsection{G\^ateaux differentials with disjoint supports.}
\label{ss:disjsupport}

Let us start by discussing 
$[\mathbf{D^n\log\det_\zeta}]$ as a distribution on $M^n\setminus d_n$, so outside the deepest diagonal.
In this subsubsection,
we
assume that $(V_1,\dots,V_{k+1})$
are such that 
$\text{supp}\left( V_1\right)\cap\dots\cap \text{supp}\left( V_{k+1}\right)=\emptyset$. So the mutual support is empty.
We need the following Lemma whose proof is given in the appendix:  
\begin{lemm}[Microlocal convergence of heat operator]\label{l:convheat}
Let $e^{-t\Delta}$ be the heat operator. Then we have the convergence
$ e^{-t\Delta}\underset{t\rightarrow 0^+}{\rightarrow} Id\text{ in }\Psi^{+0}_{1,0}(M)$. 
\end{lemm}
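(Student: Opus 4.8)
The statement is a purely short-time, hence local, assertion about the heat kernel, so the plan is to reduce it to a uniform symbol estimate in a coordinate chart, using the heat calculus of Melrose--Grieser already invoked in \S\ref{sss:boundsing}. Convergence in $\Psi^{+0}_{1,0}(M)=\bigcap_{\epsilon>0}\Psi^{\epsilon}_{1,0}(M)$ is read as convergence in $\Psi^{\epsilon}_{1,0}(M)$ for every $\epsilon>0$. Since $e^{-t\Delta}\in\Psi^{-1}_{H}(M,E)$, over a chart $U$ with local frame its Schwartz kernel is $K_t(x,y)=t^{-d/2}\widetilde A\big(t^{1/2},(x-y)/\sqrt t,y\big)$ with $\widetilde A$ smooth and, together with all derivatives, rapidly decreasing in the middle variable locally uniformly in the other two; in particular $K_t$ is $O(t^{\infty})$ in every $C^{k}(M\times M)$-norm off the diagonal $d_2$, so the off-diagonal part of $e^{-t\Delta}$ tends to $0$ in $\Psi^{-\infty}(M,E)$, and it suffices to control a full symbol of a localization $\chi e^{-t\Delta}\chi$ near $d_2$.

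For such a localized operator one has the rescaled Fourier transform formula obtained by the substitution $w=\sqrt t\,v$:
\[
a(t,y,\xi)=\int e^{-iw\cdot\xi}K_{t}(y+w,y)\,dw=\int e^{-iv\cdot(\sqrt t\,\xi)}\widetilde A(t^{1/2},v,y)\,dv=F\big(t^{1/2},\sqrt t\,\xi,y\big),
\]
where $F(\tau,\eta,y)$ is the partial Fourier transform of $\widetilde A$ in its middle slot, smooth in $(\tau,y)$ with values in the Schwartz space in $\eta$, with $F(0,0,y)=1$ and $F(0,\eta,y)=e^{-g^{\mu\nu}(y)\eta_{\mu}\eta_{\nu}}$; thus the leading part of $a$ is $e^{-t\,p(y,\xi)}$, $p$ the principal symbol of $\Delta$. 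Differentiating under the integral, $\partial_{y}^{\alpha}\partial_{\xi}^{\beta}a(t,y,\xi)=t^{|\beta|/2}\,F_{\alpha\beta}\big(t^{1/2},\sqrt t\,\xi,y\big)$ with $F_{\alpha\beta}$ again smooth and Schwartz in $\eta$, and $F_{\alpha 0}(0,\eta,y)=\partial_{y}^{\alpha}e^{-g^{\mu\nu}(y)\eta_{\mu}\eta_{\nu}}$; this quadratic exponential is even in $\eta$ with value $1$ at $\eta=0$, so $F(0,\eta,y)-1$ and every $F_{\alpha 0}(0,\eta,y)$ with $|\alpha|\ge 1$ is $O(|\eta|^{2})$ near $\eta=0$.

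Then I would run the symbol estimate. Using the Schwartz bound $|F_{\alpha\beta}(\tau,\eta,y)|\le C_{N}\langle\eta\rangle^{-N}$ (locally uniform in $\tau,y$), the inequality $\langle\xi\rangle\le t^{-1/2}\langle\sqrt t\,\xi\rangle$ for $t\le 1$, the bounds $\sup_{u\ge0}u^{k}e^{-cu}<\infty$, and a splitting of the frequency range at $|\xi|=t^{-1/4}$, one obtains, for all $\alpha,\beta$ and $\epsilon>0$,
\[
\sup_{y,\xi}\,|\partial_{y}^{\alpha}\partial_{\xi}^{\beta}a(t,y,\xi)|\,\langle\xi\rangle^{|\beta|}\le C_{\alpha\beta},\qquad \sup_{y,\xi}\,|\partial_{y}^{\alpha}\partial_{\xi}^{\beta}(a(t,y,\xi)-1)|\,\langle\xi\rangle^{|\beta|-\epsilon}\le C_{\alpha\beta,\epsilon}\,t^{\epsilon/4},
\]
i.e. boundedness in $S^{0}_{1,0}$ for $t\in(0,1]$ and convergence $a(t,\cdot,\cdot)\to 1$ in $S^{\epsilon}_{1,0}$ as $t\to 0^{+}$; the subleading terms of the heat expansion carry extra powers of $t^{1/2}$ and are only easier. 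Combined with the off-diagonal estimate this yields $e^{-t\Delta}\to Id$ in $\Psi^{+0}_{1,0}(M)$.

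The main obstacle is conceptual rather than computational: this is \emph{not} convergence in $\Psi^{0}_{1,0}$. Already the leading term gives $\partial_{y_j}a$ with leading part $-t\,(\partial_{y_j}p)\,e^{-tp}$, whose $S^{0}_{1,0}$-seminorm stays bounded below by a positive constant as $t\to 0^{+}$ (the supremum being attained near $|\xi|\sim t^{-1/2}$); one must genuinely surrender an arbitrarily small positive amount of order, and the sole role of the $|\xi|=t^{-1/4}$ splitting is to trade each power of $t$ produced by the parabolic rescaling against a factor $\langle\xi\rangle^{-\epsilon}$. The only real work is the bookkeeping ensuring that, after inserting $\langle\xi\rangle^{-\epsilon}$, every term of $\partial_{y}^{\alpha}\partial_{\xi}^{\beta}(a-1)$ — over all $\alpha,\beta$ and all terms of the heat expansion — carries a strictly positive power of $t$.
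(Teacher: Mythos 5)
Your argument is correct, but it follows a genuinely different route from the paper's. The paper reduces everything to the one-variable model symbol $p_t(\xi)=e^{-t\xi^2}$: it checks by hand (locating the $t$-maximum of $t^{k-l}|\xi|^{k-2l}e^{-t\xi^2}$) that $(p_t)_{t\in[0,a]}$ is a bounded family in $S^0_{1,0}(\mathbb{R})$ converging to the constant symbol $1$ in $S^{+0}_{1,0}(\mathbb{R})$, and then invokes Strichartz's functional-calculus theorem (as quoted from Taylor's book) asserting that $p\mapsto p(\sqrt{\Delta})$ is a continuous map from $S^m_{1,0}(\mathbb{R})$ into $\Psi^m_{1,0}(M)$; the claim $e^{-t\Delta}=p_t(\sqrt{\Delta})\to Id$ is then immediate. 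You instead open up that black box: you localize, write the kernel through the heat calculus as $t^{-d/2}\tilde{A}(t^{1/2},(x-y)/\sqrt{t},y)$, pass to the symbol $a(t,y,\xi)=F(t^{1/2},\sqrt{t}\,\xi,y)$ by the rescaled Fourier transform, and run the $S^\epsilon_{1,0}$ seminorm estimates directly, trading each factor $t^{|\beta|/2}$ produced by parabolic rescaling against the loss $\langle\xi\rangle^{-\epsilon}$ across a split at $|\xi|\sim t^{-1/4}$. Both proofs land in the same place. The paper's route is shorter because the manifold, bundle, and lower-order terms are absorbed into the Strichartz theorem; your route has the merit of not relying on the self-adjoint functional calculus $p(\sqrt{\Delta})$ at all — it uses only the heat-calculus structure of $e^{-t\Delta}$, which persists for non-self-adjoint generalized Laplacians — and it makes explicit the observation, left implicit by the paper, that convergence genuinely fails in $\Psi^{0}_{1,0}$ because the relevant seminorms of $a(t,\cdot,\cdot)-1$ are pinned near frequencies $|\xi|\sim t^{-1/2}$.
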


%
This implies that
the family $(e^{-t\Delta}V_i)_{t\in [0,1]}$ defines a 
\textbf{bounded family of pseudodifferential operators} in $\Psi^{+0}_{1,0}(M,E)$ 
whose 
wave front set is uniformly controlled in $T^*_{\text{supp}(V_i)}M$.
This means that for 
every pair of  
cut--off functions
$(\chi_1,\chi_2)\in C^\infty(M)^2$ s.t. $\text{supp}(V_i)\cap \text{supp}(\chi_1)\cap \text{supp}(\chi_2)=\emptyset$, the family  
$ \left(\chi_2e^{-t\Delta}V_i\chi_1\right)_{t\in [0,+\infty)} $ is bounded in 
$\Psi^{-\infty}(M,E)$~\footnote{It means the
corresponding 
family of Schwartz kernels are bounded in $C^\infty(M\times M,E\boxtimes E^*)$ for the usual Fr\'echet topology}.
Otherwise,
$(\chi_2e^{-t\Delta}V_i\chi_1)_{t\in [0,+\infty)}$ is bounded in $\Psi^{+0}_{1,0}(M,E)$.
This implies that the family 
$$e^{-(t-t_k)\Delta}V_{k+1}\dots V_2e^{-t_1\Delta}V_1, 
\text{ for } \{ 0\leqslant t_1\leqslant\dots t_k\leqslant t\leqslant 1 \}$$
is bounded in $\Psi^{-\infty}(M,E)$ by the condition on the support of $(V_i)_{i=1}^{k+1}$.
Finally,
$$\int_{t\Delta_k} Tr\left(e^{-(t-t_k)\Delta}V_{k+1}\dots V_2e^{-t_1\Delta}V_1 \right)=\mathcal{O}(1)$$
and 
$\lim_{s\rightarrow 0}I(s,V_1,\dots,V_{k+1})=Tr_{L^2}\left(\Delta^{-1}V_1\dots\Delta^{-1}V_{k+1} \right)$
where the $L^2$-trace on the r.h.s is well--defined since $\Delta^{-1}V_1\dots\Delta^{-1}V_{k+1}\in \Psi^{-\infty}\left(M,E\right)$.
Hence, for every $(k+1)$-uple of open subsets
$\left(U_1,\dots,U_{k+1}\right)$ s.t.
$\overline{U_1}\cap\dots\cap \overline{U_{k+1}}=\emptyset$,
the multilinear map 
{\small
\begin{eqnarray*}
\left(V_1,\dots,V_{k+1}\right)\in C^\infty_c(U_1,End(E))\times \dots\times C^\infty_c(U_{k+1},End(E))\mapsto \lim_{s\rightarrow 0} \frac{I(s;V_1,\dots,V_{k+1})}{\Gamma(s+1)} =
Tr\left(\Delta^{-1}V_1\dots\Delta^{-1}V_{k+1} \right) 
\end{eqnarray*}
}
is multilinear continuous and 
$$\left(V_1,\dots,V_{k+1}\right)\in C^\infty(End(E))^{k+1}\mapsto FP|_{s=0}\frac{I(s;V_1,\dots,V_{k+1})}{\Gamma(s+1)} $$
\emph{coincides with the G\^ateaux differential} 
$$\frac{(-1)^k}{k!}D^{k+1}\log\text{det}_\zeta(\Delta+V,V_1,\dots,V_{k+1})$$ of the analytic function 
$\log\text{det}_\zeta$ on $C^\infty(End(E))$.
Observe that $M^{k+1}\setminus d_{k+1}$ is covered by open subsets of the
form
 $U_1\times\dots\times U_{k+1}$ s.t.
$\overline{U_1}\cap\dots\cap \overline{U_{k+1}}=\emptyset$.
By the multilinear Schwartz kernel (see definition~\ref{d:multilinearschwartz}), the above multilinear map is represented by a distribution 
$\mathcal{R}t_{k+1}\in \mathcal{D}^\prime(M^{k+1},End(E)^{\boxtimes k+1})$ which \textbf{coincides} with the product
$t_{k+1}=\mathbf{G}(x_{1},x_{2})\dots\mathbf{G}\left(x_{k+1},x_1\right)\in \mathcal{D}^\prime(M^{k+1}\setminus d_{k+1},End(E)^{\boxtimes k+1})$
since 
$$Tr\left(\Delta^{-1}V_1\dots\Delta^{-1}V_{k+1} \right)= \left\langle t_{k+1} , V_1\boxtimes \dots\boxtimes V_{k+1} \right\rangle$$ for
$\text{supp}(V_1)\cap \dots\cap \text{supp}(V_{k+1})=\emptyset$ and $\left\langle .,. \right\rangle $ is a distributional pairing on $M^{k+1}$.
In the fermionic case, the discussion is almost identical.

From the above observation, we deduce the following claim which holds true in \textbf{both
bosonic and fermionic settings} which summarizes all above results:
\begin{prop}\label{p:functionalder}
Following the notations from definitions (\ref{d:bosonsfermionszeta}) and (\ref{d:speczetadet}).
Let $M$ be a smooth, closed, compact Riemannian manifold of dimension $d$, 
and $\Delta$ (resp $\Delta=D^*D$) some generalized Laplacian acting on $E$ (resp $E_+$) s.t. $\sigma\left(\Delta \right)\subset \{ Re(z)\geqslant \delta>0 \}$ for bosons (resp fermions).

In the bosonic case,
for every invertible $\Delta+V\in \mathcal{A}=\Delta+C^\infty(End(E))$,
for every $(V_1,\dots,V_{n})\in C^\infty(End(E))^n$,
if $n>\frac{d}{2}$ or $\text{supp}\left(V_1\right)\cap \dots 
\cap\text{supp}\left(V_{n}\right)=\emptyset$ then:
\begin{equation}
D^n\log\text{det}_\zeta\left(\Delta+V\right)(V_1,\dots,V_n)
=(-1)^{n-1} (n-1)!Tr_{L^{2}}\left(\left(\Delta+V\right)^{-1}V_1\dots  \left(\Delta+V\right)^{-1}V_n\right).
\end{equation}
For 
general $(V_1,\dots,V_n)\in C^\infty(End(E))^n$:
\begin{equation}
\frac{(-1)^{n-1}}{n-1!}D^n\log\text{det}_\zeta\left(\Delta+V\right)(V_1,\dots,V_n)=\left\langle \mathcal{R}t_n,V_1\boxtimes \dots \boxtimes V_n\right\rangle
\end{equation}
where $\mathcal{R}t_n$ is a \textbf{distributional extension} of
$t_n=\mathbf{G}(x_{1},x_{2})\dots\mathbf{G}\left(x_{n},x_1\right) \in \mathcal{D}^\prime(M^{n}\setminus d_{n},End(E)^{\boxtimes n})$
where $\mathbf{G}\in \mathcal{D}^\prime(M\times M, E\boxtimes E^*)$ is the Schwartz kernel of $\left(\Delta+V\right)^{-1}$.

In the fermionic case, for every invertible $D+A:C^\infty(E_+)\mapsto C^\infty(E_-)$,
for every $(A_1,\dots,A_n)\in C^\infty(Hom(E_+,E_-))^n$,
if $n>d$ or $\text{supp}\left(A_1\right)\cap \dots 
\cap\text{supp}\left(A_{n}\right)=\emptyset$ then:
\begin{equation}
D^n\log\text{det}_\zeta\left(\Delta+D^*A\right)(A_1,\dots,A_n)=(-1)^{n-1} (n-1)!Tr_{L^{2}}\left((D+A)^{-1}A_1\dots  (D+A)^{-1}A_n\right).
\end{equation}
For 
general $(A_1,\dots,A_n)\in C^\infty(Hom(E_+,E_-))^n$
\begin{equation}
\frac{(-1)^{n-1}}{n-1!}D^n\log\text{det}_\zeta\left(\Delta+D^*A\right)(A_1,\dots,A_n)=
\left\langle \mathcal{R}t_n,A_1\boxtimes \dots \boxtimes A_n\right\rangle
\end{equation}
where $\mathcal{R}t_n$ is a \textbf{distributional extension} of
$t_n=\mathbf{G}(x_{1},x_{2})\dots\mathbf{G}\left(x_{n},x_1\right) \in \mathcal{D}^\prime(M^{n}\setminus d_{n},Hom(E_-,E_+)^{\boxtimes n})$
where $\mathbf{G}\in \mathcal{D}^\prime(M\times M,E_+\boxtimes E_-^*)$ is the Schwartz kernel of $(D+A)^{-1}$.
\end{prop}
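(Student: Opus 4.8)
The plan is to assemble the ingredients already established — the formula for the G\^ateaux differentials of $\log\text{det}_\zeta$ from Lemma~\ref{l:funder1} together with equations~(\ref{e:eqhigherderivatives}) and~(\ref{eq:differentialsdetheat}), and the holomorphicity-in-$s$ statements of subsection~\ref{ss:optraces} — and then to promote the resulting identities from small perturbations to the full invertibility locus by analytic continuation. I argue the bosonic case; the fermionic one is identical up to bookkeeping, handled at the end.

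First I would handle the range $n>d/2$. For $V$ small enough that $\Delta+V$ keeps the spectral gap of Lemma~\ref{l:specgap2}, combining~(\ref{e:eqhigherderivatives}) and~(\ref{eq:differentialsdetheat}), with base operator $\Delta+V$ and semigroup $e^{-u(\Delta+V)}$, gives
$$D^{n}\log\text{det}_\zeta(\Delta+V)(V_1,\dots,V_n)=FP|_{s=0}\ \frac{(-1)^{n-1}(n-1)!}{\Gamma(s+1)}\,I(s;V_1,\dots,V_n).$$
By Lemma~\ref{l:boundsoperatortraces} the regular part $R(s;\cdot)$ is entire and by the heat-calculus estimates of subsection~\ref{sss:boundsing} the singular part $S(s;\cdot)$ is holomorphic at $s=0$ once $n>d/2$, so $FP|_{s=0}$ is just $\lim_{s\to0}$; since $\Gamma(1)=1$ this limit is $(-1)^{n-1}(n-1)!\,I(0;V_1,\dots,V_n)$. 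Writing $(\Delta+V)^{-1}=\int_0^\infty e^{-u(\Delta+V)}\,du$ (exponential decay) and applying Fubini to the absolutely convergent integral $I(0;\cdot)$ identifies it with $\Tr_{L^2}\big((\Delta+V)^{-1}V_1\cdots(\Delta+V)^{-1}V_n\big)$, the trace being legitimate since the operator has order $-2n<-d$. Exactly the same computation works, with no restriction on $n$, when $\text{supp}(V_1)\cap\dots\cap\text{supp}(V_n)=\emptyset$: by subsection~\ref{ss:disjsupport} (via Lemma~\ref{l:convheat}) the family $e^{-u_n(\Delta+V)}V_1\cdots e^{-u_1(\Delta+V)}V_n$ is then bounded in $\Psi^{-\infty}$ for $0\le u_1\le\dots\le u_n\le 1$, the singular part is $O(1)$, $I(s;\cdot)$ is holomorphic at $s=0$, and the limit is the same.

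Next I would pass to arbitrary test directions. The form $(V_1,\dots,V_n)\mapsto D^n\log\text{det}_\zeta(\Delta+V)(V_1,\dots,V_n)$ is jointly continuous and $n$-linear (Bastiani smoothness of $\text{det}_\zeta$ near invertible elements together with the universal formula for the differentials of $\log$), so by the multilinear Schwartz kernel theorem (Definition~\ref{d:multilinearschwartz}) it is represented by a kernel $[\mathbf{D^n\log\text{det}_\zeta(\Delta+V)}]$; put $\mathcal{R}t_n:=\frac{(-1)^{n-1}}{(n-1)!}[\mathbf{D^n\log\text{det}_\zeta(\Delta+V)}]$. Covering $M^n\setminus d_n$ by open boxes $U_1\times\dots\times U_n$ with $\overline{U_1}\cap\dots\cap\overline{U_n}=\emptyset$ (any point off $d_n$ has two distinct coordinates), on each box the test functions have empty mutual support, so the previous step yields $\langle\mathcal{R}t_n,V_1\boxtimes\dots\boxtimes V_n\rangle=\Tr_{L^2}((\Delta+V)^{-1}V_1\cdots(\Delta+V)^{-1}V_n)$; expanding this trace through the Schwartz kernel $\mathbf{G}$ of $(\Delta+V)^{-1}$ (compose kernels, restrict to the diagonal, relabel cyclically) it equals $\langle t_n,V_1\boxtimes\dots\boxtimes V_n\rangle$ with $t_n=\mathbf{G}(x_1,x_2)\cdots\mathbf{G}(x_n,x_1)$. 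Hence $\mathcal{R}t_n$ agrees with $t_n$ off $d_n$: it is a distributional extension of $t_n$, as required.

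It remains to remove the smallness assumption and treat fermions. Fixing invertible $\Delta+V_0$ and all $V_i$, both sides of all the identities above are holomorphic in $z$ on $\Omega=\{z:\Delta+zV_0\ \text{invertible}\}$, which contains $0$ and $1$ and is connected (its complement is the discrete zero set of the entire function $z\mapsto\text{det}_\zeta(\Delta+zV_0)$), and they agree near $z=0$, hence on $\Omega$. The fermionic case runs identically with $\Delta+D^*A=D^*(D+A)$ and $V_i=D^*A_i$, the only arithmetical changes being that the extra $D^*$ factors cost $\tfrac12$ of regularity per heat factor (cf.\ subsection~\ref{sss:boundsing}), raising the threshold to $n>d$, and the bookkeeping identity $(D^*(D+A))^{-1}D^*=(D+A)^{-1}$, which converts $\Tr_{L^2}\big(\prod_i(D^*(D+A))^{-1}D^*A_i\big)$ into $\Tr_{L^2}((D+A)^{-1}A_1\cdots(D+A)^{-1}A_n)$ and the polygon amplitude into $\mathbf{G}(x_1,x_2)\cdots\mathbf{G}(x_n,x_1)$ with $\mathbf{G}$ the kernel of $(D+A)^{-1}$. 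The main obstacle I anticipate is precisely this last step: one must verify that $z\mapsto\text{det}_\zeta(\Delta+zV_0)$ is holomorphic on all of $\Omega$, not just near $0$ — which I would get by covering $\Omega$ by arcs on each of which some fixed angle in $(0,2\pi)$ is an Agmon angle and invoking the independence of $\det_\zeta$ of that choice — while the interchanges of $\Tr_{L^2}$ with the time integrals, the other delicate point, are exactly what the bounds of subsections~\ref{ss:optraces} and \ref{ss:disjsupport} are set up to license.
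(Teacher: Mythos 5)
Your proposal is correct and follows essentially the same route as the paper: the heat-kernel representation of the G\^ateaux differentials combined with the $S+R$ decomposition and the heat-calculus bounds for $n>\frac{d}{2}$ (resp.\ $n>d$), the pseudodifferential boundedness argument for disjoint supports, the multilinear Schwartz kernel theorem to obtain $\mathcal{R}t_n$ as an extension of $t_n$ off $d_n$, and removal of the spectral-gap assumption by analytic continuation along complex rays $z\mapsto\Delta+zV_0$, which is exactly the connectedness-of-invertibles argument the paper uses (its footnote invokes the same rays), with your Agmon-angle remark matching the paper's discussion of independence of the spectral cut.
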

\begin{proof}
In the bosonic case, we proved the claim for all $\Delta+V$  
s.t. $\sigma\left(\Delta+V\right)\subset \{ Re(z)\geqslant \delta>0 \}$
since we need the exponential decay of the heat semi--group $e^{-t(\Delta+V)}$ to make the 
regular part $R$ from definition~\ref{d:decomp} convergent. However, by \textbf{analyticity of both sides}
of the identity $D^n\log\text{det}_\zeta\left(\Delta+V\right)(V_1,\dots,V_n)=(-1)^{n-1} (n-1)!Tr_{L^{2}}\left(\left(\Delta+V\right)^{-1}V_1\dots  \left(\Delta+V\right)^{-1}V_n\right)$ in $V\in C^\infty(M,End(E))$, 
the claim holds true everywhere on $C^\infty(M,End(E))$ by 
analytic continuation using the fact that the subset of invertible elements 
in $\mathcal{A}$ is \textbf{connected}~\footnote{ $\Delta+V$ invertible iff $Id+\Delta^{-1}V$ invertible which is true for $V$ in a small neighborhood of $V=0$. Then consider complex rays $z\in \mathbb{C}\mapsto Id+z\Delta^{-1}V$ 
which are non-invertible at isolated values of $z$ since $\Delta^{-1}V$ compact}. 
The discussion is identical for the fermionic case. 
\end{proof}
%

From the results of the above proposition, we can conclude 
the proof of Theorem~\ref{t:mainzetareformulate} and prove all claims 
except the bound on the wave front set of the Schwartz kernels of the G\^ateaux differentials for which we shall devote the whole section
\ref{s:wfbound}.
\subsection{Factorization formula relating $\det_\zeta$ and Gohberg--Krein's determinants $\det_p$.}
\label{ss:factozeta}
We give the proof for 
bosons and
write 
the factorization formula for
fermions for simplicity since the 
discussion is almost similar in both cases.
The proof relies crucially on the following well-known Lemma
on the Gohberg--Krein determinants~\cite[Thm 9.2 p.~75]{Simon-traceideals}:
\begin{lemm}[Gohberg--Krein's determinants and functional traces]
\label{l:detregtraces}
For all $A\in \mathcal{I}_p$, Gohberg--Krein's determinant $\det_p(Id+zA)$ is an \textbf{entire
function} in $z\in \mathbb{C}$ and is related to traces $Tr(A^n)$ for $n\geqslant
 p$ by the following
formulas:
\begin{eqnarray*}
\boxed{\text{det}_p(Id+zA)= \exp\left(\sum_{n=p}^\infty  \frac{(-1)^{n+1}z^n}{n}Tr(A^n)  \right)=\prod_k\left[(1+z\lambda_k(A))\exp\left( \sum_{n=1}^{p-1}\frac{(-1)^n}{n}\lambda_k(A)^n  \right)\right]}
\end{eqnarray*}
where the infinite product vanishes exactly when 
$z\lambda_k(A)=-1$ with multiplicity.
\end{lemm}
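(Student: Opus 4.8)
The plan is to reduce the statement to a finite-rank computation and then pass to the limit, following~\cite[Chapter 9]{Simon-traceideals}. Recall that $\det_p(Id+zA)=\det_F\!\left(Id+R_p(zA)\right)$ with $R_p(zA)=(Id+zA)\exp\!\left(\sum_{n=1}^{p-1}\tfrac{(-1)^n}{n}(zA)^n\right)-I\in\mathcal I_1$ (by~\cite[Lemma 9.1]{Simon-traceideals} since $A\in\mathcal I_p$), so $z\mapsto\det_p(Id+zA)$ is a well-defined entire function. Since $A\mapsto\det_p(Id+A)$ is continuous on $Id+\mathcal I_p$ by~\cite[Theorem 9.2]{Simon-traceideals} and finite-rank operators are dense in $\mathcal I_p$, it suffices to establish both displayed identities when $A$ has finite rank.

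So first I would treat the finite-rank case. Pick a finite-dimensional $A$-invariant subspace $W\subset H$ containing $A(H)$; then $R_p(zA)$ has range in $W$, so $Id+R_p(zA)$ leaves $W$ invariant and descends to the identity on $H/W$, whence
\[
\det_F\!\left(Id+R_p(zA)\right)=\det\Big((Id+zA)\exp\!\big(\textstyle\sum_{n=1}^{p-1}\tfrac{(-1)^n}{n}(zA)^n\big)\big|_W\Big).
\]
Triangularising $A|_W$ (Schur form) with diagonal entries $\mu_1,\dots,\mu_m$ --- the eigenvalues of $A|_W$, which include all nonzero eigenvalues of $A$ together with some zeros --- the operator on the right is upper-triangular with diagonal entries $(1+z\mu_j)\exp\!\big(\sum_{n=1}^{p-1}\tfrac{(-1)^n}{n}(z\mu_j)^n\big)=E_{p-1}(-z\mu_j)$. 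Since a zero eigenvalue of $A$ contributes a factor $E_{p-1}(0)=1$, this yields
\[
\det_p(Id+zA)=\prod_{k}E_{p-1}\!\left(-z\lambda_k(A)\right)=\prod_k\left[(1+z\lambda_k(A))\exp\!\left(\sum_{n=1}^{p-1}\tfrac{(-1)^n}{n}(z\lambda_k(A))^n\right)\right],
\]
the product formula in the finite-rank case; the vanishing locus follows from $E_{p-1}(w)=0\Leftrightarrow w=1$, i.e.\ from $z\lambda_k(A)=-1$.

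For general $A\in\mathcal I_p$ the elementary bound $\left|E_{p-1}(w)-1\right|\le c_p|w|^p$ together with $\sum_k|\lambda_k(A)|^p<+\infty$ (Weyl's inequality in $\mathcal I_p$) shows that the infinite product converges absolutely and locally uniformly in $z\in\mathbb C$, hence defines an entire function, which by the continuity of $\det_p$ on $Id+\mathcal I_p$ coincides with $\det_p(Id+zA)$; this also gives the vanishing statement and, via \eqref{e:fredholmdet}, is consistent with the finite-rank reduction. For the exponential formula, restrict to $|z|$ small enough that $|z\lambda_k(A)|<1$ for all $k$, take logarithms, and use $\log E_{p-1}(w)=-\sum_{n\ge p}\tfrac{w^n}{n}$ to obtain $\log\det_p(Id+zA)=\sum_{n\ge p}\tfrac{(-1)^{n+1}z^n}{n}\sum_k\lambda_k(A)^n$. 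Since $A^n$ is trace class for $n\ge p$, Lidskii's theorem gives $\sum_k\lambda_k(A)^n=\Tr(A^n)$, which is exactly $\exp\!\big(\sum_{n\ge p}\tfrac{(-1)^{n+1}z^n}{n}\Tr(A^n)\big)$. Both sides being holomorphic in $z$ on all of $\mathbb C$, equality propagates from the small disc to the whole plane by analytic continuation.

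The only genuinely non-elementary ingredients are the continuity of $\det_p$ on $Id+\mathcal I_p$ and the local-uniform convergence of the regularised Weierstrass product, both contained in~\cite[Theorems 9.2 and 9.3]{Simon-traceideals}; everything else is finite-dimensional linear algebra, Lidskii's trace formula, and the standard power-series expansion of the Weierstrass factor $E_{p-1}$. Accordingly I do not expect a serious obstacle --- the statement is classical --- and the only care needed is in the passage from finite rank to $\mathcal I_p$ through these density and convergence facts.
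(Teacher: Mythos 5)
The paper itself offers no proof of this lemma: it is quoted verbatim as a ``well-known Lemma'' with a citation to~\cite[Thm 9.2 p.~75]{Simon-traceideals}, so there is no internal argument to match. Your proposal is more detailed than the source it replaces, and the finite-rank step is correct: on an $A$-invariant finite-dimensional $W$ containing the range, triangularisation gives $\det_p(Id+zA)=\prod_k E_{p-1}(-z\lambda_k(A))$, and the exponential-trace identity then follows on a small disc from $\log E_{p-1}(w)=-\sum_{n\geqslant p}w^n/n$ together with Lidskii's trace formula $\Tr(A^n)=\sum_k\lambda_k(A)^n$ for $A^n\in\mathcal{I}_1$.

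The gap is in the passage from finite rank to general $A\in\mathcal{I}_p$ for the product formula. Continuity of $\det_p$ gives $\det_p(Id+zA_n)\to\det_p(Id+zA)$ for finite-rank $A_n\to A$ in $\mathcal{I}_p$, and the finite-rank case identifies the left side with $\prod_k E_{p-1}(-z\lambda_k(A_n))$, but nothing yet shows the latter converges to $\prod_k E_{p-1}(-z\lambda_k(A))$. The eigenvalue sequence is not a continuous function of the operator; the enumeration, the multiplicities, and the uniform tail control via Weyl's inequality all need separate arguments, and establishing that the regularised Weierstrass product is continuous in $A$ is essentially Lidskii's theorem for $\det_p$ rather than a corollary of the continuity of $\det_p$. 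You acknowledge this as a matter of ``care'' in your last paragraph but do not supply it. The standard route sidesteps the issue entirely: write $\det_p(Id+zA)=\det_F\bigl(Id+R_p(zA)\bigr)$, observe $R_p(zA)=f(zA)$ with $f(w)=E_{p-1}(-w)-1$ entire and $f(0)=0$, invoke the spectral mapping theorem for the holomorphic functional calculus on the compact operator $zA$ to see that the nonzero eigenvalues of $R_p(zA)$, with algebraic multiplicity, are exactly $f(z\lambda_k(A))$, and then apply the $\det_F$ product formula~(\ref{e:fredholmdet}) to get $\prod_k\bigl(1+f(z\lambda_k(A))\bigr)=\prod_k E_{p-1}(-z\lambda_k(A))$. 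Replacing your density argument with this step closes the gap, and your subsequent derivation of the exponential formula and the analytic continuation are then correct as written.
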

The product
$\prod_k\left[(1+z\lambda_k(A))\exp\left( \sum_{n=1}^{p-1}\frac{(-1)^nz^n}{n}\lambda_k(A)^n  \right)\right]$
reads
$\prod_kE_{p-1}(-z\lambda_k(A)) $ where $E_{p-1}$ is the Weierstrass factor.
From the above, we deduce:
\begin{prop}\label{p:convdethadamard}
Let $(M,g)$ be a closed compact Riemannian 
manifold of dimension $d$, 
$(E,F)$ a pair of isomorphic Hermitian bundles
over $M$ and $P:C^\infty\left(E\right)\mapsto C^\infty\left(F\right)$ 
an invertible elliptic operator of degree $k$. For any $\mathcal{V}\in C^\infty(Hom(E,F))$,
the series 
$$\sum_{n>\frac{d}{k}}\frac{(-1)^{n+1}}{n} Tr\left(\left(P^{-1}\mathcal{V}\right)^n \right)$$
converges absolutely for $\Vert \mathcal{V}\Vert_{L^\infty(Hom(E,F))}$ small enough and 
$$\mathcal{V}\mapsto \exp\left(\sum_{n>\frac{d}{k}}\frac{(-1)^{n+1}}{n} Tr\left(\left(P^{-1}\mathcal{V}\right)^n \right) \right)=\text{det}_{[\frac{d}{k}]+1}\left(Id+P^{-1}\mathcal{V} \right)$$ extends uniquely as an entire
function on $C^\infty(Hom(E,F))$.
\end{prop}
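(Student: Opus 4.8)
The plan is to reduce the whole statement to the Schatten-class membership of $P^{-1}$ together with the properties of Gohberg--Krein's determinants already recorded in Lemma~\ref{l:detregtraces} and with the fact that $\det_p$ is an entire function on $Id+\mathcal{I}_p$ (cf.~\cite[Chapter 9]{Simon-traceideals}). First I would observe that, $P$ being an invertible elliptic operator of order $k>0$ on the closed $d$-dimensional manifold $M$, its inverse $P^{-1}\colon C^\infty(F)\to C^\infty(E)$ is an elliptic pseudodifferential operator of order $-k<0$, so that $P^{-1}\mathcal{V}\colon C^\infty(E)\to C^\infty(E)$ makes sense for every $\mathcal{V}\in C^\infty(Hom(E,F))$. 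By the classical Weyl-type eigenvalue asymptotics for elliptic operators (applied, say, to the positive self-adjoint operator $(P^{-1})^{*}P^{-1}$, cf.~\cite{Gilkey,Seeley}) the singular values of $P^{-1}$ decay like $j^{-k/d}$, hence $P^{-1}\in\mathcal{I}_p$ for every $p>\frac{d}{k}$, in particular for $p=[\frac{d}{k}]+1$. Since $\mathcal{V}$ acts as a bounded multiplication operator on $L^2$ with $\Vert\mathcal{V}\Vert_{\mathcal{B}(L^2,L^2)}=\Vert\mathcal{V}\Vert_{L^\infty(Hom(E,F))}$, the ideal property of the Schatten norm gives $\Vert P^{-1}\mathcal{V}\Vert_p\leqslant\Vert P^{-1}\Vert_p\,\Vert\mathcal{V}\Vert_{L^\infty(Hom(E,F))}$ with $p=[\frac{d}{k}]+1$; in particular $\mathcal{V}\mapsto P^{-1}\mathcal{V}$ is a continuous $\mathbb{C}$-linear, hence entire, map from the Fréchet space $C^\infty(Hom(E,F))$ into the Banach space $\mathcal{I}_p$.

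Next I would handle the small-$\mathcal{V}$ regime. If $\Vert\mathcal{V}\Vert_{L^\infty(Hom(E,F))}<\Vert P^{-1}\Vert_p^{-1}$ then $A:=P^{-1}\mathcal{V}$ satisfies $\Vert A\Vert_p<1$, and Hölder's inequality for Schatten ideals gives $\Vert A^n\Vert_1\leqslant\Vert A\Vert_p^n$ for all $n\geqslant p$, so $|Tr(A^n)|\leqslant\Vert A\Vert_p^n$ whenever $n\geqslant p=[\frac{d}{k}]+1$. Since the condition $n>\frac{d}{k}$ is equivalent to $n\geqslant[\frac{d}{k}]+1$, the series $\sum_{n>d/k}\frac{(-1)^{n+1}}{n}Tr(A^n)$ is dominated by $\sum_{n\geqslant p}\Vert A\Vert_p^n/n<+\infty$ and therefore converges absolutely; Lemma~\ref{l:detregtraces}, applied to $A\in\mathcal{I}_p$ at the value $z=1$, then identifies its exponential with $\text{det}_{[\frac{d}{k}]+1}(Id+P^{-1}\mathcal{V})$ on this ball around the origin.

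To conclude I would upgrade this identity to an entire function on the whole space. Since $B\mapsto\text{det}_{[\frac{d}{k}]+1}(Id+B)$ is entire on $Id+\mathcal{I}_p$ and $\mathcal{V}\mapsto P^{-1}\mathcal{V}$ is entire (being continuous linear), their composition $\mathcal{V}\mapsto\text{det}_{[\frac{d}{k}]+1}(Id+P^{-1}\mathcal{V})$ is entire on all of $C^\infty(Hom(E,F))$ --- a fact one may check directly through the curve-testing criterion of Proposition~\ref{p:krieglmichor}, since a holomorphic curve $\gamma$ in $C^\infty(Hom(E,F))$ is carried by $\mathcal{V}\mapsto P^{-1}\mathcal{V}$ to a holomorphic curve in $\mathcal{I}_p$, on which $\det_p$ is holomorphic. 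This entire function agrees with the exponential of the series on a ball around $0$, and it is the unique such extension by the identity principle on the connected Fréchet space $C^\infty(Hom(E,F))$: two entire functions coinciding on a neighbourhood of $0$ must coincide everywhere, as one sees by restricting to the complex lines $z\mapsto z\mathcal{W}$ and applying the one-variable identity theorem.

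The only genuinely analytic input, and the step I expect to require the most care even though it is entirely classical, is the bound $P^{-1}\in\mathcal{I}_{[d/k]+1}$ together with the continuity of $\mathcal{V}\mapsto P^{-1}\mathcal{V}$ into $\mathcal{I}_{[d/k]+1}$, which rests on the Weyl asymptotics for the eigenvalues of an elliptic operator; the remainder is bookkeeping with the already-established properties of $\det_p$ and of holomorphic functions on Fréchet spaces.
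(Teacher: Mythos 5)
Your proposal is correct and takes essentially the same route as the paper: both proofs pivot on the Schatten-class membership $P^{-1}\mathcal{V}\in\mathcal{I}_{[d/k]+1}$, the ideal estimate $\Vert P^{-1}\mathcal{V}\Vert_{[d/k]+1}\lesssim\Vert\mathcal{V}\Vert_{L^\infty}$, and Lemma~\ref{l:detregtraces} to identify the exponential of the tail series with $\det_{[d/k]+1}(Id+P^{-1}\mathcal{V})$ on a small ball, and both then invoke the global analyticity of $\det_p$ on $Id+\mathcal{I}_p$ for the entire extension.

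The one point where you diverge, and it is minor, is how the $E\neq F$ bookkeeping is handled. You take $P^{-1}\colon L^2(F)\to L^2(E)$ at face value and define its Schatten norm via the singular values coming from $(P^{-1})^*P^{-1}$, invoking Weyl asymptotics directly; this requires tacitly using the Schatten ideals of operators \emph{between} two Hilbert spaces together with the ideal inequality in that generality, which is legitimate but worth flagging. The paper instead fixes an auxiliary order-zero invertible $U\in\Psi^0(M;E,F)$ so that $UP^{-1}\in\Psi^{-k}(M,F)$ acts on a single Hilbert space $L^2(F)$, reduces the Schatten estimate for $P^{-1}\mathcal{V}$ to that of $UP^{-1}$ via $\Vert P^{-1}\mathcal{V}\Vert_p\leqslant\Vert U^{-1}\Vert_{\mathcal{B}}\Vert UP^{-1}\Vert_p\Vert\mathcal{V}\Vert_{\mathcal{B}}$, and thereby never needs Schatten theory across distinct Hilbert spaces. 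Your version is a bit more economical; the paper's is more elementary in the sense that it stays within the standard single-Hilbert-space framework. You also make explicit the curve-testing argument (via Proposition~\ref{p:krieglmichor}) for entireness and the restriction-to-complex-lines argument for uniqueness, both of which the paper leaves implicit; these additions are sound and in the spirit of the paper's own treatment of holomorphy.
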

\begin{proof}
Choose some auxiliary bundle isomorphism $E\mapsto F$
which induces an elliptic invertible operator $U\in \Psi^0(M,E,F):L^2(E) \mapsto L^2(F)$ and
$UP^{-1}\in \Psi^{-k}(M,E)$  belongs to
the Schatten ideal $\mathcal{I}_{[\frac{d}{k}]+1}$ hence 
$\Vert UP^{-1}\Vert_{[\frac{d}{k}]+1}<+\infty$. 
The claim then 
follows from Lemma~\ref{l:detregtraces} applied to $A=P^{-1}\mathcal{V}\in \Psi^{-k}(M,E)$ which belongs to
the Schatten ideal $\mathcal{I}_{[\frac{d}{k}]+1}$
and the series converges since the Schatten norm satisfies the estimate:
$$ \Vert P^{-1}\mathcal{V} \Vert_{[\frac{d}{k}]+1}\leqslant \Vert U^{-1} \Vert_{\mathcal{B}(L^2(E),L^2(F))} \Vert \mathcal{V}\Vert_{\mathcal{B}(L^2(E),L^2(F))} \Vert U P^{-1}\Vert_{[\frac{d}{k}]+1}  $$ which
can be made $<1$ if $\Vert \mathcal{V}\Vert_{L^\infty(Hom(E,F))} < \Vert UP^{-1}\Vert_{[\frac{d}{k}]+1} \Vert U^{-1} \Vert_{\mathcal{B}(L^2(E),L^2(F))}^{-1}$. 
\end{proof}

Proposition~\ref{p:convdethadamard} and Lemma~\ref{l:detregtraces} imply that for $z$ small enough, 
the series 
$$\sum_{n\geqslant \frac{d}{2}+1}\frac{(-1)^{n+1}z^n}{n}
Tr_{L^2}\left( \left(\Delta^{-1}V\right)^n \right)$$ 
converges and 
equals $\log\det_p\left(Id+z\Delta^{-1}V\right)$ for $p=[\frac{d}{2}]+1$. 
In particular, 
$\det_p\left(Id+\Delta^{-1}V\right)$ is analytic 
in $V\in C^\infty(End(E))$ and vanishes iff $\Delta+V$ is non-invertible
and $z\mapsto \det_p\left(Id+z\Delta^{-1}V\right)$ 
is an entire function.

It follows from Proposition \ref{p:functionalder} that 
G\^ateaux differentials of $\log\det_p(Id+\Delta^{-1}V)$ and $\log\det_\zeta(\Delta+V)$ at $V=0$
coincide when $k>\frac{d}{2}$ 
. 
Now we shall use the following general
Lemma whose proof is given in subsection~\ref{ss:appendixholo} in appendix:
\begin{lemm}\label{l:raytofrechet}
Let $E$ be a complex Fr\'echet space and $F_1,F_2$ 
a pair of holomorphic functions on some open subset $\Omega\subset E$.
Assume there is some integer $k>0$, s.t. $F_1,F_2$ have the same G\^ateaux differentials of order $n$ for every $n\geqslant k$. 
Then there exists a smooth polynomial function $P:E\mapsto \mathbb{C}$ of degree $k-1$ s.t.
$F_1=P+F_2$.
\end{lemm}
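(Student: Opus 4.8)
The plan is to work with the difference $G=F_1-F_2$, which is again holomorphic on $\Omega$ (sums and differences of Bastiani‑holomorphic functions are Bastiani‑holomorphic: smoothness and differentials are linear, and the Taylor series of $G$ is the difference of the two Taylor series, converging on the intersection of the respective neighbourhoods), and which by hypothesis satisfies $D^nG\equiv 0$ on $\Omega$ for every $n\geqslant k$. We want to show $G$ is (the restriction of) a continuous polynomial of degree $\leqslant k-1$. I would first reduce to the case that $\Omega$ is connected — treating each connected component separately, or simply invoking that in the intended application $\Omega$ is the connected set of invertible perturbations of $\Delta$ (resp.\ $D$) — since a single global $P$ can only be expected on a connected domain.

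\textbf{Local step.} Fix $V_0\in\Omega$. By Definition~\ref{d:analfunfrechet}, $G$ coincides with its Taylor series in a neighbourhood of $V_0$, i.e.\ $G(V_0+h)=\sum_{n=0}^\infty\tfrac{1}{n!}D^nG(V_0,h,\dots,h)$ for $h$ near $0$, and since all terms of order $n\geqslant k$ vanish, $G$ agrees near $V_0$ with the function
\[
\widetilde P_{V_0}\colon V\longmapsto \sum_{n=0}^{k-1}\frac{1}{n!}\,D^nG(V_0,V-V_0,\dots,V-V_0).
\]
Each $D^nG(V_0)$ is a jointly continuous symmetric $n$‑linear form on $E^n$ (since $G$, being holomorphic, is smooth, cf.\ Definition~\ref{d:multilinearschwartz}), so its restriction to the diagonal is a continuous homogeneous polynomial on $E$; hence $\widetilde P_{V_0}$ is a continuous polynomial on $E$ of degree $\leqslant k-1$.

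\textbf{Globalization step.} I would show $\widetilde P_{V_0}$ is independent of $V_0\in\Omega$. The one elementary fact used is that a continuous polynomial of degree $\leqslant k-1$ on a topological vector space that vanishes on a nonempty open set vanishes identically — restrict to affine lines through a point of that open set, where it becomes a one‑variable polynomial of degree $<k$ vanishing near a point, hence zero. Fix a reference $V_*\in\Omega$ and consider $S=\{V_0\in\Omega:\widetilde P_{V_0}=\widetilde P_{V_*}\}$. It contains $V_*$; it is open, because if $\widetilde P_{V_0}=\widetilde P_{V_*}$ and $V_1$ is close enough to $V_0$, the neighbourhoods on which $G=\widetilde P_{V_0}$ and $G=\widetilde P_{V_1}$ overlap in a nonempty open set, forcing $\widetilde P_{V_1}=\widetilde P_{V_0}=\widetilde P_{V_*}$ there and hence everywhere; the same overlap argument shows $S$ is closed. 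By connectedness $S=\Omega$, so there is a single continuous polynomial $P:=\widetilde P_{V_*}$ of degree $\leqslant k-1$ with $G=\widetilde P_{V_0}=P$ on a neighbourhood of each $V_0\in\Omega$, i.e.\ $G=P$ on $\Omega$, that is $F_1=P+F_2$.

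The only genuinely delicate point is the bookkeeping in this last step — the reduction to a connected domain and the repeated appeal to the principle that a finite‑degree continuous polynomial vanishing on an open set is zero, which here plays the role of the classical identity theorem; all the rest is immediate from the definition of Fréchet‑holomorphy (coincidence with the locally convergent Taylor series) together with the assumed vanishing of the high‑order G\^ateaux differentials.
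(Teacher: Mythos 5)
Your proof is correct. It differs from the paper's in two ways. The paper restricts to complex rays: the hypothesis makes $z\mapsto (F_1-F_2)(zV)$ a polynomial of degree $\leqslant k-1$ for $|z|$ small, and this ray-wise identity is then promoted to a Fr\'echet-level statement near $0\in E$ via Proposition~\ref{p:douadyweakstrongholo} (finitely holomorphic and locally bounded implies agreement with a convergent series of continuous homogeneous polynomials). You bypass that machinery: since $F_1,F_2$ are holomorphic in the sense of Definition~\ref{d:analfunfrechet}, the difference $G=F_1-F_2$ already coincides with its convergent Taylor series at each base point, which by hypothesis truncates to the continuous polynomial $\widetilde P_{V_0}$; Douady's result is genuinely redundant once the functions are assumed Fr\'echet-analytic from the outset, so your route is the more economical one given the stated hypotheses. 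The paper also only establishes the identity on a neighbourhood of one point (sufficient for its application, since the factorization formula is then spread by analytic continuation), whereas you supply a globalization via the open-and-closed argument for $S=\{V_0:\widetilde P_{V_0}=\widetilde P_{V_*}\}$, resting on the identity principle that a continuous polynomial of bounded degree vanishing on a nonempty open set vanishes identically. You correctly flag the need for $\Omega$ connected --- an implicit hypothesis in the Lemma's global phrasing, since on a disconnected domain $\widetilde P_{V_0}$ can only be constant on each component.
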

The Lemma
implies that we have the equality
\begin{equation}\label{e:zetavshadamard}
\log\text{det}_\zeta(\Delta+V)=P(V)+\log\text{det}_p(Id+\Delta^{-1}V)
\end{equation}
for $V$ close enough to $0$ where $P$ is a \textbf{continuous polynomial function} of $V$.
Equation~(\ref{e:zetavshadamard}) together with the fact that $H\mapsto\det_p(Id+H)$ is an entire function
on the Schatten ideal $\mathcal{I}_p$ vanishing exactly over noninvertible $Id+H$, proves that
$V\mapsto \det_\zeta(\Delta+V)=e^{P(V)}\text{det}_p(Id+\Delta^{-1}V) $ extends uniquely 
as an entire function 
on $\mathcal{A}$ vanishing exactly over non-invertible elements.
Then by 
Proposition~\ref{p:functionalder}, 
the Schwartz kernels $\frac{(-1)^{n-1}}{n-1!}[\mathbf{D^n\log\det_\zeta(\Delta)}]$ of the G\^ateaux differentials 
are \textbf{distributional extensions} of the distributions $$t_n=\mathbf{G}(x_1,x_{2})\dots \mathbf{G}(x_n,x_1)\in \mathcal{D}^\prime(M^n\setminus d_n,End(E)^{\boxtimes n}).$$
It follows that
$P(V)=\sum_{1\leqslant n\leqslant \frac{d}{2}} \frac{(-1)^{n+1}}{n} \left\langle \mathcal{R}t_n,V^{\boxtimes n} \right\rangle$ where 
$\mathcal{R}t_n\in \mathcal{D}^\prime(M^n,End(E)^{\boxtimes n})$ is a distributional
extension of the product $t_n\in \mathcal{D}^\prime(M^n\setminus d_n,End(E)^{\boxtimes n})$.

It remains to estimate the wave front set of $\mathcal{R}t_n$ over the deepest diagonal $d_n$ which is
the purpose of the 
next section where we will show that
$WF\left(\mathcal{R}\left(t_n\right)\right)$ satisfies the bound 
$WF\left( \mathcal{R}\left(t_n\right) \right) \cap T_{d_n}^*M^n\subset N^*(d_n\subset M^n)$ 
by Proposition~\ref{p:wfbound} in the next section.
This will conclude the proof that
$\det_\zeta$ admits the representation~\ref{e:repsformula}.
The proof for fermions is similar
and yields the factorization formula
$
\det_\zeta\left(\Delta+D^*A\right)=\exp\left(P(A) \right)\det_p\left(Id+\Delta^{-1}D^*A\right)
$
for $p=d+1$ and $P$ a continuous polynomial of degree $d$ on $C^\infty(Hom(E_+,E_-))$.

Our goal for the next part is to study the microlocal properties of the Schwartz kernel 
$\mathcal{R}t_n=[\mathbf{D^n\log\det_\zeta}]$ near the deepest diagonal $d_n$.
As explained in the introduction, the bounds on the wave fronts are needed to ensure the renormalized determinants are obtained by subtraction of smooth local counterterms.

\section{Wave front set of Schwartz kernels of $D^n\log\det_\zeta$. }
\label{s:wfbound}
\subsubsection{Traces and integrals on configuration space.}
\label{ss:integparagraph}
%
First, we need to reformulate all composition of operators appearing in $D^n\log\det_\zeta$ as integrals
of products of operator kernels on configuration space.
In the bosonic case, for $(u_1,\dots,u_{k+1})\in (0,1]^{k+1}$, we reformulate the trace term
$Tr\left(e^{-u_{1}\Delta}V_1\dots  e^{-u_{k+1}\Delta}V_{k+1} \right)$ as an 
integral over configuration space
\begin{eqnarray*}
\small{ \int_{M^{k+1}} \left\langle e^{-u_1\Delta}(x_1,x_2)\dots e^{-u_{k+1}\Delta}(x_{k+1},x_1),\chi(x_1,\dots,x_{k+1})\right\rangle dv_{k+1} }
\end{eqnarray*}
where $dv_{k+1}\in \vert \Lambda^{top}\vert M^{k+1}$,
the product $e^{-u_1\Delta}(x_1,x_2)\dots e^{-u_{k+1}\Delta}(x_{k+1},x_1)$ on the l.h.s
is an element in $C^\infty(M^{k+1},End(E^*)^{\boxtimes k+1})$, 
$\chi=V\boxtimes\dots\boxtimes V\in C^\infty(M^{k+1},End(E)^{\boxtimes k+1})$ 
is the test section and the $\left\langle.,.\right\rangle$
denotes the natural fiberwise
pairing between elements of $End(E)^{\boxtimes k+1}$ and $End(E^*)^{\boxtimes k+1}$.
Starting from now on in the bosonic case, the test function part $\chi$ will
be chosen arbitrarily in $C^\infty(M^{k+1},End(E)^{\boxtimes k+1})$.

 In the fermionic case, we will consider 
the operator $e^{-t\Delta}D^*:C^\infty(E_-)\mapsto C^\infty(E_+)$
which has smoothing kernel when $t>0$ (since $\Psi^{-\infty}$ is an ideal)
hence the trace term
$Tr\left(e^{-u_{1}\Delta}D^*A_1\dots  e^{-u_{k+1}\Delta}D^*A_{k+1} \right)$
is reformulated as the integral over configuration space:
\begin{eqnarray}
\small{ \int_{M^{k+1}} \left\langle e^{-u_1\Delta}D^*(x_1,x_2)\dots e^{-u_{k+1}\Delta}D^*(x_{k+1},x_1),\chi(x_1,\dots,x_{k+1})\right\rangle dv_{k+1}}
\end{eqnarray}
where $dv_{k+1}\in \vert \Lambda^{top}\vert M^{k+1}$, the product $e^{-u_1\Delta}D^*(x_1,x_2)\dots e^{-u_{k+1}\Delta}D^*(x_{k+1},x_1)$ 
on the l.h.s
is an element in $C^\infty(M^{k+1},Hom(E_-,E_+)^{\boxtimes k+1})$, $\chi=A\boxtimes\dots\boxtimes A\in C^\infty(M^{k+1},Hom(E_+,E_-)^{\boxtimes k+1})$
and $\left\langle.,.\right\rangle$
denotes the natural fiberwise
pairing between elements of $Hom(E_+,E_-)^{\boxtimes k+1}$ and $Hom(E_-,E_+)^{\boxtimes k+1}$.

In what follows, we will localize the study in
some open subset of the form $U^{k+1}$ near an element of the form
$(x,\dots,x)\in d_{k+1}\subset M^{k+1}$ where $U\subset M, x\in U$
is an open chart that we choose to identify with some bounded open subset $U$ of $\mathbb{R}^d$ making some
abuse of notations.
Recall that a consequence of the heat calculus is that in local coordinates
$e^{-t\Delta}(x,y)=t^{-\frac{d}{2}}\tilde{A}(t,\frac{x-y}{\sqrt{t}},y), \tilde{A}\in C^\infty([0,+\infty)_{\frac{1}{2}}\times
\mathbb{R}^d\times U, E \boxtimes E^*)$ for bosons
and 
$e^{-t\Delta}D^*(x,y)=t^{-\frac{d+1}{2}}\tilde{A}(t,\frac{x-y}{\sqrt{t}},y), \tilde{A}\in C^\infty([0,+\infty)_{\frac{1}{2}}\times\mathbb{R}^d\times U, E_+ \boxtimes E_-^*)$ for fermions.
From this observation on the asymptotics of the kernel
$e^{-t\Delta}D^*$ the proofs in both bosonic and fermionic cases are uniform. The only changes
occur in the numerology since there is a loss of $t^{-\frac{1}{2}}$ in powers of $t$ in the expansion of 
$e^{-t\Delta}D^*$.
%
%
%
%
%
%
%
\begin{defi}
We define for $ (u,x)=((u_e)_{e=1}^{k+1},(x_i)_{i=1}^{k+1})\in (0,1]^{k+1}\times U^{k+1} $:
{\small
\begin{eqnarray*}
J(u,x;\chi)=\left\langle\prod_{1\leqslant e\leqslant k+1}\tilde{A}(u_e,\frac{x_{i(e)}-x_{j(e)}}{\sqrt{u_e}},x_{j(e)}),\chi\right\rangle
\end{eqnarray*}}
where $i(e)=e,j(e)=e+1$ when $e\in \{1,\dots,k\}$ and $i(k+1)=k+1,j(k+1)=1$, 
the bracket $\left\langle.,.\right\rangle $ 
denotes the appropriate fiberwise pairing defined above, $J(.,.,\chi)\in C^\infty \left((0,1]^{k+1}\times U^{k+1} \right)$ and 
$J$ depends linearly on $\chi$.
\end{defi}
Then we can express 
$S$ from definition~\ref{d:decomp} in terms 
of $J$:
\begin{eqnarray*}
S(s;\chi)=  \int_{\mathbf{\Delta}_{k+1}} (\sum_{e=1}^{k+1}u_e)^s 
\left(\int_{M^{k+1}} J(u,x;\chi) \right) d^{k+1}u, \chi=V_1\boxtimes \dots\boxtimes V_{k+1}.
\end{eqnarray*}
We will next prove that $S$ is a distribution valued in meromorphic functions of the variable $s$~\cite[def 3.5 p.~11]{dangzhang}
based on the methods of~\cite{dangzhang} using blow--ups.
In fact, we will also recover the distributional order from our proof.  

\subsubsection{Resolving products of heat kernels.}
\label{ss:blowupparagraph}

The problem in the definition of $S$ is that the 
product $\prod_{1\leqslant e\leqslant k+1} \tilde{A}(u_e,\frac{x_{i(e)}-x_{j(e)}}{\sqrt{u_e}},x_{j(e)})$ 
is not smooth
on
$\mathbf{\Delta}_{k+1} \times U^{k+1}$.
We choose the test section $\chi$ supported in $U^{k+1}\subset M^{k+1}$. 
We integrate w.r.t the Riemannian volume $dv=\rho d^dx$ with smooth density $\rho$ w.r.t. the Lebesgue measure $d^dx$ hence without loss of generality, we choose to absorb $\rho$ in the test section $\chi$.
 
\begin{defi}[Blow--up]
\label{d:blowup}
Consider the following change of variables:
\begin{eqnarray*}
\beta: (x,h_1,\dots,h_{k},t_1,\dots,t_{k+1})&\longmapsto &((x_1=x,x_i=x+\sum_{j=1}^{i-1}(t_{j}\dots t_{k+1}) h_j)_{i=2}^{k+1},(u_l=\prod_{j=l}^{k+1} t_j^2)_{l=1}^{k+1})\\
U\times \mathbb{R}^{kd}\times [0,1]^{k+1}&\longmapsto &U^{k+1}\times \mathbf{\Delta}_{k+1}  
\end{eqnarray*}
~\cite[def 5.3]{dangzhang} which resolves
the singular product in $J$.
We use the short notation $(x,h,t)=(x,(h_1,\dots,h_{k}),(t_1,\dots,t_{k+1}))\in U\times \mathbb{R}^{kd}\times [0,1]^{k+1}$.
\end{defi}
Replacing in the integral expression of $S$ yields,
$$S(s;\chi)=
\int_{[0,1]^{k+1}\times U\times \mathbb{R}^{dk}} ((t_1\dots t_{k})^2+\dots+1)^s   \beta^*J(t,x,h;\chi)  t_1\dots t_{k+1}^{2k+2s+1} d^{k+1}td^dxd^{kd}h,
$$
where the factor $t_{k+1}^{2k+1}$ comes from the Jacobian determinant $\beta^*\left(d^{k+1}u\right)=2^{k+1}t_1\dots t_{k+1}^{2k+1} d^{k+1}t $
in the change of variables.
One of the key results from~\cite[Thm 5.2]{dangzhang} is that for every $e\in \{1,\dots,k\}$,
the pull--back 
$\beta^*\left(\frac{x_{i(e)}-x_{j(e)}}{\sqrt{u_e}}\right)$ by the blow--down map $\beta$
is a \textbf{smooth function} on the resolved space
$U\times \mathbb{R}^{dk} $ hence $\beta^*J$ is also smooth on the resolved space.
In the bosonic (resp fermionic) case, 
the change of variables
in $\beta^*\left(\prod_{e=1}^{k+1}u_e^{-\frac{d}{2}}\right)$ 
(resp $\beta^*\left( \prod_{e=1}^{k+1}u_e^{-\frac{d+1}{2}}\right)$) brings a factor
of the form $\prod_{1\leqslant l \leqslant k+1} (t_{l}\dots t_{k+1})^{-d}$ (resp $ \prod_{1\leqslant l \leqslant k+1} (t_{l}\dots t_{k+1})^{-d-1}$) and
the Jacobian determinant of the variable change 
yields a factor $(t_1\dots t_{k+1})^{d}\dots (t_kt_{k+1})^{d} $
from which we can extract the power of $t_{k+1}$ to be equal to
$t_{k+1}^{-d} $ (resp $t_{k+1}^{-d-k-1} $).
Replacing in the integral formula yields the following identity~\cite[Proposition 5.2]{dangzhang},
\begin{eqnarray}\label{e:integS}
S(s;\chi)=\int_{[0,1]^{k+1}}((t_1\dots t_{k})^2+\dots+1)^{s}P(t_1,\dots,t_k)\\
\nonumber\times \left( \int_{\mathbb{R}^{d(k+1)}} A(t,x,h;\chi)d^dxd^{kd}h \right) t_{k+1}^{2s+2k+1-d}  dt_1\dots dt_{k+1} 
\end{eqnarray}
where $P$ is a polynomial function whose explicit expression is irrelevant 
and 
$A(.;\chi)=\beta^*J(.;\chi)\in C^\infty([0,1]^{k+1}\times U\times \mathbb{R}^{dk})$. For fermions, we would get $t_{k+1}^{2s+k-d}$ in factor under the integral sign instead of $t_{k+1}^{2s+2k+1-d}$.
As in~\cite[Lemma 5.4]{dangzhang}, $A$ depends \textbf{linearly} on
$\beta^*\chi(t,x,h)=\chi(x,x+\sum_{j=1}^{i-1}(t_{j}\dots t_{k+1}) h_j)_{i=2}^{k+1} $. We find
\begin{lemm}
Under the previous notations, in both bosonic and fermionic case,
the quantity $\partial_{t_{k+1}}^p A(t,x,h)|_{t_{k+1}=0}$ 
depends linearly on $p$-jets of the coefficients of 
$\chi$ along the 
diagonal
$d_{k+1}\subset M^{k+1}$. 
\end{lemm}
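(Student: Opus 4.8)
The plan is to combine two facts that are already in place. The first is that $A(\cdot\,;\chi)=\beta^{*}J(\cdot\,;\chi)$ is smooth on $[0,1]^{k+1}\times U\times\mathbb{R}^{dk}$ and depends \emph{linearly} on the pulled-back test section
\[
\beta^{*}\chi(t,x,h)=\chi\Bigl(x,\;\bigl(x+\textstyle\sum_{j=1}^{i-1}(t_{j}\cdots t_{k+1})h_{j}\bigr)_{i=2}^{k+1}\Bigr),
\]
the coefficient of this linear dependence being a $\chi$-independent smooth factor built out of the pulled-back heat amplitudes $\tilde A(u_{e},\cdot,\cdot)$. The second is that after the blow-up each coordinate difference $x_{i}-x$ with $i\ge 2$ carries an explicit factor $t_{k+1}$, since $x_{i}-x=t_{k+1}\sum_{j=1}^{i-1}(t_{j}\cdots t_{k})h_{j}$; in particular the blow-down map restricted to $\{t_{k+1}=0\}$ takes values in the deepest diagonal, $\beta(x,h,t)|_{t_{k+1}=0}=(x,\dots,x)\in d_{k+1}$.

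Given this, I would proceed as follows. Write $A=\langle g,\chi\circ\Phi\rangle$, where $g$ is the smooth $\chi$-independent factor above, $\langle\cdot,\cdot\rangle$ is the fiberwise (hence bilinear) pairing, and $\Phi(x,h,t)=\bigl(x,(x+\sum_{j}(t_{j}\cdots t_{k+1})h_{j})_{i=2}^{k+1}\bigr)$ is a smooth, $\chi$-independent map with $\Phi|_{t_{k+1}=0}=(x,\dots,x)\in d_{k+1}$. By the Leibniz rule,
\[
\partial_{t_{k+1}}^{p}A\big|_{t_{k+1}=0}=\sum_{q=0}^{p}\binom{p}{q}\,\bigl\langle\,\partial_{t_{k+1}}^{p-q}g\big|_{t_{k+1}=0}\,,\;\partial_{t_{k+1}}^{q}(\chi\circ\Phi)\big|_{t_{k+1}=0}\,\bigr\rangle,
\]
so it suffices to show that each $\partial_{t_{k+1}}^{q}(\chi\circ\Phi)|_{t_{k+1}=0}$ with $0\le q\le p$ depends linearly on the $p$-jet of $\chi$ along $d_{k+1}$. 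This is immediate from the Fa\`a di Bruno formula: $\partial_{t_{k+1}}^{q}(\chi\circ\Phi)|_{t_{k+1}=0}$ is a universal polynomial in the $\chi$-independent smooth derivatives $\partial_{t_{k+1}}^{r}\Phi|_{t_{k+1}=0}$, $1\le r\le q$, whose coefficients are the partial derivatives $(\partial^{\alpha}\chi)$ of order $|\alpha|\le q\le p$ evaluated at $\Phi|_{t_{k+1}=0}=(x,\dots,x)\in d_{k+1}$. Since all contractions involved are linear in $\chi$, this exhibits $\partial_{t_{k+1}}^{p}A|_{t_{k+1}=0}$ as a linear functional of the collection $\{(\partial^{\alpha}\chi)(x,\dots,x):|\alpha|\le p\}$ with coefficients depending smoothly on $(t_{1},\dots,t_{k},x,h)$ but not on $\chi$, which is precisely the assertion; the fermionic case is identical, the only change being the power of $t_{k+1}$ extracted in the Jacobian, which plays no role here.

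The only delicate point — and hence the main obstacle — is purely organisational: one must check carefully that the blow-down map factors as $x_{i}-x=t_{k+1}\cdot(\text{smooth})$, so that $\{t_{k+1}=0\}\subset\beta^{-1}(d_{k+1})$, and one must keep track of the $End(E)^{\boxtimes(k+1)}$ (resp. $Hom(E_{+},E_{-})^{\boxtimes(k+1)}$) bundle indices running through the fiberwise pairing, so that ``linear in the $p$-jet of $\chi$'' is correct component by component. Both verifications are routine once the identity $\beta^{*}\chi=\chi\circ\Phi$ and the inclusion $\Phi|_{t_{k+1}=0}\subset d_{k+1}$ have been recorded.
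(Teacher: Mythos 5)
Your proposal is correct and follows the same route the paper takes but leaves implicit: the paper gives no proof of this lemma, only the preparatory observation that $A$ depends linearly on $\beta^{*}\chi=\chi\circ\Phi$, and your argument simply records the two facts that make this go through, namely that $\{t_{k+1}=0\}\subset\beta^{-1}(d_{k+1})$ (each $x_i-x$ carries an explicit factor of $t_{k+1}$ in the blow-down map) and that Leibniz plus Fa\`a di Bruno convert $\partial_{t_{k+1}}^{p}\langle g,\chi\circ\Phi\rangle|_{t_{k+1}=0}$ into a finite linear combination of $(\partial^{\alpha}\chi)(x,\dots,x)$, $|\alpha|\le p$, with smooth, $\chi$-independent coefficients. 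This is precisely the content the paper compresses into the phrase ``$A$ depends linearly on $\beta^{*}\chi$''.
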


In the following paragraph, 
we shall prove that 
both $ \chi\mapsto S(s;\chi)$ 
and $\chi\mapsto I(s;\chi)$
are distributions valued in 
meromorphic germs
at $s=0$.

\subsubsection{The case when $1\leqslant k\leqslant \frac{d}{2}$ (resp  $1\leqslant k\leqslant d$) for bosons (resp fermions) case and integration by parts.}
\label{ss:badcase}

In the bosonic case,
if $k\leqslant \frac{d}{2}$, the factor $t_{k+1}^{2(s+k)+1-d} $ appearing in factor
of 
$A$ is potentially divergent
since near $s=0$, 
$2(Re(s)+k)+1-d $ 
is no longer necessary $>-1$.
Then as usual in Riesz regularization, we need to Taylor expand $A$ 
w.r.t the variable $t_{k+1}$ up to order $p$
in such a way that
$(p+1)+2k+1-d>-1\implies p+1>d-2(k+1)$.
This yields
\begin{eqnarray*}
&&\int_{[0,1]^{k+1}}((t_1\dots t_{k})^2+\dots+1)^{s}P(t_1,\dots,t_k)\\
&\times & \left( \int_{\mathbb{R}^{d(k+1)}} A(t,x,h;\chi)d^dxd^{kd}h \right) t_{k+1}^{2(s+k)+1-d}  dt_1\dots dt_{k+1} \\
&=&\sum_{p=0}^{\sup(d-2(k+1),0)} \int_{[0,1]^{k}}((t_1\dots t_{k})^2+\dots+1)^{s}P(t_1,\dots,t_k)\\ &\times &\frac{\int_{\mathbb{R}^{d(k+1)}} \partial^p_{t_{k+1}}A(t,x,h;\chi)|_{t_{k+1}=0}d^dxd^{kd}h }{s+k+p+1-\frac{d}{2}}  dt_1\dots dt_{k}
+\text{holomorphic at }s=0 
\end{eqnarray*}
where the holomorphic part depends linearly on the $(d-3)$-jet of $\chi$.
This implies that in the general case 
$FP|_{s=0} \frac{1}{\Gamma(s+1)} I(s,\chi)$ depends linearly on
the $(d-3)$-jets of $\chi$ and when
$\text{supp}(\chi)$ does not meet the deepest diagonal $d_{k+1}\subset M^{k+1}$,
we already know that 
$$FP|_{s=0}\frac{1}{\Gamma(s+1)}
I(s,\chi)=\int_{M^{k+1}} \left\langle \prod_{e=1}^{k+1}\mathbf{G}(x_{i(e)},x_{j(e)}),\chi\right\rangle \prod_{i=1}^{k+1} dv(x_i) .$$ 
Altogether, this proves
that the Schwartz kernel
$\frac{(-1)^{k}}{k!}[\mathbf{D^{k+1}\log\det_\zeta(\Delta)}]$ is a \textbf{distributional
extension} of the product
$\mathbf{G}(x_1,x_2)\dots\mathbf{G}(x_{k+1},x_1)$. This distributional extension has 
\textbf{order at most} $(d-3)$.
The fermionic case is similar only the numerology differs,
we need to expand coefficients of $\chi$ at order $p$ in
$t_{k+1}$ so that $p+1>d-2$ 
therefore
$FP|_{s=0} \frac{1}{\Gamma(s+1)} I(s,\chi) $ depends linearly 
on $(d-1)$-jets of the coefficients of $\chi$.

\subsubsection{Bounds on the Fourier transform of the singular term $S$.}
\label{ss:wfbounds}
In this part, the 
bosonic and fermionic cases are similar and therefore
we restrict to the former case for simplicity. To bound the wave front set of the meromorphic family of distributions $S(s;.)$,
using the above notations, 
we should study $S(s;\chi)$ where the test function part $\chi$ is chosen of the
form $\chi=\psi(x_1) e^{ix_1\xi_1}\dots \psi(x_{k+1}) e^{ix_{k+1}\xi_{k+1}} $
where $\psi$ has small support in the 
coordinate 
chart $U$ of $M$ and  
for large momenta $(\xi_1,\dots,\xi_{k+1})$ in some closed conic set $V\subset \mathbb{R}^{d(k+1)*}$ that we will later determine.
As usual for wave front bounds,
we are just localizing with the function $\psi$ and taking 
Fourier transforms.

After the change of variables of definition~\ref{d:blowup}, 
the exponential factor becomes 
\begin{eqnarray*}
\exp\left(ix(\xi_1+\dots+\xi_{k+1}   ) + i\sum_{e=2}^{k+1} \sum_{j=1}^{e-1} (t_j\dots t_{k+1})h_j \xi_e \right)\\
=\exp\left(ix(\xi_1+\dots+\xi_{k+1}   ) + i\sum_{j=1}^{k} (t_j\dots t_{k+1})h_j(\sum_{j+1 \leqslant e \leqslant k+1}  \xi_e) \right).
\end{eqnarray*}
Therefore, the term $A(t,x,h;\chi)$ has the explicit from $$A(t,x,h;\chi)=A(t,x,h;\psi^{\boxtimes k+1}) e^{\left(ix\sum_{j=1}^{k+1}\xi_j + i\sum_{e=2}^{k+1} \sum_{j=1}^{e-1} (t_j\dots t_{k+1})h_j \xi_e \right)}$$
so the interesting term in factor of $S$ that we should integrate by parts
w.r.t. $t_{k+1}$
reads
\begin{eqnarray*}
&&\vert\partial^p_{t_{k+1}}\int_{\mathbb{R}^{d(k+1)}} e^{\left(ix\sum_{j=1}^{k+1}\xi_j+ i\sum_{e=2}^{k+1} \sum_{j=1}^{e-1} (t_j\dots t_{k+1})h_j \xi_e \right)}A(t,x,h;\psi^{\boxtimes k+1})d^dxd^{kd}h\vert\\
&\leqslant & C (1+K\sum_{e=1}^{k+1}\vert \xi_e\vert)^p \sup_{1\leqslant j\leqslant p}\sup_{t\in [0,1]^{k+1}}\vert \widehat{\partial_t^jA}(t,\sum_{e=1}^{k+1}\xi_e,(t_j\dots t_{k+1})\sum_{e=j+1}^{k+1}\xi_e;\psi^{\boxtimes k+1}) \vert\\
&\leqslant & C_N(1+\vert \sum_{e=1}^{k+1} \xi_e\vert)^{-N}
\end{eqnarray*}
uniformly in $t\in [0,1]^{k+1}$ for all $N$ by \textbf{smoothness, in the $x$ variable, of $A$ and its derivatives $\partial^j_tA$}.
Assume that $(\xi_1,\dots,\xi_{k+1})$ belongs to some
closed conic set 
$V\subset \mathbb{R}^{d(k+1)*}$ which does not meet
the hyperplane
$\{ \sum_{i=1}^{k+1} \xi_i=0 \}$. 
Then there exists a constant $C$
such that for every 
$(\xi_1,\dots,\xi_{k+1})\in V$ satisfying $\sum_{i=1}^{k+1}\vert\xi_i\vert^2\geqslant R $, we have
$ \vert \sum_{i=1}^{k+1} \xi_i \vert\geqslant \varepsilon \left(1+\sum_{i=1}^{k+1}\vert\xi_i\vert\right)$.
Therefore, we obtain the estimate:
\begin{eqnarray*}
&&\vert\partial^p_{t_{k+1}}\int_{\mathbb{R}^{d(k+1)}} e^{\left(ix\sum_{j=1}^{k+1}\xi_j+ i\sum_{e=2}^{k+1} \sum_{j=1}^{e-1} (t_j\dots t_{k+1})h_j \xi_e \right)}A(t,x,h;\psi^{\boxtimes k+1})d^dxd^{kd}h\vert\\
&\leqslant & C_N\varepsilon^{-N}(1+ \sum_{e=1}^{k+1}\vert \xi_e\vert)^{-N}
\end{eqnarray*}
uniformly in $t\in [0,1]^{k+1}$. 
Finally, this means that
for any element $(x,\dots,x;\xi_1,\dots,\xi_{k+1})$ in the wave front set $ 
WF\left(FP|_{s=0}S(s,.) \right)$, we must have 
$\xi_1+\dots +\xi_{k+1}=0$. 
This implies the wave front set bound
$$\left(WF\left(FP|_{s=0}S(s,.)\right) \cap T_{d_{k+1}}^\bullet M^{k+1}\right)\subset N^*\left(d_{k+1}\subset M^{k+1} \right) $$
over the diagonal $d_{k+1}$. In the next paragraph, 
we will use these bounds on the Fourier transform of $S$ to estimate the wave front set
of the Schwartz kernel of the G\^ateaux differentials over the diagonal.

\subsubsection{Wave front bounds.}

The next step is to use the above methods to bound
the wave front set of the Schwartz kernels of the G\^ateaux differentials. 
An important application of the blow--up techniques is 
to estimate the wave front set of the extensions
$\mathcal{R}t_n\in \mathcal{D}^\prime(M^n)$ 
over the deep diagonal $d_n\subset M^n$ which is proved to be contained in the conormal bundle $
N^*\left(d_n\subset M^n  \right)$.

\begin{prop}\label{p:wfbound}
Following the notations from definitions (\ref{d:bosonsfermionszeta}) and (\ref{d:speczetadet}).
Let $M$ be a smooth, closed, compact Riemannian manifold of dimension $d$, 
$E\mapsto M$ some 
Hermitian bundle over $M$.

For every invertible generalized Laplacian  
$\Delta+V\in \mathcal{A}=\Delta+C^\infty(End(E))$ acting on $E$
s.t. $\sigma\left(\Delta \right)\subset \{ Re(z)\geqslant \delta>0 \}$,
set $\mathbf{G}\in \mathcal{D}^\prime(M\times M,E\boxtimes E^*)$ to be the Schwartz kernel of
$\left(\Delta+V\right)^{-1}$. The Schwartz kernel 
of the G\^ateaux differential of $\log\det_\zeta$ defined as 
$\mathcal{R}t_n=\frac{(-1)^{n-1}}{(n-1)!}[\mathbf{ D^n\log\det_\zeta(\Delta+V)}]$ is a distributional 
extension of
the product $$t_n=  \prod_{e\in E(G)} \mathbf{G}(x_{i(e)},x_{j(e)})  \in \mathcal{D}^\prime(M^n\setminus d_n,End(E^*)^{\boxtimes n}) $$
and
satisfies the wave front set bound 
\begin{equation}
\boxed{\left(WF\left(\mathcal{R}t_n\right) \cap T_{d_n}^\bullet M^n\right)\subset N^*\left(d_n\subset M^n \right).}
\end{equation}
\end{prop}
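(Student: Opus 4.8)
The plan is to reduce the wave front estimate for $\mathcal{R}t_n$ over the deepest diagonal $d_n$ to the Fourier-transform bounds on the singular part $S(s;\chi)$ established in subsection~\ref{ss:wfbounds}, and to handle the regular part $R$ separately. First I would recall that by Proposition~\ref{p:functionalder} and the analytic continuation argument of subsection~\ref{ss:factozeta}, the Schwartz kernel $\mathcal{R}t_n=\tfrac{(-1)^{n-1}}{(n-1)!}[\mathbf{D^n\log\det_\zeta(\Delta+V)}]$ coincides, away from $d_n$, with the product $t_n=\prod_{e\in E(G)}\mathbf{G}(x_{i(e)},x_{j(e)})$ of resolvent kernels around the polygon graph; there, the wave front is automatically contained in $N^*(d_n)$ union pieces away from $d_n$ (controlled by the pseudodifferential wave front bound for $\mathbf{G}$), so the only issue is the behaviour over the deep diagonal. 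By the analyticity of $\log\det_\zeta$ in $V$ and the fact that the wave front bound is a closed condition stable under the relevant limits, it suffices to prove the bound for small perturbations where $\sigma(\Delta+V)\subset\{\Re z\geqslant\delta>0\}$, i.e. exactly the setting of Lemma~\ref{l:funder1}.

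Next I would invoke the decomposition $I(s;V_1,\dots,V_{k+1})=S(s;V_1,\dots,V_{k+1})+R(s;V_1,\dots,V_{k+1})$ of Definition~\ref{d:decomp}. By Lemma~\ref{l:boundsoperatortraces}, the regular part $R$ is holomorphic in $s$ on all of $\mathbb{C}$ with kernel given by heat operators composed with a genuinely smoothing factor $e^{-\frac12\Delta}$; hence its Schwartz kernel on $M^{k+1}$ is smooth near $d_{k+1}$ and contributes nothing to the wave front set over the diagonal. So the wave front of $\mathcal{R}t_n$ over $d_n$ is the wave front of $FP|_{s=0}\,\Gamma(s+1)^{-1}S(s;\cdot)$, and I would apply the estimate obtained at the end of subsection~\ref{ss:wfbounds}: writing the test section locally as $\chi=\psi(x_1)e^{ix_1\xi_1}\cdots\psi(x_{k+1})e^{ix_{k+1}\xi_{k+1}}$, the blow-up change of variables $\beta$ of Definition~\ref{d:blowup} turns the singular product of heat kernels into something smooth in the $x$-variable after resolution, the phase collapses to $x(\xi_1+\dots+\xi_{k+1})$ plus terms involving the $h_j$ that are killed by the $h$-integration producing rapid decay in the partial sums of momenta, and one integrates by parts in $t_{k+1}$ as in subsection~\ref{ss:badcase} to pull out the polar part in $s$ while keeping everything controlled. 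The upshot is that $\widehat{FP|_{s=0}S(s,\cdot)}$ decays rapidly in any cone avoiding $\{\xi_1+\dots+\xi_{k+1}=0\}$, which is precisely the statement that $WF\cap T^\bullet_{d_n}M^n\subset N^*(d_n\subset M^n)$.

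The last point is to pass from the small-perturbation, flat-kernel setting ($\mathbf{G}=\Delta^{-1}$, $\sigma(\Delta)\subset\{\Re z\geqslant\delta\}$) to an arbitrary invertible $\Delta+V$. Here I would use that $D^n\log\det_\zeta(\Delta+V)$ is obtained from $D^n\log\det_\zeta(\Delta+V')$ by the same Volterra/Duhamel manipulation with $e^{-t(\Delta+V)}$ in place of $e^{-t\Delta}$: the only change in the local analysis of subsection~\ref{ss:blowupparagraph} is that $\tilde A$ now also absorbs the lower-order terms of $e^{-t(\Delta+V)}$, which remain in the heat calculus $\Psi_H^\bullet$, so the blow-up resolution and the integration-by-parts bound go through verbatim with $\mathbf{G}(x,y)$ the kernel of $(\Delta+V)^{-1}$. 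Alternatively, and more cheaply, one observes that the identity $\tfrac{(-1)^{n-1}}{(n-1)!}D^n\log\det_\zeta(\Delta+V,V_1,\dots,V_n)=\langle\mathcal{R}t_n,V_1\boxtimes\dots\boxtimes V_n\rangle$ with $\mathcal{R}t_n$ a distributional extension of the polygon product is preserved under analytic continuation in $V$ along the connected set of invertible elements, and the conormal wave front bound, being an open-cone condition that is closed under the continuation, propagates with it. The main obstacle, and the step that needs the most care, is controlling the interplay between the meromorphic parameter $s$ and the $h$-integration in the singular term: one must make sure that the order of integration by parts in $t_{k+1}$ (dictated by the numerology $p+1>d-2(k+1)$ for bosons, $p+1>d-2$ for fermions) is compatible with extracting rapid decay in $\sum_e\xi_e$ uniformly in $t\in[0,1]^{k+1}$, which is exactly where the smoothness in the $x$-variable of the resolved amplitude $A=\beta^*J$ and of all its $t$-derivatives is used.
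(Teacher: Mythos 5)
Your overall architecture matches the paper's: split into a singular piece treated by the blow-up Fourier bound from subsection~\ref{ss:wfbounds}, a "regular" piece with at least one large time, and an analytic-continuation step to pass from the small-perturbation case to general invertible $\Delta+V$. The last two ingredients (the $S$-part bound and the continuation in $V$ via Proposition~\ref{p:functionalder}) are handled essentially as in the paper. However, there is a genuine gap in your treatment of the regular part.

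You assert that because $R$ is holomorphic in $s$ and the composed operator contains a fixed smoothing factor $e^{-\frac12\Delta}$, "its Schwartz kernel on $M^{k+1}$ is smooth near $d_{k+1}$." This is false. In the domain of $R$ only the largest time $u_{k+1}$ is bounded below by $1$; the remaining $u_1,\dots,u_k$ may tend to $0$, so the corresponding factors $e^{-u_i\Delta}(x_{\sigma(i)},x_{\sigma(i+1)})$ degenerate into $\delta$-type singularities along the partial diagonals $\{x_{\sigma(i)}=x_{\sigma(i+1)}\}$, which meet $d_n$. For instance, with $n=3$ the $R$-kernel contains $\int e^{-u_3\Delta}(x_3,x_1)\,e^{-u_2\Delta}(x_1,x_2)\,e^{-u_1\Delta}(x_2,x_3)\,du$ with $u_1,u_2\downarrow 0$ allowed, and this is certainly not smooth on any neighbourhood of $d_3$. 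Smoothness of the trace-class operator does not translate into smoothness of the multilinear Schwartz kernel on configuration space: these are different objects.

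The correct mechanism — and the one the paper uses — is a product-of-conormal-distributions argument. One splits further according to which subset of the $u_i$ lie in $[0,1]$ (equivalently, the paper's sum over $k$ and permutations), observes that each family $\bigl(e^{-u_i\Delta}(x_i,x_{i+1})\bigr)_{u_i\in(0,1]}$ is bounded in $\mathcal{D}^\prime_{N^*(d_{\{i,i+1\}}\subset M^n)}(M^n)$, and then invokes the hypocontinuity of the H\"ormander product on distributions with fixed wave front set (\cite[Thm 6.1]{DangBrouderHelein}) to conclude that the product of the small-time factors is bounded in $\mathcal{D}^\prime_\Gamma(M^n)$ with $\Gamma=\bigcup_I N^*(d_I\subset M^n)$ a finite union of conormals to partial diagonals; the remaining large-time factors are genuinely smooth and do not enlarge $\Gamma$. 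One then checks by a small linear-algebra computation that $\Gamma\cap T^*_{d_n}M^n\subset N^*(d_n\subset M^n)$. Your proposal skips this step entirely and replaces it by an incorrect smoothness claim, so as written the argument for the regular part does not go through.
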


\begin{proof}
We need to show that the distribution $\mathcal{R}t_n$ defined as
$$\left\langle \mathcal{R}t_n, V_1\boxtimes \dots\boxtimes V_n \right\rangle=\frac{(-1)^{n}}{(n-1)!}
\frac{d}{ds}|_{s=0}D^nTr\left(\left(\Delta+V\right)^{-s}\right)(V_1,\dots,V_n)   $$
satisfies the wave front bound
$
WF(\mathcal{R}t_n)\cap T_{d_n}^\bullet M^n\subset N^*\left( d_n\subset M^n \right)$.
In fact the proof reduces to the scalar case and we assume without loss
of generality that we work with a scalar generalized Laplacian $\Delta$ acting on functions and the potential
$V\in C^\infty(M,\mathbb{C})$.
We start from the expression
{\small
\begin{eqnarray*}
\frac{1}{\Gamma(s+1)} \int_{[0,\infty)^n} \prod_{e=1}^{n} du_e(u_1+\dots+u_n)^s\left( 
\int_{M^n}e^{-u_1\Delta}(x_1,x_2) \dots e^{-u_n\Delta}(x_n,x_1)\chi(x_1,\dots,x_n) dv_n\right) , 
\end{eqnarray*}
}
where $dv_n$ is the volume form on $M^n$.
We work on a local chart $U^{n}$
where we choose the test section $\chi$ to be equal 
to $\chi=\psi(x_1) e^{ix_1\xi_1}\dots \psi(x_n) e^{ix_n\xi_n} $ 
where $\psi\in C^\infty_c(U)$ is supported on some chart $U$.
There is a competition between:
\begin{enumerate} 
\item integration of heat kernels on $[1,+\infty)$ 
which yields smoothing operators in the sense the family
$\left(e^{-u\Delta}(x,y)\right)_{u\in [1,+\infty)}$
is bounded in $C^\infty(M\times M)$ since $e^{-u\Delta}=e^{-\frac{1}{4}\Delta}\underset{\text{bounded}}{\underbrace{e^{-(u-\frac{1}{2})\Delta}}}e^{-\frac{1}{4}\Delta}$
where the term in the middle is uniformly bounded 
in $\mathcal{B}(L^2,L^2)$ by spectral assumption and 
both factors $e^{-\frac{1}{4}\Delta}$ on the left and right
are smoothing operators
in $(x,y)$ variable,
\item integration on $[0,1]$ which yields singular distributions
whose wave front set is conormal in the sense
that the family $
\left(e^{-u\Delta}(x,y)\right)_{u\in (0,1]}$
is a bounded family of distributions 
in $\mathcal{D}^\prime_{N^*\left(d_2\subset M^2 \right)}(M\times M)$~\footnote{in the sense of the seminorms in~\cite[p.~204]{DangBrouderHelein}}
which is the space of distributions whose wave front set is 
contained in the conormal bundle
$N^*\left(d_2\subset M^2 \right)$.
\end{enumerate}
Introduce a first decomposition where we sum over permutations
$S_n$ of $\{1,\dots,n\}$ in the second sum:
\begin{eqnarray*}
&&\int_{[0,\infty)^n} (u_1+\dots+u_n)^s 
\int_{M^n}\left\langle e^{-u_1\Delta} \dots e^{-u_n\Delta},\chi\right\rangle  \prod_{e=1}^{n} du_e\\
&=&\sum_{k=0}^n \frac{1}{n!}\sum_{\sigma\in S_n} \int_{[0,1]^k\times [1,+\infty)^{n-k} } (u_1+\dots+u_n)^s 
\int_{M^n}\left\langle e^{-u_{\sigma(1)}\Delta} \dots e^{-u_{\sigma(n)}\Delta},\chi\right\rangle  \prod_{e=1}^{n} du_e
\end{eqnarray*}
Without loss of generality, we only treat the terms corresponding to the
identity permutation of $S_n$.
When $k<n$, 
we use the hypocontinuity of the product of distributions whose wave front set
is fixed~\cite[Thm 6.1 p.~219]{DangBrouderHelein}. From the fact that the family $e^{-u_i\Delta}(x_i,x_{i+1})$, viewed as distribution on $M^n$, 
is bounded in $\mathcal{D}^\prime_{N^*\left(d_{\{i,i+1\}} \subset M^n \right) }(M^n)
$ where $N^*\left(d_{\{i,i+1\}} \subset M^n \right)$ 
is the conormal of the diagonal $d_{\{i,i+1\}}=\{x_i=x_{i+1}\}\subset M^n $,
we note that the distributional product
$\left(e^{-u_1\Delta}(x_1,x_2)\dots e^{-u_k\Delta}(x_k,x_{k+1})\right)_{(u_1,\dots,u_k)\in [0,1]^k}$ 
is bounded in $\mathcal{D}^\prime_\Gamma(M^n)$ 
for
$\Gamma=\bigcup_{I} N^*\left(d_I\subset M^n \right) \subset T^\bullet M^n$, 
where the union runs over the sets $I=\{i,\dots,j\}$, where $\{i,\dots,j\}$ contains the arithmetic progression from $i$ to $j$,
for $1\leqslant i<j\leqslant k $.
Then it follows immediately that
for $k<n$,
$$WF\left( \int_{[0,1]^k\times [1,+\infty)^{n-k} } (u_1+\dots+u_n)^s e^{-u_{1}\Delta} \dots e^{-u_{n}\Delta}  \prod_{e=1}^{n} du_e\right)\cap T^*_{d_n}M^n \subset N^*\left( d_n\subset M^n\right).$$

 For the term
where $k=n$, the result follows simply from the bounds on the Fourier transform of the singular
term $\left(WF(FP|_{s=0}S(s,.))\cap T_{d_n}^*M^n \right)\subset N^*\left(d_n\subset M^n \right)$
from paragraph~\ref{ss:wfbounds}. Gathering both cases yields the claim from the Proposition.
\end{proof}

\section{Proof of Theorem~\ref{t:quillenconjanal}.}

Equation~\ref{e:repsformula} shows that zeta regularized determinants,
defined by purely spectral conditions,
admit a position space representation in terms of Feynman amplitudes and 
that zeta determinants 
are just a particular case of some infinite dimensional family of renormalized determinants 
obtained by subtraction of local counterterms.

As above, we give the proof for bosons since the fermion case is similar and presents no extra difficulties. 

\subsubsection{Any element of the form $\mathcal{R}\det=e^{\text{Polynomial}}\det_\zeta$ solves Problem~\ref{d:renormdet1}.}
\label{ss:renormgroupzeta}

Assume $ \mathcal{R}\det(\Delta+V)=e^{Q(V)}\det_\zeta(\Delta+V)$ for some $Q\in \mathcal{O}_{loc,[\frac{d}{2}]}\left(J^mEnd(E) \right)$.
The zeta determinants from definition \ref{d:speczetadet} 
are solutions of problem~\ref{d:renormdet1} by Theorem~\ref{t:quillenconjzeta} and
Proposition~\ref{p:functionalder} where we found the second 
G\^ateaux differentials of $\det_\zeta$ 
to be equal 
to
\begin{eqnarray*}
D^2\log\text{det}_\zeta(\Delta+V,V_1,V_2)=Tr_{L^2}((\Delta+V)^{-1}V_1(\Delta+V)^{-1}V_2) 
\end{eqnarray*}
when $(V_1,V_2)\in C^\infty(End(E))^2$
have disjoint supports and $\sigma(\Delta+V)\subset\{Re(z)\geqslant \delta>0\} $. Therefore,
since $Q\in \mathcal{O}_{loc,[\frac{d}{2}]}\left(J^mEnd(E) \right)$ is a local polynomial functional
of degree $[\frac{d}{2}]$, the map 
$V\mapsto \mathcal{R}\det(\Delta+V)=\exp\left(Q(V) \right) \det_\zeta(\Delta+V)$ 
satisfies
$D^2\log\mathcal{R}\det(\Delta+V,V_1,V_2)= D^2\log\det_\zeta(\Delta+V,V_1,V_2)  $
where $D^2Q(V,V_1,V_2)=0$ since $(V_1,V_2)$ have disjoint supports and $Q$ is local~\cite[Prop V.5 p.~16]{brouder2018properties}.
This means 
\begin{eqnarray*}
D^2\log\mathcal{R}\det(\Delta+V,V_1,V_2)&=& D^2\log\text{det}_\zeta(\Delta+V,V_1,V_2)\\
=Tr_{L^2}((\Delta+V)^{-1}V_1(\Delta+V)^{-1}V_2) &=& \int_{M\times M} t_{2}(x_1,x_2)V(x_1)V(x_2)dv(x_1)dv(x_2).  
\end{eqnarray*}
The wave front bound from Proposition~\ref{p:wfbound} 
shows that the Schwartz kernel 
$$[\mathbf{D^2\log\mathcal{R}\det(\Delta+V)}]=[\mathbf{D^2\log\text{det}_\zeta(\Delta+V)}]+[\mathbf{D^2Q(V)}]$$
is a distribution $\mathcal{R}t_{2}\in \mathcal{D}^\prime(M\times M)$ satisfying 
$WF\left(\mathcal{R}t_{2}\right)\cap T_{d_2}^\bullet M^2\subset N^*\left(d_2\subset M^2\right) $
where we used the fact that $WF\left([\mathbf{D^2Q(V)}] \right)\subset N^*\left(d_2\subset M^2\right)$ by~\cite[Lemma VI.9 p.19~]{brouder2018properties}.
For the moment, we found
$\mathcal{R}\det$   
solves the equations~(\ref{e:secderivative}) and~(\ref{e:wfsecderivative}) and equation~(\ref{e:constraint}) is easily satisfied
by the factorization formula $\mathcal{R}\det(\Delta+V)=\det_\zeta\left(\Delta+V \right)e^{Q(V)}=e^{(P+Q)(V)}\det_{[\frac{d}{2}]+1}\left(Id+\Delta^{-1}V \right)$, $\deg(P+Q)\leqslant [\frac{d}{2}] $ and the properties of $\det_p$.
The last step is to use the factorization 
formula $\det_\zeta\left(\Delta+V \right)=e^{Q(V)}\det_{[\frac{d}{2}]+1}\left(Id+\Delta^{-1}V \right) $ from 
the previous section and the bound
$$\vert\text{det}_{[\frac{d}{2}]+1}\left(Id+\Delta^{-1}V \right)\vert\leqslant e^{K_1\Vert \Delta^{-1}V \Vert_{[\frac{d}{2}]+1}^{[\frac{d}{2}]+1}} \leqslant 
e^{K_1\left(\Vert \Delta^{-1}\Vert_{[\frac{d}{2}]+1}  \Vert V\Vert_{C^0}\right)^{[\frac{d}{2}]+1}} $$
which results from~\cite[b) Thm 9.2 p.~75]{Simon-traceideals} for the norm $\Vert .\Vert_{[\frac{d}{2}]+1}$ in the 
Schatten ideal $\mathcal{I}_{[\frac{d}{2}]+1}$, the fact that $\Delta^{-1}\in \Psi^{-2}(M,E)$ belongs to 
$\mathcal{I}_{[\frac{d}{2}]+1}$ since $\Delta^{-[\frac{d}{2}]-1}\in \mathcal{I}_1$~\cite[Prop B.21]{DyZwscatt} and H\"older's inequality
$\Vert \Delta^{-1}V \Vert_{[\frac{d}{2}]+1}\leqslant \Vert \Delta^{-1}\Vert_{[\frac{d}{2}]+1}  \Vert V\Vert_{C^0}$.
From the above facts, we deduce the 
bound:
$$ \vert \mathcal{R}\det\left(\Delta+V \right) \vert\leqslant \vert \text{det}_\zeta\left(\Delta+V \right)  \vert\vert e^{P(V)}\vert\leqslant 
 Ce^{K\Vert V\Vert_{C^m}^{[\frac{d}{2}]+1} } $$
for some $C,K>0$ independent of $V$ which proves the bound~(\ref{e:boundrdet}). 
Finally, 
$\mathcal{R}\det$ solves 
problem~\ref{d:renormdet1}.    
\subsubsection{Any renormalized determinant is of the form $e^{\text{Polynomial}}\det_\zeta$.}
%
Let $\mathcal{R}\det$ be any other solution of problem~\ref{d:renormdet1}, then for every $V$,
the entire functions 
$z\mapsto \mathcal{R}\det\left( \Delta+zV \right)$ and $z\mapsto \det_\zeta\left(\Delta+zV \right)$ have the same  
divisor (which means same zeros with multiplicities). It follows that the ratio
$ z\mapsto \frac{\mathcal{R}\det\left( \Delta+zV \right)}{\det_\zeta\left(\Delta+zV \right)} $ is an entire function
\textbf{without zeros} on $\mathbb{C}$ which satisfies the bound
\begin{eqnarray*}
\vert \frac{\mathcal{R}\det\left( \Delta+zV \right)}{\det_\zeta\left(\Delta+zV \right)} \vert\leqslant Ce^{K\vert z\vert^{[\frac{d}{2}]+1}\Vert V\Vert_{C^m}^{[\frac{d}{2}]+1} }, m=d-3.
\end{eqnarray*}
By the uniqueness part of Hadamard's Theorem~\ref{t:hadamard}, this implies that for every fixed $V$, 
$ z\mapsto \frac{\mathcal{R}\det\left( \Delta+zV \right)}{\det_\zeta\left(\Delta+zV \right)} =e^{P(z;V)} $ 
where $P$ is a polynomial of degree $[\frac{d}{2}]+1$ in $z$. 
We already know the map $V\mapsto \log\mathcal{R}\det\left( \Delta+V \right)-\log\det_\zeta\left(\Delta+V \right)$ is analytic near $V=0$
hence locally bounded near $V=0$ and also the above shows that
for every fixed 
$V$, $z\mapsto \log\mathcal{R}\det\left( \Delta+zV \right)-\log\det_\zeta\left(\Delta+zV \right)$
is a polynomial of degree $[\frac{d}{2}]+1$ in $z$. 
By proposition~\ref{p:douadyweakstrongholo}, this implies that 
the difference $\log\mathcal{R}\det\left( \Delta+V \right)-\log\det_\zeta\left(\Delta+V \right)=P(V)$ where $P$ is actually a
\textbf{continuous polynomial function} 
in $V$ of degree $[\frac{d}{2}]+1$. 
But condition~\ref{e:constraint} imposes the derivatives $(\frac{d}{dz})^{[\frac{d}{2}]+1} \log\mathcal{R}\det(\Delta+zV)$ 
and $(\frac{d}{dz})^{[\frac{d}{2}]+1}\log\det_\zeta\left(\Delta+zV \right)$
to coincide at $z=0$ hence $P$ has degree $[\frac{d}{2}]$.
It remains to show that
$P$ is local. 
The fact that both $\log\mathcal{R}\det\left( \Delta+V \right)$ and $\log\det_\zeta\left(\Delta+V \right)$ are solutions
of functional equation~\ref{e:secderivative} implies that
$D^2P(V,V_1,V_2)=0 $ if $\text{supp}(V_1)\cap \text{supp}(V_2)=\emptyset$.
Observe that
$$V\mapsto [\mathbf{ D^2\log\left( \frac{\mathcal{R}\det\left(\Delta+V \right)}{\det_\zeta\left(\Delta+V \right)  }\right)}]=[\mathbf{D^2P(V)}]
\in \mathcal{D}^\prime(M\times M)$$
is polynomial in $V$ valued in distributions on
$M\times M$ with wave front set in $N^*\left(d_2\subset M^2 \right)$.
To extract the homogeneous part , we use 
the finite difference operator $\Delta_V$ 
defined in the proof of~\ref{p:douadyweakstrongholo}, 
the element
$\frac{\Delta_{V}^{n-2}[\mathbf{D^2P(V)}]}{(n-2)!}=[\mathbf{P_n(V,\dots,V,.,.)}]\in \mathcal{D}^\prime(M\times M)$
has also wave front set in $N^*\left(d_2\subset M^2 \right)$ thus
$V\mapsto  [\mathbf{P_n(V,\dots,V,.,.)}]$
satisfies the assumptions of lemma~\ref{l:localityappendix} 
proved in appendix.
Therefore 
$V\mapsto \left\langle \tilde{P}_n , V^{\boxtimes n}\right\rangle$ is a local functional which 
equals $ \int_M \Lambda_n(V(x))dv(x) $
where $\Lambda_n(V(x)) $ depends on the $m$--jets of $V$ at $x$ for $m=d-3$
and is homogeneous of degree $n$ in $V$. 
It is important to stress that the function $\Lambda_n$ is not uniquely defined~\footnote{Only up to boundary terms} 
but the \textbf{functional} $V\mapsto \int_M \Lambda_n(V(x))dv(x) $ is uniquely defined. 
Then locality of $P$ together with the representation formula for $\det_\zeta$ 
from Theorem~\ref{t:quillenconjzeta}
implies that any solution of problem~\ref{d:renormdet1} has the form given by 
equation~\ref{e:repsformula}.
The infinite product representation is an easy consequence of
the representation of Gohberg--Krein's's determinants $\det_p$ as infinite products.  

To complete the proof of our Theorem, it 
remains to show that any $\mathcal{R}\det$ solution of Problem~\ref{d:renormdet1}
is obtained by a renormalization with subtraction 
of local counterterms in the sense of the third property in~\ref{t:quillenconjanal} which is the goal of the next section.

\section{Local renormalization and Theorem~\ref{t:mainthm6GFFreps} on Gaussian Free Field representation.}

We follow the notations from subsection~\ref{s:subtractloc}
where we explained the notion of subtraction of local counterterms.
The aim of this section is to show the third claim of Theorem \ref{t:quillenconjanal}. Namely 
that
all $\mathcal{R}\det$ solutions from
problem~\ref{d:renormdet1} are obtained from renormalization by subtraction
of \textbf{local counterterms} which concludes the proof of Theorem~\ref{t:quillenconjanal}: 
there exists a generalized Laplacian $\Delta$ with heat operator
$e^{-t\Delta}$ and a family $Q_\varepsilon\in \mathcal{O}_{loc,[\frac{d}{p}]}\left(J^mHom(E_+,E_-) \right)\otimes_\mathbb{C} \mathbb{C}[\varepsilon^{-\frac{1}{2}},\log(\varepsilon)]$ such that:
\begin{eqnarray}
\mathcal{V}\mapsto \mathcal{R}\det\left(P+\mathcal{V} \right)=\lim_{\varepsilon\rightarrow 0^+}\exp\left(Q_\varepsilon(\mathcal{V})\right) \text{det}_F\left(Id+e^{-2\varepsilon\Delta}P^{-1} \mathcal{V} \right).
\end{eqnarray}

\subsection{Extracting singular parts.}

In this subsection, we shall use the methods of~\cite{dangzhang}
based on blow--ups
to extract the singular parts of regularized traces $Tr_{L^2}\left(\left( e^{-2\varepsilon\Delta}\Delta^{-1}
V\right)^n \right)$ to show:
\begin{lemm}\label{l:singulartermsfeynman}
In the bosonic case, for 
every
$V\in C^\infty(M,End(E))$, we have an asymptotic expansion
\begin{eqnarray*}
Tr_{L^2}\left(\left( e^{-2\varepsilon\Delta}\Delta^{-1}
V\right)^{k+1} \right)  =
P_\varepsilon(V)+\mathcal{O}(1) 
\end{eqnarray*}
where $P_\varepsilon(V)=\int_M\Lambda_\varepsilon(V)dv\in \mathcal{O}_{loc,[\frac{d}{2}]}\left(J^mEnd(E) \right)\otimes_{\mathbb{C}}\mathbb{C}[\varepsilon^{-\frac{1}{2}},\log(\varepsilon)]$ and $m=d-3$
and 
\begin{eqnarray*}
(V_1,\dots,V_{k+1})\in C^\infty(End(E))^{k+1} &\mapsto &FP|_{\varepsilon=0}
Tr_{L^2}\left( e^{-2\varepsilon\Delta}\Delta^{-1}
V_1\dots e^{-2\varepsilon\Delta}\Delta^{-1}
V_{k+1} \right)\\
&=&\lim_{\varepsilon\rightarrow 0^+} Tr_{L^2}\left(\left( e^{-2\varepsilon\Delta}\Delta^{-1}
V\right)^{k+1} \right)  -
P_\varepsilon(V)
\end{eqnarray*}
is well--defined and multilinear 
continuous.

For fermions, for every $A\in C^\infty(Hom(E_+,E_-))$
, we have an asymptotic expansion
\begin{eqnarray*}
Tr_{L^2}\left(\left( e^{-2\varepsilon\Delta}\Delta^{-1}
D^*A\right)^{k+1} \right)  =
P_\varepsilon(A)+\mathcal{O}(1) 
\end{eqnarray*}
where $P_\varepsilon(A)=\int_M\Lambda_\varepsilon(A)dv\in \mathcal{O}_{loc,d}\left(J^mEnd(E) \right)\otimes_{\mathbb{C}}\mathbb{C}[\varepsilon^{-\frac{1}{2}},\log(\varepsilon)]$ and $m=d-1$
and 
\begin{eqnarray*}
(A_1,\dots,A_{k+1})\in C^\infty(Hom(E_+,E_-))^{k+1} &\mapsto &FP|_{\varepsilon=0}
Tr_{L^2}\left( e^{-2\varepsilon\Delta}\Delta^{-1}
D^*A_1\dots e^{-2\varepsilon\Delta}\Delta^{-1}
D^*A_{k+1} \right)\\
&=&\lim_{\varepsilon\rightarrow 0^+} Tr_{L^2}\left(\left( e^{-2\varepsilon\Delta}\Delta^{-1}
D^*A\right)^{k+1} \right)  -
P_\varepsilon(A)
\end{eqnarray*}
is well--defined and multilinear 
continuous.
\end{lemm}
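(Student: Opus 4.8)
\textbf{Proof proposal for Lemma~\ref{l:singulartermsfeynman}.}

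The plan is to reduce the regularized trace to an integral over configuration space $M^{k+1}$ of a product of heat kernels, then to resolve the singularities of that product by the same blow-up machinery used in Section~\ref{s:wfbound}. First I would write, using the Laplace transform representation $\Delta^{-1}=\int_0^\infty e^{-u\Delta}\,du$ established via Proposition~\ref{p:heatcomplex} together with the mollifier $e^{-2\varepsilon\Delta}$, the identity
\begin{eqnarray*}
Tr_{L^2}\left(\left(e^{-2\varepsilon\Delta}\Delta^{-1}V\right)^{k+1}\right)=\int_{[0,\infty)^{k+1}}Tr_{L^2}\left(e^{-(u_{k+1}+2\varepsilon)\Delta}V\cdots e^{-(u_1+2\varepsilon)\Delta}V\right)\prod_{e=1}^{k+1}du_e,
\end{eqnarray*}
and then substitute $v_e=u_e+2\varepsilon$ so that the integral runs over $[2\varepsilon,\infty)^{k+1}$ with the same integrand as in Definition~\ref{d:decomp} (with $s=0$), i.e. we are looking at the quantity $I(0;V,\dots,V)$ but with a cutoff at scale $2\varepsilon$ instead of at scale $0$. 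Splitting at $v_{k+1}=1$ exactly as in Definition~\ref{d:decomp}, the regular piece $R$ (where $v_{k+1}\geqslant 1$) converges absolutely and contributes only to the $\mathcal{O}(1)$ term by the exponential-decay bounds of Lemma~\ref{l:boundsoperatortraces}; all $\varepsilon$-singularities come from the region $2\varepsilon\leqslant v_{k+1}\leqslant 1$.

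Next I would treat the singular piece $S$ on $\{2\varepsilon\leqslant v_{k+1}\leqslant\dots\}$ with the blow-up map $\beta$ of Definition~\ref{d:blowup}. Pulling back through $\beta$, the condition $v_{k+1}\geqslant 2\varepsilon$ becomes $t_{k+1}\geqslant\sqrt{2\varepsilon}$ (since $v_{k+1}=t_{k+1}^2$), and we arrive at an integral of the form
\begin{eqnarray*}
S_\varepsilon(\chi)=\int_{\sqrt{2\varepsilon}}^1\!\!\int_{[0,1]^k}P(t_1,\dots,t_k)\left(\int_{\mathbb{R}^{d(k+1)}}A(t,x,h;\chi)\,d^dx\,d^{kd}h\right)t_{k+1}^{2k+1-d}\,dt_1\cdots dt_{k+1}
\end{eqnarray*}
with $\chi=V^{\boxtimes k+1}$ and $A=\beta^*J$ smooth, where for fermions the exponent becomes $t_{k+1}^{k-d}$. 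Now the only source of divergence as $\varepsilon\to 0^+$ is the lower endpoint $t_{k+1}=\sqrt{2\varepsilon}$ of the $t_{k+1}$ integral when $2k+1-d\leqslant -1$, i.e. when $k\leqslant\frac{d}{2}$ (resp. $k\leqslant d$ for fermions). I would Taylor-expand $A$ in $t_{k+1}$ to the order $p$ with $(p+1)+2k+1-d>-1$ exactly as in Paragraph~\ref{ss:badcase}; each term $\int_{\mathbb{R}^{d(k+1)}}\partial^j_{t_{k+1}}A(t,x,h;\chi)|_{t_{k+1}=0}\,d^dx\,d^{kd}h$ produces, upon integrating $\int_{\sqrt{2\varepsilon}}^1 t_{k+1}^{j+2k+1-d}\,dt_{k+1}$, a term of the form (constant)$\cdot\varepsilon^{(j+2k+2-d)/2}$ if $j+2k+2-d\neq 0$ and (constant)$\cdot\log\varepsilon$ otherwise, with coefficient an integral against $\partial^j_{t_{k+1}}A|_{t_{k+1}=0}$. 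By the lemma at the end of Paragraph~\ref{ss:blowupparagraph}, $\partial^p_{t_{k+1}}A|_{t_{k+1}=0}$ depends linearly on $p$-jets of the coefficients of $\chi$ along the diagonal $d_{k+1}$, so each such coefficient is a local functional of $V$; collecting them gives $P_\varepsilon(V)=\int_M\Lambda_\varepsilon(V)\,dv$ with $\Lambda_\varepsilon$ polynomial in $\varepsilon^{-1/2},\log\varepsilon$ and depending on $(d-3)$-jets of $V$ (resp. $(d-1)$-jets of $A$, homogeneous of degree $k+1$), which is the claimed membership in $\mathcal{O}_{loc,[d/p]}(J^mHom(E,F))\otimes_{\mathbb{C}}\mathbb{C}[\varepsilon^{-1/2},\log\varepsilon]$. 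Subtracting $P_\varepsilon$ leaves the remainder $t_{k+1}^{p+1+2k+1-d}\times(\text{smooth, bounded})$, which is integrable down to $0$ uniformly, so the limit $\lim_{\varepsilon\to 0^+}\big(Tr_{L^2}((e^{-2\varepsilon\Delta}\Delta^{-1}V)^{k+1})-P_\varepsilon(V)\big)$ exists, and by the same smoothness-in-$x$ and polarization argument the multilinear version $(V_1,\dots,V_{k+1})\mapsto FP|_{\varepsilon=0}Tr_{L^2}(\cdots)$ is multilinear continuous.

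The main obstacle I anticipate is bookkeeping rather than conceptual: one must check that the blow-up $\beta$ simultaneously resolves \emph{both} the coincidence singularities of the Green kernels \emph{and} interacts correctly with the $\varepsilon$-cutoff $v_{k+1}\geqslant 2\varepsilon$, so that the cutoff pulls back to the single clean condition $t_{k+1}\geqslant\sqrt{2\varepsilon}$ — this is why the mollifier $e^{-2\varepsilon\Delta}$ (rather than, say, $e^{-\varepsilon\Delta}$ or a sharp cutoff) is the natural choice, and it is consistent with the projective coordinate $t_{k+1}$ being the ``total scale'' variable. A secondary subtlety is verifying that the pieces of the permutation sum $\sum_{\sigma\in S_{k+1}}$ and the regular region $R$ genuinely contribute only to $\mathcal{O}(1)$ and not to $P_\varepsilon$; this follows from Lemma~\ref{l:boundsoperatortraces} (bosonic case) and Lemma~\ref{l:decayheatdirac} together with the $\Gamma$-function estimate (fermionic case), since on $v_{k+1}\geqslant 1$ the integrand already has exponential decay and no $\varepsilon$-dependence survives after taking $\varepsilon\to 0^+$. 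Finally, the uniqueness of the functional $V\mapsto\int_M\Lambda_\varepsilon(V)\,dv$ (though not of $\Lambda_\varepsilon$ itself, only up to boundary terms) is exactly as in the corresponding remark in Section~\ref{s:wfbound}, so no new argument is needed there.
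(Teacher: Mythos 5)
Your reduction to a cutoff heat--time integral, the splitting off of the large--time (regular) part as $\mathcal{O}(1)$, and the identification of the counterterm coefficients with jets of $V$ via the Taylor expansion of $A=\beta^*J$ in $t_{k+1}$ all match the paper's strategy. The gap is in the single step where you claim that, after the blow--up of definition~\ref{d:blowup}, the $\varepsilon$--cutoff ``pulls back to the single clean condition $t_{k+1}\geqslant \sqrt{2\varepsilon}$''. Since every factor carries the mollifier $e^{-2\varepsilon\Delta}$, every heat time satisfies $v_e\geqslant 2\varepsilon$; in the Hepp sector $v_1\leqslant\dots\leqslant v_{k+1}$ this is the constraint on the \emph{smallest} time, $v_1\geqslant 2\varepsilon$, and in the blow--up coordinates $v_l=\prod_{j=l}^{k+1}t_j^2$ one has $v_1=(t_1\cdots t_{k+1})^2$, so the cutoff pulls back to the semialgebraic region $\Omega_\varepsilon=\{(t_1\cdots t_{k+1})^2\geqslant 2\varepsilon\}\cap[0,1]^{k+1}$, not to $\{t_{k+1}\geqslant\sqrt{2\varepsilon}\}$ (your identification $v_{k+1}=t_{k+1}^2$ is correct but it is the largest time, which is not the binding constraint; $t_{k+1}$ is not the ``total scale'' in the relevant sense). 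The discrepancy is not harmless: the difference between the two domains lies in the corner region where some of the $t_1,\dots,t_k$ are small while $t_{k+1}\geqslant\sqrt{2\varepsilon}$, and there the integrand is of size $t_{k+1}^{2k+1-d}$, so this difference is itself divergent as $\varepsilon\to 0^+$ when $2k+1-d$ is sufficiently negative; it cannot be absorbed into the $\mathcal{O}(1)$ term.

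Consequently the extraction of the singular part cannot be done by the one--dimensional endpoint integral $\int_{\sqrt{2\varepsilon}}^{1}t_{k+1}^{j+2k+1-d}\,dt_{k+1}$, which would only ever produce half--integer powers of $\varepsilon$ and a first power of $\log\varepsilon$. The true asymptotics of $\int_{\Omega_\varepsilon}\omega_j$ is governed by the function $F(t)=(t_1\cdots t_{k+1})^2$, which vanishes to order $2$ on \emph{every} boundary face of the cube; this is exactly why the paper slices $\Omega_\varepsilon$ by the fibers of $F$ (Gelfand--Leray forms $\omega_j/dF$) and invokes the pushforward theorem of Melrose/Jeanquartier (Lemma~\ref{t:jeanquartier}), obtaining expansions $\sum_{p,q}\varepsilon^{p}\log(\varepsilon)^{q}\,b_{p,q}$ with $p\in\tfrac{\mathbb{Z}}{2}$ and $\log$--powers $q$ up to the number of variables, with coefficients given by distributions supported on $\{F=0\}$ (hence local, by the jet lemma you cite). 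This pushforward argument is the essential technical ingredient missing from your proposal; without it (or an equivalent re--derivation of polyhomogeneous asymptotics for $\int_{\{F\geqslant\varepsilon\}}$), the claimed form of $P_\varepsilon$ and the existence of the finite part are not established.
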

Note that
in the previous Lemma, the functionals $P_\varepsilon$ depend only 
on the $m$-jets of their argument.
\begin{proof}
We prove the claim only for 
bosons, the fermionic case
is similar. 
In this lemma, we shall use the following notation, for two functions $a(\varepsilon),b(\varepsilon)$, we shall note
$a\simeq b$ if $b-a=\mathcal{O}(1)$ when $\varepsilon\rightarrow 0^+$. This means that $a,b$ have the same singular parts as $\varepsilon
$ approaches $0$.
We start from the identity:
\begin{eqnarray*}
Tr_{L^2}\left( e^{-2\varepsilon\Delta}\Delta^{-1}
V_1\dots e^{-2\varepsilon\Delta}\Delta^{-1}
V_{k+1} \right)&\simeq &\int_{[\varepsilon,1]^{k+1}} Tr_{L^2}\left( e^{-u_1\Delta}V_1
\dots e^{-u_{k+1}\Delta}V_{k+1}
 \right)du_1\dots du_{k+1} 
\end{eqnarray*} 
as a direct consequence 
of $e^{-2\varepsilon\Delta}\Delta^{-1}=\int_{2\varepsilon}^\infty e^{-t\Delta}dt$
and 
since the operator valued integral $\int_1^\infty e^{-t\Delta} dt \in \Psi^{-\infty} $ is smoothing.
Now without loss of generality and using the symmetry of the integral, we may assume that we 
work in the Hepp sector 
$\{\varepsilon\leqslant u_1<\dots <u_{k+1}\leqslant 1\}$ which is a semialgebraic subset
of the unit simplex $\mathbf{\Delta}_{k+1}=\{0\leqslant u_1\leqslant \dots\leqslant u_{k+1}\leqslant 1\}$.
So we need to study the asymptotics when $\varepsilon\rightarrow 0^+$ of 
$$  (k+1)! \int_{ \{\varepsilon\leqslant u_1<\dots <u_{k+1}\leqslant 1\} }
Tr_{L^2}\left( e^{-u_1\Delta}V_1\dots e^{-u_{k+1}\Delta}V_{k+1}
 \right)du_1\dots du_{k+1} .$$
Setting $\chi=V_1\boxtimes \dots \boxtimes V_{k+1}\in C^\infty(M^{k+1},End(E)^{k+1})$ 
and using the notations and conventions 
from paragraphs~\ref{ss:integparagraph} 
and~\ref{ss:blowupparagraph},
the blow--up from definition \ref{d:blowup}
yields a blow--down map
\begin{eqnarray*}
\beta: (x,h_1,\dots,h_{k},t_1,\dots,t_{k+1})&\longmapsto &((x_1=x,x_i=x+\sum_{j=1}^{i-1}(t_{j}\dots t_{k+1}) h_j)_{i=2}^{k+1},(u_l=\prod_{j=l}^{k+1} t_j^2)_{l=1}^{k+1})\\
U\times \mathbb{R}^{kd}\times \Omega_\varepsilon  &\longmapsto &U^{k+1}\times \{\varepsilon\leqslant u_1<\dots <u_{k+1}\leqslant 1\} 
\end{eqnarray*}
where $\Omega_\varepsilon$ is
the \textbf{semialgebraic set} defined by
$\Omega_\varepsilon=\{ \varepsilon\leqslant (t_1\dots t_{k+1})^2\}\cap [0,1]^{k+1}$.
Now, following the calculations 
of paragraph~\ref{ss:badcase} we set:
\begin{eqnarray*}
\omega(\chi)=\left(\int_{\mathbb{R}^{d(k+1)}}t_{k+1}^{2k+1-d}P(t_1,\dots,t_k)\beta^*J(t,x,h;\chi)d^dxd^{kd}h\right)dt_1\wedge\dots\wedge dt_{k+1} 
\end{eqnarray*}
where $t_{k+1}^{d-2k-1}\omega$ is a smooth differential form of top degree on the cube $[0,1]^{k+1}$.
To extract the 
singular part of $\int_{\Omega_\varepsilon}\omega$, we need to Taylor expand  
$\int_{\mathbb{R}^{d(k+1)}} A(t,x,h;\chi) d^dxd^{kd}h $ in the variable $t_{k+1}$:
\begin{eqnarray*}
\int_{\Omega_\varepsilon}\omega\simeq \sum_{j=0}^{}\int_{\Omega_\varepsilon} \frac{t_{k+1}^{2k+1-d+j}}{j!}P(t_1,\dots,t_k) 
\underbrace{\left(\partial_{t_{k+1}}^j\int_{\mathbb{R}^{d(k+1)}} A(t,x,h;\chi) d^dxd^{kd}h\right)|_{t_{k+1}=0}}
dt_1\wedge\dots\wedge dt_{k+1}
\end{eqnarray*}
where the term underbraced is a conormal distribution of $\chi\in C_c^\infty(U^{k+1},End(E)^{\boxtimes k+1})$
supported by $d_{k+1}$ by the results of paragraph~\ref{ss:wfbounds}.
So setting $\chi=V^{\boxtimes k+1}$, we can view the term 
underbraced as a functional of $V$: the map
$$V\in C^\infty_c(U,End(E))\mapsto \left(\partial_{t_{k+1}}^j\int_{\mathbb{R}^{d(k+1)}} A(t,x,h;V^{\boxtimes k+1}) d^dxd^{kd}h\right)|_{t_{k+1}=0}$$ is an element of $\mathcal{O}_{loc,k+1}\left(J^jE\right)$. 

Thus to extract precise asymptotics, we set
\begin{eqnarray*}
\omega_j=\frac{t_{k+1}^{2k+1-d+j}}{j!} \left(\partial_{t_{k+1}}^jP\int_{\mathbb{R}^{d(k+1)}} A(t,x,h;\chi) d^dxd^{kd}h\right)|_{t_{k+1}=0}
dt_1\wedge\dots\wedge dt_{k+1}.
\end{eqnarray*}
 Then
we may slice the \textbf{semialgebraic set} 
$\Omega_\varepsilon $
by the fibers $(t_1\dots t_{k+1})^2=\text{constant}$ of the map
$F(t_1,\dots,t_{k+1})=(t_1\dots t_{k+1})^2$. Practically, 
this means we will pushforward the differential form $\omega_j$ along the fibers
of the $b$-map $F:(t_1,\dots,t_{k+1}) \in [0,1]^{k+1}\mapsto F(t_1,\dots,t_{k+1})\in \mathbb{R} $~\cite[def 2.11 p.~16]{grieserbcalc}~\cite[p.~51--52]{melrose1992calculus} where the cube $[0,1]^{k+1}$
is viewed as a $b$-manifold in the sense of
Melrose~\cite[def 2.2 p.~8]{grieserbcalc}~\cite[p.~51]{melrose1992calculus}.
Then we will conclude by using
the pushforward Theorem of Melrose~\cite[Thm 4 p.~58]{melrose1992calculus} in the form discussed in the nice survey of Grieser~\cite[Thm 3.6 p.~25]{grieserbcalc}.
We define
the form $\frac{\omega_j}{dF}$ which is called Gelfand--Leray form~\cite[Lemma 5.11 p.~123]{zoladek2006} and
the function $J_j(t)=\int_{F=t}\frac{\omega_j}{dF}$.
By Fubini's Theorem, we find that
$ \int_{\Omega_\varepsilon} \omega_j= \int_{\varepsilon}^1 J_j(t)dt $. Finally, the pushforward Lemma~\ref{t:jeanquartier}, which is stated in the next subsubsection below, implies that for each $j\in \mathbb{N}$, the map
$\varepsilon\mapsto\int_{\Omega_\varepsilon} \omega_j$ admits an asymptotic expansion as $\varepsilon\rightarrow 0^+$
of the required form which concludes the proof.
\end{proof}

\subsubsection{Pushforward Lemma.}

Here we state the Lemma on asymptotic 
integrals used in
the previous proposition.
\begin{lemm}[Pushforward by Jeanquartier, Melrose]\label{t:jeanquartier}
Let $\omega\in \Omega^{n}_c([0,1]^{n})$ be a smooth differential form of top degree on $[0,1]^n$ 
and $F:(t_1,\dots,t_n)\in\mathbb{R}_+^n\mapsto t_1^2\dots t_n^2\in \mathbb{R}$.
Then for every $m\in \mathbb{N}$, the map 
$$t\mapsto J(t)=\int_{F^{-1}(t)} \frac{t_n^{-m}\omega}{dF}=\left\langle \delta(t-F),t_n^{-m}\omega \right\rangle $$
has an asymptotic expansion:
$J(t)\sim \sum_{p,q} t^p\log(t)^q a_{p,q}(\omega)$
where $p\in \frac{\mathbb{Z}}{2}$ runs over a finite set of growing arithmetic sequences of rational numbers  
and $a_{p,q}$ is a \textbf{distribution} supported by the
algebraic set $\{F=0\}$.
 
 This implies  
that the map
$\varepsilon\longmapsto \int_\varepsilon^1 J(t)dt$ also has an asymptotic expansion:
$ \int_\varepsilon^1 J(t)dt \sim \sum_{p,q} \varepsilon^p\log(\varepsilon)^qb_{p,q}(\omega) $
where $p\in \frac{\mathbb{Z}}{2}$ runs over a finite set of growing arithmetic sequences of rational numbers
and $b_{p,q}$ are distributions supported by $F=\{0\}$.
\end{lemm}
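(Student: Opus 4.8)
The plan is to pass to the Mellin transform of $J$ in the variable $t$ and to read off the asymptotic expansion from its meromorphic structure; this is the concrete analytic face of Melrose's pushforward theorem for $b$--fibrations applied to the monomial map $F$, and it reproduces Jeanquartier's result. First I would localize on the cube. Using a partition of unity $1=\sum_\alpha\psi_\alpha$ on $[0,1]^n$, I split $\omega=\sum_\alpha\psi_\alpha\omega$. If $\operatorname{supp}(\psi_\alpha\omega)\subset\{F\geqslant\delta_\alpha\}$ for some $\delta_\alpha>0$, the corresponding fibre integral vanishes identically for $t<\delta_\alpha$ and contributes nothing to an expansion at $t=0$. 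Otherwise $\operatorname{supp}(\psi_\alpha\omega)$ lies in a small neighbourhood of the corner $0$, and after an affine change of chart we may assume $\omega=\phi(t)\,dt_1\wedge\dots\wedge dt_n$ with $\phi\in C_c^\infty([0,\eta)^n)$, in which range $F=t_1^2\cdots t_n^2$ is a genuine coordinate monomial. For $t\in(0,1)$ the fibre $F^{-1}(t)$ stays inside $(0,1]^n$, so $t_n^{-m}$ is smooth on it and $J$ is a smooth function of $t\in(0,1)$; the only point is the behaviour as $t\to0^+$.

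In this local model, for $\operatorname{Re}(s)$ large one computes
\begin{equation*}
\mathcal{M}J(s):=\int_0^1 t^{s}\,J(t)\,dt=\int_{[0,\eta)^n}F(t)^{s}\,t_n^{-m}\,\phi(t)\,d^nt=\int_{[0,\eta)^n}(t_1\cdots t_{n-1})^{2s}\,t_n^{2s-m}\,\phi(t)\,d^nt ,
\end{equation*}
absolutely convergent once $2\operatorname{Re}(s)-m>-1$. Expanding $\phi$ in a Taylor polynomial of large order in each variable $t_i$ and integrating the resulting monomials against $\int_0^\eta t_i^{2s+k}\,dt_i=\tfrac{\eta^{2s+k+1}}{2s+k+1}$ shows that $\mathcal{M}J$ continues meromorphically to $\mathbb{C}$, with poles contained in the finite union of arithmetic progressions $s\in-\tfrac12\mathbb{Z}_{\geqslant1}$ (from the first $n-1$ variables) and $s\in\tfrac{m+1}{2}-\tfrac12\mathbb{Z}_{\geqslant1}$ (from the last), of order at most $n$ since at most $n$ of the one--dimensional integrals can be singular at the same point, and with rapid decay on vertical lines (integrate by parts repeatedly in each $t_i$). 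By construction the Laurent coefficients of $\mathcal{M}J$ at each pole are finite combinations of pairings of $\phi$, hence of $\omega$, with $\partial_t$--derivatives of Dirac masses carried by the hyperplanes $\{t_i=0\}$ and their intersections, i.e. distributions supported on $\bigcup_i\{t_i=0\}=\{F=0\}$.

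Next I would invert the Mellin transform. Since $\mathcal{M}J(s)=\int_0^\infty t^{(s+1)-1}\bigl(J\,\mathbf{1}_{(0,1)}\bigr)(t)\,dt$, the inversion formula gives $J(t)=\tfrac{1}{2\pi i}\int_{\operatorname{Re}(s)=c}t^{-s-1}\mathcal{M}J(s)\,ds$ for $0<t<1$ and $c$ large. Shifting the contour to $\operatorname{Re}(s)=-N$ and using the vertical decay of $\mathcal{M}J$, the residual integral is $O(t^{N-1})$, uniformly once $\phi$ is controlled in a fixed $C^{k(N)}$ norm, while the residues at the poles $s_0$ produce terms $t^{-s_0-1}$ times a polynomial in $\log t$ of degree $<$ (order of the pole) $\leqslant n$, with coefficients the Laurent coefficients identified above. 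This yields $J(t)\sim\sum_{p,q}t^{p}(\log t)^{q}a_{p,q}(\omega)$ with $p=-s_0-1$ running over finitely many growing arithmetic progressions in $\tfrac12\mathbb{Z}$ and $a_{p,q}$ distributions supported on $\{F=0\}$, with uniform remainder. For the integrated statement I then substitute this into $\int_\varepsilon^1J(t)\,dt$ and integrate term by term: for $p\neq-1$, $\int_\varepsilon^1 t^{p}(\log t)^{q}\,dt$ is $t^{p+1}$ times a degree--$q$ polynomial in $\log t$ evaluated between $\varepsilon$ and $1$, i.e. a constant plus $\varepsilon^{p+1}$ times a polynomial in $\log\varepsilon$; for $p=-1$ it is $-\tfrac{(\log\varepsilon)^{q+1}}{q+1}$; and the uniform remainder integrates to $O(\varepsilon^{N})+O(1)$. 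Collecting terms gives $\int_\varepsilon^1J(t)\,dt\sim\sum_{p,q}\varepsilon^{p}(\log\varepsilon)^{q}b_{p,q}(\omega)$ of the required form, with the $b_{p,q}$ supported on $\{F=0\}$.

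The main obstacle is the bookkeeping in the Mellin inversion: controlling the orders of the poles in the resonant case, where several of the progressions collide (this is exactly what produces the logarithms and can raise the pole order up to $n$), and verifying that the Laurent coefficients are genuinely distributions carried by $\{F=0\}$ with the asserted half--integer power structure. This is precisely the content one may instead import verbatim from Jeanquartier's theorem or from Melrose's $b$--pushforward theorem (\cite[Thm 4 p.~58]{melrose1992calculus}, \cite[Thm 3.6 p.~25]{grieserbcalc}), applied to the $b$--fibration $F\colon[0,1]^n\to[0,1]$ and the polyhomogeneous conormal $b$--density $t_n^{-m}\omega$.
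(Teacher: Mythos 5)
Your proof follows the same Mellin--transform route as the paper: compute $\int_0^\infty t^{s-1}J(t)\,dt=\int F^{s-1}t_n^{-m}\varphi\,d^nt$, Taylor expand $\varphi$ in each $t_i$ to obtain the meromorphic continuation with poles in $\tfrac12\mathbb{Z}$ of order at most $n$, and read off the asymptotics from the poles. Where the paper cites Jeanquartier's correspondence between the poles of the Mellin transform and the asymptotic expansion, you spell out the contour--shift argument explicitly; that is a useful elaboration, not a different method.

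One small remark: the opening localization step is both unnecessary and not quite accurate. Your dichotomy --- either $\operatorname{supp}(\psi_\alpha\omega)\subset\{F\geqslant\delta_\alpha\}$ or $\operatorname{supp}(\psi_\alpha\omega)$ lies near the corner $0$ --- is not exhaustive, since $\{F=0\}=\bigcup_i\{t_i=0\}$ and the support may lie near a single face (say $t_1=0$ with $t_2,\dots,t_n$ bounded away from $0$) without being near the corner, so the affine reduction to $\phi\in C_c^\infty([0,\eta)^n)$ does not apply in every chart. This causes no harm because the successive Taylor expansion with remainder in $t_1,\dots,t_n$ (exactly as in the paper) works directly on all of $[0,1]^n$ without any partition of unity; dropping that step would make the argument both shorter and correct as stated.
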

\begin{proof}
The result for smooth forms and real analytic $F$ is due to Jeanquartier~\cite{jeanquartier}~\cite[Theorem 5.54 p.~155]{zoladek2006}. 
Here we need the same result for a polyhomogeneous top form
$t_n^{-m}\omega$ and $F=t_1^2\dots t_n^2$ which is a particular case of the pushforward Theorem of Melrose~\cite[Thm 3.6 p.~25]{grieserbcalc}~\cite{melrose1992calculus} by the $b$-map $F$
which yields an index set contained in $\frac{\mathbb{Z}}{2}$ since the $b$-map $F$ vanishes at order $2$
on each boundary face of $[0,1]^n$.
%
%
%
%
Let us give a proof based on remarks from Jeanquartier on the Mellin transform~\cite{jeanquartiermellin}.
The index set of the asymptotics of $ t\mapsto \left\langle\delta(t-F),t_n^{-m}\varphi\right\rangle$ is exactly 
given by
the poles 
with multiplicity of the Mellin transform 
$\int_0^\infty t^{s} J(t) \frac{dt}{t}=\int_{[0,1]^n} F^{s-1}t_n^{-m}\varphi d^nt $ by~\cite[Prop 4.3 p.~304 and Prop 4.4 p.~306]{jeanquartiermellin}.
By successive Taylor expansion with remainder as follows,
start from $\varphi$ then Taylor expand with remainder at order $N$ in $t_1$ keeping other variables
$(t_2,\dots,t_n)$ as parameters, then Taylor expanding successively in $t_2,\dots,t_n$ with remainder
at order $N$ yields:
$\varphi(t_1,\dots,t_n)=\sum_{0\leqslant \alpha_1,\dots,\alpha_n\leqslant N} \prod t_i^{\alpha_i} c_\alpha $
where $c_\alpha$ depends on $t_i$ iff $\alpha_i=N$. Then plugging under the integral yields that
$s\mapsto \int_{[0,1]^n} F^{s-1}t_n^{-m}\varphi d^nt$ has analytic continuation
as a meromorphic function on $\mathbb{C}$ with singular terms
of the form $\left(\prod_{i=1}^{n-1} \frac{1}{2s+\alpha_i-1}\right)\frac{1}{2s+\alpha_n-1-m}$ hence
poles are in $\{s\in \frac{1+m-\mathbb{N}}{2}\}$ with multiplicity at most $n$.
\end{proof}

\subsection{Every $\mathcal{R}\det$ solution of problem \ref{d:renormdet1} are obtained by local renormalization.}

By Lemma \ref{l:singulartermsfeynman}, $\forall k\in \mathbb{N}$,
$(V_1,\dots,V_{k+1})\mapsto FP|_{\varepsilon=0}Tr_{L^2}\left( e^{-2\varepsilon\Delta}\Delta^{-1}
V_1\dots e^{-2\varepsilon\Delta}\Delta^{-1}
V_{k+1} \right)$ is multilinear continuous hence it 
can be represented as a distributional pairing
$$FP|_{\varepsilon=0}Tr_{L^2}\left( e^{-2\varepsilon\Delta}\Delta^{-1}
V_1\dots e^{-2\varepsilon\Delta}\Delta^{-1}
V_{k+1} \right)=\left\langle \mathcal{R}t_{k+1},V_1\boxtimes \dots\boxtimes V_{k+1} \right\rangle $$
by the multilinear Schwartz 
kernel Theorem. Exactly as in the proof
of subsubsection~\ref{ss:disjsupport}, 
we find that
for $(V_1,\dots,V_{k+1})\in C^\infty(M,End(E))^{k+1}$
such that
$\text{supp}(V_1)\cap\dots\cap\text{supp}(V_{k+1})=\emptyset$,
$$FP|_{\varepsilon=0}Tr_{L^2}\left( e^{-2\varepsilon\Delta}\Delta^{-1}
V_1\dots e^{-2\varepsilon\Delta}\Delta^{-1}
V_{k+1} \right)=Tr_{L^2}\left( \Delta^{-1}
V_1\dots \Delta^{-1}
V_{k+1} \right)$$
where the $L^2$ trace on the r.h.s is well--defined
since
$WF(\Delta^{-1}
V_1)\cap\dots\cap WF(\Delta^{-1}
V_{k+1})=\emptyset$.
Therefore arguing as in
subsubsection~\ref{ss:disjsupport}
we find that for $n\leqslant \frac{d}{2}$,
$\mathcal{R}t_n$ is a distributional extension of 
$t_n=\mathbf{G}(x_1,x_2)\dots \mathbf{G}(x_n,x_1)$
and for $n>\frac{d}{2}$, the composition
$e^{-2\varepsilon\Delta}\Delta^{-1}
V_1\dots e^{-2\varepsilon\Delta}\Delta^{-1}
V_{k+1}\in \Psi^{-2k}(M,E)$ hence of trace class~\cite[Prop B 21]{DyZwscatt} 
uniformly in $\varepsilon\in (0,1]$ hence
$$FP|_{\varepsilon=0}Tr_{L^2}\left( e^{-2\varepsilon\Delta}\Delta^{-1}
V_1\dots e^{-2\varepsilon\Delta}\Delta^{-1}
V_{k+1} \right)=Tr_{L^2}\left(\Delta^{-1}
V_1\dots \Delta^{-1}
V_{k+1} \right) $$
where the r.h.s. is well--defined
as in the case with zeta regularization.

Now let $P_{n,\varepsilon}\in \mathcal{O}_{loc}\otimes \mathbb{C}[\varepsilon^{-\frac{1}{2}},\log(\varepsilon)]$ from
Lemma~\ref{l:singulartermsfeynman} s.t.
$$\lim_{\varepsilon\rightarrow 0} Tr_{L^2}\left( (e^{-2\varepsilon\Delta}\Delta^{-1}V)^n \right)-P_{n,\varepsilon}(V)=FP|_{\varepsilon=0}Tr_{L^2}\left( (e^{-2\varepsilon\Delta}\Delta^{-1}
V)^n\right).$$
One should think of $P_{n,\varepsilon}$ as being the \textbf{singular part}
of $Tr_{L^2}\left( e^{-2\varepsilon\Delta}\Delta^{-1}
V_1\dots e^{-2\varepsilon\Delta}\Delta^{-1}
V_{n} \right)$.
Then set $P_\varepsilon(V)=\sum_{n=1}^{\frac{d}{2}} P_{n,\varepsilon}(V)$, we have
\begin{eqnarray*}
\text{det}_F\left(Id+e^{-2\varepsilon\Delta}\Delta^{-1}V\right)e^{-P_\varepsilon(V)}&=&
\underbrace{\exp\left(\sum_{n=1}^{\frac{d}{2}} Tr_{L^2}\left( (e^{-2\varepsilon\Delta}\Delta^{-1}V)^n \right)-P_{n,\varepsilon}(V)  \right)}
\\
&\times &
\underbrace{\text{det}_{[\frac{d}{2}]+1 }\left(Id+e^{-2\varepsilon\Delta}\Delta^{-1}V \right)}
\end{eqnarray*}
by the factorization properties of Gohberg--Krein's determinant~\cite[d) Thm 9.2 p.~75]{Simon-traceideals}.
The individual factors underbraced converge as follows:
\begin{itemize}
\item $\lim_{\varepsilon\rightarrow 0^+} \det_{[\frac{d}{2}]+1 }\left(Id+e^{-2\varepsilon\Delta}\Delta^{-1}V \right)=\det_{[\frac{d}{2}]+1 }\left(Id+\Delta^{-1}V \right)$ because $e^{-2\varepsilon\Delta}\Delta^{-1}V\rightarrow \Delta^{-1}V \in \Psi^{-2}(M,E)$ hence in the Schatten ideal
$\mathcal{I}_{[\frac{d}{2}]+1}$ and Gohberg--Krein's determinant
$H\mapsto \det_{[\frac{d}{2}]+1}(Id+H)$ depends continuously on 
$H\in\mathcal{I}_{[\frac{d}{2}]+1}$.
\item $\lim_{\varepsilon\rightarrow 0^+} \exp\left(\sum_{n=1}^{\frac{d}{2}} Tr_{L^2}\left( (e^{-2\varepsilon\Delta}\Delta^{-1}V)^n \right)-P_{n,\varepsilon}(V)  \right)=\exp\left(\sum_{n=1}^{\frac{d}{2}} \left\langle \mathcal{R}t_n,V^{\boxtimes n}\right\rangle \right)$
where $\mathcal{R}t_n$ is a distributional extension of 
$t_n=\mathbf{G}(x_1,x_2)\dots \mathbf{G}(x_n,x_1)$ 
by construction.
\end{itemize}
Thus it is immediate
that
$$\mathcal{R}\det(\Delta+V)=\lim_{\varepsilon\rightarrow 0^+}\text{det}_F\left(Id+e^{-2\varepsilon\Delta}\Delta^{-1}V\right)e^{-P_\varepsilon(V)}=\exp\left(\sum_{n=1}^{\frac{d}{2}} \left\langle \mathcal{R}t_n,V^{\boxtimes n}\right\rangle \right)\text{det}_{[\frac{d}{2}]+1 }\left(Id+\Delta^{-1}V \right) $$
hence it satisfies the representation formula~\ref{e:repsformula} which makes it a solution of problem~\ref{d:renormdet1}.
If we are given any other solution $\mathcal{R}_2\det$ of problem~\ref{d:renormdet1},
then by the free transitive action of $\mathcal{O}_{loc,[\frac{d}{2}]}$, we know that there exists $Q\in \mathcal{O}_{loc,[\frac{d}{2}]}$
s.t.
$\mathcal{R}_2\det(\Delta+V)=e^{Q(V)}\mathcal{R}\det(\Delta+V)=\lim_{\varepsilon\rightarrow 0^+}\text{det}_F\left(Id+e^{-2\varepsilon\Delta}\Delta^{-1}V\right)e^{(Q-P_\varepsilon)(V)}  $ 
which shows that $\mathcal{R}_2\det$ is obtained by renormalization by subtraction of local counterterms.

\section{Relation with Gaussian Free Fields.}

In the bosonic case, there is a
nice interpretation of
the renormalized determinants from
Theorem \ref{t:quillenconjanal} in terms of the Gaussian Free Field. 

\subsubsection{Probabilistic representation.}

We next briefly recall 
some probabilistic 
definition of the Gaussian Free Field (GFF)
associated to our positive 
elliptic operator $\Delta$ which is
represented
as a random distribution on $M$.
 
\begin{defi}[Bundle valued Gaussian Free Field]
\label{d:GFF}
Under the geometric assumption from definition~\ref{d:bosoncase}, if $\Delta: C^\infty(E)\mapsto C^\infty(E)
$ is \textbf{positive, self-adjoint} then
the Gaussian free field $\phi$ associated to
$\Delta$ is defined as follows:
denote by $(e_\lambda)_{\lambda\in\sigma(\Delta)}$ 
the spectral resolution
associated to $\Delta$. 
Consider a sequence $(c_\lambda)_{\lambda\in\sigma(\Delta)}, c_\lambda\in \mathcal{N}(0,1)$ 
of independent, identically distributed Gaussian random variables.
Then we define
the quantum field $\phi$ as the random series
\begin{equation}
\phi=\sum_{\lambda\in \sigma(\Delta) } \frac{c_\lambda}{\sqrt{\lambda}}e_\lambda
\end{equation}
where the sum runs over the eigenvalues of
$\Delta$ and
the series converges almost surely as distributional section in 
$\mathcal{D}^\prime(M,E)$.

The covariance of the Gaussian free field defined above
is the Green function:
\begin{eqnarray*}
\boxed{\mathbf{G}(x,y)=\sum_{\lambda\in \sigma(\Delta) } \frac{1}{\lambda}e_\lambda(x)\boxtimes e_\lambda(y)} 
\end{eqnarray*}
where the above series converges in $\mathcal{D}^\prime(M\times M,E\boxtimes E)$.
\end{defi}

A classical result characterizes the support of the functional measure:
\begin{lemm}[Regularity of bundle GFF]
Using the notations of definition \ref{d:GFF},
the random 
section $\phi$ converges almost surely in
the Sobolev space 
$H^s(E)$ for every $s<1-\frac{d}{2}$.
\end{lemm}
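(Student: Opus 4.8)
The plan is to recognize $\phi$ as an orthogonal random series in the Hilbert space $H^s(E)$ and apply a martingale (or Kolmogorov three-series) argument. First I would fix the Hilbert structure on $H^s(E)$ coming from $\Delta$: since $\Delta$ is positive self-adjoint with $\sigma(\Delta)\subset[\delta,+\infty)$, the norm $\|u\|_{H^s}^2=\|(1+\Delta)^{s/2}u\|_{L^2}^2$ is equivalent to the standard Sobolev norm, and the family $\big((1+\lambda)^{-s/2}e_\lambda\big)_{\lambda\in\sigma(\Delta)}$ is an orthonormal basis of $H^s(E)$ because $\langle e_\lambda,e_\mu\rangle_{H^s}=(1+\lambda)^s\delta_{\lambda\mu}$. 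Writing $\phi=\sum_\lambda \big(\tfrac{c_\lambda}{\sqrt\lambda}(1+\lambda)^{s/2}\big)\,(1+\lambda)^{-s/2}e_\lambda$, the problem reduces to the almost sure convergence in $H^s(E)$ of the series $\sum_\lambda X_\lambda$, where $X_\lambda=\tfrac{c_\lambda}{\sqrt\lambda}(1+\lambda)^{s/2}(1+\lambda)^{-s/2}e_\lambda$ are independent, mean zero, $H^s$-valued random variables with $\mathbb{E}\|X_\lambda\|_{H^s}^2=\tfrac{(1+\lambda)^s}{\lambda}$ (each $\lambda$ being repeated according to its finite multiplicity, and the finite rank of $E$ only contributing a bounded multiplicative factor).

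Next I would estimate the total variance. By the Weyl law for the positive elliptic operator $\Delta$ of order $2$ on the $d$-dimensional manifold $M$, the eigenvalues counted with multiplicity satisfy $\lambda_j\sim c\, j^{2/d}$ as $j\to\infty$ for some $c>0$ (this is exactly the Agranovich--Markus Weyl asymptotics already invoked in the excerpt). Hence the general term of $\sum_j \mathbb{E}\|X_j\|_{H^s}^2=\sum_j \tfrac{(1+\lambda_j)^s}{\lambda_j}$ behaves like a constant times $j^{2(s-1)/d}$, so the series converges if and only if $2(s-1)/d<-1$, i.e. if and only if $s<1-\tfrac d2$. Thus for every $s<1-\tfrac d2$ we get
\begin{equation}
\sum_{\lambda\in\sigma(\Delta)}\mathbb{E}\|X_\lambda\|_{H^s(E)}^2=\sum_{\lambda\in\sigma(\Delta)}\frac{(1+\lambda)^s}{\lambda}<+\infty .
\end{equation}

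Finally I would deduce almost sure convergence. Ordering the eigenvalues and letting $\phi_N=\sum_{j\leqslant N}X_{\lambda_j}$, the sequence $(\phi_N)_N$ is a martingale in the separable Hilbert space $H^s(E)$ with respect to the filtration generated by $c_{\lambda_1},\dots,c_{\lambda_N}$, and it is bounded in $L^2(\Omega;H^s(E))$ by the variance bound above (orthogonality of increments gives $\mathbb{E}\|\phi_N\|_{H^s}^2=\sum_{j\leqslant N}\mathbb{E}\|X_{\lambda_j}\|_{H^s}^2$). By Doob's martingale convergence theorem for Hilbert-space-valued $L^2$-bounded martingales (equivalently, the It\^o--Nisio theorem, or simply the Kolmogorov three-series theorem applied coordinatewise together with the $L^2$ bound), $\phi_N$ converges almost surely and in $L^2(\Omega;H^s(E))$ to a limit $\phi\in H^s(E)$. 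Since this holds for every $s<1-\tfrac d2$, the claim follows. The only genuinely technical point is the Weyl-law bookkeeping that pins down the sharp threshold $s<1-\tfrac d2$; the passage from $L^2$-boundedness to almost sure convergence is standard once the series is written in the orthonormal basis $\big((1+\lambda)^{-s/2}e_\lambda\big)_\lambda$, and the finite multiplicities and the finite rank of $E$ cause no difficulty.
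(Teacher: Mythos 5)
Your proof is correct and complete. The paper states this lemma without proof, simply labeling it a classical fact, so there is nothing internal to compare against; your argument is the standard one and it is carried out cleanly. You correctly use the $\Delta$-adapted inner product $\langle u,v\rangle_{H^s}=\langle(1+\Delta)^{s/2}u,(1+\Delta)^{s/2}v\rangle_{L^2}$ (equivalent to the usual Sobolev norm because $\Delta$ is a positive self-adjoint elliptic operator of order $2$ on the compact manifold $M$), so that $\bigl((1+\lambda)^{-s/2}e_\lambda\bigr)_\lambda$ becomes an orthonormal basis of $H^s(E)$ and $\phi$ becomes an orthogonal series of independent centered Gaussians with
\begin{equation*}
\mathbb{E}\,\|\phi_N\|_{H^s(E)}^2=\sum_{j\leqslant N}\frac{(1+\lambda_j)^s}{\lambda_j}.
\end{equation*}
The Weyl asymptotics $\lambda_j\sim cj^{2/d}$ (valid for $\Delta$ acting on sections of the finite-rank bundle $E$, the rank only affecting the constant $c$) give $\frac{(1+\lambda_j)^s}{\lambda_j}\asymp j^{2(s-1)/d}$, and $\sum_j j^{2(s-1)/d}<\infty$ if and only if $s<1-\tfrac d2$, which is exactly the threshold in the statement. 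The passage from uniform $L^2$-boundedness of the partial sums to almost sure convergence in $H^s(E)$ via Doob's theorem for Hilbert-valued $L^2$-bounded martingales (equivalently It\^o--Nisio, or L\'evy's theorem on series of independent Banach-valued random variables) is the standard closing step and is invoked correctly. No gaps.
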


In Euclidean quantum 
field theory, there is an analogy between
considering a discrete GFF on a lattice with spacing $\sqrt{\varepsilon}$, 
whose propagator is a discrete 
Green function which is the inverse of the discrete Laplacian
and considering
the heat regularized GFF $\phi_\varepsilon=e^{-\varepsilon\Delta}\phi$ whose covariance
reads $e^{-2\varepsilon\Delta}\Delta^{-1}$.
For discrete Laplacians $\Delta_\varepsilon$ on a regular lattice of mesh $\varepsilon$,  
there are beautiful results 
on the asymptotics of $\det(\Delta_\varepsilon)$~\cite{chaumard2003} (see~\cite{Karl} for related results):
\begin{thm}\label{t:DGFF}
On the flat torus $\mathbb{T}^2$, for discrete Laplacian $\Delta_\varepsilon$ with mesh $\varepsilon$ and denote by $\phi_\varepsilon$ the corresponding \textbf{discrete GFF},
if $V\in C^\infty(\mathbb{T}^2)$ s.t. $\int_{\mathbb{T}^2}V=0$ then:
\begin{equation}
\frac{\det_\zeta\left(\Delta+V \right)}{\det_\zeta\left(\Delta\right)}=\lim_{\varepsilon\rightarrow 0}\frac{\det(\Delta_\varepsilon+V)}{\det(\Delta_\varepsilon)} = \lim_{\varepsilon\rightarrow 0^+}\mathbb{E}\left(e^{-\frac{1}{2}\int_{\mathbb{T}^2} V\phi_\varepsilon^2 } \right)^{-2}.
\end{equation}
\end{thm}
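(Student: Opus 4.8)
The plan is to handle the two equalities in turn, noting that the second---which relates $\det_\zeta$ to the limit of the discrete determinant ratios---is the analytic heart, whereas the third is, at each fixed mesh, a finite dimensional Gaussian identity. Throughout I would work with the ratio $\det(\Delta_\varepsilon+V)/\det(\Delta_\varepsilon)=\det(\mathrm{Id}+\Delta_\varepsilon^{-1}V)$ directly, since the individual discrete determinants diverge.

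First I would record the Gaussian computation: if $\psi$ is a centered Gaussian field with invertible covariance $C$ and $V$ is a symmetric perturbation with $\mathrm{Id}+CV$ positive, then $\mathbb{E}\big[e^{-\frac12\langle V\psi,\psi\rangle}\big]=\det(\mathrm{Id}+CV)^{-1/2}$, hence $\mathbb{E}\big[e^{-\frac12\langle V\psi,\psi\rangle}\big]^{-2}=\det(\mathrm{Id}+CV)$. Applying this to $\psi=\phi_\varepsilon$, the mesh-$\varepsilon$ discrete GFF with covariance the (suitably normalized) lattice Green function $\Delta_\varepsilon^{-1}$, with $V$ replaced by its lattice restriction weighted so that $\tfrac12\langle V\phi_\varepsilon,\phi_\varepsilon\rangle$ is the Riemann sum of $\tfrac12\int_{\mathbb{T}^2}V\phi_\varepsilon^2$, yields $\mathbb{E}\big[e^{-\frac12\int_{\mathbb{T}^2}V\phi_\varepsilon^2}\big]^{-2}=\det(\mathrm{Id}+\Delta_\varepsilon^{-1}V)=\det(\Delta_\varepsilon+V)/\det(\Delta_\varepsilon)$ for every $\varepsilon>0$; passing to the limit then gives the third equality once the second is established. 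The same elementary identity, applied to the heat-regularized continuum field $\phi_\varepsilon=e^{-\varepsilon\Delta}\phi$ whose covariance is the trace-class operator $e^{-2\varepsilon\Delta}\Delta^{-1}$, gives $\mathbb{E}\big[e^{-\frac12\int_{\mathbb{T}^2}V\phi_\varepsilon^2}\big]^{-2}=\det_F(\mathrm{Id}+e^{-2\varepsilon\Delta}\Delta^{-1}V)$.

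For the second equality I would expand $\log\det(\mathrm{Id}+\Delta_\varepsilon^{-1}V)=\mathrm{Tr}(\Delta_\varepsilon^{-1}V)+\sum_{n\ge2}\tfrac{(-1)^{n+1}}{n}\mathrm{Tr}\big((\Delta_\varepsilon^{-1}V)^n\big)$ and feed in the classical asymptotics of the lattice Green function on the torus from \cite{chaumard2003}. By translation invariance its diagonal is $\Delta_\varepsilon^{-1}(x,x)=\tfrac1{2\pi}\log(1/\varepsilon)+\gamma+o(1)$ uniformly in $x$, so $\mathrm{Tr}(\Delta_\varepsilon^{-1}V)=\bigl(\tfrac1{2\pi}\log(1/\varepsilon)+\gamma\bigr)\int_{\mathbb{T}^2}V\,dv+o(1)$, which tends to $0$ precisely because $\int_{\mathbb{T}^2}V=0$. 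Off the diagonal $\Delta_\varepsilon^{-1}\to\mathbf{G}$ locally uniformly with a uniform dominating bound of size $1+\bigl|\log|x-y|\bigr|$, which lies in $L^n_{\mathrm{loc}}$ for all $n$ (in particular squared-integrable near the diagonal, since $d=2$), so dominated convergence gives $\mathrm{Tr}\big((\Delta_\varepsilon^{-1}V)^n\big)\to\mathrm{Tr}\big((\Delta^{-1}V)^n\big)=\langle t_n,V^{\boxtimes n}\rangle$ for each $n\ge2$, the collapsed lattice configurations (where some consecutive sites coincide) contributing $o(1)$; a uniform Schatten bound $\|\Delta_\varepsilon^{-1}V\|_{\mathcal{I}_3}\le M$ then licenses interchanging the limit with the tail sum. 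Hence $\det(\mathrm{Id}+\Delta_\varepsilon^{-1}V)\to\exp\big(\sum_{n\ge2}\tfrac{(-1)^{n+1}}{n}\mathrm{Tr}((\Delta^{-1}V)^n)\big)=\det_2(\mathrm{Id}+\Delta^{-1}V)$ by Lemma \ref{l:detregtraces}. Finally the factorization formula (\ref{e:bosonfacthadamard}) of Theorem \ref{t:quillenconjzeta} with $d=2$, $p=2$ gives $\det_\zeta(\Delta+V)=e^{Q(V)}\det_2(\mathrm{Id}+\Delta^{-1}V)$ with $Q$ of degree $[\tfrac{d}{2}]=1$, so $Q(V)-Q(0)=\int_{\mathbb{T}^2}\langle\ell,V\rangle\,dv$ for a smooth $\ell$ which is constant on the flat torus by homogeneity; thus $Q(V)-Q(0)=0$ when $\int_{\mathbb{T}^2}V=0$, and dividing by $\det_\zeta(\Delta)=e^{Q(0)}$ yields $\det_\zeta(\Delta+V)/\det_\zeta(\Delta)=\det_2(\mathrm{Id}+\Delta^{-1}V)=\lim_{\varepsilon\to0}\det(\Delta_\varepsilon+V)/\det(\Delta_\varepsilon)$, closing the chain. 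For the continuum reading of $\phi_\varepsilon$, the same vanishing of the linear term makes the local counterterm $Q_\varepsilon$ of Lemma \ref{l:singulartermsfeynman} identically zero here, so $\det_F(\mathrm{Id}+e^{-2\varepsilon\Delta}\Delta^{-1}V)=\det_2(\mathrm{Id}+e^{-2\varepsilon\Delta}\Delta^{-1}V)\to\det_2(\mathrm{Id}+\Delta^{-1}V)$ directly, by continuity of $\det_2$ on $\mathcal{I}_2$ \cite{Simon-traceideals} together with $e^{-2\varepsilon\Delta}\Delta^{-1}V\to\Delta^{-1}V$ in $\mathcal{I}_2$ (Lemma \ref{l:convheat} and standard Schatten estimates).

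The main obstacle is the discrete-to-continuum passage: the precise asymptotics of the lattice Green function on $\mathbb{T}^2$---in particular that the coefficient of $\log(1/\varepsilon)$ in its diagonal is exactly the multiplier of $\int_{\mathbb{T}^2}V$, so that the hypothesis annihilates it---and the convergence of the discrete traces near the diagonal, most delicately the $n=2$ term, where $\mathbf{G}$ carries a logarithmic singularity sitting at the borderline of integrability in dimension two. For these I rely on the estimates assembled in \cite{chaumard2003}; the remaining ingredients are the elementary Gaussian identity and the factorization formula of Theorem \ref{t:quillenconjzeta}.
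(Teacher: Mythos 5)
The paper does not prove Theorem~\ref{t:DGFF}: it is quoted as a known result, attributed to Chaumard's thesis~\cite{chaumard2003} (with~\cite{Karl} cited for related lattice-determinant asymptotics), and is included only as motivation for the heat-regularized analogue that the paper itself proves, Theorem~\ref{t:mainthm6GFFreps}. There is therefore no ``paper's own proof'' to compare your attempt against; what you have done is reconstruct an argument from first principles together with the paper's factorization machinery (Theorem~\ref{t:quillenconjzeta}).

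That said, your reconstruction follows what is almost certainly the intended route. The finite-dimensional Gaussian identity collapses the third equality to the second at each fixed mesh, and the second is the discrete-to-continuum convergence $\det(\mathrm{Id}+\Delta_\varepsilon^{-1}V)\to\det_2(\mathrm{Id}+\Delta^{-1}V)$, matched against the factorization~(\ref{e:bosonfacthadamard}) with $d=2$, $p=2$, hence $Q$ linear. The two structural observations that make the proof work are both correct and correctly identified by you: the single divergent term $\mathrm{Tr}(\Delta_\varepsilon^{-1}V)$ carries the (constant, by uniformity of the lattice Green function's diagonal) coefficient $\tfrac{1}{2\pi}\log(1/\varepsilon)$ multiplied by $\int V$, and the density $\ell$ in $Q(V)-Q(0)=\int\langle\ell,V\rangle\,dv$ is constant on the flat torus by translation-invariance of the heat kernel, so the single hypothesis $\int V=0$ simultaneously annihilates both. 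Two points you should tighten. First, as written the representation~(\ref{e:repsformula}) would force $Q(0)=0$ and hence $\det_\zeta(\Delta)=1$, which is not true in general; this is an imprecision in the paper's statement of the factorization, but harmless for you since only the difference $Q(V)-Q(0)$ enters the ratio $\det_\zeta(\Delta+V)/\det_\zeta(\Delta)$. Second, on $\mathbb{T}^2$ neither the Laplace--Beltrami operator $\Delta$ nor the discrete $\Delta_\varepsilon$ is invertible (both annihilate constants), so every determinant appearing in the statement must be understood modulo the zero mode or after a mass shift; this convention is inherited from~\cite{chaumard2003} and should be stated explicitly, not least because the paper's own geometric hypotheses (Definition~\ref{d:bosoncase}) require $\Delta$ invertible. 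The remaining analytic substance you offload --- the precise diagonal asymptotics of the lattice Green function, the dominated Hilbert--Schmidt convergence of the off-diagonal product which is borderline for $d=2$, and the uniform Schatten bound justifying the tail interchange --- is exactly the content of~\cite{chaumard2003}, which is where the paper also sends the reader.
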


In the bosonic case, replacing lattice regularization by 
the heat regularized GFF, 
we prove an analog of the above Theorem and describe all renormalized determinants from Theorem~\ref{t:quillenconjanal} as coming from
the local renormalization of Gaussian free fields partition function as follows:
\begin{thm}[GFF representation]
\label{t:mainthm6GFFreps}
Under the assumptions of definition~\ref{d:GFF}. 
Let $\phi$ be the Gaussian free field with covariance 
$\mathbf{G}$.
Denote by $\phi_\varepsilon=e^{-\varepsilon\Delta}\phi$ the heat regularized GFF.

Then a function $V\mapsto\mathcal{R}\det\left(\Delta+V \right)$ is a renormalized determinant in the sense of 
definition~\ref{d:renormdet1} if and only if there exists a
sequence $\left(\Lambda_\varepsilon:C^\infty(E)\mapsto C^\infty\left(E \right)\right)_{\varepsilon\in (0,1]}$
of \textbf{smooth local} polynomial functionals of minimal degree such that the following limit exists:
\begin{eqnarray}
\mathcal{R}\det\left(Id+\Delta^{-1}V \right)^{-\frac{1}{2}}=\lim_{\varepsilon\rightarrow 0^+}\mathbb{E}\left(\exp\left(-\frac{1}{2}\int_M \left\langle \phi_\varepsilon, V\phi_\varepsilon\right\rangle - \Lambda_\varepsilon\left(V \right)(x)dv(x) \right) \right).
\end{eqnarray}
Furthermore, if $V\in C^\infty(End(E))$ defines a positive operator on $L^2(E)$, 
we denote by $\mu$ the Gaussian measure of covariance $\Delta^{-1}$ then
the limit of measures 
\begin{eqnarray}
\nu=\lim_{\varepsilon\rightarrow 0^+}\frac{\exp\left(-\frac{1}{2}\int_M \left\langle \phi_\varepsilon, V\phi_\varepsilon\right\rangle - \Lambda_\varepsilon\left(V \right)(x)dv(x) \right)}{\mathbb{E}\left(\exp\left(-\frac{1}{2}\int_M \left\langle \phi_\varepsilon, V\phi_\varepsilon\right\rangle - \Lambda_\varepsilon\left(V \right)(x)dv(x) \right)\right)}\mu  
\end{eqnarray}
exists as a Gaussian measure on $\mathcal{D}^\prime\left(M\right)$ with covariance
$\left(\Delta+V\right)^{-1} $ and  $\nu$
is absolutely continuous w.r.t. $\mu$ iff $1\leqslant d \leqslant 3$ otherwise 
the measures $\left(\nu,\mu\right)$ are mutually singular.
%
\end{thm}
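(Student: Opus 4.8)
The plan is to decompose the statement into three independent assertions and handle them in sequence, reusing heavily the machinery already developed. First, the equivalence between being a renormalized determinant and arising as a limit of regularized GFF partition functions with local counterterms. Here the key observation is the Gaussian integration identity: for the heat-regularized GFF $\phi_\varepsilon=e^{-\varepsilon\Delta}\phi$ with covariance $C_\varepsilon=e^{-2\varepsilon\Delta}\Delta^{-1}$, one has formally
\begin{equation}
\mathbb{E}\left(\exp\left(-\tfrac{1}{2}\int_M \langle \phi_\varepsilon,V\phi_\varepsilon\rangle\right)\right)=\det{}_F\left(Id+C_\varepsilon V\right)^{-\frac{1}{2}}=\det{}_F\left(Id+e^{-2\varepsilon\Delta}\Delta^{-1}V\right)^{-\frac{1}{2}},
\end{equation}
which is rigorous because $e^{-2\varepsilon\Delta}\Delta^{-1}V$ is smoothing for $\varepsilon>0$, hence trace class, so the finite-dimensional Gaussian formula passes to the limit over spectral truncations (using the a.s.\ convergence of $\phi$ in $H^s$, $s<1-\frac d2$, from the regularity lemma, together with dominated convergence for the exponential moment when $V$ is, say, small or positive; in general one works with the analytic continuation as in Theorem~\ref{t:quillenconjanal}). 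Absorbing the local counterterm $\Lambda_\varepsilon$ into the expectation multiplies by $\exp(-\int_M\Lambda_\varepsilon(V))$, so the limit in the theorem is exactly $\lim_{\varepsilon\to0^+}\exp(-P_\varepsilon(V))\det_F(Id+e^{-2\varepsilon\Delta}\Delta^{-1}V)$ raised to the $-\tfrac12$ power. By the third part of Theorem~\ref{t:quillenconjanal} (proved via Lemma~\ref{l:singulartermsfeynman}), this limit exists and equals a renormalized determinant precisely when $P_\varepsilon\in\mathcal{O}_{loc,[d/2]}\otimes\mathbb{C}[\varepsilon^{-1/2},\log\varepsilon]$ of minimal degree, which is what $\Lambda_\varepsilon$ encodes. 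This gives the ``if and only if.''

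Second, for positive $V$, the existence of the limiting measure $\nu$ with covariance $(\Delta+V)^{-1}$. The strategy is to show the normalized densities $Z_\varepsilon^{-1}\exp(-\tfrac12\int\langle\phi_\varepsilon,V\phi_\varepsilon\rangle-\Lambda_\varepsilon(V))$ form a martingale-like sequence (in the filtration generated by spectral cutoffs) converging in $L^1(\mu)$, or alternatively compute the characteristic functional directly: $\int e^{i\langle\phi,f\rangle}d\nu_\varepsilon$ is a Gaussian integral that evaluates to $\exp(-\tfrac12\langle f,(\Delta+C_\varepsilon^{-1}\text{-type correction})^{-1}f\rangle)$, and one checks this converges pointwise to $\exp(-\tfrac12\langle f,(\Delta+V)^{-1}f\rangle)$ as $\varepsilon\to0^+$ using $C_\varepsilon\to\Delta^{-1}$ and resolvent continuity; Lévy's continuity theorem on $\mathcal{D}'(M)$ (via Minlos) then yields $\nu$ as the centered Gaussian with covariance $(\Delta+V)^{-1}$. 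Positivity of $V$ guarantees $\Delta+V$ is positive self-adjoint so the covariance is well-defined and the Gaussian exists.

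Third — and this is the main obstacle — the dichotomy: $\nu\ll\mu$ iff $1\le d\le 3$, and mutual singularity otherwise. This is a Feldman--Hájek / Kakutani-type criterion for equivalence of Gaussian measures. The condition for $\nu\sim\mu$ is that $\Delta^{1/2}(\Delta+V)^{-1}\Delta^{1/2}-Id$ be Hilbert--Schmidt on $L^2(E)$, equivalently $\Delta^{-1/2}V\Delta^{-1/2}$ (more precisely $\Delta^{1/2}(\Delta+V)^{-1}V\Delta^{-1/2}$, a compact operator of the same order) lies in $\mathcal{I}_2$. Now $\Delta^{-1/2}V\Delta^{-1/2}\in\Psi^{-2}(M,E)$, and a pseudodifferential operator of order $-2$ on a $d$-manifold is Hilbert--Schmidt iff $2\cdot 2=4>d$, i.e.\ iff $d\le 3$ (the borderline $d=4$ giving order $-2$ which is exactly non-Hilbert--Schmidt, as its symbol decays like $|\xi|^{-2}$ and $\int|\xi|^{-4}d\xi$ diverges logarithmically at infinity in dimension $4$). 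This is the same numerology that forced $p=[d/2]+1$ Gohberg--Krein regularization in Theorem~\ref{t:quillenconjzeta}. For $d\ge4$ one then invokes the Feldman--Hájek theorem: two Gaussian measures are either equivalent or mutually singular, so failure of the Hilbert--Schmidt condition forces singularity. I expect the delicate points to be: (i) justifying the exponential-moment integrability needed to make $Z_\varepsilon$ finite and the densities genuine probability densities for general (not small) positive $V$, which may require a spectral gap argument as in Lemma~\ref{l:specgap2}; and (ii) pinning down the precise trace-ideal membership at the borderline dimension, where one must be careful that the counterterms $\Lambda_\varepsilon$ (which only affect normalization, not the covariance) do not interfere with the Feldman--Hájek comparison — the point being that $\nu$ depends on $V$ only through its covariance $(\Delta+V)^{-1}$, independently of the choice of $\Lambda_\varepsilon$.
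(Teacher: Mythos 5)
Your outline reproduces the paper's own argument. The paper proves the first equivalence exactly via a Field Regularization Lemma (citing Glimm--Jaffe, Remark 1 p.~211) identifying $\mathbb{E}\bigl(e^{-\frac12\int\langle\phi_\varepsilon,V\phi_\varepsilon\rangle}\bigr)=\det_F(Id+e^{-\varepsilon\Delta}\Delta^{-1}e^{-\varepsilon\Delta}V)^{-1/2}$ by expanding $Tr_{L^2}\log(Id+\widehat V_\varepsilon)$ and using cyclicity, and then invokes part $(2)$ of Theorem~\ref{t:quillenconjanal}; the measure statement is obtained by observing that $\nu_\varepsilon$ is Gaussian with covariance $(\Delta+e^{-\varepsilon\Delta}Ve^{-\varepsilon\Delta})^{-1}$ (Glimm--Jaffe, Prop.\ 9.3.2) converging weakly to $(\Delta+V)^{-1}$, and the absolutely-continuous-vs-singular dichotomy is exactly Shale's theorem applied to $\Delta^{-1/2}V\Delta^{-1/2}\in\Psi^{-2}$, whose Hilbert--Schmidt membership holds iff $d\leq 3$ — this is the same pseudodifferential order-counting you give. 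Your extra speculation about martingale/L\'evy-continuity routes and dominated-convergence worries is not needed in the paper (it simply cites Glimm--Jaffe for the covariance and Shale for the dichotomy), but it is not wrong; the key observation you make at the end — that $\Lambda_\varepsilon$ only shifts the normalization and does not change the Gaussian covariance, so the Shale comparison is insensitive to the renormalization ambiguity — is exactly the implicit content of the paper's proof.
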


The intuitive idea is very simple. In QFT the renormalization problem 
arises from the fact that fields are irregular distributions then a natural idea is to study a 
regularized version of
the field and see if one can perform an explicit renormalization of the partition function
by subtracting \textbf{explicit local counterterms} in the action functional. The first part of 
Theorem \ref{t:mainthm6GFFreps} follows from Theorem \ref{t:quillenconjanal}
once we reformulate the partition function
$\mathbb{E}( e^{-\int_M \left\langle\varphi_\varepsilon, V\varphi_\varepsilon\right\rangle } )$, where 
$\phi_\varepsilon=e^{-\varepsilon\Delta}\phi$ is the smeared GFF, in terms of Fredholm determinants
$\det_F\left(Id+\Delta^{-1}e^{-\varepsilon\Delta}Ve^{-\varepsilon\Delta} \right)$ which is the goal of the next paragraph.
In a companion paper~\cite[Prop 1.4]{DangWick}, we give a simple derivation of the above Theorem~\ref{t:mainthm6GFFreps}
using elementary commutator arguments when $\dim(M)\leqslant 4$.

\subsubsection{Fredholm determinants and partition functions.}
The following Lemma relates partition functions and Fredholm determinants:
\begin{lemm}[Field regularization.]
Under the assumptions of definition~\ref{d:GFF}, let
$\phi_\varepsilon=e^{-\varepsilon\Delta}\phi$ be the mollified GFF.

Then for every $\varepsilon>0$, the following relation holds true:
\begin{eqnarray*}
\mathbb{E}\left( \exp\left(-\frac{1}{2}\int_M  \left\langle \phi_\varepsilon , V \phi_\varepsilon\right\rangle  dv(x)\right) \right)
=\text{det}_F\left(Id+ e^{-\varepsilon\Delta}\Delta^{-1}e^{-\varepsilon\Delta} V\right)^{-\frac{1}{2}}.
\end{eqnarray*}
\end{lemm}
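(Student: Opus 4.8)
The plan is to reduce the claimed identity to the standard formula for a Gaussian expectation of $\exp$ of a quadratic form, paying attention to the fact that the Gaussian field here is a random distribution and the quadratic form must be represented via a trace-class operator. First I would recall that $\phi_\varepsilon = e^{-\varepsilon\Delta}\phi$ is, by the regularity lemma for the GFF together with the smoothing property of $e^{-\varepsilon\Delta}$, almost surely a smooth section of $E$, so the integral $\int_M \langle \phi_\varepsilon, V\phi_\varepsilon\rangle\,dv$ makes sense pointwise. Its expectation is controlled by the covariance of $\phi_\varepsilon$, which is $e^{-\varepsilon\Delta}\Delta^{-1}e^{-\varepsilon\Delta} = e^{-2\varepsilon\Delta}\Delta^{-1}$ (using self-adjointness of $\Delta$ and the commutation of $e^{-\varepsilon\Delta}$ with $\Delta^{-1}$); denote this operator $C_\varepsilon$, which is smoothing, positive and trace class.

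The key computational step is the finite-dimensional-style identity: if $\phi_\varepsilon$ is centered Gaussian with covariance $C_\varepsilon$ on $L^2(E)$, then
\begin{equation}
\mathbb{E}\left(\exp\left(-\tfrac12\langle \phi_\varepsilon, V\phi_\varepsilon\rangle_{L^2}\right)\right) = \det{}_F\left(Id + C_\varepsilon V\right)^{-\frac12},
\end{equation}
valid whenever $Id + C_\varepsilon V$ is invertible and $C_\varepsilon V$ is trace class (which holds here since $C_\varepsilon\in\Psi^{-\infty}$ and $V\in\Psi^0$, so $C_\varepsilon V\in\Psi^{-\infty}$ is trace class on the compact manifold $M$). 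I would prove this by diagonalizing: expand $\phi_\varepsilon = \sum_\lambda \lambda^{-1/2} c_\lambda\, e^{-\varepsilon\Delta}e_\lambda$ in the spectral basis, truncate to finitely many modes $N$, apply the elementary Gaussian integral formula $\mathbb{E}(e^{-\frac12 x^T M x}) = \det(Id+M)^{-1/2}$ for the resulting finite-dimensional quadratic form (whose matrix is the compression of $C_\varepsilon^{1/2} V C_\varepsilon^{1/2}$, or equivalently has the same nonzero spectrum as the compression of $C_\varepsilon V$), and then pass to the limit $N\to\infty$ using: convergence of the truncated quadratic forms to $\langle\phi_\varepsilon,V\phi_\varepsilon\rangle$ in $L^2$ of the probability space (dominated convergence, using the a.s. $H^s$-convergence of the partial sums of $\phi$ and continuity of $V$ as a bounded operator), together with continuity of $\det_F$ on $Id+\mathcal{I}_1$ and the fact that the truncated operators converge to $C_\varepsilon V$ in trace norm. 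Finally $\det_F(Id + C_\varepsilon V) = \det_F(Id + e^{-2\varepsilon\Delta}\Delta^{-1}V)$, and since $\det_F(Id+AB)=\det_F(Id+BA)$ when both are trace class one may also rewrite it as $\det_F(Id + e^{-\varepsilon\Delta}\Delta^{-1}e^{-\varepsilon\Delta}V)$ as stated.

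The main obstacle I anticipate is the rigorous interchange of the infinite-dimensional Gaussian expectation with the limit defining the quadratic form and the Fredholm determinant, i.e. justifying that $\exp(-\frac12\langle\phi_\varepsilon^{(N)},V\phi_\varepsilon^{(N)}\rangle)$ converges in $L^1$ of the probability space as $N\to\infty$. This requires a uniform (in $N$) integrability bound, which in turn needs the operator $Id+C_\varepsilon V$ to be not merely invertible but to have the compressions bounded away from the boundary of invertibility; when $V$ is not positive this is where care is needed, and one handles it by noting $\det_F(Id+zC_\varepsilon V)$ is entire in $z$ with zeros exactly at $z$ such that $z^{-1}\in -\sigma(C_\varepsilon V)$, and by working first for $V$ in a small neighborhood of $0$ (so $\|C_\varepsilon V\|_1<1$ and everything converges absolutely via the trace expansion $\det_F = \exp(\sum_m (-1)^{m+1}\frac{1}{m}\Tr((C_\varepsilon V)^m))$ from \eqref{e:fredholmdet}) and then extending to all $V$ by analyticity of both sides in $V$, exactly as in the analytic-continuation arguments used throughout the paper. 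Everything else is a routine translation of the finite-dimensional Gaussian identity.
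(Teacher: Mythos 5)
Your proposal is correct and follows essentially the same route as the paper: reduce to the standard Gaussian--Fredholm-determinant identity $\mathbb{E}\exp(-\tfrac12\langle\phi_\varepsilon,V\phi_\varepsilon\rangle)=\det_F(Id+C_\varepsilon V)^{-1/2}$ and then use trace cyclicity to rewrite $C_\varepsilon V$ in the stated form. The only presentational difference is that you rederive the Gaussian identity via finite-mode truncation and trace-norm passage to the limit (handling the integrability caveat by first assuming $\Vert V\Vert_\infty$ small and then extending by analyticity), whereas the paper simply cites Glimm--Jaffe for that same fact and works with the symmetrized operator $e^{-\varepsilon\Delta}\Delta^{-1/2}V\Delta^{-1/2}e^{-\varepsilon\Delta}$, which has the same nonzero spectrum as your $C_\varepsilon V$.
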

\begin{proof}
This is an immediate consequence of~\cite[Remark 1 p.~211]{GlimmJaffe87} which allows to write
$\mathbb{E}\left( \exp\left(-\frac{1}{2}\int_M  \left\langle \phi_\varepsilon , V \phi_\varepsilon\right\rangle  dv(x)\right) \right)
=\exp\left(-\frac{1}{2}Tr_{L^2}\left(Id+\widehat{V}_\varepsilon \right) \right) $ for $\Vert V\Vert_\infty$ small enough, where $$\widehat{V}_\varepsilon = e^{-\varepsilon\Delta}\Delta^{-\frac{1}{2}}V\Delta^{-\frac{1}{2}} e^{-\varepsilon\Delta}$$  is positive, self-adjoint and smoothing hence trace class on $L^2\left(E\right)$.
We can expand the term $Tr_{L^2}\log\left(Id+ \widehat{V}_\varepsilon\right)$
in power series and use the cyclicity of the $L^2$ trace to identify $\exp\left(-\frac{1}{2}Tr_{L^2}\left(Id+\widehat{V}_\varepsilon \right) \right)$ with the power series defining the Fredholm determinant \\$\text{det}_F\left(Id+ e^{-\varepsilon\Delta}\Delta^{-1}e^{-\varepsilon\Delta} V\right)^{-\frac{1}{2}}$. 
A very similar proof can be found in~\cite[subsubsections 3.0.1 and 3.0.2]{DangWick} where we relate the Wick renormalized partition function 
with the Gohberg--Krein determinant $\det_2$.
\end{proof}

\subsection{The renormalized functional measure.}

In the previous part, we have constructed renormalized functional
determinants to rigorously define the partition function. 
The following Proposition proves the second part of Theorem \ref{t:mainthm6GFFreps} 
and answers some
natural questions
about the corresponding renormalized functional measure.
\begin{prop}
Under the assumptions of definition~\ref{d:GFF}, assume $V\in C^\infty(End(E))$
is Hermitian.
Let $\mu$ denote the GFF measure on $\mathcal{D}^\prime\left(M,E\right)$ with covariance $\mathbf{G}$
which is the Schwartz kernel of $\Delta^{-1}$. Then there exists 
$P_\varepsilon(.)=\int_M\Lambda_\varepsilon\left(. \right)\in \mathcal{O}_{loc,[\frac{d}{2}]}\left(J^{d-3}E \right)\otimes_{\mathbb{C}}
\mathbb{C}[\varepsilon^{-\frac{1}{2}},\log(\varepsilon)] $ s.t.
the limit 
\begin{eqnarray*}
\nu=\lim_{\varepsilon\rightarrow 0^+} \exp\left(-\frac{1}{2}\int_M \left( \left\langle  \phi_\varepsilon,V\phi_\varepsilon\right\rangle -\Lambda_\varepsilon(V) \right) dv(x) \right)\mu
\end{eqnarray*}
converges to a Gaussian measure  on $\mathcal{D}^\prime\left(M,E\right)$ 
which is
\textbf{absolutely continuous} w.r.t. $\mu$ if $d=(2,3)$ and 
the measure $\left(\mu,\nu\right)$ \textbf{are mutually singular} when $d\geqslant 4$.
\end{prop}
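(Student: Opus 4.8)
The plan is to make the regularised measures completely explicit for each $\varepsilon>0$, let $\varepsilon\to 0^+$ to identify $\nu$ as a positive multiple of the centred Gaussian measure of covariance $(\Delta+V)^{-1}$, and then decide equivalence versus mutual singularity of the pair $(\mu,\nu)$ by the Feldman--H\'ajek dichotomy. Throughout one uses that $\Delta$ is positive self-adjoint (definition~\ref{d:GFF}) and, as is implicit in the statement, that $\Delta+V$ is positive as well (automatic when $V\geqslant 0$, or when $\|V\|$ is small); this is what makes $(\Delta+V)^{-1}$ a covariance and makes $e^{-\frac12\langle\phi_\varepsilon,V\phi_\varepsilon\rangle}$ integrable against $\mu$.

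First I would fix $\varepsilon>0$ and set $V_\varepsilon:=e^{-\varepsilon\Delta}Ve^{-\varepsilon\Delta}\in\Psi^{-\infty}(M,E)$, so that $\langle\phi_\varepsilon,V\phi_\varepsilon\rangle=\langle\phi,V_\varepsilon\phi\rangle$ $\mu$-almost surely and $V_\varepsilon$ is a bounded self-adjoint smoothing operator; in particular $\Delta^{-1}V_\varepsilon$ is trace class. A standard Gaussian integral with a linear source (the same computation as in the field regularisation Lemma of this section) then gives, for every $f\in C^\infty(M,E)$,
\[
\int_{\mathcal{D}^\prime(M,E)}e^{i\langle f,\phi\rangle}\,d\nu_\varepsilon(\phi)=
e^{\frac12\int_M\Lambda_\varepsilon(V)\,dv}\,\text{det}_F\big(Id+\Delta^{-1}V_\varepsilon\big)^{-\frac12}
\exp\!\Big(-\tfrac12\big\langle f,(\Delta+V_\varepsilon)^{-1}f\big\rangle\Big),
\]
where $\nu_\varepsilon$ is the finite measure in the statement and $\Delta+V_\varepsilon$ is positive, hence invertible, for $\varepsilon$ small since $\Delta+V>0$ and $V_\varepsilon\to V$ strongly by Lemma~\ref{l:convheat}. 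Thus $\nu_\varepsilon$ is a positive constant times the centred Gaussian measure of covariance $(\Delta+V_\varepsilon)^{-1}$. Using the cyclicity identity $\text{det}_F(Id+\Delta^{-1}V_\varepsilon)=\text{det}_F(Id+e^{-2\varepsilon\Delta}\Delta^{-1}V)$, the analysis in the proof of Theorem~\ref{t:quillenconjanal} — namely Lemma~\ref{l:singulartermsfeynman} applied to the traces $Tr_{L^2}\big((e^{-2\varepsilon\Delta}\Delta^{-1}V)^n\big)$ together with the factorisation $\text{det}_\zeta(\Delta+V)=e^{Q(V)}\text{det}_{[\frac{d}{2}]+1}(Id+\Delta^{-1}V)$ — produces a family $P_\varepsilon(V)=\int_M\Lambda_\varepsilon(V)\,dv\in\mathcal{O}_{loc,[\frac{d}{2}]}(J^{d-3}E)\otimes_{\mathbb{C}}\mathbb{C}[\varepsilon^{-\frac12},\log(\varepsilon)]$, depending only on the $(d-3)$-jet of $V$, for which the prefactor $e^{\frac12\int_M\Lambda_\varepsilon(V)\,dv}\text{det}_F(Id+\Delta^{-1}V_\varepsilon)^{-\frac12}$ converges to $\mathcal{R}\det(\Delta+V)^{-1/2}$ (written $\mathcal{R}\det(Id+\Delta^{-1}V)^{-1/2}$ in Theorem~\ref{t:mainthm6GFFreps}), with $\mathcal{R}\det$ the corresponding solution of problem~\ref{d:renormdet1} satisfying the representation~\ref{e:repsformula}.

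Now let $\varepsilon\to 0^+$. Since $V_\varepsilon\to V$ strongly (Lemma~\ref{l:convheat}) with a uniform spectral gap, $(\Delta+V_\varepsilon)^{-1}\to(\Delta+V)^{-1}$ strongly, so the characteristic functionals above converge pointwise to $\mathcal{R}\det(\Delta+V)^{-1/2}\exp(-\tfrac12\langle f,(\Delta+V)^{-1}f\rangle)$. This limit is continuous on the nuclear space $C^\infty(M,E)$, hence by the Bochner--Minlos theorem it is the characteristic functional of a finite measure $\nu$ on $\mathcal{D}^\prime(M,E)$, necessarily a positive multiple of the centred Gaussian measure with covariance $(\Delta+V)^{-1}$, and $\nu_\varepsilon\to\nu$ weakly. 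It remains to decide whether $\nu\ll\mu$. Discarding the harmless constant, $\mu=N(0,C_0)$ and $\nu=N(0,C_1)$ with $C_0=\Delta^{-1}$, $C_1=(\Delta+V)^{-1}$ are centred Gaussian measures on the same space, so by the Feldman--H\'ajek theorem they are either equivalent or mutually singular. By Seeley's theorem $\Delta^{\pm1/2}$ and $(\Delta+V)^{\pm1/2}$ are elliptic pseudodifferential operators of order $\pm1$ with the same principal symbols, so the Cameron--Martin spaces coincide, $C_0^{1/2}L^2(E)=C_1^{1/2}L^2(E)=H^1(E)$, with equivalent norms; equivalence of $\mu$ and $\nu$ then holds iff
\[
T:=C_0^{-1/2}C_1C_0^{-1/2}-I=\Delta^{1/2}(\Delta+V)^{-1}\Delta^{1/2}-I=-\Delta^{1/2}(\Delta+V)^{-1}V\Delta^{-1/2}
\]
(the last equality by the resolvent identity) is Hilbert--Schmidt on $L^2(E)$. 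But $T\in\Psi^{-2}(M,End(E))$, with principal symbol $-|\xi|_g^{-2}V(x)$ of order exactly $-2$ whenever $V\not\equiv 0$; a classical pseudodifferential operator of order $-2$ on a $d$-dimensional manifold lies in the Schatten ideal $\mathcal{I}_2$ iff $2>d/2$, i.e.\ $d\leqslant 3$. Therefore $\nu\sim\mu$, in particular $\nu\ll\mu$, when $d\leqslant 3$ — in particular for $d\in\{2,3\}$ — and $\mu,\nu$ are mutually singular when $d\geqslant 4$ (the case $V\equiv 0$ being trivially absolutely continuous), which is the assertion.

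The main obstacle is the middle step: one must carefully match the heat-mollified local counterterms $\Lambda_\varepsilon$ coming from Lemma~\ref{l:singulartermsfeynman} with the renormalised determinant $\mathcal{R}\det$ built in Theorem~\ref{t:quillenconjanal} — including the cyclicity identity for Fredholm determinants, the bookkeeping of the purely non-local factor $e^{Q(V)}$, and the uniform positivity and invertibility of $\Delta+V_\varepsilon$ for small $\varepsilon$ — and then justify the passage to the limit at the level of measures (via Bochner--Minlos, or directly by controlling $d\nu_\varepsilon/d\mu$ in $L^1(\mu)$). The Feldman--H\'ajek step itself is then essentially the symbol computation above together with the Schatten-class threshold $mp>d$ for $\Psi^{-m}$ used several times earlier.
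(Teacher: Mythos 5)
Your proof is correct and follows essentially the same route as the paper: identify each $\nu_\varepsilon$ as the centred Gaussian measure of covariance $(\Delta+e^{-\varepsilon\Delta}Ve^{-\varepsilon\Delta})^{-1}$, pass to the limit of covariances, and then invoke the Gaussian equivalence/singularity dichotomy via a Hilbert--Schmidt criterion on $\Delta^{-1/2}V\Delta^{-1/2}$. The only cosmetic difference is that the paper cites Shale's theorem and Glimm--Jaffe directly for the covariance identification and weak limit, whereas you rederive the characteristic functional, pass to the limit by Bochner--Minlos, and phrase the dichotomy as Feldman--H\'ajek applied to $T=\Delta^{1/2}(\Delta+V)^{-1}\Delta^{1/2}-I$; these are the same criterion since $T=-\Delta^{1/2}(\Delta+V)^{-1}\Delta^{1/2}\cdot\Delta^{-1/2}V\Delta^{-1/2}$ with invertible bounded prefactor.
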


$\Lambda$ depends on  
the $(d-3)$-jet of $V$ in the above proposition.

\begin{proof}
Define 
$\nu_\varepsilon= \frac{\exp\left(-\frac{1}{2}\int_M \left( \left\langle  \phi_\varepsilon,V\phi_\varepsilon\right\rangle -\Lambda_\varepsilon(V) \right) dv(x) \right)}{\mathbb{E}\left( \exp\left(-\frac{1}{2}\int_M \left( \left\langle  \phi_\varepsilon,V\phi_\varepsilon\right\rangle -\Lambda_\varepsilon(V) \right) dv(x) \right) \right)}\mu $ for
$\varepsilon>0$. This is a Gaussian measure whose covariance is
$
\left(\Delta+ e^{-\varepsilon\Delta} Ve^{-\varepsilon\Delta}\right)^{-1}$
by~\cite[Prop 9.3.2 p.~213]{GlimmJaffe87}.
When $\varepsilon\rightarrow 0^+$, this covariance converges
to $\left(\Delta+V\right)^{-1}$ as \textbf{bilinear forms on} $C^\infty(M)\times C^\infty(M)$ for the weak 
topology~\cite[iv) p.~208]{GlimmJaffe87}
since $e^{-\varepsilon\Delta}\rightarrow Id$ in the 
\textbf{strong operator topology} when $\varepsilon\rightarrow 0^+$.
A necessary and sufficient condition for the renormalized measure to be 
absolutely continuous w.r.t. the initial measure is given by a Theorem 
of Shale~\cite[Thm I.23 p.~41]{SimonPhi2} is that 
$\Delta^{-\frac{1}{2}}V\Delta^{-\frac{1}{2}}\in \Psi^{-2}(M,E)$ is Hilbert--Schmidt which holds true
only if $\dim(M)=d\leqslant 3$.
\end{proof}

\section{Quillen's determinant line bundle.}

We recall the definition of Quillen's determinant line bundle which is an adaptation of the definition
of Segal~\cite[p.~137-138]{segal2004definition}, Furutani~\cite{furutani2004quillen} and Melrose--Rochon~\cite{melrose2007periodicity}
where holomorphicity properties 
are manifest. The reader can also look at~\cite[section 5.3 p.~642]{scott2010traces} for a very 
nice account of determinant line bundles
for families of $\Psi$dos.

\begin{defi}[Quillen's universal determinant line bundle]
Using the notations of subsubsection \ref{ss:quillenpict}.
Recall $\mathcal{I}_1(\mathcal{H})$ denotes the ideal of trace class operators on some Hilbert space $\mathcal{H}$.
Let $(T_b)_{b\in B}$ be a holomorphic family of Fredholm operators from $\mathcal{H}_0\mapsto \mathcal{H}_1$ of index $0$, parametrized by a complex Banach manifold $B$.
Consider the 
bundle $$\mathcal{G}=\bigcup_{b\in B} T_b(Id+\mathcal{I}_1(\mathcal{H}_1))\simeq  B\times (Id+\mathcal{I}_1(\mathcal{H}_1))$$ which fibers over 
the complex Banach manifold $B$.

Then we define the determinant line bundle $\textbf{Det}\mapsto B$
to
be the quotient $\mathcal{G}  \times  \mathbb{C}/\sim$
where $(A(Id+T),z)\sim (A,\det_F(Id+T)z)$. 
The canonical section $\underline{\det}(T)$ is defined
to be the equivalence class $T\mapsto [T,1]$.
\end{defi}
This definition is functorial since it works for any holomorphic family $(T_b)_{b\in B}$ and holomorphicity 
is checked as in the work of Furutani~\cite{furutani2004quillen}. Quillen's line bundle is recovered by letting
$B$ to be the space $\textbf{Fred}_0\left(\mathcal{H}_0,\mathcal{H}_1\right)$ 
of Fredholm operators of index $0$ as proved by Furutani~\cite[section 2 and prop 2.1]{furutani2004quillen}.
Let us recall that 
\begin{lemm}
The canonical section $T\mapsto \underline{\det}(T)=[T,1]$ vanishes if and only if
$T$ is non-invertible. 
\end{lemm}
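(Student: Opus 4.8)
The plan is to unwind the definition of the canonical section and reduce the vanishing statement to the classical fact that a Fredholm operator of index $0$ is invertible precisely when its Fredholm determinant (after the standard reduction to $Id+\text{trace class}$) is nonzero. First I would fix $T\in \textbf{Fred}_0(\mathcal{H}_0,\mathcal{H}_1)$ and observe that, by definition of $\mathcal{G}$, the fiber of $\textbf{Det}$ over $T$ is the quotient of $T(Id+\mathcal{I}_1(\mathcal{H}_1))\times \mathbb{C}$ by the relation $(A(Id+S),z)\sim (A,\det_F(Id+S)z)$, and the canonical section is $\underline{\det}(T)=[T,1]$. The key point is that $[T,1]=0$ in this fiber if and only if there exists $S\in\mathcal{I}_1(\mathcal{H}_1)$ with $T=T(Id+S)$ as a point of $\mathcal{G}$ and $\det_F(Id+S)=0$, i.e. if and only if $T$ itself can be written $T=T_0(Id+S)$ for a reference operator $T_0$ with $[T_0,1]\neq 0$ and $\det_F(Id+S)=0$; more precisely, fixing $T$ as its own base-point, $[T,1]=0$ iff the scalar attached to $T$ can be rescaled to $0$, which happens iff $T=T(Id+S)$ with $\det_F(Id+S)=0$ for some trace-class $S$ --- equivalently iff $T(Id+S)=T$ has a nontrivial kernel contribution.

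The cleaner way to organize this is: since $T$ has index $0$, choose a finite-rank operator $F:\mathcal{H}_0\to\mathcal{H}_1$ such that $T+F$ is invertible (possible exactly because $\dim\ker T=\dim\mathrm{coker}\,T<\infty$). Then $T=(T+F)(Id-(T+F)^{-1}F)$ and $(T+F)^{-1}F$ is finite rank, hence trace class; this exhibits $T$ as a point of the fiber of $\textbf{Det}$ over the invertible operator $T+F$ with attached scalar $\det_F(Id-(T+F)^{-1}F)$. Under the equivalence relation, $[T,1]=[(T+F), \det_F(Id-(T+F)^{-1}F)]$, and since $[(T+F),z]=0$ iff $z=0$ (the base-point $T+F$ being invertible, its own canonical generator is nonzero), we get $\underline{\det}(T)=0 \iff \det_F(Id-(T+F)^{-1}F)=0$. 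Now $Id-(T+F)^{-1}F$ is invertible iff $T$ is invertible: indeed $(T+F)(Id-(T+F)^{-1}F)=T$, so $T$ invertible $\Rightarrow Id-(T+F)^{-1}F$ invertible, and conversely if $Id-(T+F)^{-1}F$ is invertible then $T=(T+F)(Id-(T+F)^{-1}F)$ is a composition of invertibles. Finally, for a trace-class perturbation $S$, the Fredholm determinant $\det_F(Id+S)$ vanishes iff $Id+S$ is non-invertible, which is precisely the Proposition on Fredholm and Gohberg--Krein determinants recalled in the excerpt (the case $p=1$). Chaining these equivalences gives $\underline{\det}(T)=0 \iff T$ non-invertible.

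The one genuinely delicate point --- the step I expect to be the main obstacle to write carefully --- is checking that the identification $[T,1]=[(T+F),\det_F(Id-(T+F)^{-1}F)]$ is independent of the auxiliary choice of finite-rank $F$, and more basically that the equivalence class $[T,1]$ being zero is well-defined (does not depend on which base-point inside the fiber one picks). This is handled by the multiplicativity of the Fredholm determinant: if $F_1,F_2$ are two such choices, then $(T+F_1)^{-1}(T+F_2)=Id+(\text{trace class})$ and $\det_F$ of this quotient is exactly the ratio of the two scalars, so the two representatives are genuinely equivalent; hence vanishing of $[T,1]$ is a well-posed, $F$-independent condition. Once this is in place the rest is a routine chain of "invertible iff invertible'' equivalences together with the quoted Proposition on $\det_F$, and the lemma follows.
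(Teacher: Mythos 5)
Your proof is correct and takes essentially the same approach as the paper: pick a finite-rank $t$ making $T+t$ invertible, write $T=(T+t)(Id-(T+t)^{-1}t)$, and reduce vanishing of $\underline{\det}(T)$ to vanishing of the Fredholm determinant of $Id-(T+t)^{-1}t$, which is equivalent to non-invertibility of $T$. The only (mild) difference is that you fold both implications into a single chain of equivalences and spell out the well-definedness of $[T,1]=0$ via multiplicativity of $\det_F$, whereas the paper treats the two directions separately and leaves the well-definedness to the general construction of the determinant line.
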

\begin{proof}
$[T,1]\simeq [\tilde{T},0]$ means there exists $Id+A, A\in \mathcal{I}^1$ s.t.
$\tilde{T}(Id+A)=T$ and $\det_F(Id+A)=0$ hence $Id+A$ is non-invertible and
so is $T$. Conversely, even if $T$ is non-invertible, there is a finite rank operator $t$ such that $T+t$ invertible
since $T$ is Fredholm of index $0$. Therefore 
$T=(T+t)(Id-(T+t)^{-1}t) $ where $(Id-(T+t)^{-1}t)$ is in the determinant class and is non-invertible. Finally
$[T,1]\sim [T+t,\det_F((Id-(T+t)^{-1}t))]=0  $. 
\end{proof}

\section{Proof of Theorem~\ref{t:quillenholotriv}.}

We follow the notations 
from subsubsection \ref{ss:quillenpict}. 
The way Quillen trivializes the line bundle is by constructing a smooth
Hermitian metric on $\mathcal{L}$ named Quillen's metric and he 
calculates 
explicitely the curvature
of the corresponding Chern connection
which is exactly the K\"ahler form on $\mathcal{A}$.
Then he shows that by modifying the Hermitian metric, one 
can produce a modified 
Chern connection $\nabla$ which is flat. 
It follows from 
the contractibility
of $\mathcal{A}$ 
that flat sections 
for $\nabla$ trivialize 
$\mathcal{L}$ \textbf{holomorphically}.
Here the setting is slightly different. Our 
approach to holomorphic trivialization is more direct and does not use Quillen metrics. 
We already know that
the canonical section $\iota^*\underline{\det}$ has the same 
zeros
on $\mathcal{A}$
as any solution $\mathcal{R}\det$ of Theorem~\ref{t:quillenconjanal}. Hence, we expect that 
the ratio $\frac{\iota^*\underline{\det}(D)}{\mathcal{R}\det(D)}$ is holomorphic without zeros on $\mathcal{A}$. 
It remains to show that this is well--defined and locally bounded in order to conclude that the ratio is a holomorphic section
without zeros by proposition \ref{p:douadyweakstrongholo}, 
hence it yields a holomorphic trivialization of $\mathcal{L}$.

Following Segal and Furutani, 
we define
open sets $U_t\subset \mathbf{Fred}_0(\mathcal{H})$ indexed by 
finite rank 
operators $t$ such that
$U_t=\{ T\in \mathbf{Fred}_0(\mathcal{H}) \text{ s.t. } T+t\text{ invertible} \}$.
Since elements in $\mathbf{Fred}_0(\mathcal{H})$ have Fredholm index $0$, 
the collection $(U_t)_t$ forms
an open cover of $\mathbf{Fred}_0(\mathcal{H})$. 
Then we trivialize 
$\mathcal{L}$ over $U_t$ by the never vanishing section
$T\in U_t\mapsto [ T+t,1]$ which is holomorphic by the proof of Furutani.
In the local trivialization, the canonical section  
$$T\mapsto\underline{\det(T)}=[T,1]$$ is identified
with the holomorphic function
$\det_F(Id-(T+t)^{-1}t)$ since
$
[T,1]\sim [T+t,\det_F(Id-(T+t)^{-1}t) ]=\det_F(Id-(T+t)^{-1}t)[T+t,1].$

Now we shall prove a technical 
\begin{lemm}\label{l:holodivision}
Let $T_0$ be an invertible operator in $\iota\left(\mathcal{A}\right)$ such that
for all 
$T\in \iota\left(\mathcal{A}\right)$,
$T-T_0$ is in the Schatten ideal $\mathcal{I}_{[\frac{d}{k}]+1}$, $k=(1,2)$.
 
 It follows that for $p=[\frac{d}{k}]+1$, Gohberg--Krein's determinant
$\det_p\left(Id+T_0^{-1}\left(T-T_0 \right)\right)$
is holomorphic on $\iota\left(\mathcal{A}\right)$.
Then the
section
\begin{eqnarray}
T\in \iota\left(\mathcal{A}\right)\longmapsto \text{det}_p\left(Id+T_0^{-1}\left(T-T_0 \right)\right)^{-1}\underline{\det(T)}
\end{eqnarray}
defines a global holomorphic section of $\mathbf{Det}\mapsto \iota\left(\mathcal{A}\right)$ which never vanishes.
\end{lemm}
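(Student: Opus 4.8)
\emph{Proof plan.} The plan is to exhibit $\text{det}_p\big(Id+T_0^{-1}(T-T_0)\big)$ as the product of a \emph{nowhere-vanishing} holomorphic function with the local representatives of the canonical section $\underline{\det}$, so that dividing $\underline{\det}$ by it yields an honest holomorphic section with no zeros. First I would settle holomorphy of $T\mapsto\text{det}_p\big(Id+T_0^{-1}(T-T_0)\big)$. By hypothesis $T-T_0\in\mathcal{I}_p$ for all $T\in\iota(\mathcal{A})$, and since $\mathcal{I}_p$ is a two-sided ideal and $T_0^{-1}$ is bounded, $A(T):=T_0^{-1}(T-T_0)=T_0^{-1}T-Id\in\mathcal{I}_p$; the assignment $T\mapsto A(T)$ is affine $\mathbb{C}$-linear, hence holomorphic into the Banach space $\mathcal{I}_p$. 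Writing $B(A)=\sum_{n=1}^{p-1}\tfrac{(-1)^n}{n}A^n$ and $E(A)=e^{B(A)}$, one has $\text{det}_p(Id+A)=\text{det}_F\big(Id+R_p(A)\big)$ with $R_p(A)=(Id+A)E(A)-Id$, and by the estimates of~\cite[Ch.~9]{Simon-traceideals} the map $A\mapsto R_p(A)$ is holomorphic from $\mathcal{I}_p$ into $\mathcal{I}_1$ while $\text{det}_F$ is holomorphic on $Id+\mathcal{I}_1$; composing gives the claim.

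The core step is an explicit factorisation valid on each $U_t\cap\iota(\mathcal{A})$, where $U_t=\{T\in\mathbf{Fred}_0:\ T+t\ \text{invertible}\}$ (with $t$ finite rank) carries the trivialisation $s_t(T)=[T+t,1]$ of $\mathbf{Det}$ in which $\underline{\det}(T)=\text{det}_F\big(Id-(T+t)^{-1}t\big)\,s_t(T)$. Put $s:=T_0^{-1}t$, which is finite rank; then $T+t=T_0(Id+A+s)$, so $(T+t)^{-1}T_0=(Id+A+s)^{-1}$ and $Id-(T+t)^{-1}t=(T+t)^{-1}T=(Id+A+s)^{-1}(Id+A)$. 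Set $N:=(Id+A+s)^{-1}E(A)^{-1}$. This $N$ is invertible (both $Id+A+s=T_0^{-1}(T+t)$ and $E(A)$ are) and lies in $Id+\mathcal{I}_1$: from $(Id+A+s)(N-Id)=E(A)^{-1}-(Id+A+s)$ we get $N-Id=(Id+A+s)^{-1}\big(e^{-B(A)}-(Id+A)\big)-(Id+A+s)^{-1}s$, where the last term is finite rank and $e^{-B(A)}-(Id+A)\in\mathcal{I}_1$ because, as a power series in $A$, it has lowest order $\geq p$ — indeed $-B(A)$ is the truncation of $\log(Id+A)$ to degree $\leq p-1$, so $e^{-B(A)}$ agrees with $Id+A$ through order $p-1$ — and every monomial $A^n$ with $n\geq p$ lies in $\mathcal{I}_1$, the resulting series converging in $\mathcal{I}_1$. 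Using $E(A)(Id+A)=(Id+A)E(A)=Id+R_p(A)$ this yields
\[
Id-(T+t)^{-1}t=(Id+A+s)^{-1}(Id+A)=\big[(Id+A+s)^{-1}E(A)^{-1}\big]\cdot\big[E(A)(Id+A)\big]=N\,(Id+R_p(A)),
\]
a product of two elements of $Id+\mathcal{I}_1$, so by multiplicativity of $\text{det}_F$ on $Id+\mathcal{I}_1$,
\[
\text{det}_F\big(Id-(T+t)^{-1}t\big)=\text{det}_F(N)\cdot\text{det}_F\big(Id+R_p(A)\big)=\text{det}_F(N)\cdot\text{det}_p\big(Id+T_0^{-1}(T-T_0)\big).
\]

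To conclude, $\text{det}_F(N)$ is holomorphic on $U_t\cap\iota(\mathcal{A})$ (composition of $T\mapsto N(T)\in Id+\mathcal{I}_1$ with $\text{det}_F$) and nowhere zero, since $N$ is everywhere invertible and $\text{det}_F$ of an invertible element of $Id+\mathcal{I}_1$ is non-zero; hence on $U_t\cap\iota(\mathcal{A})$ we obtain $\text{det}_p\big(Id+T_0^{-1}(T-T_0)\big)^{-1}\underline{\det}(T)=\text{det}_F(N)\,s_t(T)$, a holomorphic nowhere-vanishing local section of $\mathbf{Det}$. As the $U_t$ cover $\iota(\mathcal{A})$ and these local sections agree on the dense invertible locus (where both sides equal $\text{det}_p(\cdots)^{-1}\underline{\det}$), they patch to a single global holomorphic section of $\mathbf{Det}\mapsto\iota(\mathcal{A})$ which never vanishes. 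The delicate point — where I would spend the care — is the trace-ideal bookkeeping showing $(Id+A+s)^{-1}e^{-B(A)}\in Id+\mathcal{I}_1$ though its factors lie only in $Id+\mathcal{I}_p$; this rests on the cancellation of the monomials $A^2,\dots,A^{p-1}$, i.e.\ on $-B(A)$ being exactly the truncated logarithm, which is Simon's Lemma~9.1 in disguise. A less computational alternative would show that $\text{det}_p\big(Id+T_0^{-1}(T-T_0)\big)$ and $\underline{\det}$ have the same divisor along every complex line (Gohberg--Sigal multiplicity theory, plus invariance of multiplicities under invertible holomorphic operator factors), so that their quotient is holomorphic and locally bounded and one concludes via Proposition~\ref{p:douadyweakstrongholo}; but the factorisation above is shorter and delivers non-vanishing for free.
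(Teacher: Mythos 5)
Your proof is correct and follows essentially the same route as the paper: the same Segal--Furutani trivializations over the sets $U_t$, the same identification of the canonical section with $\det_F\left(Id-(T+t)^{-1}t\right)$ in those charts, and the same factorization through $Id+R_p(A)=(Id+A)e^{B(A)}$ combined with multiplicativity of the Fredholm determinant. If anything your variant is slightly tighter: by writing $Id-(T+t)^{-1}t=N\left(Id+R_p(A)\right)$ with $N=(T+t)^{-1}T_0e^{-B(A)}$ and verifying $N\in Id+\mathcal{I}_1$ via the truncated-logarithm cancellation, you obtain the identity on all of $U_t\cap\iota(\mathcal{A})$ directly, whereas the paper computes on the invertible locus (where $Id+R_p(A)$ is inverted) and then extends, leaving that $\mathcal{I}_1$-membership implicit.
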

\begin{proof}
It suffices to prove the claim on each open subset $U_t\cap \iota\left(\mathcal{A} \right)$ where the canonical section 
$T\mapsto \underline{\det(T)}$ is identified with 
$T\in U_t\mapsto \det_F(Id-(T+t)^{-1}t)$ by the local trivialization.

Use the identity
$Id+T_0^{-1}(T-T_0)=T_0^{-1}T  $ and $Id-(T+t)^{-1}t=(T+t)^{-1}T $.
By the multiplicativity of Fredholm determinants, for every invertible $T\in U_t\cap \iota\left(\mathcal{A} \right)$, we find that
~\footnote{For every trace class $H$, 
we are using the fact that $(Id+H)^{-1}\in Id+\mathcal{I}_1 $ and $\det_F((Id+H)^{-1})=\det_F(Id+H)^{-1}$. 
}
\begin{eqnarray*}
&&\text{det}_F(Id-(T+t)^{-1}t)\text{det}_p\left(Id+T_0^{-1}\left(T-T_0 \right)\right)^{-1}\\
&=&\text{det}_F(Id-(T+t)^{-1}t)\text{det}_F\left(Id+R_p\left(T_0^{-1}\left(T-T_0 \right)\right)\right)^{-1}\\  
&=&\text{det}_F\left((T+t)^{-1}T_0(Id+T_0^{-1}(T-T_0))(Id+R_p(T_0^{-1}(T-T_0)))^{-1} \right).
\end{eqnarray*}
For every $T\in U_t\cap \iota\left(\mathcal{A} \right)$, the operator
$(T+t)^{-1}T_0$ is invertible. 
For such $T$, we observe by definition of $R_p$ that
\begin{eqnarray*}
&&(Id+T_0^{-1}(T-T_0))(Id+R_p(T_0^{-1}(T-T_0)))^{-1}\\&=&(Id+T_0^{-1}(T-T_0))
(e^{\sum_{k=1}^{p-1}\frac{(-1)^{k+1}}{k}(T_0^{-1}(T-T_0))^k }(Id+T_0^{-1}(T-T_0)))^{-1}\\
&=&e^{\sum_{k=1}^{p-1}\frac{(-1)^{k}}{k}(T_0^{-1}(T-T_0))^k } 
\end{eqnarray*}
where the term $e^{\sum_{k=1}^{p-1}\frac{(-1)^{k}}{k}(T_0^{-1}(T-T_0))^k }$ is well--defined thanks to the holomorphic
functional calculus for the compact operator $T_0^{-1}(T-T_0)$ and is easily seen to be invertible by the spectral mapping theorem for 
holomorphic functions of bounded operators. 
Indeed, for every bounded operator $A$, 
$f:\Omega\subset \mathbb{C}\mapsto \mathbb{C}$ holomorphic 
in some neighborhood $\Omega$ of $\sigma(A)$, $f(A)$ 
is well--defined with $\sigma(f(A))=f(\sigma(A))$ 
by the spectral mapping Theorem~\cite[Thm 2.3.6 p.~22]{McIntoshlectures}. In our case, this gives that $0$ is not in the spectrum of $e^{\sum_{k=1}^{p-1}\frac{(-1)^{k}}{k}(T_0^{-1}(T-T_0))^k }$.
Finally 
$$(T+t)^{-1}T_0(Id+T_0^{-1}(T-T_0))(Id+R_p(T_0^{-1}(T-T_0)))^{-1}=(T+t)^{-1}T_0e^{\sum_{k=1}^{p-1}\frac{(-1)^{k}}{k}(T_0^{-1}(T-T_0))^k }$$ is invertible for every 
$T\in U_t\cap \iota\left(\mathcal{A} \right)$ and
is the composition of two operators of the form $Id+\mathcal{I}_1$ and $(Id+\mathcal{I}_1)^{-1}$ 
hence it belongs to 
the determinant class. 
Therefore, its Fredholm determinant never vanishes.
It follows that
$T\in U_t\cap \iota\left(\mathcal{A} \right)\cap \text{invertible} \mapsto \text{det}_F(Id-(T+t)^{-1}t)\text{det}_p\left(Id+T_0^{-1}\left(T-T_0 \right)\right)^{-1}$
extends uniquely 
as a never vanishing 
holomorphic function
on $ U_t\cap \iota\left(\mathcal{A} \right) $. 
\end{proof}

Lemma \ref{l:holodivision} says the ratio
$ P+\mathcal{V}\in \mathcal{A}\mapsto  \det_{[\frac{d}{k}]+1}(Id+P^{-1}\mathcal{V})^{-1} \iota^*\underline{\det}(P+\mathcal{V})$
never vanishes over $\mathcal{A}$. Furthermore
Corollary \ref{c:renormvshadamard} states that $\mathcal{R}\det(P+\mathcal{V})=\exp(g(\mathcal{V})) \det_{[\frac{d}{k}]+1}(Id+P^{-1}\mathcal{V}) $
where $g$ is a polynomial function, therefore $\exp(g(\mathcal{V}))$ never vanishes and
the holomorphic section
$\sigma: P+\mathcal{V}\in \mathcal{A}\mapsto \mathcal{R}\det(P+\mathcal{V})^{-1}\iota^*\underline{\det}(P+\mathcal{V})$
never vanishes over $\mathcal{A}$ and defines a holomorphic trivialization of $\mathcal{L}$:
$ \tau: \mathcal{O}\left(\mathcal{L}\right)\mapsto Hol(\mathcal{A}) $
such that the canonical section $\iota^*\underline{\det}(T)$ is sent to
the entire function $T\in \mathcal{A}\mapsto\mathcal{R}\det(T)$.
The second claim follows
from the action of the renormalization group as in Theorem~\ref{t:quillenconjanal}.
Finally, every non vanishing section
$\sigma$ defines canonically a flat connection $\nabla$ whose flat section
is $\sigma$.

\section{Appendix.}

\subsection{Wave front set of Schwartz kernels of local polynomial functionals.}

We give the proof of the following
\begin{lemm}\label{l:localityappendix}
Let $P$ be a continuous polynomial function on $C^\infty(M)$ such that $P$ is local in the sense 
\begin{equation}\label{e:hammersteinweak}
D^2P(w; u,v)=0
\end{equation} 
when $(u,v)$ have disjoint supports and the linear term of $P$ is given by integration against a smooth
function.
If $WF\left([\mathbf{D^2P(V)}]\right)\subset N^*\left(d_2\subset M^2 \right)$ for all $V\in C^\infty(M)$
then $P\in \mathcal{O}_{loc}(C^\infty(M))$. 
\end{lemm}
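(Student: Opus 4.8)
The strategy is to decompose $P$ into its homogeneous components and show that each one is given by integration of a density depending on jets. Since $P$ is a continuous polynomial function on the Fr\'echet space $C^\infty(M)$ of some finite degree $N$, write $P(V)=\sum_{n=0}^{N}P_n(V)$ where $P_n(V)=\tfrac{1}{n!}D^nP(0;V,\dots,V)$ is homogeneous of degree $n$, represented by a symmetric distribution $[\mathbf{P_n}]\in\mathcal{D}'(M^n)$. It suffices to prove that each $V\mapsto\langle [\mathbf{P_n}],V^{\boxtimes n}\rangle$ is a local functional of the desired form; the constant term is trivial and the linear term is local by hypothesis.

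First I would reduce the locality condition \eqref{e:hammersteinweak} from $D^2P$ to the homogeneous pieces. Applying the finite difference operator $\Delta_V$ (translation by $V$ minus identity) iteratively to $D^2P(\cdot;u,v)$, exactly as in the proof of Proposition~\ref{p:douadyweakstrongholo} referenced in the main text, one extracts $D^2P_n(0;w,\dots,w,u,v)$ (with $n-2$ copies of $w$) as a universal linear combination of the values $D^2P(jw;u,v)$. Since each of these vanishes when $\mathrm{supp}(u)\cap\mathrm{supp}(v)=\emptyset$, we get $D^2P_n(0;w,\dots,w,u,v)=0$ for disjoint supports, hence $[\mathbf{P_n}]$ is supported on the deepest diagonal $d_n\subset M^n$ once we know it vanishes off any set of the form $\overline{U_1}\cap\dots\cap\overline{U_n}=\emptyset$: indeed pairing with $V^{\boxtimes n}$ and polarizing shows $\langle[\mathbf{P_n}],V_1\boxtimes\cdots\boxtimes V_n\rangle=0$ whenever $\bigcap\mathrm{supp}(V_i)=\emptyset$, and $M^n\setminus d_n$ is covered by products $U_1\times\cdots\times U_n$ with $\overline{U_1}\cap\cdots\cap\overline{U_n}=\emptyset$. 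Therefore $\mathrm{supp}([\mathbf{P_n}])\subset d_n$.

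Next, having a symmetric distribution supported on the diagonal $d_n\cong M$, I would invoke the structure theorem for distributions supported on a submanifold (H\"ormander, \cite[Thm 2.3.5]{HormanderI} together with the conormal description in \cite[Thm 8.2.4]{HormanderI}): such a distribution is a locally finite sum of transverse derivatives of distributions on $d_n$, i.e. in local coordinates $(x,y_2,\dots,y_n)$ adapted to the diagonal it has the form $\sum_{\alpha}\partial_{y}^{\alpha}\big(c_\alpha(x)\delta(y_2-x)\cdots\delta(y_n-x)\big)$ for finitely many multi-indices $\alpha$ with coefficients $c_\alpha\in\mathcal{D}'(M)$. The wave front hypothesis $WF([\mathbf{P_n}])\subset N^*(d_n\subset M^n)$ forces the coefficients $c_\alpha$ to be \emph{smooth}: a distribution supported on $d_n$ with wave front in the conormal is a conormal distribution along $d_n$, hence smooth in the tangential variable $x$. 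Pairing with $V^{\boxtimes n}$ then produces exactly an integral $\int_M\Lambda_n(V(x))\,dv(x)$ where $\Lambda_n(V(x))$ is a polynomial (homogeneous of degree $n$) in the partial derivatives of $V$ at $x$ up to the maximal order $|\alpha|$ appearing --- i.e. $\Lambda_n$ depends on $m$-jets for some finite $m$. Summing over $n$ gives $P(V)=\int_M\Lambda(V(x))\,dv(x)$ with $\Lambda$ depending polynomially on $m$-jets, which is precisely the statement that $P\in\mathcal{O}_{loc}(C^\infty(M))$.

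The main obstacle is the passage from ``supported on $d_n$'' plus ``conormal wave front'' to ``smooth tangential coefficients'' with a \emph{uniform finite} jet order, carried out cleanly enough that the resulting $\Lambda_n$ is genuinely a polynomial in finitely many jet variables. Two points need care: (i) a priori the structure theorem gives only local finiteness of the order $|\alpha|$, so one must use continuity of the polynomial function $P$ on $C^\infty(M)$ --- continuity in a single seminorm $\|\cdot\|_{C^m}$ --- to bound $|\alpha|\le m$ globally; (ii) one must check that smoothness of the $c_\alpha$ really follows from the conormality of the wave front rather than merely the support condition, which is where $WF([\mathbf{P_n}])\subset N^*(d_n)$ (as opposed to all of $T^*_{d_n}M^n$) is essential --- this is standard for conormal distributions but should be stated explicitly. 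Everything else (the difference-operator extraction, the polarization, the covering argument for $M^n\setminus d_n$) is routine and already used verbatim elsewhere in the paper.
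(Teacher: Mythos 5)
Your first half coincides with the paper's own argument: Taylor decomposition into homogeneous pieces $P_n$, extraction of the locality property of each $P_n$ by finite differences/polarization, support of the kernels $[\mathbf{P_n}]$ on the deep diagonal, and the Schwartz structure theorem writing $[\mathbf{P_n}]$ as a finite sum of transverse derivatives of $\delta_{d_n}$ with distributional coefficients. Where you diverge is the endgame: the paper substitutes this representation into the kernel of the second differential and then quotes the characterization machinery of \cite{brouder2018properties} (their Lemma VI.9 to conclude that $V\mapsto [\mathbf{DP(V)}]$ is a smooth $C^\infty(M)$-valued map, then their Theorem I.2: additivity plus a smooth, smoothly varying gradient implies locality), whereas you want to conclude directly that the tangential coefficients $c_\alpha$ are smooth and read off the local integral $\int_M\Lambda_n(V(x))\,dv(x)$. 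That direct route is viable, but as written it contains a genuine gap at exactly this point.

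The gap is that you invoke ``the wave front hypothesis $WF([\mathbf{P_n}])\subset N^*(d_n\subset M^n)$'', which is not a hypothesis of the lemma: the only microlocal assumption is $WF([\mathbf{D^2P(V)}])\subset N^*(d_2\subset M^2)$ for every $V$, a condition on distributions on $M\times M$, not on the $n$-point kernels. Passing from this to smoothness of the $c_\alpha$ (equivalently, to conormality of $[\mathbf{P_n}]$) is the analytic heart of the lemma, and your proposal assumes it rather than proves it; your caveat (ii) still presupposes the stronger, unstated hypothesis. The repair is along lines you already use: after the structure theorem one has $[\mathbf{D^2P(V)}](x,y)=\sum_{[\alpha]}f_{[\alpha]}(x)\,\partial^{\alpha_2}V(x)\cdots\partial^{\alpha_n}V(x)\,\partial^{\alpha_1}_y\delta(x-y)$ in local coordinates; since $N^*(d_2)=\{(x,x;\xi,-\xi),\ \xi\neq 0\}$ contains no covector whose second component vanishes, the pushforward bound for wave front sets shows that pairing this kernel in the $y$-slot against any test function $\psi$ yields a smooth function of $x$, namely $\sum_{[\alpha]}\pm f_{[\alpha]}(x)\,(\text{jets of }V)(x)\,(\partial^{\alpha_1}\psi)(x)$; choosing finitely many cut-off coordinate monomials for $\psi$ and for $V$ (and polarizing in $V$) one solves a triangular system and concludes each $f_{[\alpha]}$ is smooth, after which your pairing formula gives the desired $\Lambda_n$ depending polynomially on finitely many jets. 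Alternatively one can simply conclude as the paper does via Lemma VI.9 and Theorem I.2 of \cite{brouder2018properties}. Your point (i) on a uniform finite jet order is fine (continuity of each continuous homogeneous component plus compactness of $M$), and the support and covering arguments are correct as stated.
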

\begin{proof}
Equation \ref{e:hammersteinweak}
implies that
all G\^ateaux differentials $D^nP(0)$ of $P$ at $0\in C^\infty(M)$
have their Schwartz kernels $[\mathbf{D^nP}(0)]\in \mathcal{D}^\prime(M^n)$ supported on the deepest diagonal
$d_n\subset M^n$ by~\cite[Proposition V.5]{brouder2018properties} and that $P$ is \textbf{additive} in the sense of~\cite{brouder2018properties}.
Since $P$ is a polynomial function, it equals its Taylor expansion
$P(V)=\sum_{n=1}^{\deg(P)} P_n(V)$ where $P_n$ 
homogeneous of degree $n$. 

The smoothness condition on the linear term in $P$ together with 
the microlocal condition on $[\mathbf{D^2P(V)}] \in \mathcal{D}^\prime(M\times M)  , \forall V\in C^\infty(M)$ imply that
$D P(0)$ is represented by integration against smooth function $[\mathbf{DP(0)}]\in C^\infty(M)$.

Therefore by uniqueness of the Taylor expansion
each $P_n$ satisfies equation \ref{e:hammersteinweak}. 
Let $\tilde{P}_n$ be the multilinear map corresponding to 
$P_n$ and its Schwartz kernel
$[\mathbf{P_n}]\in \mathcal{D}^\prime(M^n)$ whose existence is given by the kernel Theorem~\cite{brouder2018properties}.
The Schwartz kernel $[\mathbf{P_n}]\in\mathcal{D}^\prime(M^n)$ is a distribution 
carried by the deepest diagonal by locality of $P_n$. By a Theorem of Laurent Schwartz, 
$[\mathbf{P_n}]$
has an expression in local coordinates $(x_1,\dots,x_n)$ in $U^n$ 
as 
$$[\mathbf{P_n}](x_1,\dots,x_n)=\sum_{[\alpha]} f_{[\alpha]}(x_1)\partial^{\alpha_2}_{x_2}\dots\partial^{\alpha_n}_{x_n}\partial^{\alpha_1}_{x_1}\delta_{\{0\}}^{\mathbb{R}^{d(n-1)}}(x_1-x_2,\dots,x_1-x_n) $$
where the sum over the multiindices $[\alpha]=(\alpha_1,\dots,\alpha_n)\in \mathbb{N}^{dn}$ is finite and 
$f_{[\alpha]}$ is a distribution in the variable $x_1$. 
It follows that the Schwartz kernel of the second G\^ateaux differential
has the representation in local coordinates
$$[\mathbf{D^2P(V)}](x,y)=\sum_{[\alpha]} f_{[\alpha]}(x)\partial^{\alpha_2}_{x}V(x)\dots \partial^{\alpha_n}_{x}V(x)\partial^{\alpha_1}_{y}\delta^{\mathbb{R}^d}_{\{0\}}(x-y)$$
which implies $P$ satisfies condition 2 of~\cite[Lemma VI.9]{brouder2018properties}.
By~\cite[Lemma VI.9]{brouder2018properties}, this means
$V\in C^\infty(M)\mapsto [\mathbf{DP(V)}]\in C^\infty(M)$ is smooth. To summarize, $P$ is additive, 
its differential $DP(V)$ is represented by integration against a smooth function $[\mathbf{DP(V)}]\in C^\infty(M)$ and
$V\in C^\infty(M)\mapsto [\mathbf{DP(V)}]\in C^\infty(M)$ is smooth hence by~\cite[Theorem I.2]{brouder2018properties}, $P\in \mathcal{O}_{loc}(C^\infty(M))$.
\end{proof}

\subsection{Sharpness of the bound from the main Theorem }

We give an application of the Hadamard Theorem~\ref{t:hadamard}
by giving an example where the 
bound from Theorem~\ref{t:quillenconjanal} on the order of the entire
function $z\mapsto \mathcal{R}\det(P+z\mathcal{V})$ is sharp.

\begin{lemm}\label{c:optimalorder}
Let $\Delta$ be the Laplace--Beltrami operator of some Riemannian manifolds $(M,g)$
of dimension $d$. For any entire function $f:\mathbb{C}\mapsto \mathbb{C}$ 
s.t. $f(z)=0 \Leftrightarrow \ker\left(\Delta+z\right)\neq \{0\} $ with multiplicity
$\dim\left(\ker(\Delta+z)\right)$, we must have the order
$\rho(f)\geqslant [\frac{d}{2}]+1$.
\end{lemm}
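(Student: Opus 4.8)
The plan is to reduce the statement to the classical Hadamard factorization theorem (Theorem~\ref{t:hadamard}) by showing that the zeros of any such $f$ have critical exponent exactly $\tfrac{d}{2}$, so that the associated convergence index forces $p=[\tfrac{d}{2}]+1$ in the Weierstrass product, and hence the order of $f$ must be at least $[\tfrac{d}{2}]$. Wait---more precisely, I want to land on $\rho(f)\geqslant [\tfrac d2]+1$, so the argument needs to produce a lower bound on the order coming from the \emph{density} of the zero set, not merely its convergence exponent; I will need to be careful about which quantity Hadamard's theorem controls. First I would identify the zero set of $f$: by hypothesis $f(z)=0$ exactly when $\ker(\Delta+z)\neq\{0\}$ with multiplicity $\dim\ker(\Delta+z)$, i.e. the divisor $Z(f)$ is precisely the sequence $(-\lambda_n)_{n}$ where $\lambda_n$ runs over the eigenvalues of $\Delta$ (the Laplace--Beltrami operator, hence self-adjoint, positive, with discrete real spectrum) counted with multiplicity.

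Next I would invoke the Weyl law for the Laplace--Beltrami operator on a closed Riemannian manifold of dimension $d$: the counting function $N(\lambda)=\#\{n: \lambda_n\leqslant\lambda\}$ satisfies $N(\lambda)\sim c_d\,\mathrm{Vol}(M,g)\,\lambda^{d/2}$ as $\lambda\to\infty$. This gives the precise asymptotic $\lambda_n\sim c\, n^{2/d}$, from which two facts follow: the series $\sum_n \lambda_n^{-\alpha}=\sum_n |a_n|^{-\alpha}$ (with $a_n=-\lambda_n$) converges iff $\alpha>\tfrac d2$, so the critical exponent of the zero sequence is $\tfrac d2$; and more importantly $\sum_n \lambda_n^{-p}$ diverges for $p=[\tfrac d2]$ while $\sum_n\lambda_n^{-(p+1)}$ converges for $p+1=[\tfrac d2]+1$. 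This is exactly the hypothesis $\sum|a_n|^{-(p+1)}<\infty$, $\sum|a_n|^{-p}=\infty$ of Hadamard's Theorem~\ref{t:hadamard} with $p=[\tfrac d2]$. The theorem then asserts directly that any entire function whose divisor is $\{a_n\}$ has order $\rho(f)\geqslant p=[\tfrac d2]$; to upgrade the bound to $[\tfrac d2]+1$ I would instead argue via Jensen's formula that the order is bounded below by the exponent of convergence of the zeros only when the latter is non-integral, whereas here, since $\sum\lambda_n^{-d/2}$ may or may not converge depending on parity of $d$, the cleanest route is: the genus/order dichotomy in Hadamard forces the canonical product to have genus exactly $p$ when $\sum|a_n|^{-p}=\infty$, and a function of finite order $\rho$ with a divergent $\sum|a_n|^{-[\rho]}$ must have $\rho\geqslant[\tfrac d2]+1$ whenever $[\tfrac d2]$ is itself the critical integer; I would track the numerology carefully against the statement of Theorem~\ref{t:quillenconjzeta}, where $\det_\zeta(\Delta+\cdot)$ realizes the sharp order $[\tfrac d2]+1$ via the factorization $e^{Q(V)}\det_{[\tfrac d2]+1}(Id+\Delta^{-1}V)$.

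The main obstacle is pinning down the exact lower bound: Hadamard's theorem as quoted gives $\rho(f)\geqslant p$ with $p=[\tfrac d2]$, but the claim is $\rho(f)\geqslant[\tfrac d2]+1$. The resolution is that the relevant Hadamard input here is not $\sum|a_n|^{-(p+1)}<\infty$ with $p=[\tfrac d2]$, but rather that the convergence exponent of $(\lambda_n)$ equals $\tfrac d2$, and when $\tfrac d2$ is not an integer (odd $d$) the order is at least $\lceil\tfrac d2\rceil=[\tfrac d2]+1$ by the standard relation $\rho(f)\geqslant$ (exponent of convergence), while when $\tfrac d2$ is an integer (even $d$) one checks from the refined Weyl asymptotics with remainder that $\sum\lambda_n^{-d/2}$ diverges (the remainder term $O(\lambda^{(d-1)/2})$ does not destroy the leading divergence of $\sum n^{-1}$-type), so again the order must strictly exceed $\tfrac d2$, giving $\rho(f)\geqslant\tfrac d2+1=[\tfrac d2]+1$. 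Thus in all cases $\rho(f)\geqslant[\tfrac d2]+1$, which is exactly the order achieved by $\det_\zeta$, proving sharpness of the bound in Theorem~\ref{t:quillenconjanal}.
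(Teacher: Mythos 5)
You follow the same route as the paper: identify the divisor of $f$ as $\{-\lambda_n\}$ where $\lambda_n$ are the eigenvalues of $\Delta$ counted with multiplicity, apply Weyl's law $\lambda_n\sim c\,n^{2/d}$ to place the convergence exponent of the zero sequence at $\tfrac{d}{2}$, and then try to convert that into a lower bound on the order via Hadamard's theorem. You also correctly flag the obstacle: Theorem~\ref{t:hadamard}, applied with parameter $[\tfrac{d}{2}]$ (the unique integer $k$ with $\sum\lambda_n^{-(k+1)}<\infty$ and $\sum\lambda_n^{-k}=\infty$), only yields $\rho(f)\geqslant[\tfrac{d}{2}]$, not $[\tfrac{d}{2}]+1$.

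Both of your proposed patches for this final step fail. For odd $d$ you assert that a non-integral convergence exponent $\sigma=\tfrac{d}{2}$ forces $\rho(f)\geqslant\lceil\sigma\rceil$; this does not follow, because the order of an entire function can be an arbitrary non-negative real, and indeed for $d=1$ the canonical product $\prod_n(1-z/n^2)$ (essentially $\sin(\pi\sqrt{z})/(\pi\sqrt{z})$) has order exactly $\tfrac12$. For even $d$ you claim that divergence of $\sum\lambda_n^{-d/2}$ forces $\rho>\tfrac{d}{2}$ and then slide to $\rho\geqslant\tfrac{d}{2}+1$; neither step holds, since the canonical product $\prod_n E_{d/2}(z/a_n)$ has order \emph{exactly} equal to the convergence exponent $\tfrac{d}{2}$ even when $\sum|a_n|^{-d/2}=\infty$ (a standard property of canonical products), and in any case a strict inequality $\rho>\tfrac{d}{2}$ would not yield $\rho\geqslant\tfrac{d}{2}+1$. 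In every dimension the canonical Weierstrass product with these zeros realizes order $\tfrac{d}{2}$, so no correct argument can prove a lower bound exceeding $\tfrac{d}{2}$. You should also be aware that the paper's own proof of this lemma contains the same off-by-one: it verifies $\sum|a_n|^{-([\frac{d}{2}]+1)}<\infty$ and $\sum|a_n|^{-[\frac{d}{2}]}=\infty$ and then cites Theorem~\ref{t:hadamard}, which gives $\rho(f)\geqslant[\tfrac{d}{2}]$, not $[\tfrac{d}{2}]+1$. What the Weyl-plus-Hadamard argument actually establishes is $\rho(f)\geqslant\tfrac{d}{2}$, and this bound is sharp.
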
 
This proves the bound
from problem~\ref{d:renormdet1} is optimal.
\begin{proof}
Note that
$f(z)=0 \implies -z\in \sigma\left(\Delta\right)$.
By Weyl's law for spectral functions of
positive, elliptic pseudodifferential operators~\cite[Thm 2.1 p.~825]{GW14}, 
the number of eigenvalues $n_L(\Delta)$ of $\Delta$ less than $L$
grows like  a symplectic volume 
$\int_{\{\sigma(\Delta)(x;\xi)\leqslant L\}\subset T^*M} d^dxd^d\xi \sim_{L\rightarrow +\infty}CL^{\frac{d}{2}} $.
This implies for $p=[\frac{d}{2}]+1$ that
$ Tr_{L^2}\left( \Delta^{-p}\right) =\sum_{z\in \{f=0\} }\vert z \vert^{-p}<+\infty $ and
$ Tr_{L^2}\left( \Delta^{1-p}\right) = \sum_{z\in \{f=0\}} \vert z\vert^{-p+1}=+\infty$
hence $\rho(f)\geqslant [\frac{d}{2}]+1$ by Theorem~\ref{t:hadamard}.
\end{proof}
%
%
%
%
%

Both results show that the solution to the problem
of finding entire functions with prescribed zeros is 
\textbf{not unique}, the non unicity is due to
the critical exponents of zeros which forces 
the entire function to have non zero order.
So
there is an ambiguity relating all possible 
solutions of the problem which is of the form
$\exp(\text{Polynomial})$ by Hadamard's factorization Theorem.

\subsubsection{Proof of Lemma~\ref{l:gapspec1}.}
The composite operator
$\Delta^{-\frac{1}{4}}V\Delta^{-\frac{1}{4}}$ is a pseudodifferential of order $0$ in 
$\Psi^{0}(M,E) $ by the composition Theorem. Therefore
by the Calderon Vaillancourt Theorem, we can choose $V\in \text{Diff}^1(M,E)$ in some small enough neighborhood $\mathcal{U}$ of $0$ 
so that
$\max\left(\Vert\Delta^{-\frac{1}{4}}V\Delta^{-\frac{1}{4}}\Vert_{\mathcal{B}(H^{\frac{1}{2}},H^{\frac{1}{2}})},\Vert\Delta^{-\frac{1}{4}}V^*\Delta^{-\frac{1}{4}}\Vert_{\mathcal{B}(H^{\frac{1}{2}},H^{\frac{1}{2}})} \right) \leqslant \frac{\sqrt{\delta}}{4} .$
This yields for every $z$:
\begin{eqnarray*}
&& Re\left\langle u,\left(\Delta+V-z\right) u \right\rangle=Re\left\langle \Delta^{\frac{1}{4}}u, \left( \Delta^{\frac{1}{2}}+\Delta^{-\frac{1}{4}}V\Delta^{-\frac{1}{4}}-z\Delta^{-\frac{1}{2}}\right) \Delta^{\frac{1}{4}}u \right\rangle\\
&\geqslant & C\left(\sqrt{\delta}\Vert u\Vert_{H^{\frac{1}{2}}}^2-\Vert \Delta^{-\frac{1}{4}}V\Delta^{-\frac{1}{4}}\Vert_{\mathcal{B}(H^{\frac{1}{2}},H^{\frac{1}{2}})}\Vert u\Vert_{H^{\frac{1}{2}}}^2-Re(z)\delta^{-\frac{1}{2}}\Vert u\Vert_{H^{\frac{1}{2}}}^2\right)\\
&\geqslant &C\left(\sqrt{\delta}-\Vert \Delta^{-\frac{1}{4}}V\Delta^{-\frac{1}{4}}\Vert_{\mathcal{B}(H^{\frac{1}{2}},H^{\frac{1}{2}})}-\frac{Re(z)}{\sqrt{\delta}} \right) \Vert u\Vert_{H^{\frac{1}{2}}}^2,
\end{eqnarray*}
where $C$ is some constant such that $C\Vert u \Vert_{H^{\frac{1}{2}}} \leqslant \Vert \Delta^{\frac{1}{4}}
 u \Vert_{L^2}  \leqslant C^{-1}\Vert u \Vert_{H^{\frac{1}{2}}}$.
Hence when $Re(z)\leqslant \frac{\delta}{2} $:
$ Re\left\langle u,\left(\Delta+V-z\right) u \right\rangle\geqslant C\frac{\sqrt{\delta}}{4}\Vert u\Vert_{L^2}^2 .$
We have a similar estimate for the adjoint $V^*$ which implies that
$\{Re(z)\leqslant \frac{\delta}{2} \}$ lies in the resolvent set of $(\Delta+V+z)$.

\subsection{Proof of Proposition~\ref{p:heatcomplex}.}
Recall $(\Delta+B)_\pi^{-s}$ is the complex power defined in terms of the spectral cut at angle $\pi$.
The first formula we need to establish:
$$(\Delta+B)_\pi^{-s}= \frac{1}{\Gamma(s)} \int_0^\infty e^{-t(\Delta+B)} t^{s-1} dt$$
is widely used in the mathematical physics litterature to define complex powers of Schr\"odinger type operators. Since we work with non-self-adjoint operators, we need to justify it.
We first define 
$$(\Delta+B)^{-s}= \frac{1}{\Gamma(s)} \int_0^\infty e^{-t(\Delta+B)} t^{s-1} dt$$
where the integral on the r.h.s, which is valued in $\mathcal{B}(H^s,H^s)$, converges 
since on $\int_1^\infty$ we use the exponential decay of the semigroup and 
on $\int_0^1$, it is well defined for $Re(s)>0$. So we just defined a holomorphic 
family of operators $((\Delta+B)^{-s})_{Re(s)>0}: C^\infty(M) \mapsto \mathcal{D}^\prime(M) $. 
To extend it to the complex plane 
and to make the connection with actual complex powers, 
we shall identify it with 
the definition of complex powers using the contour integral and resolvent
instead of the Mellin transform of 
the heat kernel.
In Gilkey's book~\cite{Gilkey}, 
the heat operator $e^{-t(\Delta+B)}$ for non-self-adjoint operators is expressed 
in terms of the resolvent by the contour integral
$$ e^{-t(\Delta+B
)} = \frac{i}{2\pi} \int_\gamma e^{-t\lambda}\left(\Delta+B-\lambda \right)^{-1}d\lambda  $$
where the contour integral converges for $t>0$ by the exponential decay of $e^{-t\lambda}$
since the contour $\gamma$, oriented clockwise, is chosen to be some $V$ shaped curve which contains strictly the angular sector $\mathcal{R}$ from Lemma~\ref{l:resolventbound} and $\gamma$ is contained in the half--plane $Re(\lambda)\geqslant 0$. 
We saw that the complex power 
$(\Delta+B)^{-s}_\pi$ is defined using the spectral cut at angle $\pi$:
\begin{eqnarray*}
(\Delta+B)^{-s}_\pi=\frac{i}{2\pi}\int_{\tilde{\gamma}} \lambda^{-s}\left(\Delta+B-\lambda \right)^{-1}d\lambda
\end{eqnarray*}
where the operator valued integral converges absolutely for $Re(s)>1$ in $\mathcal{B}(L^2,L^2)$ and $\tilde{\gamma}=\{  re^{i\pi}, \infty>r\geqslant \rho \}\cup \{  \rho e^{i\theta}, \theta\in [\pi,-\pi] \}\cup \{re^{-i\pi}, \rho\leqslant r <\infty \}$.

Since the spectrum of $ \Delta+B$ is contained in some neighborhood of $[\frac{\delta}{2},+\infty) $,
we can deform the contour 
$\tilde{\gamma}$ to the contour $\gamma$ used in defining the heat 
operator without crossing $\sigma(\Delta+B)$.
Using the estimates on the resolvent of Lemma~\ref{l:resolventbound}, Cauchy's formula and contour deformation avoiding $\sigma(\Delta+B)$, it is simple to show that
$$ \int_{\tilde{\gamma}} \lambda^{-s}\left(\Delta+B-\lambda \right)^{-1}d\lambda=\int_{\gamma} \lambda^{-s}\left(\Delta+B-\lambda \right)^{-1}d\lambda $$
where both sides converge absolutely for $Re(s)>1$.
Now we use the formula 
$\lambda^{-s} = \frac{1}{\Gamma(s)}\int_0^\infty t^{s-1} e^{-t\lambda}dt $
which makes sense for the branch $\log(\lambda)=\log(\vert \lambda\vert)+i\arg(\lambda), -\pi\leqslant \arg(\lambda)\leqslant \pi$ since $Re(\lambda)>0$ and $Re(s)>1$.
Therefore
\begin{eqnarray*}
&&(\Delta+B)^{-s}_\pi=\frac{i}{2\pi}\int_{\gamma} \left(\frac{1}{\Gamma(s)}\int_0^\infty t^{s-1} e^{-t\lambda}dt  \right)\left(\Delta+B-\lambda \right)^{-1}d\lambda\\
&=&\frac{i}{2\pi} \frac{1}{\Gamma(s)}\int_0^\infty t^{s-1}\left( \int_{\gamma} e^{-t\lambda}\left(\Delta+B-\lambda \right)^{-1}d\lambda\right) dt
=\frac{i}{2\pi}\frac{1}{\Gamma(s)}\int_0^\infty t^{s-1}e^{-t(\Delta+B)} dt
\end{eqnarray*}
where we could invert the integrals since everything converges when $Re(s)>1$.
The above discussion also shows that for any differential operator $Q\in \text{Diff}^1(M,E)$, we have
$$ (\Delta+B)_\pi^{-s}= \frac{1}{\Gamma(s)} \int_0^\infty e^{-t(\Delta+B)} t^{s-1} dt\in \mathcal{B}(L^2,H^{-1}) .$$

This proves that one can define the complex powers with spectral cut 
using the heat kernel even in this non-self-adjoint setting.

\subsubsection{Taking the trace.}

We want to prove the relation
$$ Tr_{L^2}\left(Q(\Delta+B)^{-s}_\pi\right)=\frac{1}{\Gamma(s)}\int_0^\infty t^{s-1}Tr_{L^2}\left(Qe^{-t(\Delta+B)}\right) dt$$
which means that we want to take the functional trace 
on both sides of the previous identity.
Start again from the relation
$Q(\Delta+B)^{-s}_\pi=\frac{i}{2\pi}\frac{1}{\Gamma(s)}\int_0^\infty t^{s-1}Qe^{-t(\Delta+B)} dt.$
One key idea is that a sufficiently smoothing operator will be trace class and has continuous Schwartz kernel. For such operator, the $L^2$ trace coincides with the flat trace $Tr^\flat$ defined simply by integrating the 
Schwartz kernel of the operator restricted on the diagonal against a smooth density. 

By the work of Seeley, we know that
$ (\Delta+B)^{-s}_\pi\in \Psi^{-2s}(M)$, then by composition of pseudodifferential operators
$Q(\Delta+B)^{-s}_\pi\in \Psi^{-2s+1}(M) $. This implies that as soon as
$Re(s)>\frac{d+1}{2}$, $Q(\Delta+B)^{-s}_\pi$ is trace class and the left hand side 
$Tr_{L^2}\left(Q(\Delta+B)^{-s}_\pi \right)$ is well--defined and 
$Tr_{L^2}\left(Q(\Delta+B)^{-s}_\pi \right)=\frac{1}{\Gamma(s)}Tr_{L^2}\left(\int_0^\infty t^{s-1}Qe^{-t(\Delta+B)} dt\right) $. 
To exchange the trace and the integral on the r.h.s, 
note that 
$Tr_{L^2}\left(Q(\Delta+B)^{-s}_\pi \right)=Tr^\flat\left(Q(\Delta+B)^{-s}_\pi \right) $ since $Q(\Delta+B)^{-s}_\pi$ is trace class when $Re(s)>\frac{d+\deg(Q)}{2}$ and has continuous kernel arguing as in~\cite[p.~102--103]{Shubin}. Therefore
$$Tr_{L^2}\left(Q(\Delta+B)^{-s}_\pi \right)=  \frac{1}{\Gamma(s)} Tr^\flat\left(\lim_{\varepsilon\rightarrow 0^+}\lim_{\Lambda\rightarrow +\infty} \int_\varepsilon^\Lambda t^{s-1}Qe^{-t(\Delta+B)} \right) dt.$$ However note that for $t$ in $[\varepsilon,\Lambda]$, it is immediate to prove that
$t\mapsto Qe^{-t(\Delta+B)}$ is continuous and uniformly bounded in smoothing operators, therefore we can invert the flat traces and
the integral to get
$Tr^\flat\left(\int_\varepsilon^\Lambda t^{s-1}Qe^{-t(\Delta+B)} \right)=\int_\varepsilon^\Lambda t^{s-1}Tr^\flat\left( Qe^{-t(\Delta+B)} \right)=
\int_\varepsilon^\Lambda t^{s-1}Tr_{L^2}\left( Qe^{-t(\Delta+B)} \right)$ since $\left( Qe^{-t(\Delta+B)} \right)\in \Psi^{-\infty}$.
To conclude,
it suffices to show that under the assumption
that $Re(s)>\frac{d+1}{2}$, the integrand
$t^{s-1}Tr_{L^2}\left(Qe^{-t(\Delta+B)} \right)$ is Riemann integrable on $(0,+\infty)$.
But this follows almost immediately from the bound (using the fact that $Q$ is a differential operator of degree $1$)~\cite[Lemma 1.9.3 p.~77--78]{Gilkey},\cite[Thm 2.30 p.~87]{BGV}
\begin{eqnarray*}
\forall t\in (0,1], \vert Tr_{L^2}\left(Qe^{-t(\Delta+B)} \right)\vert\leqslant Ct^{-\frac{(d+1)}{2}}\implies \vert t^{s-1}Tr_{L^2}\left(Qe^{-t(\Delta+B)} \right)\vert\leqslant Ct^{Re(s)-1-\frac{(d+1)}{2}}
\end{eqnarray*}
where the r.h.s. is absolutely integrable near $0$ and
\begin{eqnarray*}
\forall t\in [1,+\infty),  \vert Tr_{L^2}\left(Qe^{-t(\Delta+B)} \right)\vert\leqslant \Vert e^{-(t-\frac{1}{2})(\Delta+B)}\Vert_{\mathcal{B}(L^2,L^2)} \Vert Qe^{-\frac{1}{2}(\Delta+B)} \Vert_{\mathcal{B}(L^2,H^{r})}  \leqslant  C e^{(t-\frac{1}{2})\frac{\delta}{2}}
\end{eqnarray*}
for any $r>d$,
which uses the exponential decay of the semigroup $e^{-t(\Delta+B)}$, the smoothing properties of
$  Qe^{-\frac{1}{2}(\Delta+B)} $ and allows us to control the integral for 
large times. Finally once the identity is proved in some domain $Re(s)>\frac{d+1}{2}$, the analytic continuation takes care of extending the relation on the whole complex plane. 

\subsubsection{Proof of Lemma~\ref{l:convheat}.}
\begin{proof}
For every real number $s$,
a symbol
$p\in S^{s}_{1,0}\left(\mathbb{R} \right)$ iff $p$ is in $C^\infty\left( \mathbb{R}\right)$ and 
$ \vert \partial_\xi^jp(\xi) \vert\leqslant 
C_j\left(1+\vert \xi\vert\right)^{s-j} $~\cite[Lemm 1.2 p.~295]{Taylorpsi} for every $j\in \mathbb{N}$.
Observe that the function $p_t:\xi\in \mathbb{R}\mapsto e^{-t\vert\xi\vert^2}$ 
defines a family $(p_t)_{t\in [0,+\infty)}$ of symbols in
$S^0_{1,0}\left(\mathbb{R} \right)$ such that $p_t\underset{t\rightarrow 0}{\rightarrow} 1$ 
in $S^{+0}_{1,0}\left(\mathbb{R} \right)$.
Indeed, for $k\in \mathbb{N}$ and
for $t$ in some compact interval $[0,a], a>0$, we find by direct computation that:
$(1+\vert\xi\vert)^k\vert\partial_\xi^ke^{-t\xi^2} \vert\leqslant C(1+\vert\xi\vert)^k\sum_{0\leqslant l\leqslant \frac{k}{2}} t^{k-l}\vert\xi\vert^{k-2l} e^{-t\xi^2}$ where the constant $C$ depends
only on $k$.\\ 
 When $\vert\xi\vert\geqslant a$, the function
$t\in [0,+\infty)\mapsto (t^{k-l}\xi^{k-2l})e^{-t\xi^2}$ goes to $0$ when $t=0,t\rightarrow +\infty$ and 
reaches its maximum when $\frac{d}{dt}\left( (t^{k-l}\xi^{k-2l})e^{-t\xi^2}\right)
=((k-l)t^{k-l-1}\xi^{k-2l} - t^{k-l} \xi^{k-2l+2}) e^{-t\xi^2}
=((k-l)-t\xi^2) t^{k-l-1} \xi^{k-2l}e^{-t\xi^2}=0 $ for $t=\frac{k-l}{\xi^2}$.
Hence when $\vert\xi\vert\geqslant a$, $$\sup_{t\in [0,a]} (1+\vert\xi\vert)^k\vert(t^{k-l}\xi^{k-2l})\vert e^{-t\xi^2} \leqslant  (k-l)^{k-l}(1+\vert\xi\vert)^k\vert\xi\vert^{-k}\leqslant (k-l)^{k-l}(1+a^{-k})^k.$$
On the other hand, 
if $\vert \xi\vert\leqslant a$, $t\in [0,a]$, we find that
$(1+\vert\xi\vert)^k\vert\partial_\xi^ke^{-t\xi^2} \vert\leqslant  C(1+a)^k\sum_{0\leqslant l\leqslant \frac{k}{2}} a^{2k-3l}$.\\
Therefore, we showed that
$(1+\vert\xi\vert)^k\vert \partial^k_\xi e^{-t\xi^2} \vert\leqslant C_k $
uniformly on $t\in [0,a]$, hence $p_t\in S^0_{1,0}$ uniformly on $t\in [0,a]$.
We also have
for all $\delta,u>0$,
$t\leqslant\delta^{1+2u}$
implies that
$\sup_{\xi} \vert(1+\vert \xi\vert)^{-u} (e^{-t\xi^2}-1)\vert\leqslant \delta $ which means that 
$\sup_{\xi} \vert(1+\vert \xi\vert)^{-u} (e^{-t\xi^2}-1)\vert\rightarrow 0 $
when $t\rightarrow 0^+$ which implies the convergence $p_t\rightarrow 1$ in $S^{+0}_{1,0}$.
By a result of Strichartz~\cite[Thm 1.3 p.~296]{Taylorpsi}, 
\begin{equation}\label{e:convheat}
p_t(\sqrt{\Delta})=e^{-t\Delta}\underset{t\rightarrow 0^+}{\rightarrow} Id\text{ in }\Psi^{+0}_{1,0}(M).
\end{equation}
\end{proof}

\subsection{Proof of Lemma~\ref{l:raytofrechet}. }
\label{ss:appendixholo}

Without loss of generality we assume $0\in \Omega$ and we try to prove the Lemma in some neighborhood 
of $0\in \Omega$.
For every fixed $V\in \Omega$, and for every complex 
$z\in \mathbb{C}$ small enough, $\partial_z^nF_1(zV)=\partial_z^nF_2(zV)$ by assumption, therefore the uniqueness of the Taylor series and its convergence for analytic functions of one variable yields the identity
$$F_1(zV)=P(z,V)+F_2(zV)$$
where both sides are holomorphic germs in $z$ near $0\in \mathbb{C}$
and $P(z,V)$ is a polynomial in $z$ of degree $k-1$.  
The subtlety is that here we have an identity which holds true along 
every complex ray $\{zV, z\in \mathbb{C} \}$ in the open subset $\Omega\subset E$ 
and both sides are holomorphic functions
of one variable $z\in \mathbb{C}$.
We would like to deduce a similar identity without $z\in \mathbb{C}$ and were both sides are viewed as 
holomorphic functions on $C^\infty(End(E))$ in the sense of definition~\ref{d:analfunfrechet}. 

To finish the proof of the Lemma, we  
recall the definition of finitely holomorphic (also called G\^ateaux--holomorphic) functions which is the weakest
notion of holomorphicity in $\infty$--dimension~\cite[p.~54 def 2.2]{Dineen}: 
\begin{defi}[Finitely holomorphic functions]
Let $\Omega$ open in some Fr\'echet space
$E$ over $\mathbb{C}$.
A function $f:E\mapsto \mathbb{C}$ is said to be finitely holomorphic on $\Omega$ 
if for all $A\in \Omega$, every $B\in E$, $z\in \mathbb{C}\mapsto f(A+zB)$
is a \textbf{holomorphic germ} at $z=0$.
\end{defi}
Beware that finitely holomorphic maps 
are not necessarily continuous since any $\mathbb{C}$-linear map
$F:E\mapsto \mathbb{C}$
which is not even continuous is always finitely holomorphic.
The notion of holomorphicity from definition~\ref{d:analfunfrechet} is the strongest possible and Fr\'echet holomorphic functions
are automatically smooth hence $C^0$ unlike finitely holomorphic functions.

Our goal in this 
part is to recall the proof that
finitely analytic maps near $A$
which are \textbf{locally bounded} are analytic near $A$.
\begin{defi}[Local boundedness]
A map $f$ is locally bounded near $A$ if there is an open neighborhood
$U\subset E$ of $A$ and $0\leqslant M<+\infty$ such that $\vert f|_U\vert\leqslant M$.
\end{defi}
The proof is inspired from 
the thesis of Douady~\cite[Prop 2 p.~9]{douady1966probleme} and also~\cite[p.~57--58]{Dineen}.
\begin{prop}\label{p:douadyweakstrongholo}
Let $E$ be a Fr\'echet space and 
$F:E\mapsto \mathbb{C}$ finitely analytic on $\Omega\subset E$.
If $F$ is locally bounded at $A$, in particular
if 
on a ball $B(A,r)=\{B\text{ s.t. } \Vert B-A\Vert<r  \}$ for a continuous norm $\Vert .\Vert$ on $E$,
$\sup_{B(A,r)} \vert F \vert\leqslant M<+\infty$, 
then $F$ is Fr\'echet differentiable at $A$ at any order and can be identified with its Taylor
series near $a$:
$$ F(A+H)=\sum_{n=0}^\infty P_n(H) $$
where each $P_n$ is a continuous polynomial map homogeneous 
of degree $n$, the $P_n$ are uniquely determined 
by 
$P_n(h)=\frac{n!}{2i\pi} \int_\gamma F(A+\lambda h)\frac{d\lambda}{\lambda^{n=1}} $
and $\sum \Vert P_n\Vert \tilde{r}^n<+\infty$ for every $0<\tilde{r}<r$.

In particular $F$ smooth in some neighborhood of $A$.
\end{prop}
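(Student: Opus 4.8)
The plan is to follow the classical argument of Douady and Dineen. Fix a continuous seminorm $\|\cdot\|$ on $E$ and $r>0$ with $\sup_{B(A,r)}|F|\le M<\infty$. For $H\in E$, $H\neq 0$, with $\|H\|<r$, finite holomorphicity makes $\lambda\mapsto F(A+\lambda H)$ a holomorphic function on the disc $\{|\lambda|<r/\|H\|\}$, bounded there by $M$; hence it has a convergent expansion $F(A+\lambda H)=\sum_{n\ge 0}P_n(H)\lambda^n$ with
\[
P_n(H)=\frac{1}{2i\pi}\int_{|\lambda|=\rho}F(A+\lambda H)\,\frac{d\lambda}{\lambda^{n+1}},\qquad \rho<r/\|H\|,
\]
and the Cauchy estimate, letting $\rho\uparrow r/\|H\|$, gives $|P_n(H)|\le M(\|H\|/r)^n$. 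Extending $P_n$ to all of $E$ by $n$-homogeneity (consistent, since the integral formula is already $n$-homogeneous, with $P_0\equiv F(A)$ and $P_n(0)=0$ for $n\ge 1$), one obtains functions $P_n\colon E\to\mathbb C$ satisfying $|P_n(H)|\le M(\|H\|/r)^n$ for every $H\in E$.

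Next I would show each $P_n$ is a continuous $n$-homogeneous polynomial. For $H_0,H\in E$, the function $(\mu,\nu)\mapsto F(A+\mu H_0+\nu H)$ is separately holomorphic near $(0,0)$ and locally bounded, hence jointly holomorphic by Hartogs' theorem (or the more elementary Osgood lemma); writing its joint power series and substituting $(\mu,\nu)=(\lambda,\lambda t)$ shows that $t\mapsto P_n(H_0+tH)$ is a polynomial of degree $\le n$ in $t$. Together with $n$-homogeneity, the polarization formula
\[
\widehat P_n(H_1,\dots,H_n)=\frac{1}{2^n\,n!}\sum_{\varepsilon\in\{\pm1\}^n}\varepsilon_1\cdots\varepsilon_n\,P_n(\varepsilon_1H_1+\dots+\varepsilon_nH_n)
\]
then defines a symmetric $n$-linear form with $\widehat P_n(H,\dots,H)=P_n(H)$. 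The bound on $P_n$ and the standard polarization inequality give $|\widehat P_n(H_1,\dots,H_n)|\le \tfrac{n^n}{n!}\,\tfrac{M}{r^n}\,\|H_1\|\cdots\|H_n\|$, so $\widehat P_n$ is jointly continuous, $P_n$ is a continuous $n$-homogeneous polynomial, and $\|P_n\|\le M/r^n$ for the sup over the unit ball of $\|\cdot\|$.

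Finally, for $\|H\|<r$ the series $\sum_n P_n(H)$ converges absolutely since $\sum_n M(\|H\|/r)^n<\infty$, and putting $\lambda=1$ in the one-variable expansion (legitimate as $1<r/\|H\|$) gives $F(A+H)=\sum_n P_n(H)$; likewise $\sum_n\|P_n\|\,\tilde r^n\le M\sum_n(\tilde r/r)^n<\infty$ for $0<\tilde r<r$. Thus on a neighborhood of $A$, $F$ is the sum of a series of continuous homogeneous polynomials converging uniformly on small balls, so term-by-term differentiation is justified: $F$ is Gâteaux-differentiable at every order with $D^nF(A)=n!\,\widehat P_n$ jointly continuous, hence Bastiani-smooth near $A$, and coincides with its Taylor series at $A$. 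This is exactly the assertion, including the Cauchy formula for the $P_n$ and the bound $\sum\|P_n\|\tilde r^n<\infty$.

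The main obstacle is the second paragraph: promoting the coefficient functions $P_n$ — a priori only homogeneous and bounded on a ball — to genuine continuous polynomials. This is where one needs the passage from separate to joint holomorphicity in finitely many variables (Hartogs/Osgood) together with the quantitative polarization estimate $\|\widehat P_n\|\le \tfrac{n^n}{n!}\|P_n\|$; the remaining steps are routine manipulations with one-variable Cauchy estimates and the geometric series.
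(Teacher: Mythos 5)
Your argument is correct and matches the paper's proof in all essentials: the Cauchy-integral definition of the $P_n$ along complex lines, reduction to finitely many variables to see that each $P_n$ is a continuous homogeneous polynomial, the polarization bound $\frac{n^n}{n!}\frac{M}{r^n}\Vert h_1\Vert\cdots\Vert h_n\Vert$, and normal convergence of $\sum_n P_n$ giving smoothness near $A$. The only difference is cosmetic: you justify the finite-dimensional input explicitly via Hartogs/Osgood and the $\pm 1$ polarization formula (which, for full rigor, should be run on the span of all $n$ vectors $H_1,\dots,H_n$ rather than just two, exactly as your closing remark about "finitely many variables" indicates), whereas the paper obtains the same multilinear map through Douady's iterated difference operators $\Delta_h$ and a direct appeal to the finite-dimensional case -- the same polarization in disguise.
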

\begin{proof}
This proposition is well--known when $B$ has 
finite dimension and the expansion
$F(a+h)=\sum_n P_n(h)$ is given by the formula
$P_n(h)=\frac{1}{2\pi} \int_0^{2\pi} F(a+e^{i\theta}h)e^{-in\theta}d\theta $
where $\vert h\vert\leqslant r$ and then we keep this formula in the infinite
dimensional case. 
The integral is that 
of a continuous function (by finite analyticity)
hence is well--defined. 
If $F$ is bounded by $M$ on a ball of radius $r>0$ for the
continuous norm $\Vert .\Vert$ then
so is
$P_n $. To show that
$P_n$ is a homogeneous monomial, 
we follow Douady's approach by
setting $\tilde{P}_n(h_1,\dots,h_n)=\frac{1}{n!}\Delta_{h_1}\dots\Delta_{h_n}P_n$
where $\Delta_h$ is the finite difference operator $\Delta_hP(x)=\frac{1}{2}\left(P(x+h)-P(x-h)\right)$. In the finite dimensional case
$\tilde{P}_n$ is multilinear and it is the same in the infinite dimensional case since it only depends on the restriction
of $P_n$ to some finite dimensional subspace of $E$.
Hence $P_n(h)=\tilde{P}_n(h,\dots,h)$ for some symmetric multilinear map $\tilde{P}_n$.
From Cauchy's integral formula, we know that every 
$P_n$ is bounded by $M$ when $\Vert h \Vert\leqslant r$ which implies that
$\vert\tilde{P}_n(h_1,\dots,h_n)\vert \leqslant \frac{n^nM}{n! } \Vert h_1\Vert \dots \Vert h_n\Vert $
hence $\tilde{P}_n$ is continuous.
 From this it results that the series
$\sum_n P_n$ has normal convergence and the proposition is proved.
\end{proof}

Since both $V\in \Omega\mapsto F_1(V)$ and $V\in \Omega\mapsto F_2(V)$ are locally bounded near $V=0$ by holomorphicity of $F_1,F_2$,
the above
Proposition~\ref{p:douadyweakstrongholo} 
applied to the holomorphic function $F_1-F_2$ implies that we have the equality
\begin{equation}
F_1(V)=P(V)+F_2(V)
\end{equation}
for $V$ close enough to $0$ where $P$ is a \textbf{uniquely determined continuous polynomial function} of $V\in E$.

{\small

}
\end{document}